\title{On Feller Continuity and Full Abstraction} 
\author{Gilles Barthe}
\affiliation{\institution{MPI Security and Privacy}\country{Germany}}
\email{gilles.barthe@mpi-sp.org}
\author{Rapha\"elle Crubill\'e}
\affiliation{\institution{CNRS}\country{France}}
\email{raphaelle.crubille@lis-lab.fr}
\author{Ugo Dal Lago}
\affiliation{\institution{University of Bologna}\country{Italy}}
\affiliation{\institution{INRIA}\country{France}}
\email{ugo.dallago@unibo.it}
\author{Francesco Gavazzo}
\affiliation{\institution{University of Bologna}\country{Italy}}
\affiliation{\institution{INRIA}\country{France}}
\email{francesco.gavazzo2@unibo.it}
\keywords{Applicative Bisimilarity, Labelled Markov Processes, Lambda Calculus, Event 
Bisimilarity}
\begin{abstract}
  We study the nature of applicative bisimilarity in $\lambda$-calculi 
  endowed with operators for sampling from  \emph{continuous} distributions. 
  On the one hand, we show that bisimilarity, logical equivalence, and 
  testing equivalence all coincide with contextual equivalence when real 
  numbers can be manipulated only through continuous functions. The key    
  ingredient towards this result is a novel notion of Feller-continuity for 
  labelled Markov processes, which we believe of independent interest, being 
  a broad class of LMPs for which coinductive and logically inspired 
  equivalences coincide. On the other hand, we show that if no constraint is 
  put on the way real numbers are manipulated, characterizing contextual 
  equivalence turns out to be hard, and most of the aforementioned notions of 
  equivalence are even \emph{unsound}. 
\end{abstract}
\begin{document}
\theoremstyle{acmplain}
\newtheorem{remark}[theorem]{Remark}
\newtheorem{notation}[theorem]{Notation}

\maketitle   

\section{Introduction}
There exist many different definitions of program equivalence for
higher-order programs. These definitions come in different flavours,
including contextual equivalence~\cite{Morris/PhDThesis}, coalgebraic and 
bisimulation-based definitions~\cite{Abramsky1990,Sangiorgi/TheLazyLambdaCalculusInAConcurrencyScenarion/1994,Lassen/BismulationUntypedLambdaCalculusBohmTrees}, 
logical definitions based on modal logics~\cite{Abramsky1990,Ong/PhDThesis/1988},
and operational definitions based on tests~\cite{Ong/LICS/1993}. 
In the baseline setting,
where program evaluation does not produce computational effects, 
all these definitions coincide,
and provide a rich set of methods for proving program equivalence.

In contrast, these different notions are not known to coincide in 
 effectful programming languages; and even for some specific effect, such as sampling
from distributions, it can be difficult to relate, or even define,
the underlying relations. On the positive side, it is known that the
aforementioned notions of equivalence coincide for \emph{discrete}
probabilistic effects, i.e.\, when values are sampled from discrete
distributions~\cite{CDL14,DBLP:conf/birthday/CrubilleLSV15}.  However,
little is known about the correspondence between different notions of
equivalence for higher-order languages with sampling from
\emph{continuous} distributions, which are crucial in many application
domains such as Bayesian programming. One main challenge with these
languages is that their underlying operational semantics goes beyond
well-understood classes of probabilistic processes.  When sampling
from continuous distributions is available, programs no longer form a
labelled Markov chain, but form instead a \emph{labelled Markov
  process}~\cite{Prakash2009} (LMP in the following). Unfortunately, even if 
  these LMPs have an
\emph{analytic} state space, they come with an \emph{uncountable} set of
labels. As a consequence, they fall outside of the class of LMPs with
analytic state spaces \emph{and} countable sets of labels, for which
correspondence between the different definitions of equivalence still
holds~\cite{Prakash2009}.

The question addressed in this paper is therefore: what is (\emph{if any}) a natural
condition on LMPs (with \emph{analytic} state space and an
\emph{uncountable} set of labels), such that the different notions of
equivalence coincide? The idea explored in this paper is to ask that
programs are only built from \emph{continuous} functions. And indeed,
asking basic functions to be continuous has the consequence of making
contextual equivalence, both forms of bisimilarity from the literature
(i.e. state bisimilarity~\cite{DBLP:journals/iandc/DesharnaisEP02} 
and event bisimilarity~\cite{danos2006bisimulation,danos2005almost}),
logical equivalence and testing equivalence to coincide. The path to
this result is not trivial, and goes through the definition of a new
class of LMPs, called \emph{Feller-continuous} LMPs, whose transition functions satisfy a continuity constraint.

Now equipped with Feller continuity as a sufficient condition for
proving correspondence between the different notions, we ask for the
\emph{necessity} of Feller continuity. We do not provide a definitive answer
to this question. However, our partial exploration provides several
results of independent interest. On the positive side, we prove that
several key technical results used to establish equivalence between
the different notions extend to the general setting. On the negative
side, we show that \emph{event} bisimilarity, as well as 
\emph{testing} and \emph{logical} equivalence, are not \emph{sound} in
the general case. We also point to some undesirable behavior of
\emph{state} bisimilarity. These negative results do not rule out the
existence of alternative better behaved definitions. However,
we also show that contextual equivalence is not measurable. This, in
itself, points to the broader challenge of proving program equivalence
in absence of Feller continuity.

At a higher level, our results show a conceptual gap between discrete
and continuous distributions.  When working with calculi with
\emph{discrete} probabilistic choice, operational reasoning is mostly
set-theoretic: programs form an ordinary set, and program execution is
modeled as a set-theoretic function; accordingly, operational
reasoning is mostly performed relying on the classic notion of a
(program) relation.  When we enter the realm of continuous
probability, operational reasoning drastically changes, moving from a
set-theoretic to a \emph{measure-theoretic} base: programs now form
not just a set, but a \emph{measurable} space, and their evaluation
becomes a \emph{measurable function}. What about relations? One would
expect that relations should become, morally, measurable too: and
indeed, mathematically speaking, things work smoothly if one restricts
to measurable relations, with beautiful, classic theorems --- such as
the well-known Monge-Kantorovich duality \cite{villani2008optimal} --- 
extending to measurable spaces, functions, and relations. 
That suggests that to have well-behaved 
program equivalences, one should require such equivalences to be measurable. 
However, such a requirement is simply too strong: in absence of continuity 
constraints, contextual equivalence is \emph{not} measurable. Consequently, 
even if in principle we could define \emph{measurable} notions of program 
equivalence, the latter are
doomed not to capture the full power of contextual equivalence: the
very expressive power of the calculus simply goes beyond the world of
measurable relations. Looking back at the notion of Feller-continuity
from this perspective, we can see the latter as a somehow minimal
condition ensuring contextual equivalence (and, \emph{a fortiori},
all the aforementioned notions of equivalence) to be measurable, this
way establishing a new, deep connection between the syntax of a
calculus and the semantics of its associated operational reasoning.

\subsection{Organization of the Paper and Summary of Contributions}
The paper is organized as follows:
\begin{varitemize}
\item In Section~\ref{sect:lambda}, we introduce $\Lambda_{\probb}$, a simply-typed probabilistic
  $\lambda$-calculus with recursive types and real numbers, which we
  use as a vehicle language for our study. We show that for some
  choice of primitive functions, contextual equivalence for $\Lambda_{\probb}$
  is not measurable. We also introduce technical notions, such as that of a
  preterm, which are central to our development. Finally, we introduce
  a fragment of $\Lambda_{\probb}$, called $\lambdacont$, 
  in which all function symbols
  are continuous.

\item 
  We then present, in Section~\ref{sect:equivalences_on_lmps}, a unified 
  overview of different notions of equivalence and of their correspondence, 
  in the case of LMPs with analytic state spaces and countable set of 
  labels. We also single out correspondences that extend to the case of 
  arbitrary LMPs. We also define the LMP describing the interactive 
  behaviour of $\Lambda_{\probb}$ programs, and observe that its 
  set of labels is uncountable; as a consequence, it falls out of the class of
  LMPs for which different notions of correspondence are related;

\item We introduce the notion of a Feller continuous LMP and prove that
  state and event bisimilarity coincide for Feller continuous LMPs 
  (Theorem~\ref{prop:coincide_event_state_generic}). It
  follows that the correspondence between different notions of
  equivalence for Feller continuous LMPs holds, similar to LMPs with
  analytical state space and discrete set of labels. All this is in
  Section~\ref{sect:feller}.

\item We prove that the LMP resulting from $\lambdacont$ is indeed
  Feller continuous. As a corollary, we get that full abstraction holds
  for $\lambdacont$, and that all the aforementioned notions of equivalence 
  coincide (Theorem~\ref{theorem:full_abstraction_lambdac}). 
  This is in Section~\ref{sect:completeness}.

\item We present in Section~\ref{sect:perspective} a preliminary analysis of 
the general case. Our main  result is that event bisimilarity, as defined in 
Section~\ref{sect:equivalences_on_lmps}, is unsound. Additionally, we 
illustrate by way of examples how measurability issues pose a challenge to the 
role of state bisimilarity in the context of $\Lambda_{\probb}$.
  \end{varitemize}

\shortv{
\noindent An extended version of this paper with more details is 
available~\cite{LV}.}

\section{The Stochastic $\lambda$-Calculus}\label{sect:lambda}
The target language of this work is a (call-by-value)
$\lambda$-calculus with finite sums and recursive 
types endowed with specific operators for sampling from continuous distributions, 
and for first-order functions on the real numbers. As such, the calculus 
is very expressive, both at the level of terms and at the level of types.
We separate values from terms \cite{Levy/InfComp/2003}, this way requiring 
terms to be explicitly sequenced via the let-in construct. 

\subsection{Syntax and Static Semantics}

The syntax and static semantics of $\Lambda_{\probb}$ are defined 
in Figure \ref{figure:typing-lambda-p}. 
\begin{figure*}[t]
	\hrule
	\begin{align*}
	\typeone & \bnf \alpha 
	\mid \typereal
	\mid \sumtype{i \in I}{\typeone_i}
	\mid \typeone \to \typeone 
	\mid  \mu \alpha.\typeone
	\\
	\valone &\bnf  \varone 
	\mid \makereal{r} 
	\mid \inject{\imath}{\valone}
	\mid \abs{\varone}{\termone}
	\mid \fold{\valone}
	\\
	\termone &\bnf \valone 
	\mid \sample
	\mid \op(\valone, \hh, \valone)
                             \mid \makereal{\boolfun}(\valone, \hh, \valone) 
	\mid \casesum{\valone}{\termone}
	\mid \valone \valone 
	\mid \seq{\termone}{\termone}
	\mid \unfold{\valone}
\end{align*}

	\begin{center}
	$$
	\infer{\envone, \varone: \typeone \valimp \varone: 
	\typeone}{\phantom{\varone}}
	\quad
	\infer{\envone \valimp \makereal{r}: \typereal}{r \in \mathbb{R}}
	\quad 
	\infer{\envone \compimp \sample: \typereal}{\phantom{\varone} }
	$$
	$\vspace{-0.05cm}$
	\[
	\infer
	{\envone \compimp \makereal{\op}(\valone_1, \hh, \valone_n): \typereal}
	{\envone \valimp \valone_1 : \typereal, \hh,
	\envone \valimp \valone_n: \typereal
	& \op \in \mathcal{C}_n}
	\quad
	\infer
	{\envone \compimp \makereal{\boolfun}(\valone_1, \hh, \valone_n): \booltype}
	{\envone \valimp \valone_1 : \typereal, \hh, 
	\envone \valimp \valone_n: \typereal
		& \boolfun \in \Boolfun_n}
	\]
	$\vspace{-0.05cm}$
	\[
	\infer{\envone \compimp \valone: \typeone}{\envone \valimp \valone: 
	\typeone}
	\quad
	\infer{\envone \compimp \seq{\termone}{\termtwo}: \typetwo}
		{\envone \compimp \termone : \typeone 
			&
			\envone, \varone: \typeone \compimp \termtwo: \typetwo}
	\quad
	\infer{\envone \valimp \abs{\varone}{\termone}: \typeone\to \typetwo}
	{\envone, \varone:\typeone \compimp \termone: \typetwo}
	\quad
	\infer{\envone \compimp \valone\valtwo: \typetwo}
	{\envone \valimp \valone : \typeone \to \typetwo
		&
		\envone \valimp \valtwo: \typeone}
    \]
	$\vspace{-0.05cm}$
	\[
	\quad
	\infer{\envone \valimp \inject{\hat{\imath}}{\valone}: \sumtype{i \in 
	I}{\typeone_i}}
	{\envone \valimp \valone: \typeone_{\hat{\imath}}}
	\quad
	\infer{\envone \compimp \casesum{\valone}{\termone_i}: \typeone}
	{\envone \valimp \valone: \sumtype{i \in I}{\typeone_i} 
		& \envone, \varone: \typeone_i \compimp \termone_i: \typeone\ 
		(\forall i \in I)}
	  \]
	$\vspace{-0.05cm}$
	\[
	\infer{\envone \valimp \fold{\valone}: \rectype{\typevar}{\typeone}}
	{\envone \valimp \valone: 
	\subst{\typeone}{\typevar}{\rectype{\typevar}{\typeone}}}
	\quad
	\infer{\envone \compimp \unfold{\valone}: 
	\subst{\typeone}{\typevar}{\rectype{\typevar}{\typeone}} 
	}
	{\envone \valimp \valone: \rectype{\typevar}{\typeone} }	
	\]
	$\vspace{-0.09cm}$
	\end{center}
	\hrule
	\caption{Syntax and Static semantics of $\Lambda_{\probb}$.}
	\label{figure:typing-lambda-p}
\end{figure*}
Types of $\Lambda_{\probb}$ are built 
starting from a countable set of type variables 
(denoted by the letter $\typevar$ in Figure \ref{figure:typing-lambda-p}) and 
the basic type $\typereal$ for real numbers, using finite sums, arrows, and 
recursion. In particular, in a type of the form 
$\sumtype{i \in I}{\typeone_i}$ we assume the letter $I$ to stand for a finite 
set 
whose elements are denoted by $\hat{\imath}, \hat{\jmath}, \hh$. 
We write $\voidtype$ for the empty sum type, $\unittype$ for  
$\voidtype \to \voidtype$, and $\booltype$ for $\unittype + \unittype$. 
Concerning the latter, 
we also use standard syntactic sugar, viz. $\trueval$, $\falseval$, and 
$\ite{\valone}{\termone}{\termtwo}$. As usual, we write 
$\valone \termone$ for $\seq{\termone}{\valone x}$ 
and $\termone\termtwo$ for $\seq{\termone}{x\termtwo}$.

We use \emph{term} judgments (resp. \emph{value} judgments)
of the form $\envone \compimp \termone: \typeone$ (resp.
$\envone \valimp \valone: \typeone$) to state that $\termone$
(resp. $\valone$) is a term (resp. value) of \emph{closed} type
$\typeone$ in the environment $\envone$, whereas an environment is a
finite sequence $\varone_1: \typeone_1, \hh, \varone_n: \typeone_n$ of
distinct variables with associated \emph{closed} types (we denote by
$\cdot$ the empty environment). Notice that we work with closed types
only.  We use letters $\termone, \termtwo, \hh$ and
$\valone, \valtwo, \hh$ to denote terms and values, respectively.
We write $\terms_{\typeone}$, $\values_{\typeone}$ for the set of
closed terms and values, respectively, of type $\typeone$.

The syntax of $\Lambda_{\probb}$ consists of three distinct parts: the first 
one is a standard $\lambda$-calculus with recursive types and coproducts; the second one allows one to perform 
real-valued computations; finally, the third one is the actual probabilistic 
core of $\Lambda_{\probb}$. Concerning real-valued computations, 
we indicate real number in boldface with metavariables like $\makereal{r}$, 
while we take $\Lambda_{\probb}$ parametric with respect to 
a $\mathbb{N}$-indexed family $\mathcal{C}$ of \emph{countable} sets 
$\mathcal{C}_n$, each of which contains measurable functions from 
$\mathbb{R}^n$ to $\mathbb{R}$. In particular, we assume standard real-valued 
arithmetic operations to be in $\mathcal{C}$.
We also assume the existence of a comparison operator $op_{\leq} \in 
\mathcal{C}_2$ such that $op_{\leq}: \RR \times \RR \to [0,1]$, and 
$op_{\leq}(x,y) = 1$ if and only if $x \leq y$.
We use letters $f,g, \hh$ to denote elements in $\mathcal{C}_n$, 
writing their $\Lambda_{\probb}$ syntactic representation in boldface. 
To relate numerical and general computations, we assume to have 
collections $\Boolfun_n$ of functions $b: \mathbb{R}^n \to \{0,1\}$ 
to perform Boolean tests on real numbers. For simplicity, we 
just take $\Boolfun = \{=,<\}$. Further functions can be encoded 
relying on programs on $\booltype$ representing Boolean operators. 

Probabilistic behaviours are triggered by the term $\sample$. 
The latter samples from the uniform distribution over the unit interval, 
which is nothing but the Lebesgue measure $\lebesgue$ on $[0,1]$. 
It is well-known~\cite{DBLP:journals/pacmpl/EhrhardPT18}
that starting from $\sample$ several probabilistic 
measures (e.g. binomial, geometric, and exponential distribution) can be 
defined through functions in $\mathcal{C}$.

\begin{example}\label{example:distributions}
	Given closed terms $\termone_{\mu}$, $\termone_{\sigma}$ of type 
	$\typereal$ 
	(encoding mean $\mu$ and standard deviation $\sigma$), we 
	represent the normal distribution with mean $\mu$ and standard deviation 
	$\sigma$ as the expression 
	$\mathbf{normal}\ \termone_{\mu}\ \termone_{\sigma}$, 
	where the term $\mathbf{normal}_{\mathbf{std}}$ 
	encodes the normal distribution with mean $0$ and 
	standard deviation $1$. 
	\begin{align*}
		\mathbf{normal}_{\mathbf{std}}
		&\defeq \seq{\sample}{(\seqy{\sample}{(\sqrt{-2 \log(x)} 
		\cos(2\pi y)})})
		\\
		\mathbf{normal}\ \termone_{\mu}\ \termone_{\sigma} 
		&\defeq
		\mathbf{let}\ x_{\mu} = \termone_{\mu},\ x_{\sigma} = 
		\termone_{\sigma}, y = 
		\mathbf{normal}_{\mathbf{std}} \ \mathbf{in}\
		((x_{\sigma} * y) + x_{\mu}).
		\end{align*}
	For any type $\typeone$, we have the Bernoulli choice term 
	$\textbf{bernoulli}: \typeone \to \typeone \to \typereal \to \typeone$ 
	acting as a probabilistic choice construct. Such a term is
	defined as $\abs{m,n,p}{\seq{\sample}{(\ite{x > p}{n}{m}})}$
	In particular, the term $\textbf{bernoulli}\ \termone\ \termtwo\ \makereal{0.5}$ 
	is the fair probabilistic choice between $\termone$ and $\termtwo$. 
	We employ the notation $\termone\ps\termtwo$ in place of 
	$\textbf{bernoulli}\ \termone\ \termtwo\ \makereal{0.5}$.
\end{example}

Finally, we adopt standard notational conventions \cite{Barendregt/Book/1984}. 
In particular, we denote by $FV(\termone)$ the collection 
of free variables of $\termone$ (resp. $\valone$) and we refer to closed expressions as \emph{programs}.
We denote by $\subst{\termone}{\varone}{\valone}$ the capture-avoiding 
substitution of 
the value $\valone$ for all free occurrences of $\varone$ in $\termone$.
We extend the aforementioned conventions to types. For instance, we denote 
by $\subst{\typeone}{\typevar}{\typetwo}$ the result of 
capture-avoiding substitution 
of the type $\typetwo$ for the type variable $\typevar$ in $\typeone$.


\subsection{Dynamic Semantics}

The dynamic semantics of $\Lambda_{\probb}$ is given by a type-indexed
family of evaluation functions $\sem{-}_{\typeone}$ mapping programs
of type $\typeone$ to \emph{sub-probability measures} over values of
type $\typeone$.  As such, the semantics is measure-theoretic and
defining it requires some preliminary definitions that we are going to
give. Due to lack of space, we assume that the reader is familiar with
basic definitions from measure theory and refer to classic textbooks
for additional background~\cite{Billingsley}.

\paragraph{Measurable Spaces}
We denote by $\Meas$ the category of measurable spaces and measurable 
functions. When $X$ is a measurable space, we write $\Sigma_X$ for the 
underlying $\sigma$-algebra. Moreover,
given a measurable space $(X, \Sigma_X)$ with a measure $\mu$ on it
and a function $f: X \to [0,\infty]$, we denote by 
$\int_X f \de \mu$ or $\int_X f(x) \mu(\de x)$ the number in 
$[0,\infty]$ obtained by Lebesgue integrating $f$ over $X$ with respect to $\mu$. 
Oftentimes, we will work with measurable spaces associated to a Polish space. 
More precisely, we shall work with standard Borel spaces, 
i.e. those measurable spaces generated by complete, separable metric spaces equipped with their 
Borel $\sigma$-algebra. We denote by $\Pol$ the full subcategory of $\Meas$ whose objects are standard Borel spaces 
(notice that arrows in $\Pol$ remain measurable functions).
Finally, we remark that 
both $\Meas$ and $\Pol$ have countable products and coproducts.

\paragraph{Probability Measures} 
We denote by $\Delta(X)$ the collection of 
sub-probability measures 
on a measurable space $(X, \Sigma_X)$, using the notation $\emptyset$ and $\dirac{x}$ 
for the empty-subdistribution (on $X$) and the Dirac distribution on $x$, respectively.
We endow $\Delta(X)$ with the $\sigma$-algebra generated by the set of evaluation maps 
$\{\mathit{ev}_A: \Delta(X) \to [0,1] \mid A \in \Sigma_X\}$, 
where $\mathit{ev}_A(\distrone) = \distrone(A)$. 
Recall that, given a measurable function $f: X \to Y$, the push forward of $f$ 
is the (measurable) function $\forward{f}: \Delta(X) \to \Delta(Y)$ defined by 
$\forward{f}(\distrone)(B) = \distrone(f^{-1}(B))$. 
These data induce a functor $\giry$ on $\Meas$, which is the functor 
part of the \emph{Giry monad} \cite{10.1007/BFb0092872}. 
In fact, $\giry$ has a unit map $\unit$
associating to each element $x$ the Dirac distribution $\dirac{x}$, 
and for a measurable function $f: X \to \giry(Y)$, the map
$\kleisli{f}$ defined by 
$\kleisli{f}(\distrone)(B) = \int f(x)(B) \distrone(dx)$ 
gives the Kleisli extension of $f$. 
When $(X, \Sigma_X)$ is a standard Borel space, then $\Delta(X)$ is is standard Borel space too 
--- so that $\giry$ is also a monad on $\Pol$ --- 
and the $\sigma$-algebra of $\Delta(X)$ coincides with the Borel 
algebra of
the weakest topology making the integration map $\distrone \mapsto \int f d\distrone$, 
continuous, for any bounded continuous real-valued function $f$ 
(see \cite{kechris}, Theorem 17.23 and 17.24).

\paragraph{Probability Kernels}
Given measurable spaces $X,Y$, we recall that a 
\emph{(sub)probability kernel}~\cite{DBLP:journals/entcs/Panangaden99} 
is a map $f: X \times \Sigma_Y$ such that:
	\begin{varitemize}
		\item For every $A \in \Sigma_Y$, the map $x \in X \mapsto f(x,A)$ is 
		measurable;
		\item For every $x \in X$, the map $A \in \Sigma_Y \mapsto f(x,A)$ is a 
		(sub)probability measure.
	\end{varitemize}
	We write $f:X \kernel Y$ when $f$ is a sub-probability kernel, and $f:X 
	\kernelp Y$ when $f$ is a probability kernel (also called Markov kernel).  
For any space $X$, the Dirac measure on $X$ induces a kernel $X \kernel X$.
Moreover, given $f:X \kernel Y$ and $g:Y \kernel Z$, we define their composition
$g \circ f: X \kernel Z$ by $(g \circ f)(x,C) = \int_Y g(y,C) \cdot f(x,dy)$. 
By the monotone convergence theorem, the composition operation $\circ$ 
is associative and has Dirac measures as unit, meaning that 
kernels form a category. Such a category is nothing but the Kleisli category 
of $\giry$.
Panangaden~\cite{DBLP:journals/entcs/Panangaden99} showed that the category of sub-probability kernels is partially additive~\cite{manesArbib86}. 
As a consequence, for all spaces $X$, $Y$, 
the set of sub-probability kernels from $X$ to $Y$ 
forms an $\omega$-cppo when ordered pointwise. In particular, 
the bottom element associates to any element $x$ the empty distribution $\emptyset$, 
whereas the least upper bound of an $\omega$-chain $\{f_n\}_{n \geq 0}$ of kernels 
is defined as $\sup_n f_n$. Additionally, kernel composition is monotone and 
continuous with respect to such $\omega$-cppo structure, meaning that 
the category of sub-probability kernels is $\ocppo$-enriched 
\cite{Kelly/EnrichedCats}, where $\ocppo$ is the category of 
$\omega$-cppos and continuous functions. All of this works both on 
$\Meas$ and $\Pol$ (with the notion of a probability kernel restricted to 
standard Borel  spaces).

\longv{
	Observe that  the space of probability kernels (sub-probability 
	kernels respectively) is equipped with a convex structure. Moreover, the 
	composition commutes with the convex operations, as stated in 
	Lemma~\ref{lemma:addition_kernels} below. We prove this here since we will 
	need to use it in later sections.
	\begin{lemma}\label{lemma:addition_kernels}
		Let $k: X \kernel Y$ and $g,h: Y \kernel Z$ be sub-probability kernels. 
		Then $k \circ(\alpha.g+\beta.h) = \alpha.k\circ g + \beta.k \circ h$, 
		and similarly $(\alpha.g+\beta.h)\circ k = \alpha.g\circ k + \beta.h 
		\circ k$.
	\end{lemma}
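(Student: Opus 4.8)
My plan is to obtain both identities by directly unfolding the definition of kernel composition and invoking elementary linearity of the Lebesgue integral; the only preliminary point is that $\alpha.g + \beta.h$ is itself a sub-probability kernel. Recall that $\alpha,\beta \geq 0$ and $\alpha + \beta \leq 1$. Then, for each source point $y$, the set function $A \mapsto \alpha.g(y,A) + \beta.h(y,A)$ is a measure of total mass at most $\alpha + \beta \leq 1$, hence a sub-probability measure; and for each $A$ the map $y \mapsto \alpha.g(y,A) + \beta.h(y,A)$ is measurable, being a linear combination of measurable functions. So $\alpha.g+\beta.h$ is a well-defined sub-probability kernel, and the composites in the statement make sense. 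Nothing below changes if we work in $\Pol$ rather than $\Meas$, since a convex combination of kernels between standard Borel spaces is again such a kernel.

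For the first identity, write $m \defeq \alpha.g + \beta.h$ (the kernel applied first in the composite $k \circ m$) and argue pointwise. For every source point $x$ and every measurable target set $C$, the measure $m(x,-)$ is $\alpha.g(x,-) + \beta.h(x,-)$ by definition, so $(k \circ m)(x,C) = \int k(y,C)\, m(x,\de y) = \alpha \int k(y,C)\, g(x,\de y) + \beta \int k(y,C)\, h(x,\de y) = \alpha.(k \circ g)(x,C) + \beta.(k \circ h)(x,C)$, where the middle step is the standard additivity $\int f\, \de{(\alpha\distrone + \beta\distrtwo)} = \alpha\int f\,\de\distrone + \beta\int f\,\de\distrtwo$ for non-negative measurable $f$. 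For the second identity, with $k$ now applied first, I would instead use linearity in the integrand: $(m \circ k)(x,C) = \int m(y,C)\, k(x,\de y) = \int (\alpha.g(y,C) + \beta.h(y,C))\, k(x,\de y) = \alpha \int g(y,C)\, k(x,\de y) + \beta \int h(y,C)\, k(x,\de y) = \alpha.(g \circ k)(x,C) + \beta.(h \circ k)(x,C)$, valid since the integrands are non-negative (indeed $[0,1]$-valued) measurable functions. As these equalities hold for all $x$ and $C$, the two identities of kernels follow.

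There is no genuine obstacle here: no appeal to the monotone convergence theorem is needed, only finite linearity of the integral, and the sole point requiring a line of justification is the well-definedness of $\alpha.g + \beta.h$, which is exactly where $\alpha + \beta \leq 1$ is used. Conceptually, the lemma simply records that kernel composition, being defined by integration, is bilinear in each argument with respect to the additive/convex structure on hom-sets recalled above.
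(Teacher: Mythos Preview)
Your proof is correct and follows the same approach as the paper's: unfold the definition of kernel composition and invoke linearity of the Lebesgue integral (in the measure for one identity, in the integrand for the other). If anything, your argument is more complete than the paper's, which only spells out one of the two identities and omits the well-definedness check for $\alpha.g+\beta.h$ that you include.
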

	\begin{proof}
		Let $x \in X$, $A \in \Sigma_Y$. Then $k \circ (\alpha.f+\beta.g)(x,A) 
		= \int_{Y} k(\cdot,A) d(\alpha.f+\beta.g)(x) = \int_{Y} k(\cdot,A) 
		d(\alpha.f(x)+\alpha.g(x)) = \alpha.\int_{Y} k(\cdot,A) 
		df(x)+\beta.\int_{Y} k(\cdot,A) dg(x)$.
	\end{proof}
}

\paragraph{Dynamic Semantics}
We now turn to the dynamic semantics of $\Lambda_{\probb}$.  Such a
semantics is given as a monadic evaluation semantics \cite{DLGL17} for
the Giry monad in the category $\Pol$. More concretely, the semantics
is given by type-indexed (sub-probability, due to the presence of full
recursion) kernels $\terms_{\typeone} \kernel \values_{\typeone}$.
Recall that for any type $\typeone$, the sets $\terms_{\typeone}$ and
$\values_{\typeone}$ carry a a standard Borel space
structure~\cite{DBLP:journals/pacmpl/EhrhardPT18,DBLP:journals/pacmpl/VakarKS19,DBLP:conf/lics/StatonYWHK16,borgstrom2016lambda}.
We recall this structure here, since we will use it explicitly in the
technical developments.
\begin{definition}\label{definition:terms_as_polish_spaces}
We call \emph{pre-terms} and \emph{pre-values} the syntactic objects
obtained from terms and values respectively, by removing all
occurrences of real numbers and replacing them by numbered holes
$\hole^n$. Moreover, we add the constraint that whenever $P$ is
a pre-term (or a pre-value) with $n$ holes, it means that each hole
$\hole^i$ for $1 \leq i \leq n$ must occur exactly once in $P$.
Whenever $P$ is a pre-term (or a pre-value) with $n$ holes, and $\vec
r \in \RR^n$, we write $P[\vec{\makereal{r}}]$ for the term obtained by
replacing each hole $\hole^i$ by the corresponding real value $\makereal{r}_i$. We
write $\terms_{P}$ (resp. $\valset_{V}$) for the set of all terms
(resp. values) of the form $P[\vec{\makereal{r}}]$ (resp. $V[\vec{\makereal{r}}]$), where
$\vec r$ ranges over $\RR^n$. 
\end{definition}
Observe that, for any pre-term $P$ with
$n$ holes, the set $\terms_{P}$ is in bijection with
$\RR^n$, so that we can naturally endow $\terms_{P}$ with a standard Borel space
structure. Since $\terms$ consists of the disjoint union of the sets 
$\terms_P$, it is in fact the countable coproduct of standard Borel spaces, 
thus a standard Borel space itself. A standard Borel space structure can be 
similarly given to $\valset$.

\begin{definition} 
	\label{definition:monadic-semantics}
	Define the type-indexed family of ($\mathbb{N}$-indexed) kernels 
	$\semn{-}{n}_{\typeone}: \terms_{\typeone} \kernel \values_{\typeone}$ 
	as follows (for readability, we omit type annotations):
	\begin{align*}
		\semn{\termone}{0}(A) 
		&= 0
		\\
		\semn{\valone}{n+1}(A) 
		&= \dirac{\valone}(A)
		\\
		\semn{\op(\makereal{r_1}, \hh, \makereal{r_n})}{n+1}(A)
		&=
		\semn{\makereal{\op(r_1, \hh, r_n)}}{n}(A)
		\\
		\semn{\makereal{\boolfun}(\makereal{r_1}, \hh, \makereal{r_n})}{n+1}(A)
		&=
		\semn{\makereal{b(r_1, \hh, r_n)}}{n}(A)
		\\
		\semn{\sample}{n+1}(A)
		&= \lebesgue \{r \mid \makereal{r} \in A\}
		\\
		\semn{(\abs{\varone}{\termone})\valone}{n+1}(A) 
		&= \semn{\subst{\termone}{\varone}{\valone}}{n}(A)
		\\
		\semn{\seq{\termone}{\termtwo}}{n+1}(A)
		&=
		 \int \semn{\subst{\termtwo}{\varone}{\valone}}{n}(A)
		\semn{\termone}{n}(\de \valone)
		\\
		\semn{\casesum{\inject{\hat{\imath}}{\valone}}{\termone_i}}{n+1}(A)
		&= \semn{\subst{\termone_{\hat{\imath}}}{\varone}{\valone}}{n}(A)
		\\
		\semn{\unfold{(\fold{\valone})}}{n+1}(A)
		&= \semn{\valone}{n}(A).
	\end{align*} 
\end{definition}

By observing that 
for any term $\Gamma, x: \typeone \compimp \termone: \typetwo$, 
the map 
$\subst{\termone}{\varone}{-}: 
\values_{\Gamma \imp \typeone} \to \terms_{\Gamma \imp \typetwo}$ is 
measurable \cite{DBLP:journals/pacmpl/EhrhardPT18,DBLP:conf/lics/StatonYWHK16,borgstrom2016lambda}, 
we immediately notice that each $\semn{-}{n}$ is indeed a kernel and
that $\{\semn{-}{n}_{\typeone}\}_{n \geq 0}$ forms
an $\omega$-chain, so that we 
can define $\sem{\termone}_{\typeone}$ as 
$\sup_n \semn{\termone}{n}_{\typeone}$. 
By $\ocppo$-enrichment, $\sem{\termone}$ is the least solution 
to the system of equations induced by Definition~\ref{definition:monadic-semantics}. 

\subsection{Contextual Equivalence}
Before moving to the study of coinductively-based notions of equivalence, 
we recall the notion of \emph{probabilistic contextual equivalence}.



\begin{definition}[Context Equivalence]\label{def:context_equivalence}
For every type $\typeone$, an \emph{observable context for $\typeone$} is a 
program $\ctxone$ with a hole $\hole$ such that $\hole:\typeone \imp 
\ctxone:\typereal$. The set of all observable contexts for $\typeone$
is indicated as $\contexts\typeone$. Contextual equivalence of terms
and values is defined as usual, by quantifying over all possible observable
contexts:
\begin{align*}
\forall \termone,\termtwo \in \terms_{\typeone},\,\termone \ctxeq \termtwo \text{ when }\forall \ctxone \in \contexts \typeone,
\sem {\ctxone \holefill \termone} = \sem {\ctxone \holefill \termtwo} \\
\forall \valone,\valtwo \in \valset_{\typeone}, \, \valone \ctxeq \valtwo \text{ when }\forall \ctxone \in \contexts \typeone,
 \sem {\ctxone \holefill \valone} = \sem {\ctxone \holefill \valtwo}
\end{align*}
\end{definition}
By Definition~\ref{def:context_equivalence}, the underlying notion of 
observation is the whole distribution produced in output by any observable 
context. This notion can, as usual, be proved robust to small 
changes in the language and in the notion of observation\shortv{, 
see~\cite{LV} for further discussion.}\longv{
. \subsubsection{Alternative Caracterisations of Context Equivalence.}
\begin{remark}\label{remark:altern_carac_ctx_eq}
Recall that, in generic terms, two programs are context equivalent if and only 
if their \emph{observables} are the same in any context. If we reframe 
Definition~\ref{def:context_equivalence} in this general framework, we see 
that it uses a very strong notion of \emph{observation} at real type: 
indeed Definition~\ref{def:context_equivalence} means that when $M$ is a 
real-type 
program, its precise operational semantics $\sem M : \RR \rightarrow 
[0,1] \in \distrs \RR$ can be observed. A weaker notion of observation would be 
to suppose that only the \emph{expected value} of $\sem M$ can be 
observed. This could seem to be a more reasonnable computational choice, 
since--from 
the law of large numbers--we know that the expected value of a 
distribution 
can be approximated from a large number of trials. Whether these 
two notions 
of observation lead to the same context equivalences depends on 
the primitive 
functions on reals that we put into our language. We are going to show in Proposition~\ref{prop:caract_equiv_ctx} below, that it is the case for $\Lambda_{\probb}$, due to the fact that $\mathcal C_2$ contains multiplication and the comparaison operator $op_{\leq}$, and more generically that it is the case for any  language that is: 
\begin{enumerate*}
\item able to express multiplication between reals, and 
\item\label{item:2_in_criteria_ctx} contains a countable 
collection $\mathcal D$ of functions $\RR \to \RR$ that \emph{separates 
points}, in the sense that 
for every reals $x \neq y$, there exists a function $f \in \mathcal D$  such 
that $f(x) \neq f(y)$.
\end{enumerate*}
Our main tool for showing this result is Lemma~\ref{lemma:expectation_as_observable_1} below, that comes from the literature on weak convergence for probability measure.
\end{remark}
\begin{lemma}\label{lemma:expectation_as_observable_1}
Let $\mathcal F$ be a countable set of bounded mesureable functions $\RR \to \RR$ such that:
\begin{enumerate}
\item \label{item:separating_1} $\forall x,y \in \RR, \exists f \in \mathcal{F},  \text{s.t. } f(x) \neq f(y)$; 
\item \label{item:separating_2} $\forall x \in \RR, \exists f \in \mathcal F, \text{s.t.} f(x) > 0$.
\end{enumerate}
Then for every
$\distrone, \distrtwo$ sub-probability measures on $\RR$:
\begin{equation}\label{eq:separating_space_1}
(\forall f \in \mathcal M(\mathcal F), \, \int f\cdot d\distrone = \int f \cdot d \distrtwo) \qquad \Leftrightarrow \qquad \distrone = \distrtwo,
\end{equation}
where for any set of functions $A$, $\mathcal M(A) :=\{f_1 \cdot f_2 \ldots \cdot f_n  \mid n \in \NN \wedge\forall k \in \{1,\ldots,n\}, f_k \in A\}$.
\end{lemma}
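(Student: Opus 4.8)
The direction $\distrone = \distrtwo \Rightarrow$ equality of all the integrals is immediate, since each $f \in \mathcal{M}(\mathcal{F})$ is a finite product of bounded functions, hence bounded and integrable against the finite measures $\distrone,\distrtwo$. So the plan is to prove the converse, and the one genuine subtlety is that $\mathcal{M}(\mathcal{F})$ does not contain the constant function $1$ --- so a priori one does not even know that $\distrone$ and $\distrtwo$ have the same total mass. (Indeed, without hypothesis~\eqref{item:separating_2} the statement fails: with $\mathcal{F} = \{\arctan\}$, which separates points, the measures $\dirac{0}$ and $\tfrac12\dirac{0}$ are not distinguished by any $(\arctan)^n$, $n \geq 1$.) Hypothesis~\eqref{item:separating_2} is exactly what repairs this, through a two-step localisation.

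First I would fix $g \in \mathcal{F}$ and pass to the finite Borel measures $\distrone_g \defeq g^2 \cdot \distrone$ and $\distrtwo_g \defeq g^2 \cdot \distrtwo$ (finite because $g^2$ is bounded and nonnegative). Since $\mathcal{M}(\mathcal{F})$ is closed under products, $f g^2 \in \mathcal{M}(\mathcal{F})$ for every $f \in \mathcal{M}(\mathcal{F})$, so the hypothesis gives $\int f\,d\distrone_g = \int f g^2\,d\distrone = \int f g^2\,d\distrtwo = \int f\,d\distrtwo_g$; and, because $g^2 = g\cdot g \in \mathcal{M}(\mathcal{F})$, we get in addition $\distrone_g(\RR) = \int g^2\,d\distrone = \int g^2\,d\distrtwo = \distrtwo_g(\RR)$. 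Therefore the set $\mathcal{V}$ of bounded Borel functions $h$ with $\int h\,d\distrone_g = \int h\,d\distrtwo_g$ is a linear space that contains $\mathcal{M}(\mathcal{F})$ \emph{and} the constant function $1$, and it is closed under uniformly bounded increasing limits (monotone convergence, both measures being finite). By the multiplicative system theorem --- the functional form of the monotone class theorem for a multiplicatively closed generating family --- $\mathcal{V}$ contains every bounded $\sigma(\mathcal{M}(\mathcal{F}))$-measurable function. Now $\sigma(\mathcal{M}(\mathcal{F})) = \sigma(\mathcal{F})$ (products of $\mathcal{F}$-measurable functions are $\mathcal{F}$-measurable), and since $\mathcal{F}$ is countable and separates points by~\eqref{item:separating_1}, the sets $g^{-1}((q,+\infty))$ with $g \in \mathcal{F}$, $q \in \QQ$, form a countable point-separating family of Borel sets; a countably generated point-separating sub-$\sigma$-algebra of the standard Borel space $\RR$ is the whole Borel $\sigma$-algebra, so $\sigma(\mathcal{M}(\mathcal{F})) = \borels{\RR}$ and $\mathcal{V}$ contains all bounded Borel functions. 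Taking $h = \mathbf{1}_A$ yields $g^2 \cdot \distrone = g^2 \cdot \distrtwo$, for every $g \in \mathcal{F}$.

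The second step patches these. For a Borel $A \subseteq \{g \neq 0\}$ the function $g^{-2}$ is finite and measurable on $A$, so $\distrone(A) = \int_A g^{-2}\,d(g^2\cdot\distrone) = \int_A g^{-2}\,d(g^2\cdot\distrtwo) = \distrtwo(A)$; thus $\distrone$ and $\distrtwo$ agree on all Borel subsets of $\{g \neq 0\}$. By~\eqref{item:separating_2} these sets cover $\RR$: writing $\mathcal{F} = \{g_i\}_{i \in \NN}$ and $E_i \defeq \{g_i \neq 0\}\setminus\bigcup_{j<i}\{g_j \neq 0\}$ gives a Borel partition of $\RR$ with $E_i \subseteq \{g_i \neq 0\}$, whence $\distrone(A) = \sum_i \distrone(A\cap E_i) = \sum_i \distrtwo(A\cap E_i) = \distrtwo(A)$ for every Borel $A$, i.e. $\distrone = \distrtwo$.

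I expect the main obstacle to be conceptual rather than computational: one must recognise that a naive monotone-class argument applied to $\distrone$ and $\distrtwo$ directly cannot work --- the total masses are invisible to $\mathcal{M}(\mathcal{F})$ --- and then find the device that fixes it, namely weighting by $g^2$ (which, being a member of $\mathcal{M}(\mathcal{F})$, forces $\distrone_g$ and $\distrtwo_g$ to have equal mass, so that the monotone class theorem becomes applicable, while simultaneously confining all the information to $\{g \neq 0\}$), followed by the covering argument that turns~\eqref{item:separating_2} into the global conclusion. The remaining ingredients --- the multiplicative system theorem, and the descriptive-set-theoretic fact that a countable point-separating family of Borel maps generates $\borels{\RR}$ --- are standard and can be quoted from the literature.
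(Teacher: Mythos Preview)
Your proof is correct, but it takes a genuinely different route from the paper's.

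The paper does not argue directly on $\RR$. Instead it passes from sub-probability measures $\distrone,\distrtwo$ on $\RR$ to probability measures $\distrone_\bot,\distrtwo_\bot$ on $\RR\cup\{\bot\}$ (putting the missing mass on $\bot$), and extends each $f\in\mathcal{M}(\mathcal{F})$ to $f_\bot$ by $f_\bot(\bot)=0$. It then invokes a cited result (Theorem 11.b in Blount--Kouritzin) stating that a countable, multiplicatively closed, point-separating family of bounded measurable functions is separating for probability measures. Hypothesis~(1) handles separation of real points; hypothesis~(2) is what separates any $x\in\RR$ from $\bot$, since $f_\bot(\bot)=0$ always. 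So in the paper, (2) is a point-separation condition on the one-point compactification, not a covering condition.

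Your argument avoids both the compactification and the external citation: you localise by the weights $g^2$ so that the constant $1$ enters the monotone-class space, run the multiplicative system theorem to get $g^2\cdot\distrone=g^2\cdot\distrtwo$, and then use~(2) as a covering statement $\RR=\bigcup_g\{g\neq 0\}$ to globalise. This is more self-contained --- only the functional monotone class theorem and the standard-Borel fact that a countable point-separating family of Borel maps generates the full Borel $\sigma$-algebra are needed --- at the price of a two-step structure the paper does not need. Conversely, the paper's reduction to the probability case is shorter but outsources the analytic work.
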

\begin{proof}
In the literature, e.g~\cite{Ethier-Kurz}, a set of bounded measureable functions that verify~\eqref{eq:separating_space_1} is called \emph{separating}--a concept which is actually defined for general topological space $E$, and not only for $\RR$. We have to be careful here, because the literature consider \emph{proper} probability distributions instead of sub-distributions: it means that we need to rewrite our goal as:
for every
$\distrone, \distrtwo$ probability measures on $\RR \cup \{\bot\}$:
\begin{equation}\label{eq:separating_space}
(\forall f \in \mathcal M(\mathcal F), \, \int f_{\bot}\cdot d\distrone_{\bot} = \int f_{\bot} \cdot d \distrtwo_{\bot}) \qquad \Leftrightarrow \qquad \distrone_{\bot} = \distrtwo_{\bot},
\end{equation}
where $f_{\bot} : x \in \RR \cup \{\bot\} \mapsto f(x) \text{ for }x \in \RR, \, 0 \text{ otherwise}$, and $\distrone_{\bot}(A) = \distrone(A \cap \RR) + (1 - \mu(\RR))\cdot\delta_{\bot}(A)$.
Theorem 11.b. in~\cite{blount2010convergence} show that a sufficient condition for a set of bounded measureable functions to be separating is to be countable, to be stable under multiplication, and to \emph{separate points}, i.e. that:
$\forall x,y \in \RR \cup \{\bot\}$, whenever $x \neq y$ there exist $ f \text{ s.t. } f(x) \neq f(y)$. It is immediate that the two first conditions hold for $\{f_{\bot} \mid f \in \mathcal M(\mathcal F)\}$. Moreover the hypothesis on $\mathcal F$ gives us exactly the third condition.
\end{proof}

\begin{example}\label{example:enough_primitives}\label{ex:enough_primitives}
We give here an example of a countable set $\mathcal F$ of bounded functions $\RR \to [0,1]$ such that the hypothesis~\eqref{item:separating_1} and~\eqref{item:separating_2} from Lemma~\ref{lemma:expectation_as_observable_1} hold (it can be shown by using the density of $\QQ$ in $\RR$ ):
$$\mathcal F:=\{op_{\leq}(\cdot,q) \mid q \in \QQ,n \in \NN \}\cup \{x \in \RR \mapsto 1\}.$$
\end{example}

\begin{proposition}\label{prop:caract_equiv_ctx}
Suppose that the sets $\mathcal C_n, \mathcal B_n$ of primitive functions  allow to implement in $\Lambda_{\probb}$ a family $\mathcal F$ such that first the hypothesis~\eqref{item:separating_1} from Lemma~\ref{lemma:expectation_as_observable_1}
holds, and second $\mathcal F$ contains the
multiplication function: $(\cdot):\RR \times \RR \to \RR$. Then it holds that $\forall \termone,\termtwo \in \terms_{\typeone}, \valone,\valtwo \in \valset_{\typeone}$:  
\begin{align*}
&\,\termone \equiv^{ctx} \termtwo \text{ iff }(\forall \ctxone \in \contexts \typeone, \int_{x \in \RR} x \cdot   d\sem {\ctxone \holefill \termone} = \int_{x \in \RR} x \cdot d \sem {\ctxone \holefill \termtwo}) \\
& \valone \equiv^{ctx} \valtwo \text{ iff }(\forall \ctxone \in \contexts \typeone, \int_{x \in \RR}  x \cdot d\sem {\ctxone \holefill \valone} = \int_{x \in \RR} x \cdot d\sem {\ctxone \holefill \valtwo})
\end{align*}
\end{proposition}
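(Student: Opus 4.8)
The plan is to prove the two implications separately. The left-to-right direction is immediate: if $\termone \ctxeq \termtwo$ then $\sem{\ctxone\holefill{\termone}} = \sem{\ctxone\holefill{\termtwo}}$ for every $\ctxone \in \contexts{\typeone}$, so in particular the two expectations agree (with the usual convention in $[-\infty,+\infty]$ should they be infinite), and the value case is identical. So I concentrate on the converse. Assume that $\int_{x\in\RR} x\cdot d\sem{\ctxone\holefill{\termone}} = \int_{x\in\RR} x\cdot d\sem{\ctxone\holefill{\termtwo}}$ for every observable context $\ctxone$, fix an arbitrary $\ctxone \in \contexts{\typeone}$, and let me show $\sem{\ctxone\holefill{\termone}} = \sem{\ctxone\holefill{\termtwo}}$ as sub-probability measures on $\RR$ (identifying $\values_\typereal$ with $\RR$). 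Take $\mathcal{F}_0$ to be a countable family of bounded measurable functions $\RR\to\RR$ separating the points of $\RR$ --- available by hypothesis~\eqref{item:separating_1}, for instance the family of Example~\ref{example:enough_primitives} --- enlarged by the constant function $1$ so that condition~\eqref{item:separating_2} of Lemma~\ref{lemma:expectation_as_observable_1} holds as well; by assumption each member of $\mathcal{F}_0$ is implemented by a $\Lambda_{\probb}$ term with a single free variable of type $\typereal$, and $\Lambda_{\probb}$ also implements binary multiplication. By Lemma~\ref{lemma:expectation_as_observable_1} it is then enough to establish $\int f\cdot d\sem{\ctxone\holefill{\termone}} = \int f\cdot d\sem{\ctxone\holefill{\termtwo}}$ for every $f = f_1\cdots f_k \in \mathcal{M}(\mathcal{F}_0)$.

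Fix such an $f$; as a function it is the bounded measurable map $r \mapsto f_1(r)\cdots f_k(r)$. Because each $f_i$ and multiplication are implemented by terms whose denotation at real arguments is the Dirac measure on the corresponding output, composing these implementations with $\mathbf{let}$ and unwinding the $\mathbf{let}$-clause of Definition~\ref{definition:monadic-semantics} (the Kleisli extension of a Dirac being evaluation) yields a term $D$ with one free variable $\varone:\typereal$ such that $\sem{\subst{D}{\varone}{\makereal r}} = \dirac{\makereal{f(r)}}$ for all $r \in \RR$. Let $\ctxone_f$ be the observable context $\seqlet{\ctxone}{D}$, whose hole is that of $\ctxone$; its hole has type $\typeone$ and its overall type is $\typereal$, so $\ctxone_f \in \contexts{\typeone}$. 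The $\mathbf{let}$-clause of Definition~\ref{definition:monadic-semantics} --- which passes from the approximants $\semn{-}{n}$ to $\sem{-}$ by monotone convergence --- together with the determinism of $D$ above gives the pushforward identities $\sem{\ctxone_f\holefill{\termone}} = \forward{f}(\sem{\ctxone\holefill{\termone}})$ and $\sem{\ctxone_f\holefill{\termtwo}} = \forward{f}(\sem{\ctxone\holefill{\termtwo}})$. By the change-of-variables formula for pushforwards and the boundedness of $f$, it follows that $\int_{x\in\RR} x \cdot d\sem{\ctxone_f\holefill{\termone}} = \int f \cdot d\sem{\ctxone\holefill{\termone}}$, and likewise with $\termtwo$ in place of $\termone$. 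Applying the hypothesis to the observable context $\ctxone_f$ therefore yields $\int f \cdot d\sem{\ctxone\holefill{\termone}} = \int f \cdot d\sem{\ctxone\holefill{\termtwo}}$. Since $f \in \mathcal{M}(\mathcal{F}_0)$ was arbitrary, Lemma~\ref{lemma:expectation_as_observable_1} gives $\sem{\ctxone\holefill{\termone}} = \sem{\ctxone\holefill{\termtwo}}$, and since $\ctxone$ was arbitrary, $\termone \ctxeq \termtwo$. The argument for values is verbatim, with $\valone, \valtwo$ replacing $\termone, \termtwo$.

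I expect the difficulty here to lie in bookkeeping rather than in any deep idea. Two points need care: first, matching the precise hypotheses of Lemma~\ref{lemma:expectation_as_observable_1}, in particular condition~\eqref{item:separating_2}, which I handle by adjoining the constant function $1$ to the separating family --- this is harmless, since $\mathcal{M}(\mathcal{F}_0)$ then only gains the constant itself; and second, the pushforward identity $\sem{\ctxone_f\holefill{\termone}} = \forward{f}(\sem{\ctxone\holefill{\termone}})$, which combines the compositionality of the monadic semantics of Definition~\ref{definition:monadic-semantics} with the fact that the first-order terms in question denote Dirac measures. There is no fixed-point or approximation subtlety beyond the already-established passage from $\semn{-}{n}$ to $\sem{-}$, and the remaining steps are a direct chain of equalities.
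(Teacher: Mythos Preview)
Your proof is correct and follows essentially the same route as the paper: build, for each $f=f_1\cdots f_k\in\mathcal{M}(\mathcal{F}\cup\{1\})$, the context $\ctxone_f=\seqlet{\ctxone}{f_1(x)\cdots f_k(x)}$, observe that $\sem{\ctxone_f\holefill{\termthree}}=\forward{f}(\sem{\ctxone\holefill{\termthree}})$, apply the change-of-variables formula to turn the expectation hypothesis on $\ctxone_f$ into $\int f\,d\sem{\ctxone\holefill{\termone}}=\int f\,d\sem{\ctxone\holefill{\termtwo}}$, and conclude by Lemma~\ref{lemma:expectation_as_observable_1}. The only cosmetic difference is that the paper first defines $C_f\in\contexts{\typereal}$ and then composes it with $\ctxone$, whereas you define the composite context directly.
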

\begin{proof}
We do the proof for terms (it is exactly the same for values). 
First, observe that the left-to-right implication is immediate; in the following we show the right-to-left one. Suppose that $\forall \ctxone \in \contexts \typeone, \int_{x \in \RR} x \cdot   d\sem {\ctxone \holefill \termone} = \int_{x \in \RR} x \cdot d \sem {\ctxone \holefill \termtwo}$. Let $\ctxone \in \contexts \typeone$: we want to show that $\sem{\ctxone \holefill \termone} = \sem{\ctxone \holefill \termtwo}$.
Let $f \in \mathcal M(\mathcal F \cup \{x \in \RR\mapsto 1\})$: we write $f = f_1 \cdot \ldots\cdot f_n$, with $f_i \in \mathcal F\cup \{x \in \RR\mapsto 1\})$. We look at the context: $C_f: \seq{\hole}{f_1(x)\cdot \ldots \cdot f_n(x)} \in \contexts{\typereal}$--observe that this context is indeed in $\Lambda_{\probb}$, by hypothesis on $(\mathcal C_n)_{n \in \NN}$ and $(\mathcal B_n)_{n \in \NN}$, and since $\makereal 1 \in \Lambda_{\probb}$. Recall from Definition~\ref{definition:monadic-semantics} that for any real-typed term $\termthree$, $\sem { {\ctxone_f}\holefill \termthree} = f_{\star} (\sem \termthree)$, where $f_{\star}$ stands for the push-forward operation for $f$.  From there, we can see:
\begin{align}
\int_{x \in \RR} x \cdot   d\sem {\ctxone_f  \holefill {\termthree}}  =  &\int_{x \in \RR} x \cdot d(f_{\star} (\sem \termthree)) \\
= &\int_{x \in \RR} f(x) \cdot d (\sem \termthree), \label{eq:change_of_variable}
\end{align}
where~\eqref{eq:change_of_variable} above holds by the change of variable formula--a standard result on push-forward measures.
By hypothesis, we know that $\int_{x \in \RR} x \cdot   d\sem {\ctxone_f \holefill {\ctxone \holefill {\termone}}} = \int_{x \in \RR} x \cdot d \sem{\ctxone_f \holefill {\ctxone \holefill {\termtwo}}}$, thus~\eqref{eq:change_of_variable} implies that $\int_{x \in \RR} f(x) \cdot d \sem {\ctxone \holefill {\termone}} = \int_{x \in \RR} f(x) \cdot d \sem {\ctxone \holefill {\termtwo}}$. Since it is true for every $f \in \mathcal M(\mathcal F)$, we can conclude by Lemma~\ref{lemma:expectation_as_observable_1} 
\end{proof}
\subsubsection{(Non)-Measurability of Context Equivalence.}
}\shortv{
	It is now instructive to observe that contextual equivalence is \emph{not} 
	a \emph{measurable relation}, meaning that 
	the graph of $\ctxeq$ is not a measurable (i.e. Borel, in our setting) 
	set.\footnote{Even more, we can rely on descriptive set theory~\cite{kechris} 
	to bound the non-measurability of contextual equivalence (and its equivalence classes), 
	showing, e.g., that the graph of $\ctxeq$ is a co-analytic set.}}
\longv{
It is now instructive to observe that  contextual equivalence is \emph{not}
	a \emph{measurable relation}, meaning that 
	the graph of $\ctxeq$ is not a measurable (i.e. Borel, in our setting) 
	set, as we show in Lemma~\ref{lemma:graph_equiv_ctx_analytic} below.
\footnote{Even more, we can rely on descriptive set theory~\cite{kechris} 
	to bound the non-measurability of contextual equivalence (and its equivalence classes), 
	showing, e.g., that the graph of $\ctxeq$ is a co-analytic set.}}

\begin{proposition}
\label{prop:contextual-equivalence-is-not-measurable}
There exists a choice of measurable functions in $\mathcal{C}$ 
such that the graph of $\ctxeq$ is not a Borel set.
\end{proposition}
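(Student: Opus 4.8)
The plan is to realise, inside the \emph{complement} of $\ctxeq$, an arbitrary analytic set, and hence to contradict Borelness. The underlying observation is that $\termone \ctxeq \termtwo$ is a \emph{universal} quantification over the standard Borel space $\contexts{\typeone}$ of a Borel property of $\ctxone$ --- hole-filling is measurable, each kernel $\sem{-}$ is measurable, and equality in $\Delta(\values_{\typereal})$ is Borel, so $(\ctxone,\termone) \mapsto \sem{\ctxone\holefill{\termone}}$ is measurable --- which makes $\ctxeq$ always co-analytic. To break measurability one must make that outer quantifier behave like a genuine \emph{projection}, i.e.\ an existential over a real parameter. I would therefore construct a Borel reduction of a fixed analytic non-Borel set to the complement of $\ctxeq$.

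Fix an analytic, non-Borel set $A \subseteq \RR$ (these exist classically; the canonical example is the set of reals coding ill-founded trees on $\NN$), and write $A = \{x \mid \exists y \in \RR.\ h(x,y) = 0\}$ for a Borel function $h : \RR^2 \to \RR$ --- possible since $A$ is the projection of a Borel $B \subseteq \RR^2$ and one may take $h = \mathbf{1}_{\RR^2 \setminus B}$. As $h$ is measurable, it is legitimate to take $\mathcal{C}$ with $h \in \mathcal{C}_2$. I work at type $\typereal \to \typereal$: for $x \in \RR$, let $M_x$ be the value that, on a real argument $\vartwo$, evaluates $h(\makereal{x}, \vartwo)$ and then returns $\makereal{0}$ if the result is $0$ and diverges otherwise (routine to write down with let-bindings, the boolean equality test on reals, the constant $\makereal{0}$, and a closed diverging term $\Omega : \typereal$ definable via recursive types), and let $N \defeq \abs{\vartwo}{\Omega}$. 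Then $\sem{M_x\,\makereal{y}} = \dirac{\makereal{0}}$ when $h(x,y) = 0$, and $\sem{M_x\,\makereal{y}} = \emptyset = \sem{N\,\makereal{y}}$ otherwise.

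The crux is the equivalence $M_x \not\ctxeq N \iff x \in A$. For $(\Leftarrow)$, pick $y$ with $h(x,y) = 0$; the observable context $(\hole\,\makereal{y}) : \typereal$ separates $M_x$ from $N$, since $\sem{M_x\,\makereal{y}} = \dirac{\makereal{0}} \neq \emptyset = \sem{N\,\makereal{y}}$. For $(\Rightarrow)$ I argue contrapositively: if $x \notin A$ then $M_x\,\makereal{y}$ diverges for every $y$, so $M_x\,\makereal{y} \ctxeq N\,\makereal{y}$ for all arguments, whence $M_x \ctxeq N$ by the applicative characterisation of contextual equivalence at arrow types. Finally the reduction is Borel: letting $P$ be the pre-term with $M_x = P[\makereal{x},\makereal{0}]$, the map $x \mapsto M_x$ is the canonical Borel isomorphism $\RR^2 \cong \terms_P$ precomposed with $x \mapsto (x,0)$ and postcomposed with the summand inclusion $\terms_P \hookrightarrow \terms_{\typereal \to \typereal}$, and $x \mapsto N$ is constant. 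Hence $x \mapsto (M_x, N)$ Borel-reduces $A$ to the complement of $\ctxeq$; if the graph of $\ctxeq$ were Borel, so would be its complement, and then $A$, as the preimage of a Borel set under a Borel map, would be Borel --- a contradiction.

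The step I expect to be the main obstacle is direction $(\Rightarrow)$, which rests on the fact that a context can use a value of type $\typereal \to \typereal$ only by applying it to real arguments, so that pointwise equivalence on all such arguments already forces $M_x \ctxeq N$. This is a purely operational, CIU-style lemma that does not rely on Feller continuity and hence remains valid in full $\Lambda_{\probb}$; I would either invoke it or sketch the standard CIU argument. It is also worth stressing why a higher type is essential: at type $\typereal$ the semantics $\sem{-} : \terms_{\typereal} \to \Delta(\values_{\typereal})$ is itself measurable and contextual, so $\ctxeq$ restricted to $\typereal$ is precisely $\{(\termone,\termtwo) \mid \sem{\termone} = \sem{\termtwo}\}$, which \emph{is} Borel; only at $\typereal \to \typereal$ does the quantification over contexts become a true projection capable of yielding a non-Borel relation.
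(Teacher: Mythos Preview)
Your argument is correct and follows essentially the same route as the paper's own proof: both pick a measurable two-argument function encoding a Borel set whose projection (respectively, co-projection) is non-Borel, insert it as a primitive in $\mathcal{C}_2$, build from it a one-real-parameter family of closed values of type $\typereal \to \typereal$ together with a fixed comparison value, and then invoke the applicative characterisation of $\ctxeq$ at that type---which the paper likewise treats as a black box, citing soundness of state bisimilarity or the quasi-Borel denotational model---to identify a section of the graph of $\ctxeq$ with the chosen non-Borel set. The only cosmetic differences are that the paper works directly with a co-analytic non-Borel set and programs that always terminate (returning $\makereal{0}$ or $\makereal{1}$), whereas you take the complementary analytic presentation and let one branch diverge.
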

\begin{proof}
We are going to use the existence of a non-Borel co-analytical set $X \subseteq \RR$ to build measurable primitive functions that lead to a non-Borel equivalence class for $\equiv^{ctx}$ (and thus to a non-Borel graph for $\equiv^{ctx}$). We know that such a set exists from descriptive set theory (see~\cite{kechris}, 14.2). Moreover, we also know (see~\cite{kechris}, 32.B) that there exists some open set $O$ of $\RR \times \mathcal N$, such that $X = \{y \mid \forall x, (x,y) \in O\}$: here $\mathcal N$ is the Baire space, that is known (\cite{kechris}, 3.4) to be homomorphic to the set of all irrational numbers. It means that we can embedd $O$ in $\RR \times \RR$, and we again obtain a Borel set (observe that $O$ is not an open set anymore, but we can show it is an open set minus a set of rational numbers, and since $\QQ$ is countable this removal operation preserves Borel sets). From $O$, we define a function $f_O : \RR^2 \rightarrow \RR$ defined as $f_O(x,y) = 1$ whenever $(x,y) \in O$, and $0$ otherwise. Observe that since $O \subseteq \RR \times \RR$ is a Borel set, it holds that $f_O$ is measurable.
We are now going to consider the (very simple) program $M:= \lambda x.0 : \typereal \rightarrow \typereal$, and show that its equivalence class for $\equiv^{ctx}$ is not a Borel set. We are able to caracterise exactly the equivalence class of $M$: it consists of all the $N:\typereal \rightarrow \typereal$ such that for every $r \in \typereal$, $\sem {N r} = \dirac 0$. (In order to show that formally, we can use our results of soundness and completeness of event bisimulation and state bisimulation respectively, or we can for instance use denotational arguments in the quasi-Borel space model).
Now, looking at the way we define the standard Borel space of terms as a countable co-product, we see that it is enough to show that there exists one pre-term $P$, such that $\{\vec r \mid P[\vec r] \equiv^{ctx} M \}$ is not measurable. We consider the pre-term $P$ with one hole defined by $P[r]:= \lambda x. \text{ if }f_O(x,r) = 1 \text{ then }0 \text{ else }1$. Using the caracterisation of $M$'s equivalence class we have previously established, we see that $\{r \mid P[r] \equiv^{ctx} M\} = \{r \mid \forall r' \in \RR, f(r,r') \in O\} = X$, and we can conclude from there.
\end{proof}

In other words, even if we insist, like we did, in constraining the basic 
building blocks of our calculus in such a way that everything stays measurable, 
the central concept of contextual equivalence somehow breaks the mold. There is 
however a way to recover measurability of contextual equivalence, which we 
will now describe.
\longv{\subsubsection{Recovering Measurability}}
\shortv{\subsection{Recovering Measurability}}
We consider a natural restriction of $\lambdap$, called $\lambdacont$, in
which we require primitive functions to be continuous.
\begin{definition}
The language $\lambdacont$ is the fragment of $\lambdap$ obtained by 
requiring $\mathcal{C}$ to only contain symbols denoting \emph{continuous} functions, 
and $\mathcal{B}$ to be empty.
\end{definition} 
Noticeably, what we obtained is not a \emph{different} calculus, but rather an 
instance of $\lambdap$ in which $\mathcal{C}$ and $\mathcal{B}$ 
satisfy certain conditions. Even if functions in $\mathcal{C}$ are required to 
be continuous, the presence in $\mathcal{C}_2$ of a (now, continuous) 
comparison operator $\mathit{op}_{\leq}$ is still required. Observe that 
sampling from discrete distributions is not available anymore from within 
$\lambdacont$, because standard tests on real numbers are not continuous. 
Discrete distributions could however be added natively to $\lambdacont$, e.g. 
by endowing the language with a standard binary choice operator $\oplus_p$ 
complementing $\sample$. For the sake of simplicity, we do not pursue this 
further, and work with a rather minimal language. Our proof techniques, 
however, are robust enough to ensure that our results would not be affected by 
the presence of $\oplus_p$.

We will extensively study the nature of program equivalence for $\lambdacont$ 
in Section~\ref{sect:feller} and Section~\ref{sect:completeness}. 
For the moment, we simply observe that moving from 
$\lambdap$ to $\lambdacont$ indeed ensures contextual 
equivalence to be measurable.

\begin{proposition}\label{prop:eq_context_lambdacont_Borel}
The graph of contextual equivalence on $\lambdacont$ is a Borel set.
\end{proposition}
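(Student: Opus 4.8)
The plan is to reduce, via the coproduct structure of $\terms_{\typeone}$, to a countable family of conditions and then show each one is Borel; the crucial point is that in $\lambdacont$ the evaluation kernel depends \emph{continuously} on the real constants occurring in a program, which is exactly what fails for $\lambdap$ and underlies Proposition~\ref{prop:contextual-equivalence-is-not-measurable}. I treat terms; values are identical. Since $\terms_{\typeone} = \coprod_{P} \terms_{P}$ is a countable coproduct with $\terms_{P} \cong \RR^{n}$ (Definition~\ref{definition:terms_as_polish_spaces}), it suffices to prove that for all pre-terms $P$ (with $n$ holes) and $Q$ (with $m$ holes) of type $\typeone$ the set $G_{P,Q} \defeq \{(\vec r,\vec s) \in \RR^{n}\times\RR^{m} \mid P[\vec{\makereal{r}}] \ctxeq Q[\vec{\makereal{s}}]\}$ is Borel; there are only countably many such pairs.

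I would then unfold contextual equivalence. Every observable context for $\typeone$ has the form $D[\vec{\makereal{t}}]$ with $D$ a \emph{pre-context} — a program with term-hole $\hole:\typeone$ and all its real constants abstracted into numbered holes, kept distinct from $\hole$ — and $\vec t \in \RR^{k}$; there are countably many pre-contexts. Since filling the hole of $D[\vec{\makereal{t}}]$ with $P[\vec{\makereal{r}}]$ produces a program of type $\typereal$ whose real constants are exactly $\vec t$ and $\vec r$, up to a renumbering of holes we have $D[\vec{\makereal{t}}]\holefill{P[\vec{\makereal{r}}]} = E_{D,P}[(\vec{\makereal{t}},\vec{\makereal{r}})]$ for a pre-term $E_{D,P}$ of type $\typereal$, and similarly for $Q$, so that
\[
  G_{P,Q} \;=\; \bigcap_{D}\ \{(\vec r,\vec s) \mid \forall\,\vec t\in\RR^{k},\ \sem{E_{D,P}[(\vec{\makereal{t}},\vec{\makereal{r}})]} = \sem{E_{D,Q}[(\vec{\makereal{t}},\vec{\makereal{s}})]}\},
\]
the intersection being over the countable set of pre-contexts $D$. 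The only obstruction to Borel-ness is the uncountable quantifier $\forall\,\vec t\in\RR^{k}$, and it is removed using the key property of $\lambdacont$: \emph{for every pre-term $E$ with $\ell$ holes, the map $\vec u\in\RR^{\ell}\mapsto\sem{E[\vec{\makereal{u}}]}$ is continuous for the topology of weak convergence on sub-probability measures on $\RR$.} Granting this, if the two measures agree at all rational $\vec t$ then for an arbitrary real $\vec t$ one picks rationals $\vec t_j\to\vec t$, passes to the weak limit on both sides, and invokes uniqueness of weak limits; hence the inner set may be rewritten with $\forall\,\vec t\in\QQ^{k}$. For each fixed rational $\vec t$, that condition is Borel in $(\vec r,\vec s)$: the maps $\vec r\mapsto\sem{E_{D,P}[(\vec{\makereal{t}},\vec{\makereal{r}})]}$ and $\vec s\mapsto\sem{E_{D,Q}[(\vec{\makereal{t}},\vec{\makereal{s}})]}$ are measurable into $\distrs{\RR}$ since $\sem{-}$ is a kernel, while $\{\distrone=\distrtwo\}$ is Borel in $\distrs{\RR}\times\distrs{\RR}$ ($\distrs{\RR}$ being standard Borel, or by writing it as a countable intersection over a separating family as in Lemma~\ref{lemma:expectation_as_observable_1}). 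Thus $G_{P,Q}$ is a countable intersection of Borel sets over $D$ and $\vec t\in\QQ^{k}$, hence Borel, and so is the graph of $\ctxeq$.

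The hard part will be the key continuity property, which I expect to be the main obstacle; it is the $\typereal$-typed instance of the Feller continuity of $\lambdacont$ proved in Section~\ref{sect:completeness} and could be cited from there, but it can also be proved directly by induction on the approximants $\semn{-}{n}$ of Definition~\ref{definition:monadic-semantics}, stated for pre-terms of all types. The base and value cases are immediate ($\unit$ is weakly continuous), $\sample$ yields the constant measure $\lebesgue$, the administrative $\beta$-redex cases reduce to the induction hypothesis for a structurally smaller pre-term (the relevant syntactic shape being independent of $\vec u$), and the case of $\op(\valone_1,\ldots,\valone_j)$ uses precisely the assumed continuity of the denoted function — the exact spot where $\lambdap$ breaks, the discontinuous $f_O$ of Proposition~\ref{prop:contextual-equivalence-is-not-measurable} being what blocks the passage from $\RR^{k}$ to $\QQ^{k}$ above. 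The genuine work is the $\mathbf{let}$ case, where $\semn{\seq{\termone}{\termtwo}}{n+1}(A) = \int \semn{\subst{\termtwo}{\varone}{\valone}}{n}(A)\,\semn{\termone}{n}(\de\valone)$: by the induction hypothesis $\vec u\mapsto\semn{\termone[\vec{\makereal{u}}]}{n}$ is weakly continuous, and, decomposing $\values$ into its cells (each a Euclidean space on which, again by the induction hypothesis, the relevant integrands are jointly continuous in $\vec u$ and the cell parameters, and uniformly bounded), a standard weak-convergence argument yields convergence of the integrals. This closes the induction and hence the proof.
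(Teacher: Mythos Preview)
Your argument is correct and rests on the same key ingredient as the paper --- the weak continuity of $\vec u \mapsto \sem{E[\vec{\makereal{u}}]}$, which is exactly Proposition~\ref{theorem:weakly_conv_sem} --- but you deploy it differently. The paper keeps the uncountable quantifier $\forall\,\vec r\in\RR^k$ and instead shows that the complementary ``bad'' set $A_{C,q}=\{(\vec r,M,N)\mid \sem{C[\vec r][M]}((-\infty,q])\neq\sem{C[\vec r][N]}((-\infty,q])\}$ is \emph{open} (a preimage of an open set under a continuous map built from the continuous evaluation kernel), so its projection is open and the set $S_{C,q}$ is closed; a countable intersection then finishes. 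You instead use continuity to replace $\forall\,\vec t\in\RR^k$ by $\forall\,\vec t\in\QQ^k$ via density and uniqueness of weak limits, turning everything into a countable intersection of Borel conditions. Your route is arguably more elementary (no projection argument), while the paper's gives the slightly sharper information that each $S_{C,q}$ is closed, not just Borel.

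One caveat on your alternative sketch for the continuity property: proving weak continuity of each approximant $\semn{-}{n}$ by induction is fine, but passing from the $\semn{-}{n}$ to the supremum $\sem{-}$ is not automatic --- pointwise suprema of continuous maps into $\distrs{\RR}$ need not be continuous. The paper sidesteps this via the modular semantics (Definition~\ref{def:modular_approx_semantics} and Lemma~\ref{lemma:carac_modular_sem}), whose discrete support allows an $\epsilon$-truncation to a finite sum of Feller kernels. Since you already note the property can simply be cited from Section~\ref{sect:completeness}, this does not affect the validity of your main argument, but the direct inductive route as sketched would need this extra step.
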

\begin{proof}
First, since Borel sets are stable by countable union, we see that it is sufficient to show that for every type $\typeone$, $\equiv^{ctx}$ restricted to programs of type $\typeone$ is Borel. By definition of context equivalence, we can write:
\begin{align*}
\equiv^{ctx}_\typeone &= \bigcap_{\substack{C \text{ a pre-context }\\\text{with }k\text{ holes }}} \{(M,N) \mid \forall \vec r \in \RR^{k}, \, \sem {C[\vec r][M]} = \sem{C[\vec r][N]}\}\\
 &= \bigcap_{\substack{C \text{ a pre-context} \\ \text{ with }  k\text{ holes }}} \bigcap_{q \in \QQ \cup{+\infty}} \quad \{(M,N) \mid \forall \vec r \in \RR^{k}, \, \sem {C[\vec r][M]}(]-\infty,q]) = \sem{C[\vec r][N]}(]-\infty,q)\}
 \end{align*}
where\shortv{ a formal proof for the last equality can be found 
in~\cite{LV}.}\longv{ the last holds by Example~\ref{ex:enough_primitives} and 
Lemma~\ref{lemma:expectation_as_observable_1}.}
Borel sets are also closed by countable intersections, thus it is enough to show that for every pre-context $C$, and every $q \in \QQ \cup \{+ \infty\}$, the set $S_{C,q} :=\{(M,N) \mid \forall \vec r \in \RR^{k}, \, \sem {C[\vec r][M]}(]-\infty,q]) = \sem{C[\vec r][N]}(]-\infty,q)\}$ is Borel. As a first step, observe that we can rewrite $S_{C,q}$ as the complementary of:
\begin{equation}
\overline{S_{C,q}} := \{(M,N) \mid \exists r \in \RR^{k},(\vec r,M,N) \in A_{C,q} \},
\end{equation}
where $A_{C,q} = \{(r,M,N) \mid \sem {C[\vec r][M]}(]-\infty,q]) \neq \sem{C[\vec r][N]}(]-\infty,q)\} \subseteq \RR^{k} \times \terms_{\typeone} \times \terms_{\typeone}$.
It means that it is enough to show that the projection of $A_{C,q}$ is a Borel set. Observe that it is not true that the projection of a Borel set is a Borel set; however it is true that the projection of an open set is an open set, and thus a Borel set. Using the fact that now all our primitive functions are continuous, we are now able to show that $A_{C,q}$ is an open set.
 We write $A_{C,q}$  as: $$A_{C,q} = (h_C,q)^{-1}{(\{(a,b) \in \RR^{2} \mid a \neq b\})},$$ where $h_C,q : \RR^{k} \times \terms_{\typeone} \times \terms_{\typeone} \rightarrow \RR^2$ is the sequential composition $h_{C,q}: = f_C ; (g_q \times g_q)$, where $f_C$ and $g_q$ are defined thus:
\begin{align*}
& f_{C}: (\vec r,M,N) \in  \RR^{k} \times \terms_{\typeone} \times \terms_{\typeone} \mapsto (C[\vec r][M],C[\vec r][N]) \in \terms_{\typereal} \times \terms_{\typereal}\\
& g_q:P \in \terms_{\typereal}  \mapsto \sem{P} (]-\infty,q]) \in \RR.
\end{align*}
Looking at the definition of the standard Borel space $\terms$, we can see that elementary syntactic operations are not only measurable, but also continuous: from there, we obtain that $f_C$ is continuous. Moreover, enforcing continuity of primitive functions allows us to show that the function $g_q$ is also continuous: we will show formally this result in Section~\ref{sect:completeness} (it is a direct corollary of Proposition~\ref{theorem:weakly_conv_sem}). Since continuous functions are stable by composition and cartesian product, it means that also the function $h_{C,q}$ is continuous. It means that $A_{C,q}$ is the inverse image of an open $\RR^2$-set by a continuous function, thus it is also open, and that concludes the proof.
\end{proof}

Having defined the target calculi of this work and their associated 
notion of contextual equivalence, we can now proceed to the introduction of 
bisimulation-based equivalences and their logical and testing-based 
characterizations.

\section{On Labelled Markov Processes and Equivalences on Them}\label{sect:equivalences_on_lmps}
In this section, we briefly recap the definition of \emph{state}-bisimilarity, 
and we give the alternative notion of \emph{event} bisimilarity, introduced 
by \citet{danos2005almost,danos2006bisimulation}. A summary of all results 
presented in this section can be found in 
Figure~\ref{fig:comparaison_equivalences}. 

\begin{figure}
\scalebox{0.6}{
\begin{minipage}{.7\textwidth}
\begin{center}
\begin{tikzpicture}[set/.style={fill=cyan,fill opacity=0.1}]

 \draw[fill=yellow,fill opacity=0.05,rotate =0, draw=none] (0,1.3) ellipse (6.6cm and 4.5cm);
  \node at (0,-2.7){$\sim_{\statet} \, \subseteq \, \sim_{\event} \, = \, \sim_{\logic} \, = \, \sim_{\test} $};
  \draw[fill=cyan,fill opacity=0.1,rotate =25,draw=none] (2,0.2) ellipse (4.5cm and 3cm);
  \node at (3.2,3.2){\begin{minipage}{0.2 \textwidth} \centering Analytical \\ 
  State Space  \end{minipage}};
   \node at (-3.2,3.2){\begin{minipage}{0.2 \textwidth} \centering Countable \\ 
   Labels  \end{minipage}};
  \draw[fill=green,fill opacity=0.1,rotate =-25,draw=none] (-2,0.2) ellipse (4.5cm and 3cm);
  \node[draw,circle,minimum size =2.8cm,fill=red!50, fill opacity=0.5] (circle1) at (0,-0.5){};
  \node at (0,-0.5){\begin{minipage}{0.2 \textwidth} \centering Discrete States 
  and \\ Countable Labels  \end{minipage}};
  \node at (0,2){\begin{minipage}{0.12 \textwidth} \centering
\begin{align*}
  \sim_{\statet} &= \sim_{\event} 
\end{align*}
  \end{minipage}};
  \node at (3.5,0.8){$\bullet$};
  \node at (3.5,0){
  \begin{minipage}{0.19 \textwidth} \centering
  LMP for the\\ stochastic\\$\lambda$-calculus
  \end{minipage}
};
 \node at (4.2,2.4){$\bullet$};
  \node at (4.2,1.9){\begin{minipage}{0.37 \textwidth} \centering
 From~\cite{clerc2019expressiveness} \\ $\sim_{\statet} \, \neq \, \sim_{\event}$
  \end{minipage} };
   \node at (-4,2.2){$\circ$};
  \node at (-4,1.7){\begin{minipage}{0.3 \textwidth} \centering
 From~\cite{terraf2011unprovability} \\ $\sim_{\statet} \, \neq \, \sim_{\event}$
  \end{minipage} };

  \end{tikzpicture}
\end{center}
\end{minipage}}
\caption{Comparison of LMP-equivalences}\label{fig:comparaison_equivalences}
\end{figure}
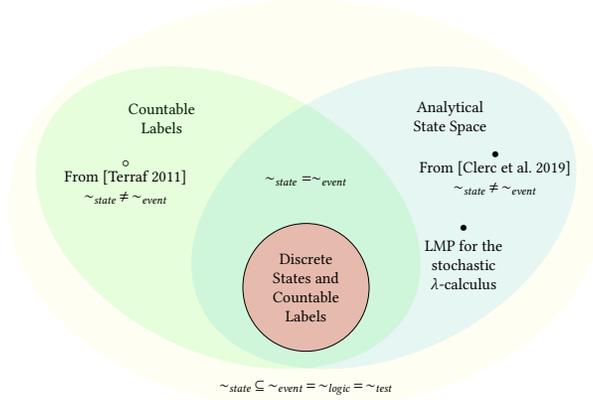

\subsection{Relational Reasoning} 

Before introducing bisimulation-based equivalences, it is convenient
to introduce some basic definitions on measurable spaces, relational
reasoning on them, and the crucial notion of a relational extension on 
probability measures. We denote by $\relone, \reltwo, \hh$ binary relations,
writing $\idrel$ for the identity relation (with subscripts, when
necessary), $\relone;\reltwo$ for relation composition (in
diagrammatic order), and $\converse{\relone}$ for the converse (or
transpose) of $\relone$.  Even if most of the results presented in
this section hold for arbitrary (endo)relations, it is convenient to
work right from the beginning with equivalence relations: given a set
$X$, we denote by $\rels(X)$ the complete lattice of equivalence
relations on $X$.  Finally, when reasoning relationally, we implicitly
view any function $f$ as a relation (via its graph).

The crucial ingredient to define bisimulation-like equivalences on
Markov processes (as introduced later in the paper) is the notion of an
extension of relations over a space $X$ to relations over
$\Delta(X)$. Relational extensions of that form are known in the
literature as \emph{relators} or \emph{lax extensions}
\cite{DBLP:conf/ifip2-1/BackhouseH93,Barr/LMM/1970}.  In this paper,
we are mostly concerned with the notion of relational extension used by
\citet{LagoG19} to define applicative bisimilarity on stochastic
$\lambda$-calculi. At a high level, this extension is obtained as the
codensity lifting of the Giry
monad~\cite{DBLP:journals/lmcs/KatsumataSU18}.  To define such an
extension formally, let us first recall that given a space
$(X, \Sigma_X)$ and a relation $\relone \in \rels(X)$, a set
$A \subseteq X$ is $\relone$-closed if $\relone[A] = A$. Moreover, we
define $\Sigma(\relone)$ as the sub-$\sigma$-algebra of $\Sigma$
defined as:
$\Sigma(\relone) = \{A \in \Sigma \mid A \text{ is
}\relone\text{-closed}\}$.

\begin{definition}[Probabilistic Relation Lifting]
\label{def:relation-lifting}
Let $(X, \Sigma)$ be a measurable space.
Define $\relator: \rels(X) \rightarrow \rels{(\distrs X)}$ by:\footnote{
  Notice that if $\relone$ is an equivalence, then so is $\relator{\relone}$.
}
$$\distrone \mathrel{\relator \relone} \distrtwo \iff \forall A \in \Sigma(\relone).\ 
\distrone(A) = \distrtwo(A). 
$$
\end{definition}

\longv{
The map $\relator$ essentially defines a relator \cite{} for the Giry monad, 
as stated by the following result.
\begin{proposition}
The map $\relator$ is monotone. Moreover, we have:
\begin{align*}
&\idrel_{\distrs X} \subseteq \relator{(\idrel_X)},
\text{ }
{\relator{\relone}}; {\relator{\reltwo}} \subseteq \relator{(\relone;\reltwo)},
\text{ }
\forward{f} \subseteq \relator{f},
\text{ }
\converse{\forward{f}} \subseteq \relator(\converse{f})
\\
&\relone; \unit_Y \subseteq \unit_X; \relator{\relone},
\quad
\relone; g \subseteq f; \relator{\reltwo} \implies \relator{\relone}; \kleisli{g} 
\subseteq \kleisli{f}; \relator{\reltwo}.
\end{align*}
\end{proposition}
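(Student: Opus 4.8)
The plan is to reduce each clause to the elementary, order-reversing correspondence $\relone \mapsto \Sigma(\relone)$ between the lattice $\rels(X)$ of equivalence relations on $X$ and the sub-$\sigma$-algebras of $\Sigma_X$ (this is, up to packaging, the statement that $\relator$ is a codensity lifting of $\giry$ in the sense of \citet{DBLP:journals/lmcs/KatsumataSU18}, but a direct check is short). First I would record three facts about this operator: that $\Sigma(\relone)$ is genuinely a $\sigma$-algebra — closure under complements uses that an equivalence relation is symmetric, so an $\relone$-closed set is a union of $\relone$-equivalence classes and hence so is its complement; that $\Sigma(\idrel_X) = \Sigma_X$; and that $\relone \subseteq \reltwo$ implies $\Sigma(\reltwo) \subseteq \Sigma(\relone)$, since a union of the coarser $\reltwo$-classes is a fortiori a union of the finer $\relone$-classes. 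Given these, monotonicity of $\relator$ is immediate ($\distrone \mathrel{\relator\relone} \distrtwo$ asks that $\distrone, \distrtwo$ agree on $\Sigma(\relone)$, a family that shrinks as $\relone$ grows), the identity law follows from $\Sigma(\idrel_X) = \Sigma_X$ (which even gives $\relator{\idrel_X} = \idrel_{\distrs X}$), and the footnote is discharged too, since ``agreeing on a fixed family of sets'' is reflexive, symmetric and transitive.

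Next I would treat the laws about graphs of functions and about composition, each of which reduces to a one- or two-line set-theoretic computation once $\Sigma(-)$ is extended in the obvious way to a heterogeneous relation $\relone \subseteq X \times Y$, namely as the set of pairs $(A, B) \in \Sigma_X \times \Sigma_Y$ with $\relone[A] \subseteq B$ and $\converse{\relone}[B] \subseteq A$, with $\distrone \mathrel{\relator\relone} \distrtwo$ iff $\distrone(A) = \distrtwo(B)$ for all such pairs. For $\forward{f} \subseteq \relator{f}$ (and, by the same computation, $\converse{\forward{f}} \subseteq \relator{\converse{f}}$), the point is that every pair $(A,B) \in \Sigma(f)$ satisfies $A = f^{-1}(B)$, so $\forward{f}(\distrone)(B) = \distrone(f^{-1}(B)) = \distrone(A)$, which is exactly $\distrone \mathrel{\relator f} \forward{f}(\distrone)$. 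For $\relator{\relone}; \relator{\reltwo} \subseteq \relator{(\relone;\reltwo)}$, I would observe that — using only reflexivity of $\relone$ and $\reltwo$ — any $(\relone;\reltwo)$-closed set is simultaneously $\relone$-closed and $\reltwo$-closed, so if $\distrone$ is linked to $\distrtwo$ through an intermediate measure via $\relator\relone$ then $\relator\reltwo$, then on any such set $A$ the values of all three measures coincide and $\distrone(A) = \distrtwo(A)$. For $\relone; \unit_Y \subseteq \unit_X; \relator{\relone}$, it suffices to show $x \mathrel{\relone} y \Rightarrow \dirac{x} \mathrel{\relator\relone} \dirac{y}$, i.e. $x \in A \Leftrightarrow y \in A$ for every $\relone$-closed $A$: the forward direction is exactly $\relone$-closedness and the backward one its $\converse{\relone}$-closedness, which holds because $\relone$ is symmetric.

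The real work lies in the last clause, the Kleisli (substitution) law $\relone; g \subseteq f; \relator{\reltwo} \implies \relator{\relone}; \kleisli{g} \subseteq \kleisli{f}; \relator{\reltwo}$. Unfolded, the hypothesis says $x \mathrel{\relone} y$ implies $g(x) \mathrel{\relator\reltwo} f(y)$, and the goal is that, whenever $\distrone \mathrel{\relator\relone} \distrtwo$ and $C$ is a $\reltwo$-closed measurable set, $\kleisli{g}(\distrone)(C) = \kleisli{f}(\distrtwo)(C)$, i.e. $\int g(x)(C)\, \distrone(\de x) = \int f(y)(C)\, \distrtwo(\de y)$. Writing $\phi(x) = g(x)(C)$ and $\psi(y) = f(y)(C)$ — measurable $[0,1]$-valued maps, since $g$ and $f$ are kernels — the hypothesis together with $\reltwo$-closedness of $C$ gives $\phi(x) = \psi(y)$ whenever $x \mathrel{\relone} y$; hence for every threshold $t$ the pair of superlevel sets $(\{\phi > t\}, \{\psi > t\})$ lies in $\Sigma(\relone)$, so $\distrone(\{\phi > t\}) = \distrtwo(\{\psi > t\})$ because $\distrone \mathrel{\relator\relone} \distrtwo$; integrating in $t$ via the layer-cake identity $\int \phi\, \de\distrone = \int_0^1 \distrone(\{\phi > t\})\, \de t$ (and the same for $\psi$, $\distrtwo$) yields the equality. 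I expect this clause to be the main obstacle, though the obstacle is entirely measure-theoretic rather than conceptual: one must check measurability of $\phi, \psi$ and justify the layer-cake exchange (or argue instead by monotone approximation with simple functions), and one must be a little careful with the bookkeeping that makes $\Sigma(-)$, hence $\relator$, meaningful on relations between two distinct spaces. All the genuinely structural content is already contained in the $\relone \mapsto \Sigma(\relone)$ correspondence of the first paragraph.
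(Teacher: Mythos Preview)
Your proof is correct. The paper's own proof is a one-liner: it invokes the integral characterisation from \citet{katsumata2015codensity}, namely that $\distrone \mathrel{\relator\relone} \distrtwo$ holds iff $\int f\,\de\distrone \leq \int f\,\de\distrtwo$ for every measurable $\relone$-monotone $f:X\to[0,1]$, and leaves the reader to check that the relator axioms follow. You instead verify each clause directly from the defining correspondence $\relone \mapsto \Sigma(\relone)$, and for the Kleisli law you use the layer-cake identity to pass from agreement on $\relone$-closed sets to agreement of the integrals $\int g(-)(C)\,\de\distrone$ and $\int f(-)(C)\,\de\distrtwo$. This is not a different idea so much as an unpacking of the cited one: your layer-cake step is precisely the content of the integral characterisation the paper invokes (for an equivalence relation, $\relone$-monotone just means $\relone$-invariant, and the indicator of an $\relone$-closed set is the prototypical such function). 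What your route buys is self-containment and an explicit treatment of the heterogeneous case ($\forward{f}\subseteq\relator f$, etc.), which the paper leaves implicit; what the paper's route buys is brevity and a pointer to the general codensity-lifting framework where these laws live.
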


\begin{proof}
By noticing--see~\cite{katsumata2015codensity}--that for any $\distrone,\distrtwo \in \Delta(X)$ and $\relone \in \rels(X)$, 
one has $\distrone \mathrel{\relator\relone} \distrtwo$ if and only if 
$\int_X f\de{\distrone} \leq \int_X f \de{\distrtwo}$, for any measurable function
$f: X \to [0,1]$ such that $x \mathrel{\relone} y \implies f(x) \leq f(y)$.
\end{proof}

In the literature, $\relator \relone$ is often considered as a relation on \emph{proper} distributions over $X$, instead of all sub-distributions as here. In order to bridge this gap, we can encode sub-distributions as proper distributions over the set $X \cup \{\bot\}$, where $\bot$ is a new point designed to represent divergence. We formalize this idea in Lemma~\ref{lemma:aux_subdistrs} below.
  \begin{notation}\label{notation:distrib_bot}
We can see a sub-distributions on $Z$ as a (proper) distribution on $Z_{\bot}:= Z \sqcup \{\bot\}$, where $\sqcup$ is the disjoint union operation on polish spaces. If $\eta$ is a sub-distribution on $Z$, we note $\eta_\bot$ the corresponding distribution on $Z_{\bot}$. Similarly, if $R$ is a binary relation on $X \times Y$, we write $R_{\bot}$ for the binary relation on $X_\bot \times Y_\bot$ defined as $R \cup \{\bot_X,\bot_Y\}$.
  \end{notation}
   \begin{lemma}\label{lemma:aux_subdistrs}
    Let $R$ be a binary relation on $X \times Y$, and $\mu$, $\nu$ two sub-disributions on respectively $X$ and $Y$. Then $\mu\, \Gamma R \,\nu$ if and only if $ \mu_{\bot} \,\Gamma R_{\bot}\, \nu_{\bot}$.
    \end{lemma}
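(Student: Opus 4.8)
The plan is to unfold Definition~\ref{def:relation-lifting} on both sides and reduce the biconditional to a comparison of the ``test sets'' against which the two liftings check equality. Spelling out $\relator$ for a relation $R\subseteq X\times Y$, the statement $\mu\mathrel{\relator R}\nu$ holds iff $\mu(A)=\nu(B)$ for every matched pair $(A,B)\in\Sigma_X\times\Sigma_Y$, i.e.\ every pair with $R[A]\subseteq B$ and $R^{-1}[B]\subseteq A$ (equivalently, with $A\sqcup B$ closed under $R\cup R^{-1}$ inside $X\sqcup Y$); and similarly $\mu_\bot\mathrel{\relator R_\bot}\nu_\bot$ holds iff $\mu_\bot(A')=\nu_\bot(B')$ for every $R_\bot$-matched pair $(A',B')\in\Sigma_{X_\bot}\times\Sigma_{Y_\bot}$. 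So it suffices to (i) describe the $R_\bot$-matched pairs in terms of the $R$-matched pairs, and (ii) evaluate $\mu_\bot,\nu_\bot$ on them using the explicit formula $\mu_\bot(A')=\mu(A'\cap X)+(1-\mu(X))\,\delta_{\bot_X}(A')$ from Notation~\ref{notation:distrib_bot}.

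Step (i) is the heart of the argument. I would use that $\Sigma_{X_\bot}$ consists exactly of the sets $A$ and $A\cup\{\bot_X\}$ with $A\in\Sigma_X$ (and symmetrically for $Y_\bot$), together with the fact that, by construction of $R_\bot$, the point $\bot_X$ is $R_\bot$-related only to $\bot_Y$ and $\bot_Y$ only to $\bot_X$, while $R_\bot$ restricted to $X\sqcup Y$ is $R$. It then follows that the $R_\bot$-matched pairs are exactly $(A,B)$ and $(A\cup\{\bot_X\},\,B\cup\{\bot_Y\})$, with $(A,B)$ ranging over the $R$-matched pairs: closure under $R_\bot$ and $R_\bot^{-1}$ forces $\bot_X\in A'\iff\bot_Y\in B'$ and forces $(A'\cap X,\,B'\cap Y)$ to be $R$-matched, while conversely both families are visibly $R_\bot$-matched. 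Substituting into the formula for $(-)_\bot$ gives $\mu_\bot(A)=\mu(A)$ and $\mu_\bot(A\cup\{\bot_X\})=\mu(A)+(1-\mu(X))$ on $A\in\Sigma_X$, and symmetrically $\nu_\bot(B)=\nu(B)$, $\nu_\bot(B\cup\{\bot_Y\})=\nu(B)+(1-\nu(Y))$ on $B\in\Sigma_Y$.

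Both directions are then immediate. If $\mu\mathrel{\relator R}\nu$, then $\mu(A)=\nu(B)$ on every $R$-matched pair; taking $(A,B)=(X,Y)$ gives $\mu(X)=\nu(Y)$, so the equality persists on the pairs $(A\cup\{\bot_X\},B\cup\{\bot_Y\})$ as well, whence $\mu_\bot$ and $\nu_\bot$ agree on all $R_\bot$-matched pairs, i.e.\ $\mu_\bot\mathrel{\relator R_\bot}\nu_\bot$. Conversely, if $\mu_\bot\mathrel{\relator R_\bot}\nu_\bot$, then restricting to the $R_\bot$-matched pairs of the form $(A,B)$ with $(A,B)$ already $R$-matched gives $\mu(A)=\mu_\bot(A)=\nu_\bot(B)=\nu(B)$, that is, $\mu\mathrel{\relator R}\nu$. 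The only point requiring genuine care is step (i): one has to check that $R_\bot$-matchedness really enforces the ``both-or-neither'' behaviour of the two bottom points and that adjoining $\bot_X,\bot_Y$ to the $\sigma$-algebras introduces no further matched pairs; everything else is routine bookkeeping with the definition of $(-)_\bot$. (Equivalently, the argument can be run through the dual, ``test function'' presentation of $\relator$: an $R_\bot$-invariant measurable map $X_\bot\sqcup Y_\bot\to[0,1]$ is precisely an $R$-invariant one on $X\sqcup Y$ together with a freely chosen common value at $\bot_X\sim\bot_Y$, and integration against $\mu_\bot$ splits accordingly.)
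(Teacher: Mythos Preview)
Your argument is correct. The paper states this lemma without proof, so there is no comparison to make; your unfolding of $\relator$ for a heterogeneous relation via matched pairs $(A,B)$ with $R[A]\subseteq B$ and $R^{-1}[B]\subseteq A$ is the natural reading, and Step~(i) together with the observation that $(X,Y)$ is always matched (hence $\mu(X)=\nu(Y)$) gives exactly what is needed for both directions.
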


}

When working with standard Borel spaces, 
there is another well-known relational extensions corresponding to 
the so-called Wasserstein-Kantorovich distance \cite{villani2008optimal}. 
Such an extension 
crucially relies on the notion of a probabilistic coupling. 

\begin{definition}
Let $(X, \Sigma_X)$, $(Y, \Sigma_Y)$ be standard Borel spaces 
and $\distrone \in \Delta(X)$, $\distrtwo \in \Delta(Y)$
be probability 
    measures. A \emph{coupling} 
    of $\distrone, \distrtwo$ is a probability measure 
    $\omega \in \Delta(X \times Y)$ such that for all $A\in \Sigma_X$, 
    $B \in \Sigma_Y$, we have: 
    $\omega(A \times Y) = \distrone(A)$ and $\omega(X \times B) = \distrtwo(B)$. 
    We denote by $\Omega(\distrone,\distrtwo)$ the set of 
    couplings of $\distrone$ and $\distrtwo$. Oserve that such a set 
    is non-empty (e.g., product distributions form a coupling).
\end{definition}
We now extend a relation $\relone$ to probability measures 
by asking the existence of a coupling compatible with $\relone$ 
(cf. the existential character of the following extension 
with the universal one of Definition~\ref{def:relation-lifting}).
\begin{definition}\label{def:relator_two}
Let $(X, \Sigma_X)$ be a standard Borel space. We define 
$\relatortwo: \rels(X) \to \rels(\Delta(X))$ by:
$$
\distrone \mathrel{\relatortwo\relone} \distrtwo \iff 
\exists \omega \in \Omega(\distrone, \distrtwo). \exists \reltwo \in \Sigma_{X \times X}.\ 
\reltwo \subseteq \relone \mathrel{\&} \omega(\reltwo) = 
\omega(X \times X).
$$
\end{definition}

It is easy to see that ${\relatortwo{\relone}} \subseteq {\relator\relone}$, 
although $\relator$ and $\relatortwo$ define different 
relational extensions, in general.\footnote{
  In fact, whereas $\relator$ gives a lax relational extension of $\giry$, 
  the map $\relatortwo$ does not. In particular, the map $\relatortwo$  
  corresponds to the so-called Barr lifting \cite{Barr/LMM/1970} 
  of $\giry$. Good properties (viz. being a lax extension) of 
  such a lifting are deeply connected with 
  weak-pullback preservation of $\giry$ which, in general, does not hold 
  \cite{DBLP:conf/calco/Viglizzo05}. 
} This is due to the fact that we are 
working with arbitrary
equivalence relations, which do not reflect the (standard) Borel structure of the spaces 
considered. Moving to relations reflecting such a structure --- known as 
\emph{Borel} relations ---
we indeed obtain the desired 
equality.  
\begin{definition}
Given a standard Borel space $(X,\Sigma_X)$ and $\relone \in \rels(X)$, we say that $\relone$ is a \emph{Borel} relation when the set $\{(x,y) \mid x \relone y\}$ is a Borel set in $X \times X$.
\end{definition}



The advantage of working with Borel relations is that they are extremely 
well-behaved. 
In particular, Borel relations allow us to generalise Monge-Kantorovich duality 
to the measurable case, a result that we will need in 
Section~\ref{sect:completeness} .

\begin{theorem}
\label{thm:theta-equal-gamma}
Let $(X,\Sigma_X)$ be a standard Borel space and $\relone \in \rels(X)$ be a 
Borel relation. Then $\relator{\relone} = \relatortwo{\relone}$.
\end{theorem}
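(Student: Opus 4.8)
The plan is to establish the non-trivial inclusion $\relator{\relone} \subseteq \relatortwo{\relone}$; the reverse inclusion $\relatortwo{\relone} \subseteq \relator{\relone}$ already holds for arbitrary equivalence relations, as noted in the excerpt, so nothing new is needed there. Fix a standard Borel space $(X, \Sigma_X)$ and a Borel equivalence relation $\relone$, and suppose $\distrone \mathrel{\relator\relone} \distrtwo$, i.e.\ $\distrone(A) = \distrtwo(A)$ for every $\relone$-closed Borel set $A$. We must exhibit a coupling $\omega \in \Omega(\distrone, \distrtwo)$ concentrated on a Borel subset of $\relone$.

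First I would reduce to the quotient. Since $\relone$ is a Borel equivalence relation on a standard Borel space, I would like to pass to the quotient space $X/\relone$ with the quotient $\sigma$-algebra, namely $\Sigma(\relone)$, the $\sigma$-algebra of $\relone$-closed Borel sets; let $q : X \to X/\relone$ be the canonical projection. The condition $\distrone \mathrel{\relator\relone} \distrtwo$ says exactly that the pushforwards agree: $\forward{q}(\distrone) = \forward{q}(\distrtwo)$ as measures on $X/\relone$. Call this common measure $\bar\distrone$. The strategy is then to glue $\distrone$ and $\distrtwo$ along their common image via a disintegration: write $\distrone = \int \distrone_c \, \bar\distrone(dc)$ and $\distrtwo = \int \distrtwo_c \, \bar\distrone(dc)$, where $\{\distrone_c\}$ and $\{\distrtwo_c\}$ are families of probability measures on $X$ with $\distrone_c$ (resp.\ $\distrtwo_c$) supported on the fibre $q^{-1}(c)$, for $\bar\distrone$-almost every $c$. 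Then define $\omega \defeq \int (\distrone_c \otimes \distrtwo_c) \, \bar\distrone(dc)$ on $X \times X$. A routine check on rectangles shows $\omega$ is a coupling of $\distrone$ and $\distrtwo$, and since $\distrone_c \otimes \distrtwo_c$ is carried by $q^{-1}(c) \times q^{-1}(c) \subseteq \relone$ for a.e.\ $c$, the coupling $\omega$ is concentrated on $\relone$; taking $\reltwo$ to be a Borel subset of $\relone$ of full $\omega$-measure (using that $\relone$ itself is Borel, so $\reltwo = \relone$ works) finishes the argument.

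The main obstacle is justifying the disintegration step, which requires some care about which results from descriptive set theory and measure theory apply. The quotient of a standard Borel space by a Borel equivalence relation need \emph{not} be standard Borel in general (that holds only for smooth equivalence relations), so I cannot blindly invoke the usual disintegration theorem for standard Borel spaces. The cleaner route is to disintegrate the measures $\distrone$ and $\distrtwo$ directly over the sub-$\sigma$-algebra $\Sigma(\relone) \subseteq \Sigma_X$ — this is exactly the setting of conditional-measure / disintegration results that hold whenever the ambient space $X$ is standard Borel and one conditions on a countably generated sub-$\sigma$-algebra (e.g.\ the disintegration theorem via regular conditional probabilities). One then needs that $\Sigma(\relone)$ is countably generated, which follows from $\relone$ being a Borel relation, and that the resulting conditional probabilities $\distrone_c, \distrtwo_c$ are genuinely concentrated on the $\relone$-classes — a standard consequence of the disintegration being with respect to precisely $\Sigma(\relone)$. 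The measurability of $c \mapsto \distrone_c \otimes \distrtwo_c$ and the fact that the integral $\int (\distrone_c \otimes \distrtwo_c)\,\bar\distrone(dc)$ defines a bona fide measure on $X \times X$ whose marginals are $\distrone$ and $\distrtwo$ are then routine verifications using the defining property of disintegrations and monotone convergence.

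Finally, to connect with the literature, I would remark that this is the measurable incarnation of the classical Monge–Kantorovich / Strassen duality: over Borel relations, the universal characterization (Definition~\ref{def:relation-lifting}) and the existential, coupling-based characterization (Definition~\ref{def:relator_two}) coincide, exactly as in the discrete case where a relation lifting admitting the "all measurable $[0,1]$-monotone functions agree" description coincides with the "compatible coupling exists" description. The Borel hypothesis on $\relone$ is what makes $\Sigma(\relone)$ well-behaved enough (countably generated) for disintegration to be available, which is precisely the obstruction that fails for arbitrary equivalence relations and explains why $\relator$ and $\relatortwo$ differ in general.
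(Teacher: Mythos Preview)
Your construction via disintegration has a genuine gap. The claim that ``$\Sigma(\relone)$ is countably generated, which follows from $\relone$ being a Borel relation'' is false. A Borel equivalence relation $\relone$ on a standard Borel space has $\Sigma(\relone)$ countably generated if and only if $\relone$ is \emph{smooth} (admits a Borel reduction to equality on a standard Borel space). The canonical non-smooth example is $E_0$ on $2^{\mathbb N}$ (eventual equality of binary sequences): each $E_0$-class is countable, hence Borel, so the atoms of $\Sigma(E_0)$ are exactly the $E_0$-classes; if $\Sigma(E_0)$ were generated by countably many sets $A_n$, the map $x \mapsto (\mathbf 1_{A_n}(x))_n$ would be a Borel reduction of $E_0$ to equality, contradicting non-smoothness. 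Without countable generation, regular conditional probabilities with respect to $\Sigma(\relone)$ need not be concentrated on the $\relone$-classes --- indeed for $E_0$ under the uniform measure, ergodicity forces the conditional measure to be the full measure, not a measure on a countable class --- so the product $\distrone_c \otimes \distrtwo_c$ need not live on $\relone$, and your coupling $\omega$ need not witness $\distrone \mathrel{\relatortwo\relone} \distrtwo$.

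The paper's route is the one you allude to in your final paragraph but do not actually carry out: it first rewrites $\relator\relone$ via the codensity characterisation ($\distrone \mathrel{\relator\relone} \distrtwo$ iff $\int f\,d\distrone \leq \int f\,d\distrtwo$ for every measurable $\relone$-monotone $f:X\to[0,1]$), and then invokes Monge--Kantorovich duality for \emph{Borel} cost functions on Polish spaces (Kellerer 1984, Beiglb\"ock 2009) to produce the coupling. That duality result is exactly what absorbs the difficulty you tried to handle by hand; it does not require smoothness of $\relone$, only that $\relone$ be Borel so that its indicator is a Borel cost. Your disintegration argument is essentially a proof of the duality in the smooth case, but the general Borel case genuinely needs the heavier machinery.
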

\begin{proof}
It is a result from~\cite{katsumata2015codensity} that for any $\distrone,\distrtwo \in \Delta(X)$ and $\relone \in \rels(X)$, 
one has $\distrone \mathrel{\relator\relone} \distrtwo$ if and only if 
$\int_X f\de{\distrone} \leq \int_X f \de{\distrtwo}$, for any measurable function
$f: X \to [0,1]$ such that $x \mathrel{\relone} y \implies f(x) \leq f(y)$. From there, we can conclude by applying the Monge-Kantorowitch duality result for Borel cost functions from e.g.~\cite{kellerer1984duality}~\cite{beiglbock2009general}.
\end{proof}



\subsection{State Bisimilarity}\label{sect:statebisim}
The first notion of equivalence we consider is the natural generalization of 
bisimilarity to transition systems on continuous state spaces.
The natural generalization of a (probabilistic) transition system to continuous 
states is the notion of a Labelled Markov Process (LMP, for short).

\begin{definition}[Labelled Markov Processes]
  A \emph{Labelled Markov Process} (LMP) $\lmpone$  is a triple $(\markovone, 
  \actsone, \{h_a \mid a \in \actsone\})$, where $\markovone$ (the set of 
  states) is measurable, $\actsone$ (the set of labels) is an arbitrary set, 
  and for every $a \in \actsone$
  the map $h_a: \markovone \times \Sigma_{\markovone} \rightarrow [0,1]$
  is a sub-probability kernel. If moreover $\actsone$ is a measureable set, and 
  for every $s \in \markovone$ the map $a \in \actsone \times A \in 
  \Sigma_{\markovone} \mapsto h_a(s,A)$ is a sub-probability kernel, then we 
  say that $\lmpone$ is a \emph{Measurably Labelled Markov Process}.
\end{definition}

In the literature, the set $\actsone$ is very often taken to be countable. 
However, this is not the case for the LMPs which model parameter passing in 
calculi whose set of terms is itself not countable, since labels \emph{include}  
terms. Notice that a LMP 
$(\markovone, \actsone, \{h_a \mid a \in \actsone\})$ gives a $\actsone$-indexed 
family of measurable maps $h_a: \markovone \to \giry(\markovone)$. Exploiting this 
observation, a state bisimulation is defined in terms of relational extensions 
for the Giry monad. 

\begin{definition}[State Bisimulations]\label{def:state_bisimulation}
  Given a LMP $\lmpone$, a \emph{state bisimulation relation}
  is a relation $\relone \in \rels(\markovone)$ such that, 
  for all states $s,t$, if 
  $s \mathrel{\relone} t$ holds, then so does
  $h_a(s) \mathrel{\relator{\relone}} h_a(t)$, for any label $a$. 
\end{definition}

We say that $s$ and $t$ are \emph{state bisimilar} --- and write 
$s \stackrel{\lmpone}{\sim_{\statet}} t$ --- if there exists a state
bisimulation $\relone$ such that $s \mathrel{\relone} t$. 
When the LMP $\lmpone$ is clear from the context, we write 
$\sim_{\statet}$ 
in place of $\stackrel{\lmpone}{\sim_{\statet}}$ (and adopt the same 
convention for all the equivalences defined in this paper).

Notice the crucial presence of the map $\relator$ in 
Definition~\ref{def:state_bisimulation}. In fact, one 
can obtain another definition of state bisimilarity simply 
by replacing $\relator$ with $\relatortwo$ in 
Definition~\ref{def:state_bisimulation}. Since $\relatortwo{\relone} 
\subseteq \relator{\relone}$ holds for any relation $\relone$, 
bisimilarity based on $\relatortwo$ is included 
in bisimilarity based on $\relator$, and the two coincide when 
they are Borel relations (Theorem~\ref{thm:theta-equal-gamma}).
Unless otherwise specified, state bisimilarity 
will always refer to Definition~\ref{def:state_bisimulation}, 
and thus to the map $\relator$. The reason behind such a choice 
is twofold: on the one hand, $\relator$-based bisimilarity is 
the one usually finds in the literature; on the other hand, 
$\relator$ has in general nicer properties than $\relatortwo$ ensuring, 
e.g., $\relator$-based bisimilarity to be an equivalence relation. 

\subsection{Logical Equivalence}
Another classic approach to define program equivalence is via logic~\cite{LarsenS91,DBLP:conf/lics/DesharnaisEP98}. Informally, formulas of the logic are used as
discriminators between programs, and two programs are equivalent if
they yield the same interpretation for all formulas.  In our setting,
such a logic takes the form of a probabilistic modal logic.

\begin{definition}[Probabilistic Modal Logic]\label{def:probabilistic_moda_logic}
  Given a set $\actsone$ of actions, we define a class of logical formulas
  by way of the following grammar, where $a\in\actsone$ and $q\in \mathbb Q 
  \cap [0,1]$:
  \begin{align*}
    \phi \in \Logic &\bnf \top \midd \phi_1 \wedge \phi_2 \midd \act a q \phi.
  \end{align*}
  The semantics of formulas is given parametrically on a LMP 
  $(\markovone, 
  \actsone, \{h_a \mid a \in \actsone\})$
  by associating to each formula in $\Logic$ an element of $\Sigma$
  as follows,  where $\act a r A = \{s \mid h_a(s,A) > r \}$.
  \begin{align*}
    \sem \top &= \markovone; 
    &
    \sem{\phi_1 \wedge \phi_2} &= \sem {\phi_1} \cap \sem{\phi_2}; 
    &
    \sem {\act a q \phi} &= \act a q {\sem \phi}.
    \end{align*}
\end{definition}

 We say that two states $s,t \in \markovone$ are \emph{equivalent}
  with respect to $\Logic$ --- and write $s\sim_{\logic} t$ --- when for every
  $\phi \in \Logic$, it holds that $(s \in \sem \phi \Leftrightarrow t \in \sem  \phi)$.

\subsection{Testing Equivalence}
The idea behind testing equivalence \cite{LarsenS91,van2005domain} is close to the one behind logical 
equivalence: in both cases, syntactic entities play the role of state 
discriminators. However, in a testing scenario one is allowed to use tests 
whose outcome is not necessarily a \emph{truth value}, but rather the 
probability of passing the underlying test. The shift from a boolean to a 
quantitative setting allows one to define equivalence between states of a LMP 
using a restricted grammar of tests. 

\begin{definition}\label{def:test_language}
 Let $\actsone$ be a set of labels. We consider the following test language:
  $$\tests := \omega \midd \testone \wedge \testone \midd a \cdot \testone \qquad \text{ with } a \in \actsone. $$
 For a given $\lmpone = (\markovone, \Sigma, \{h_a \mid a \in \actsone\} )$, we define by induction on tests the \emph{success probability of the test $\testone$ starting from a state $s$} as the measurable $1$-bounded function $\probsucc \testone \cdot : \markovone \rightarrow [0,1] $:
  \begin{align*}
   \probsucc{\omega}{s} &=1; &
  \probsucc{a\cdot\testone}{s} &=\int\probsucc{\cdot}{\testone}dh_{a}(s,\cdot); &
  \probsucc{\testone\wedge\testtwo}{s} &=\probsucc{\testone}{s}\cdot 
  \probsucc{\testtwo}{s}. 
\end{align*}
\end{definition}

Notice that since tests have numerical outcomes, we do not need modalities 
in the test language.

\begin{definition}
Let $\lmpone$ be a LMP. We write $s \sim_{\test} t$ if $\probsucc s \testone = \probsucc t \testone$ for every $\testone \in \tests$. 
\end{definition}

\subsection{Event Bisimilarity}
The last notion of equivalence on LMPs we consider is \emph{event  
bisimilarity}; it has been introduced in 
\cite{danos2005almost,danos2006bisimulation} as a coinductively defined notion 
of equivalence larger than state bisimilarity and coinciding with testing and 
logical equivalence for a wide class of LMPs. From a categorical viewpoint, it 
can be seen as going from a span-based notion of bisimulation to a 
co-span-based one. From an operational point of view, it is obtained by 
shifting the focus from equivalent states to visible \emph{events} 
--- represented as sub $\sigma$-algebras. In order to introduce event 
bisimilarity, it useful to recall how state bisimilarity can be characterized 
measure-theoretically:
\begin{lemma}\label{lemma:carac_state_bisim}
  Given an LMP $(\markovone, \Sigma, \{h_a: \mid a \in \actsone\})$, a
  binary relation $\relone$ is a bisimulation if and only if
  $(\markovone, \Sigma(\relone), \{h_a \mid a \in \actsone\})$ is a
  LMP.
\end{lemma}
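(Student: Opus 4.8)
The plan is to unfold both directions of the biconditional directly from the definitions, with the key observation being that the condition ``$(\markovone, \Sigma(\relone), \{h_a \mid a \in \actsone\})$ is an LMP'' is exactly a measurability requirement on the maps $h_a$ relative to the coarser $\sigma$-algebra $\Sigma(\relone)$, which in turn is equivalent to asking that related states assign equal probability to every $\relone$-closed measurable set. First I would recall that by Definition~\ref{def:relation-lifting}, $s \mathrel{\relone} t$ together with $h_a(s) \mathrel{\relator{\relone}} h_a(t)$ means precisely that $h_a(s, A) = h_a(t, A)$ for every $A \in \Sigma(\relone)$, i.e.\ for every $\relone$-closed $A \in \Sigma$. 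So the content of ``$\relone$ is a state bisimulation'' is: for all $a$, for all $A \in \Sigma(\relone)$, the function $s \mapsto h_a(s,A)$ is constant on $\relone$-equivalence classes.

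For the forward direction, suppose $\relone$ is a bisimulation. To check that $(\markovone, \Sigma(\relone), \{h_a\})$ is an LMP, I need each $h_a$ restricted to $\Sigma(\relone)$ to be a sub-probability kernel, i.e.\ (i) for each fixed $A \in \Sigma(\relone)$ the map $s \mapsto h_a(s,A)$ is $\Sigma(\relone)$-measurable, and (ii) for each fixed $s$ the map $A \mapsto h_a(s,A)$ is a sub-probability measure on $\Sigma(\relone)$. Point (ii) is immediate since it is just the restriction of a sub-probability measure to a sub-$\sigma$-algebra. Point (i) is where the bisimulation hypothesis is used: $\Sigma(\relone)$-measurability of $s \mapsto h_a(s,A)$ amounts to saying that for every Borel $B \subseteq [0,1]$, the preimage $\{s \mid h_a(s,A) \in B\}$ is $\relone$-closed; but since $s \mapsto h_a(s,A)$ is constant on $\relone$-classes (that is exactly what the bisimulation condition gives, using $A \in \Sigma(\relone)$), this preimage is automatically a union of equivalence classes, hence $\relone$-closed, and it is measurable because $h_a$ is already a kernel on $\Sigma$. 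Thus $(\markovone, \Sigma(\relone), \{h_a\})$ is an LMP.

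Conversely, suppose $(\markovone, \Sigma(\relone), \{h_a\})$ is an LMP. Then for each $a$ and each $A \in \Sigma(\relone)$, the map $s \mapsto h_a(s,A)$ is $\Sigma(\relone)$-measurable. I then argue that any $\Sigma(\relone)$-measurable function $[0,1]$-valued $f$ must be constant on $\relone$-classes: if $f(s) \neq f(t)$ for some $s \mathrel{\relone} t$, pick disjoint Borel sets $B_s \ni f(s)$, $B_t \ni f(t)$; then $f^{-1}(B_s)$ is $\relone$-closed, contains $s$, hence must contain $t$ (since $t$ is $\relone$-related to $s$ and the set is a union of classes), contradicting $f(t) \in B_t$ disjoint from $B_s$. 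Applying this to $f = h_a(\cdot, A)$ gives $h_a(s,A) = h_a(t,A)$ whenever $s \mathrel{\relone} t$, for every $A \in \Sigma(\relone)$; that is exactly $h_a(s) \mathrel{\relator{\relone}} h_a(t)$, so $\relone$ is a bisimulation.

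The only mildly delicate point — the ``main obstacle'' such as it is — is being careful about what $\relone$-closedness means at the level of preimages and making sure the measurability bookkeeping between $\Sigma$ and $\Sigma(\relone)$ is handled cleanly; in particular one should note that a set is $\relone$-closed iff it is a union of $\relone$-equivalence classes (this uses that $\relone$ is an equivalence relation, which we have assumed throughout via $\rels(\markovone)$), and that $\Sigma(\relone)$ is genuinely a sub-$\sigma$-algebra of $\Sigma$ so that restriction of kernels makes sense. Everything else is a direct translation through Definition~\ref{def:relation-lifting} and Definition~\ref{def:state_bisimulation}.
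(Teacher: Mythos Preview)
Your proof is correct. The paper does not actually supply a proof of this lemma: it is stated as a known measure-theoretic characterisation of state bisimulation and immediately used to motivate the definition of event bisimilarity. Your argument is the standard direct one --- unfolding Definition~\ref{def:relation-lifting} to see that the bisimulation clause is exactly ``$s \mapsto h_a(s,A)$ is constant on $\relone$-classes for every $A \in \Sigma(\relone)$'', and then observing this is equivalent to $\Sigma(\relone)$-measurability of these maps --- and there is nothing to compare it against.
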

Event bisimilarity arises as a variation on the characterization
in Lemma~\ref{lemma:carac_state_bisim} obtained by removing the constraint
about the underlying $\sigma$-algebra:
\begin{definition}
  An event bisimulation on a LMP $(\markovone, \Sigma, \{h_a: \mid a
  \in \actsone\})$ is a sub-$\sigma$-algebra $\Lambda$ of $\Sigma$,
  such that $(\markovone, \Lambda, \{h_a \mid a \in \actsone\})$ is a
  LMP. Let $\Lambda$ be a $\sigma$-algebra on $\markovone$. We note
  $\rela \Lambda$ for the binary relation on states defined by $\rela
  \Lambda = \{(s,t) \mid \forall A \in \Lambda, s \in A
  \Leftrightarrow t \in A \}$.
  We say that two states $s,t \in \markovone$ are event bisimilar --- and 
  write $s \sim_{\event} t$ --- when
  there exists an an event bisimulation $\Lambda$ such that $(s,t) \in
  \rela \Lambda$.
\end{definition}

\subsection{Comparing the Equivalences}

Now that we have introduced four notions of equivalence on the state
of LMPs --- viz. \emph{state} and \emph{event} bisimilarity, and
\emph{logical} and \emph{testing} equivalence --- it is natural to
consider the correspondence between them.
As long as one focus on systems with discrete states (and
countable actions) all the equivalences mentioned so far coincide~\cite{DBLP:journals/tcs/BreugelMOW05, danos2006bisimulation, fijalkow2017expressiveness, danos2005almost}.
The correspondence still holds if LMPs have a \emph{countable} number
of labels and their underlying states space is \emph{analytic}:

\begin{theorem}\label{prop:logical_caracterisation_state_c}
  Let $\lmpone$ be a LMP with analytic state space and
  countably many labels. Then:
  $${\stackrel{\lmpone}{\sim_{\statet}}} = {\stackrel{\lmpone}{\sim_{\logic}}} 
  = {\stackrel{\lmpone}{\sim_{\event}}} = {\stackrel{\lmpone}{\sim_{\test}}}.
  $$
\end{theorem}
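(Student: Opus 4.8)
The plan is to establish the chain of equalities
$\sim_{\statet} \subseteq \sim_{\event} = \sim_{\logic} = \sim_{\test}$
together with the reverse inclusion $\sim_{\event} \subseteq \sim_{\statet}$ under the stated hypotheses (analytic state space, countably many labels). Most of these are known facts from \cite{danos2005almost,danos2006bisimulation,DBLP:journals/tcs/BreugelMOW05,fijalkow2017expressiveness}, so the proof will largely consist of assembling the right references and checking that the hypotheses we need are subsumed by "analytic state space and countably many labels."

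First I would handle the easy, unconditional inclusions, which hold for arbitrary LMPs. The inclusion $\sim_{\statet} \subseteq \sim_{\event}$ follows from Lemma~\ref{lemma:carac_state_bisim}: if $\relone$ is a state bisimulation then $\Sigma(\relone)$ is an event bisimulation, and $\relone \subseteq \rela{\Sigma(\relone)}$, so any state-bisimilar pair is event-bisimilar. For $\sim_{\event} \subseteq \sim_{\logic}$, observe that the set $\Lambda_{\Logic}$ generated by $\{\sem\phi \mid \phi \in \Logic\}$ is closed under the operations used in the logic — finite intersections give $\wedge$, and the modal operator $\act a q{(\cdot)}$ maps $\Sigma$-sets to $\Sigma$-sets because each $h_a$ is a kernel — so $\Lambda_{\Logic}$ is an event bisimulation, whence $\sim_{\event} \subseteq \rela{\Lambda_{\Logic}} = \sim_{\logic}$. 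Here countability of labels enters: $\Logic$ is a countable set of formulas only when $\actsone$ is countable (and $q$ ranges over $\mathbb{Q}\cap[0,1]$, already countable), which is what makes $\Lambda_{\Logic}$ a genuine sub-$\sigma$-algebra rather than merely a generating family; this countability is also what makes the later logical characterization go through. A symmetric argument, or a direct structural induction on tests, gives $\sim_{\logic} \subseteq \sim_{\test}$ and $\sim_{\test} \subseteq \sim_{\logic}$ (success probabilities of tests and satisfaction of formulas are interdefinable up to the rational thresholds), so $\sim_{\logic} = \sim_{\test}$; I would cite \cite{danos2006bisimulation} for the precise correspondence rather than redo the induction.

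The substantive step — and the one I expect to be the main obstacle — is the reverse inclusion $\sim_{\logic} \subseteq \sim_{\statet}$ (equivalently $\sim_{\event} \subseteq \sim_{\statet}$), i.e.\ showing that the relation $\rela{\Lambda_{\Logic}}$ induced by logical equivalence is itself a state bisimulation. This is exactly where the \emph{analytic} state space hypothesis is indispensable: one must show that if $s \mathbin{\rela{\Lambda_{\Logic}}} t$ then $h_a(s) \mathbin{\relator{\rela{\Lambda_{\Logic}}}} h_a(t)$ for every label $a$, i.e.\ $h_a(s)(A) = h_a(t)(A)$ for every $\rela{\Lambda_{\Logic}}$-closed measurable set $A$. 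The classical argument (Desharnais–Edalat–Panangaden, as recounted in \cite{danos2006bisimulation,fijalkow2017expressiveness}) proceeds by showing that every $\rela{\Lambda_{\Logic}}$-closed set can be approximated, in an appropriate sense, by sets of the form $\sem\phi$, using a measure-theoretic/descriptive-set-theoretic argument (a unique structure theorem / Dynkin-$\pi$-$\lambda$ argument on analytic spaces, exploiting that on an analytic space the $\sigma$-algebra generated by a countable separating family of the $\sem\phi$ behaves well with respect to the kernels). I would invoke this as Theorem~\ref{prop:logical_caracterisation_state_c}'s known content for countably-labelled analytic LMPs, citing \cite{danos2006bisimulation,DBLP:journals/tcs/BreugelMOW05,fijalkow2017expressiveness}, and would only need to remark that nothing in our formulation of LMPs, state/event bisimilarity, or $\relator$ deviates from the setting of those references — in particular that $\relator{\relone}$ as in Definition~\ref{def:relation-lifting} coincides with the "equality on $\relone$-closed sets" notion used there.

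Putting the pieces together yields $\sim_{\statet} \subseteq \sim_{\event} \subseteq \sim_{\logic} = \sim_{\test}$ unconditionally and $\sim_{\logic} \subseteq \sim_{\statet}$ under the hypotheses, closing the cycle and giving the claimed chain of equalities. I would end the proof by noting that the only two places the hypotheses were used are (i) countability of $\actsone$, to keep $\Logic$ and $\tests$ countable so that $\Lambda_{\Logic}$ is a sub-$\sigma$-algebra, and (ii) analyticity of $\markovone$, for the approximation argument showing $\rela{\Lambda_{\Logic}}$ is a state bisimulation; this is precisely the dividing line that Section~\ref{sect:feller} will push against via Feller continuity when $\actsone$ is uncountable.
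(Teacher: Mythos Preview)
Your approach matches the paper's: both treat the theorem as an assembly of known results, the paper simply citing \cite{fijalkow2017expressiveness} for $\sim_{\statet} = \sim_{\logic}$ and \cite{danos2005almost} for $\sim_{\statet} = \sim_{\event}$, and deferring the remaining identities to Theorem~\ref{prop:logical_caracterisation} (which holds for \emph{arbitrary} LMPs). You correctly single out $\sim_{\logic} \subseteq \sim_{\statet}$ as the substantive step requiring both hypotheses.

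There is, however, a direction slip in your sketch. From ``$\Lambda_{\Logic}$ is an event bisimulation'' you conclude $\sim_{\event} \subseteq \rela{\Lambda_{\Logic}}$, but the implication goes the other way: if $\Lambda_{\Logic}$ is an event bisimulation then $\rela{\Lambda_{\Logic}} \subseteq \sim_{\event}$, i.e.\ $\sim_{\logic} \subseteq \sim_{\event}$. To obtain $\sim_{\event} \subseteq \sim_{\logic}$ you need the dual fact that \emph{every} event bisimulation $\Lambda$ contains each $\sem\phi$ (an easy induction on $\phi$, using that $h_a$ is a $\Lambda$-kernel), so that $\rela{\Lambda} \subseteq \rela{\Lambda_{\Logic}}$ for all such $\Lambda$. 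Relatedly, your attribution of where countability enters is slightly off: by Theorem~\ref{prop:logical_caracterisation} the equalities $\sim_{\event} = \sim_{\logic} = \sim_{\test}$ hold for arbitrary LMPs with no countability assumption, so countability of labels (together with analyticity of the state space) is used \emph{only} in the hard inclusion $\sim_{\logic} \subseteq \sim_{\statet}$, exactly the step you flag as substantive.
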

  
\begin{proof}
\citet{fijalkow2017expressiveness} have proved 
${\stackrel{\lmpone}{\sim_{\statet}}} = {\stackrel{\lmpone}{\sim_{\logic}}}$, whereas 
\citet{danos2005almost} have shown 
${\stackrel{\lmpone}{\sim_{\statet}}} = {\stackrel{\lmpone}{\sim_{\event}}}$. 
The remaining identities actually hold on arbitrary LMPs, as we are going to see.
\end{proof}



Moving from LMPs with an analytic state space and countable actions 
to arbitrary LMPs, the logical 
and testing characterizations of event bisimilarity remains 
valid~\cite{DBLP:journals/tcs/BreugelMOW05, danos2006bisimulation}, 
whereas state bisimilarity has been proved to be finer then event 
bisimilarity~\cite{danos2006bisimulation}.

\begin{theorem}[\cite{DBLP:journals/tcs/BreugelMOW05, danos2006bisimulation}]\label{prop:logical_caracterisation}
  Let $\lmpone$ be any LMP. Then 
  $$
  {\stackrel{\lmpone}{\sim_{\statet}}} \subseteq 
  {\stackrel{\lmpone}{\sim_{\event}}}
  = {\stackrel{\lmpone}{\sim_{\logic}}} 
  = {\stackrel{\lmpone}{\sim_{\test}}}.
  $$
\end{theorem}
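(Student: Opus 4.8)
The plan is to recover the classical argument of \citet{danos2006bisimulation} and \citet{DBLP:journals/tcs/BreugelMOW05}, split into three parts: $\sim_{\statet} \subseteq \sim_{\event}$, $\sim_{\event} = \sim_{\logic}$, and $\sim_{\event} = \sim_{\test}$. For the first part, given $s \sim_{\statet} t$ I would pick a state bisimulation $\relone$ with $s \mathrel{\relone} t$; by Lemma~\ref{lemma:carac_state_bisim} the triple $(\markovone, \Sigma(\relone), \{h_a \mid a \in \actsone\})$ is an LMP, so $\Sigma(\relone)$ is an event bisimulation, and since every $A \in \Sigma(\relone)$ is $\relone$-closed, $s \mathrel{\relone} t$ forces $s \in A \Leftrightarrow t \in A$; hence $(s,t) \in \rela{\Sigma(\relone)}$ and $s \sim_{\event} t$.

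For $\sim_{\event} = \sim_{\logic}$, one inclusion follows from an induction on formulas showing that $\sem{\phi} \in \Lambda$ for \emph{every} event bisimulation $\Lambda$: the cases $\top$ and $\wedge$ are immediate, and for $\act{a}{q}{\phi}$ one uses $\sem{\phi} \in \Lambda$ together with the fact that $\Lambda$ being an event bisimulation makes $s \mapsto h_a(s, \sem{\phi})$ a $\Lambda$-measurable function, so $\sem{\act{a}{q}{\phi}} = \{s \mid h_a(s,\sem{\phi}) > q\} \in \Lambda$; thus any event bisimulation witnessing $s \sim_{\event} t$ already identifies $s$ and $t$ on all formulas. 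Conversely, let $\Lambda_{\logic}$ be the sub-$\sigma$-algebra generated by $\{\sem{\phi} \mid \phi \in \Logic\}$. Since $\{\sem{\phi}\}$ is a $\pi$-system (closed under intersection, containing $\markovone = \sem{\top}$) and the family of $A \in \Sigma$ for which $s \mapsto h_a(s,A)$ is $\Lambda_{\logic}$-measurable for all $a$ is a $\lambda$-system containing it, Dynkin's lemma yields that $\Lambda_{\logic}$ is an event bisimulation; a further good-sets argument gives $\rela{\Lambda_{\logic}} = {\sim_{\logic}}$, so $\sim_{\logic} = \rela{\Lambda_{\logic}} \subseteq \sim_{\event}$.

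For $\sim_{\event} = \sim_{\test}$, an induction on tests shows that $\probsucc{\testone}{\cdot}$ is $\Lambda$-measurable for every event bisimulation $\Lambda$ ($\omega$ is constant, $\wedge$ is a product, and for $a \cdot \testtwo$ one approximates the bounded $\Lambda$-measurable function $\probsucc{\testtwo}{\cdot}$ from below by simple functions and applies monotone convergence to $\probsucc{a \cdot \testtwo}{s} = \int \probsucc{\testtwo}{\cdot}\, dh_a(s,\cdot)$); hence any witness for $s \sim_{\event} t$ forces $\probsucc{\testone}{s} = \probsucc{\testone}{t}$ for all $\testone$. For the converse, set $\Lambda_{\test} = \sigma(\{\probsucc{\testone}{\cdot} \mid \testone \in \tests\})$. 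The family $\{\probsucc{\testone}{\cdot}\}$ contains the constant $1$ and is closed under products, as $\probsucc{\testone \wedge \testtwo}{\cdot} = \probsucc{\testone}{\cdot} \cdot \probsucc{\testtwo}{\cdot}$, while the vector space $\mathcal{H}$ of bounded $\Lambda_{\test}$-measurable $h$ such that $s \mapsto \int h\, dh_a(s,\cdot)$ is still $\Lambda_{\test}$-measurable contains it, since $\int \probsucc{\testtwo}{\cdot}\, dh_a(s,\cdot) = \probsucc{a \cdot \testtwo}{s}$, and is closed under bounded monotone limits; the functional monotone class theorem then forces $\mathbf{1}_A \in \mathcal{H}$ for all $A \in \Lambda_{\test}$, i.e. $\Lambda_{\test}$ is an event bisimulation, and as before $\rela{\Lambda_{\test}} = {\sim_{\test}}$, giving $\sim_{\test} \subseteq \sim_{\event}$.

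The routine inductions on formulas and tests are unproblematic; the genuine work lies in showing that the generated $\sigma$-algebras $\Lambda_{\logic}$ and $\Lambda_{\test}$ are themselves event bisimulations. This is where the measure-theoretic machinery (Dynkin's $\pi$--$\lambda$ lemma, and the functional monotone class / multiplicative system theorem) enters, and one must be careful that each $h_a(s,\cdot)$ is a finite sub-probability measure so that complementation, countable additivity, and monotone convergence all behave well. Once that is in place, the three parts combine to give the displayed chain of inclusions and equalities.
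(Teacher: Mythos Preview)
Your proposal is correct and faithfully reconstructs the arguments of \citet{danos2006bisimulation} and \citet{DBLP:journals/tcs/BreugelMOW05} that the paper simply cites; the paper's own proof is nothing more than a pointer to these references (with a remark in the long version that the testing characterisation in the literature was stated for countable label sets but that the restriction is not actually used, which your monotone-class argument confirms). There is no genuine divergence in approach.
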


\begin{proof}
\citet{danos2006bisimulation} have proved 
${\stackrel{\lmpone}{\sim_{\statet}}} \subseteq 
{\stackrel{\lmpone}{\sim_{\event}}}$,
whereas \citet{DBLP:journals/tcs/BreugelMOW05} and \citet{danos2006bisimulation} 
have shown ${\stackrel{\lmpone}{\sim_{\event}}}
  = {\stackrel{\lmpone}{\sim_{\logic}}} 
  = {\stackrel{\lmpone}{\sim_{\test}}}$.
  \longv{
To be precise on bibliographical references for this last equality, we should note that--as clarified in a follow-up paper~\cite{DBLP:journals/entcs/MislovePW07}--the proof of the testing characterization for LMP given in~\cite{DBLP:journals/tcs/BreugelMOW05} works for event bisimulation with no restriction on states spaces. However, both in~\cite{DBLP:journals/entcs/MislovePW07} and~\cite{DBLP:journals/tcs/BreugelMOW05}, the labels are restricted to a countable set. It doesn't appear that this restriction is used in the proof, but nonetheless we can extend the testing characterization to the general case using the logical characterization from Theorem~\ref{prop:logical_caracterisation} and our approximation Lemma~\ref{th:event_bisim_approx}. 
}
\end{proof}

\longv{
\begin{remark}
Looking at Theorem~\ref{prop:logical_caracterisation}, we could ask ourselves wether there exist LMP such that all the equivalence notions we consider do not coincide. Of course, we know from Theorem~\ref{prop:logical_caracterisation_state_c} that such a LMP should have non-analytic state space, or uncountable labels. Such counter-examples have been built in the literature, for both cases:
on the one hand, a LMP with uncountable labels, analytical states spaces have been presented~\cite{clerc2019expressiveness}; on the other hand, it has been shown~\cite{terraf2011unprovability} that, under the assumption that there exist non-measurable sets (which can be derived from the axiom of choice), there exists a LMP with countable labels, non analytic state space, where the two bisimulation do not coincide. 
\francesco{not sure I understand that}
\end{remark}
}

As corollaries of Theorem~\ref{prop:logical_caracterisation}, we obtain an approximation scheme for event bisimulation on LMPs with 
\emph{uncountably many} labels and that event bisimilarity on LMPs with countably many 
actions is a Borel relation, a result we will use in Section~\ref{sect:completeness} below.
\begin{lemma}\label{th:event_bisim_approx}
\begin{enumerate}
  \item Let $\actsone$ be a (possibly uncountable) set and let
  $(\markovone, \Sigma, \{h_a \mid a \in \actsone\} )$ be a LMP. 
  Then two states $s,t$ are event bisimilar in this LMP if and only if they are event-bisimilar in every LMP of the form
  $(\markovone, \Sigma, \{h_a \mid a \in \actstwo\} )$ where $\actstwo$ a \emph{countable} subset of $\actsone$.
  \item Let $\lmpone$ be a LMP with a countable set of actions, then 
  ${\stackrel{\lmpone}{\sim_{\event}}}$ is Borel.
\end{enumerate}
  \end{lemma}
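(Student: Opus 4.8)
The plan is to deduce both items from the logical characterization of event bisimilarity, ${\sim_{\event}} = {\sim_{\logic}}$, which by Theorem~\ref{prop:logical_caracterisation} holds for \emph{every} LMP --- in particular for LMPs with uncountably many labels, so no circularity with the (uncountable-label) testing characterization arises. The single structural fact I would isolate about the logic of Definition~\ref{def:probabilistic_moda_logic} is that each $\phi \in \Logic$ is a finite syntax tree and hence mentions only finitely many labels; write $\mathrm{lab}(\phi) \subseteq \actsone$ for that finite set and $\Logic_{\actstwo}$ for the formulas all of whose labels lie in $\actstwo$. A second, routine remark: for $\phi \in \Logic_{\actstwo}$ the interpretation $\sem\phi$ only involves the kernels $h_a$ with $a \in \mathrm{lab}(\phi)$, so it is the same object whether computed in $\lmpone = (\markovone,\Sigma,\{h_a \mid a \in \actsone\})$ or in any restriction $(\markovone,\Sigma,\{h_a \mid a \in \actstwo\})$ with $\mathrm{lab}(\phi) \subseteq \actstwo$.

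For the first item, using ${\sim_{\event}} = {\sim_{\logic}}$ it suffices to show that $s \sim_{\logic} t$ in $\lmpone$ if and only if $s \sim_{\logic} t$ in $(\markovone,\Sigma,\{h_a \mid a \in \actstwo\})$ for every countable $\actstwo \subseteq \actsone$. The left-to-right direction is immediate, since $\Logic_{\actstwo} \subseteq \Logic$ and the interpretations agree. For the converse, given an arbitrary $\phi \in \Logic$ I would pick any countable $\actstwo$ with $\mathrm{lab}(\phi) \subseteq \actstwo \subseteq \actsone$; then $\phi \in \Logic_{\actstwo}$, so logical equivalence in the $\actstwo$-restriction yields $s \in \sem\phi \Leftrightarrow t \in \sem\phi$, and since $\phi$ was arbitrary we conclude $s \sim_{\logic} t$ in $\lmpone$. (One can also phrase this purely in terms of event bisimulations --- a witnessing sub-$\sigma$-algebra for $\lmpone$ stays an event bisimulation after discarding labels, and conversely the $\sigma$-algebra generated by $\{\sem\phi \mid \phi \in \Logic\}$ is an event bisimulation generated by formulas of finite label support --- but the logical route is shortest.)

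For the second item, assume $\actsone$ countable. Then $\Logic$ is a set of finite trees over the countable alphabet $\actsone \cup (\mathbb{Q}\cap[0,1]) \cup \{\top,\wedge\}$, hence countable. Next I would verify by induction on $\phi$, directly from Definition~\ref{def:probabilistic_moda_logic}, that $\sem\phi \in \Sigma$: the cases $\sem\top = \markovone$ and $\sem{\phi_1 \wedge \phi_2} = \sem{\phi_1} \cap \sem{\phi_2}$ are trivial, while $\sem{\act a q \phi} = \{s \mid h_a(s,\sem\phi) > q\}$ is measurable because $h_a(\cdot,A)$ is measurable for every fixed measurable $A$ (the kernel property), applied to $A = \sem\phi$. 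Invoking ${\sim_{\event}} = {\sim_{\logic}}$ once more,
\[
{\sim_{\event}} \;=\; {\sim_{\logic}} \;=\; \bigcap_{\phi \in \Logic} \Bigl( (\sem\phi \times \sem\phi) \cup \bigl( (\markovone \setminus \sem\phi) \times (\markovone \setminus \sem\phi) \bigr) \Bigr),
\]
a \emph{countable} intersection of members of $\Sigma \otimes \Sigma$, hence itself in $\Sigma \otimes \Sigma$; in the standard Borel case (the one relevant here) $\Sigma \otimes \Sigma$ coincides with the Borel $\sigma$-algebra of $\markovone \times \markovone$, so ${\sim_{\event}}$ is a Borel relation, as claimed.

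I do not expect a serious obstacle. The only point requiring attention rather than mere computation is that ${\sim_{\event}} = {\sim_{\logic}}$ must be available for LMPs with uncountably many labels: this is legitimate because the standard monotone-class argument proving that $\sigma(\{\sem\phi \mid \phi \in \Logic\})$ is an event bisimulation never uses countability of the label set --- in contrast with the testing characterization, whose uncountable-label version relies on the present lemma, so one must take care not to create a cycle. Everything else --- countability of $\Logic$ when $\actsone$ is countable, measurability of each $\sem\phi$ by induction, and the unfolding of $\sim_{\logic}$ as a countable intersection of measurable rectangles --- is routine.
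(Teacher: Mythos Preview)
Your proposal is correct and follows essentially the same approach as the paper: both items are deduced from the identity ${\sim_{\event}} = {\sim_{\logic}}$ (valid for arbitrary LMPs), using that each formula mentions only finitely many labels for item~1 and that countably many labels yield a countable logic with measurable interpretations for item~2. Your write-up is in fact more explicit than the paper's sketch---you spell out the countable intersection formula and the product $\sigma$-algebra point---and you rightly flag the potential circularity with the testing characterization, which the paper only addresses in a bibliographic aside.
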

  
\begin{proof}\shortv{[Proof sketch]}
\begin{enumerate}
\item The proof relies on the equality ${\stackrel{\lmpone}{\sim_{\event}}}
  = {\stackrel{\lmpone}{\sim_{\logic}}}$ 
  and the fact that to check whether a given formula $\phi$ holds for some state 
  $s$, it is sufficient to consider a finite numbers of labels. \shortv{More details are given 
  in the long version of the paper.}
  \longv{
Let $\lmpone = (\markovone, \Sigma, \{h_a \mid a \in \actsone\} )$ be a LMP.
  \begin{itemize}
  \item First, observe that the set $\Logic$ of logical formulas that caracterise event bisimilarity on $\lmpone$ contains the set of formulas that caracterise event bisimilarity on any of the $\lmpone' = (\markovone, \Sigma, \{h_a \mid a \in \actsone'\} )$ with $\actsone'$ a countable subset of $\actsone$. It means that two states $s,t$ event bisimilar in $\lmpone$ are also state bisimilar in $\lmpone'$.
    \item Let us consider now two states $s,t$ that are not event bisimilar in $\lmpone$: we need to show that there exists a countable subset $\actsone'$ of $\actsone$, such that $s,t$ are not event bisimilar in $\lmpone' = (\markovone, \Sigma, \{h_a \mid a \in \actsone'\} )$. To show that, we use again the logical caracterisation of event bisimilarity, and it gives us a formula $\phi \in \logic$ such that $s \in \sem \phi$, and $t \not \in \sem \phi$--or $t \in \sem \phi$, and $s \not \in \sem \phi$, but the reasonning would be identical in this case. Recall that $\phi$ is generated by the BNF grammar of Definition~\ref{def:probabilistic_moda_logic}, so in particular there is a \emph{finite} number of $a \in \actsone$ that occur in $\phi$. Looking at the Definition of $\sem \phi$, we see that it is identical in $\lmpone$ and $\lmpone' = (\markovone, \Sigma, \{h_a \mid a \in \actsone'\} )$, as long as all the action that occur in $\phi$ are in $\actsone'$. From there, we are able to conclude using Proposition~\ref{prop:logical_caracterisation}.
    \end{itemize}
}
\item We notice that if $\lmpone$ has countably many actions, then the set $\Logic$ 
  of logical formulas for $\lmpone$ is countable. Since for any such formula $\phi$ 
  the set $\sem{\phi}$ is Borel, we have that ${\stackrel{\lmpone}{\sim_{\logic}}}$, 
  and thus ${\stackrel{\lmpone}{\sim_{\event}}}$, is Borel.
\end{enumerate}
\end{proof}

\subsection{A LMP for $\Lambda_{\probb}$.}

The dynamic semantics of the language $\Lambda_{\probb}$ we introduced
in Section~\ref{sect:lambda} can be presented as a LMP, this way
enabling the various forms of equational reasoning principles
presented above.  In this section, we define the aforementioned LMP,
then discuss why this LMP is problematic, deferring a more thorough
discussion about it to Section~\ref{sect:perspective} below.
\longv{\subsubsection{Definition of the LMP $\lmponelambda$}

}
Actually, there is a standard way of turning (possibly 
effectful) typed $\lambda$-calculi into  labelled transition systems: states 
are partitioned into two  classes,  namely computations 
and values, and the environment interacts with the former  through an  
evaluation action and with the latter by inspecting the value (e.g. if it has 
a coproduct type) or by passing it an argument (if it is of an 
arrow type). We are going to follow this pattern, but an  additional  
difficulty is present here, namely the one of 
dealing with real numbers as values. How  should the environment inspect such 
a value? Ground types are generally managed  through  actions which fully 
reveal the underlying value. Remarkably, this works quite  well in  presence of 
discrete ground types and discrete probabilities. However, when we  go from  
discrete to continuous probabilities --- and thus from natural numbers to real  
numbers as  base type --- things become more complicated. In particular, a real 
number value can be inspected through the comparison operator 
$\mathit{op}_\leq$, which we ask to always be part of $\mathcal{C}$.

\begin{definition}\label{def:lmplambdap}
  We define $\lmponelambda$ as the LMP $(\markovone_{\Lambda_{\probb}}, \actsone_{\Lambda_{\probb}}, \{h_a \mid a \in   \actsone_{\Lambda_{\probb}}\} )$ where:
\begin{varitemize}
\item 
   The set of states is $\markovone_{\Lambda_{\probb}} = \bigcup_{\typeone \in 
   \types}\{(\termone,\sigma) \mid \termone \in \terms_\typeone\}\cup\{ 
   (\valone,\sigma) 
   \mid \valone \in \values_{\typeone}\}$; we take as topology on 
   $\markovone_{\Lambda_{\probb}}$ the countable disjoint union 
   topology, see~\cite{DBLP:journals/pacmpl/EhrhardPT18,borgstrom2016lambda} 
   (recall that $\Pol$ has countable coproducts).
\item 
  The set of actions is $\actsone_{\Lambda_{\probb}} = \types \cup 
  (\bigcup_{\typeone \in \types} \values_{\typeone}) \cup \{\evalact 
  \}
  \cup \{ \overset{\text{?}}{\leq} q \mid q \in \QQ \}
  \cup \{\text{case}(\hat{\imath}) \mid i \in I\} \cup\{ \unboxact\}$. Again, we take 
  the disjoint 
  union topology on $\actsone_{\Lambda_{\probb}}$.
\item 
  The transition functions are defined as follows, where 
  $h_a(s) = \emptyset$ in all cases not listed below.
  \begin{align*}
    h_\typeone(s) &= 
    \dirac s 
    & & \mbox{if } s = (\termone,\typeone) \text { or } s = 
    (\valone,\typeone);
    \\
     h_\valone(s) &= 
    \dirac {(\valtwo \valone,\typetwo)}
    & &\mbox{if } s = (\valtwo,\typeone \to 
    \typetwo); 
    \\
     h_{\evalact}(s) &= 
    (\sem \termone,\typeone) 
    & & \mbox{if } s = (\termone,\typeone);
    \\
  h_{\overset{\text{?}}{\leq} q}(s) &= 
      op_{\leq}(r,q) \cdot \dirac s 
    & & \mbox{if }\exists r \in \RR, s = (\makereal r, 
    \typereal)
    \\
    h_{\text{case}(\hat{\imath})}(s) &= \dirac {(\valone,\typeone_i)}
    & & \mbox{if } 
    s = ((\hat{\imath},\valone), \sumtype{j \in I}{\typeone_j})
    \\
    h_{\unboxact}(s) &= \dirac 
    {(\valone,\subst{\typeone}{\typevar}{\rectype{\typevar}{\typeone}} )}
    & &\mbox{if } s = (\fold 
    \valone,  \rectype{\typevar}{\typeone})
  \end{align*}
  \end{varitemize}
\end{definition}

The LMP $\lmponelambda$ we have just introduced naturally captures the 
interaction between a term and the environment, in the spirit of Abramsky's 
applicative bisimilarity. In particular, \emph{state} bisimilarity on 
$\lmponelambda$ captures applicative bisimilarity 
on $\Lambda_{\probb}$~\cite{LagoG19}, meaning that 
$\termone, \termtwo \in \terms_{\typeone}$ are applicative 
bisimilar precisely when 
$(\termone,\typeone) \sim_{\statet}^{\lmponelambda} (\termtwo,\typeone)$ 
(oftentimes, we use the notation  
$\termone \sim_{\statet}^{\lmponelambda} \termtwo$)
and similarly for values.
Unfortunately, $\lmponelambda$ is not among those LMPs for which nice 
correspondence results exist between the various notions of equivalences. 
Although the 
\emph{state} space is analytic, indeed, the \emph{action} space is inherently 
uncountable, due to the presence of values among the labels. This is 
unavoidable, given the nature of applicative bisimilarity, and has some deep 
consequences, as we are going to see in Section~\ref{sect:perspective} below. 
But how about $\lambdacont$? Actually, one can naturally define 
$\lmponelambdac$ exactly as $\lmponelambda$, with the additional proviso that 
terms occurring in states and actions are taken from $\lambdacont$ rather than 
from $\lambdap$. In the next two sections we are going to show that going from 
$\lmponelambda$ to $\lmponelambdac$ indeed makes a difference as far as the 
nature of relational reasoning is concerned.

\longv{
\subsubsection{How To Act on Real Numbers}
    While the structure of $\lmponelambda$ with respect to evaluation, application of function to values 
  are quite standard for applicative bisimulations on typed 
  $\lambda$-calculi--see 
 for 
  instance the way the applicative bisimulation was defined 
  in~\cite{CDL14} 
  for 
  a discrete probabilistic $\lambda$-calculus--deciding which actions we put on $\typereal$ values is more subtle. The leading principle behind our choice was that for programs at base type, context equivalence, event bisimilarity and state bisimilarity should all coincide. Moreover, in order to avoid un-necessary complexity, we were looking for a \emph{countable} family of ground-type actions. It can be checked that these guidelines hold for the ground-type action we gave in Definition~\ref{def:lmplambdap}. It can also be observed that they would not hold for the--perhaps more intuitive--family of ground type actions indexed by $r \in \RR$  
   $(h_{r'}(\makereal {r}) 
   := (?(r = r')) \cdot \dirac {\makereal{r}})_{r' \in \RR}$: not only this family is not countable, but more importantly event bisimulation do not coincide with context equivalence for real typed programs. 
For instance, we can see using the testing caracterisation that the program $\sample$ and $(\sample + 1)$ would not be event bisimilar for this variation of $\lmponelambda$.

   In this section, we explore in which sense we can say that our notions of state and event bisimilarities for $\lambdap$  are stable under changes to the way we define the actions on real type in $\lmponelambda$. To do that, we first define formally an applicative LMP parametric in which actions we choose on real type. We are actually interested on variation of the following form: an action on $a$ on a ground type value $\makereal r$ must do a loop on $\makereal r$, but this loop only succeeds with a probability that depends on $r$ and $a$. We formalize this idea in Definition~\ref{def:parametric_lmps} below.

   \begin{definition}\label{def:parametric_lmps}
    Let $\mathcal P$ be any family of measureable functions $\RR \to [0,1]$. We will write $\lmponelambda(\mathcal P)$ for the LMP obtained from $\lmponelambda$ by removing the actions ${\overset{\text{?}}{\leq} q}_{q \in \QQ}$, and adding the action $(h_p)_{p \in \mathcal P}$ where:
    $$h_p(s) = \begin{cases}
      p(r) \cdot \dirac{\makereal{r}} \text{ if } \exists r \text{ with } s = \makereal r; \\
\emptyset \text{ otherwise.}\end{cases}$$
   \end{definition}
   We first show a quite generic criteria for changes on ground-type action that preserves state bisimilarity for $\lmponelambda$: it says that whenever we keep a family of ground-type action that is generated by a family of function that separates points--in the sense of Remark~\ref{remark:altern_carac_ctx_eq}--then the state bisimilarity stays the same.
  \begin{lemma}\label{lemma:state_simulation_parametric}
For every  $\mathcal P$ that separates points, the state bisimulation on $\lmponelambda(\mathcal P)$ coincide with the state bisimulation on $\lmponelambda$.
  \end{lemma}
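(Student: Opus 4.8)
The plan is to prove something marginally stronger: that a relation $\relone \in \rels(\markovone_{\Lambda_{\probb}})$ is a state bisimulation on $\lmponelambda$ if and only if it is a state bisimulation on $\lmponelambda(\mathcal P)$. Since $\stackrel{\lmponelambda}{\sim_{\statet}}$ and $\stackrel{\lmponelambda(\mathcal P)}{\sim_{\statet}}$ are by definition the unions of all state bisimulations on the corresponding LMP, the desired equality follows at once. The two LMPs share the same state space and the same transition kernels for every action \emph{except} those firing on real-number values: $\lmponelambda$ carries the actions $\overset{\text{?}}{\leq} q$ ($q \in \QQ$), whereas $\lmponelambda(\mathcal P)$ carries the actions $h_p$ ($p \in \mathcal P$). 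Hence the bisimulation clause for each common action is literally the same sentence in both systems, and it only remains to check that, for every pair $s \mathrel{\relone} t$, the clauses for the actions $\overset{\text{?}}{\leq} q$ are met precisely when the clauses for the actions $h_p$ are.

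The technical workhorse is an elementary remark about $\relator$: if $s \mathrel{\relone} t$, then $s$ and $t$ lie in exactly the same members of $\Sigma(\relone)$ (both, for instance, in the $\relone$-closed set $\markovone_{\Lambda_{\probb}}$), so for all $\alpha,\beta \in [0,1]$ one has $\alpha \cdot \dirac{s} \mathrel{\relator\relone} \beta \cdot \dirac{t}$ if and only if $\alpha = \beta$ (with $\emptyset$ read as $0 \cdot \dirac{t}$). I would then split on $s,t$. If neither $s$ nor $t$ is a real value $(\makereal{r},\typereal)$, every $h_{\overset{\text{?}}{\leq} q}$ and every $h_p$ maps both states to $\emptyset$, and all the clauses hold trivially on both sides. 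If $s = (\makereal{r},\typereal)$ and $t = (\makereal{r'},\typereal)$ are both real values, the remark turns ``$h_{\overset{\text{?}}{\leq} q}(s) \mathrel{\relator\relone} h_{\overset{\text{?}}{\leq} q}(t)$'' into ``$\mathit{op}_{\leq}(r,q) = \mathit{op}_{\leq}(r',q)$'', and since $\mathit{op}_{\leq}(x,q) = 1 \iff x \leq q$ while $\QQ$ is dense in $\RR$, requiring this for all $q \in \QQ$ is equivalent to $r = r'$; symmetrically ``$h_p(s) \mathrel{\relator\relone} h_p(t)$'' becomes ``$p(r) = p(r')$'', which for all $p \in \mathcal P$ is again equivalent to $r = r'$ because $\mathcal P$ separates points. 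So on real values both families of clauses reduce to $r = r'$, hence are equivalent.

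The last, and only delicate, case is the mixed one: $s = (\makereal{r},\typereal)$ a real value while $t$ is not a real value (or vice versa). Here $h_{\overset{\text{?}}{\leq} q}(t) = h_p(t) = \emptyset$ for all $q,p$, so the $\lmponelambda$-clauses force $\mathit{op}_{\leq}(r,q) = 0$ for every $q \in \QQ$ --- impossible, since $\mathit{op}_{\leq}(r,q) = 1$ for any rational $q \geq r$ --- whereas the $\lmponelambda(\mathcal P)$-clauses force $p(r) = 0$ for every $p \in \mathcal P$. For the two sides to stay in lockstep (so that this mixed case never arises for a state bisimulation of either LMP) one needs that no real is invisible to $\mathcal P$, i.e. that $\forall r \in \RR\,\exists p \in \mathcal P.\ p(r) > 0$; this is exactly what ``separates points'' amounts to once divergence is modelled as an extra point $\bot$ (together with the separation of distinct reals it is precisely the hypothesis of Lemma~\ref{lemma:expectation_as_observable_1}), and it holds for every family of interest, in particular for $\{\mathit{op}_{\leq}(\cdot,q) \mid q \in \QQ\}$ since $\mathit{op}_{\leq}(r,q) = 1$ whenever $q \geq r$. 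With this reading of the hypothesis, the three cases together show that $\relone$ satisfies all real-typed clauses of $\lmponelambda$ iff it satisfies all real-typed clauses of $\lmponelambda(\mathcal P)$, which completes the argument. The main obstacle is conceptual rather than computational: one must notice that separating \emph{distinct} reals is not by itself enough --- a $\mathcal P$ blind to a single real $r$ would let $\stackrel{\lmponelambda(\mathcal P)}{\sim_{\statet}}$ identify $(\makereal{r},\typereal)$ with a divergent term of type $\typereal$, which $\stackrel{\lmponelambda}{\sim_{\statet}}$ never does --- so ``separates points'' here must be understood as also distinguishing each real from ``no transition at all''.
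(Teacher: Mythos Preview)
Your proof is correct and follows essentially the same route as the paper's: both argue that a relation is a state bisimulation on one LMP iff it is one on the other, by observing that the non-real actions are shared and that on related $\typereal$-value pairs the point-separation hypothesis forces the underlying reals to coincide (so the remaining clauses become trivial). The paper phrases this symmetrically for any two point-separating families and dispatches the mixed value/non-value case in one line; your treatment of that case is in fact more careful, correctly isolating the extra requirement $\forall r\,\exists p\in\mathcal P.\ p(r)>0$---the paper's own $\mathcal F_{\leq}$ satisfies it, and the paper tacitly relies on it.
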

  \begin{proof}
    First, observe that the ground-type action on $\lmponelambda$ are generated by a family of function that separates points (the family $\mathcal F_{\leq}:= {(\overset{\text{?}}{\leq} q)}_{q \in \QQ}$). Now, we show that for every families $\mathcal P$, $\mathcal F$ that separates points any state bisimulation $\relone$ for $\lmponelambda(\mathcal F)$ is also a state bisimulation for $\lmponelambda(\mathcal P)$. Let be $s,t \in \markovone_{\lambdap}$ such that $s \relone t$, and $a \in \actsone_{\lambdap}$.  Let $a$ an action in $\lmponelambda(\mathcal P)$.  We want to show that $h_a^{\lmponelambda(\mathcal P)}(s) \Gamma \relone h_a^{\lmponelambda(\mathcal F)}(t)$. Observe that if $a$ is not in $\mathcal P$, the action $h_a^{\lmponelambda(\mathcal F)}$ coincide with $h_a^{\lmponelambda(\mathcal P)}$, so we can conclude using our hypothesis on $\relone$. Now, suppose that $a = p$, with $p \in \mathcal P$. Observe that since $s,t$ are bisimilar in $\lmponelambda(\mathcal F)$, then either $s,t$ are both $\typereal$ values, or none of them is. In the second case, we can immediately conclude. In the first case, since the original family of actions $\mathcal F$ separates all reals, we know that $s =t$, thus we can conclude.
    \end{proof}
  We look now for a criterion for event bisimilarity. As a first step, we formalize in Definition~\ref{def:Borel_generating} below the idea that a family of actions can generates all Borel sets on $\RR$. 
  \begin{definition}\label{def:Borel_generating}
We say that a family $\mathcal F$ be a family of measureable functions $\RR \to [0,1]$ is \emph{Borel-generating} whenever  the smallest $\sigma$-algebra containing all the sets $f^{-1}(I)$ for $I \in \borels{[0,1]}$, $f \in \mathcal F$ is the Borel algebra over $\RR$. 
    \end{definition}
  We can check easily that the family $\mathcal F_{\leq}:= {(\overset{\text{?}}{\leq} q)}_{q \in \QQ}$ is Borel generating.
  Now, Lemma~\ref{corollary:event_bisimulation_parametric_lmps_1} tells us that as long as the family of functions $\mathcal F$ we consider is Borel genrating, the event bisimilarity on $\lmponelambda(\mathcal F)$ stays the same as the event bisimilarity on $\lmponelambda$.
  \begin{lemma}\label{corollary:event_bisimulation_parametric_lmps_1}
Let $\mathcal F$ be Borel-generating family of functions.  Then $ \sim^{\lmponelambda}_{event}\, =\, \sim^{\lmponelambda(\mathcal F)}_{event}.$
  \end{lemma}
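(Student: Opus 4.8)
The plan is to prove the two event bisimilarities coincide by showing that $\lmponelambda$ and $\lmponelambda(\mathcal F)$ have literally the same event bisimulations, and then invoking $\stackrel{\lmpone}{\sim_{\event}} = \bigcup\{\rela{\Lambda} \mid \Lambda\text{ an event bisimulation of }\lmpone\}$. The two processes have the same space of states and the same actions \emph{except} on the summand $\values_{\typereal}\cong\RR$ of $\markovone_{\Lambda_{\probb}}$: there $\lmponelambda$ carries the actions $\overset{\text{?}}{\leq}q$ ($q\in\QQ$) and $\lmponelambda(\mathcal F)$ the actions $p\in\mathcal F$, and in both cases these are loop-with-killing kernels $s\mapsto c(s)\cdot\dirac{s}$ whose killing coefficient $c\colon\markovone\to[0,1]$ is Borel and supported on $\values_{\typereal}$ --- equal to $r\mapsto\mathit{op}_{\leq}(r,q)=\mathbf{1}[r\leq q]$ in the first case, and to $p$ in the second.

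The first and central step is to show that \emph{every} event bisimulation $\Lambda$ of either LMP contains all Borel subsets of $\values_{\typereal}$, and $\values_{\typereal}$ itself. Instantiating the kernel condition defining ``$(\markovone,\Lambda,\{h_a\})$ is a LMP'' with $A=\markovone$ and the action $\overset{\text{?}}{\leq}q$ shows that $s\mapsto h_{\overset{\text{?}}{\leq}q}(s,\markovone)$ is $\Lambda$-measurable; but this map is the indicator of $\{(\makereal{r},\typereal)\mid r\leq q\}$, so this set lies in $\Lambda$ for every $q\in\QQ$, and such sets generate both $\borels{\values_{\typereal}}$ and $\values_{\typereal}=\bigcup_{q\in\QQ}\{(\makereal{r},\typereal)\mid r\leq q\}$. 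Symmetrically, for $\lmponelambda(\mathcal F)$ the same instantiation with the action $p\in\mathcal F$ puts $\{(\makereal{r},\typereal)\mid p(r)>q\}$ in $\Lambda$ for $q\in\QQ\cap[0,1)$; that $\mathcal F$ is Borel-generating is exactly what makes these sets generate $\borels{\values_{\typereal}}$, together with $\values_{\typereal}=\bigcup_{p\in\mathcal F}\{(\makereal{r},\typereal)\mid p(r)>0\}$ (provided no real number is a common zero of all of $\mathcal F$; see below).

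The second step is routine: for any sub-$\sigma$-algebra $\Lambda\supseteq\borels{\values_{\typereal}}$ --- which by the first step is the case for any $\Lambda$ that is an event bisimulation of either LMP --- being an event bisimulation of $\lmponelambda$ is equivalent to being one of $\lmponelambda(\mathcal F)$. The kernel conditions for the shared actions are identical, while the conditions for the ground actions become automatic: in $\lmponelambda$ the map $s\mapsto h_{\overset{\text{?}}{\leq}q}(s,A)=\mathbf{1}[\,\cdot\leq q\,]\cdot\mathbf{1}_A$ is $\Lambda$-measurable since $\mathbf{1}[\,\cdot\leq q\,]$ cuts out a set in $\borels{\values_{\typereal}}\subseteq\Lambda$ and $A\in\Lambda$, and in $\lmponelambda(\mathcal F)$ the map $s\mapsto h_p(s,A)=p\cdot\mathbf{1}_A$ is $\Lambda$-measurable since $p$ is Borel on $\values_{\typereal}$, hence $\Lambda$-measurable there, and $A\in\Lambda$. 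Combining the two steps, the event bisimulations of $\lmponelambda$ and of $\lmponelambda(\mathcal F)$ coincide, and hence so do $\stackrel{\lmponelambda}{\sim_{\event}}$ and $\stackrel{\lmponelambda(\mathcal F)}{\sim_{\event}}$. One could equally route the argument through the logical characterisation $\sim_{\event}=\sim_{\logic}$ of Theorem~\ref{prop:logical_caracterisation} together with the approximation Lemma~\ref{th:event_bisim_approx}, reducing the whole thing to the claim that the two modal logics generate the same $\sigma$-algebra on the real-value states.

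The main obstacle I expect is the last clause of the first step: one must check that the sets $\{(\makereal{r},\typereal)\mid p(r)>q\}$ generate not merely ``$\borels{\values_{\typereal}}$ modulo the complement of $\values_{\typereal}$'' but genuinely $\borels{\values_{\typereal}}$ and the set $\values_{\typereal}$ itself inside the ambient $\sigma$-algebra on $\markovone$, which forces one to unwind precisely what ``Borel-generating'' delivers --- point-separation, and the full Borel $\sigma$-algebra on $\RR$. Borel-generation does not by itself forbid a common zero of all $p\in\mathcal F$, and for such an $\mathcal F$ the statement fails: if $r_{0}$ is a common zero, then in $\lmponelambda(\mathcal F)$ the state $(\makereal{r_{0}},\typereal)$ has no live action but the $\typereal$-loop and so becomes event-bisimilar to a divergent real-typed term, which it never is in $\lmponelambda$. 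Hence the hypothesis should be read as also requiring $\forall r\in\RR.\,\exists p\in\mathcal F.\,p(r)>0$ (condition~\ref{item:separating_2} of Lemma~\ref{lemma:expectation_as_observable_1}), exactly as in the standing requirements on $\mathit{op}_{\leq}$; with that in hand one passes to a countable Borel-generating subfamily of $\mathcal F$ with no common zero, gets $\values_{\typereal}\in\Lambda$ from a countable union, and closes the argument.
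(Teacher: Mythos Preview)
Your approach is correct and runs parallel to the paper's, with two differences worth flagging.

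First, the paper routes the argument through an auxiliary result (Lemma~\ref{lemma:aux_variant_event_bis}): rather than showing that \emph{all} event bisimulations of the two LMPs coincide, it works only with the \emph{smallest} event bisimulations $\Lambda_{\mathcal F}$ and $\Lambda_{\mathcal F_{\leq}}$ (viewed as sub-$\sigma$-algebras), shows that each is an event bisimulation of the other LMP, and concludes by minimality. Your version --- showing the two LMPs have literally the same class of event bisimulations --- is more direct; the paper's version isolates the transfer step as a reusable lemma. The core observation is the same in both: instantiating the kernel condition at $A=\markovone$ for a ground action forces the function $r\mapsto p(r)$ (resp.\ $r\mapsto op_{\leq}(r,q)$) to be $\Lambda$-measurable, whence $\borels{\valset_{\typereal}}\subseteq\Lambda$ by Borel-generation.

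Second, and more importantly, you are right about the common-zero caveat. The paper's proof of Lemma~\ref{lemma:aux_variant_event_bis} asserts that $\markovone_{\lambdap}\setminus\valset_{\typereal}\in\Lambda_{\mathcal P}$ ``is given by the actions $a$ with $a\in\types$'', but the type actions only cut out $\terms_{\typereal}\cup\valset_{\typereal}$, not $\valset_{\typereal}$ alone; separating real \emph{values} from real \emph{terms} genuinely relies on the ground actions, exactly as you observe. If $\mathcal F$ has a common zero $r_0$, your counterexample goes through: $(\makereal{r_0},\typereal)$ and a divergent $(\Omega,\typereal)$ are event-bisimilar in $\lmponelambda(\mathcal F)$ --- every action on either state has total mass $0$ except the $\typereal$-loop --- but not in $\lmponelambda$. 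So the lemma as stated does need the no-common-zero hypothesis, and the paper's own proof tacitly uses it at the line just quoted.

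One small imprecision: you write that $s\mapsto h_{\overset{\text{?}}{\leq} q}(s,\markovone)$ ``is the indicator of $\{(\makereal{r},\typereal)\mid r\leq q\}$''. In $\lambdap$ the operator $op_{\leq}$ is only required to satisfy $op_{\leq}(r,q)=1\iff r\leq q$, not to be $\{0,1\}$-valued. Your argument still works: the preimage of $\{1\}$ under this map is exactly $\{(\makereal{r},\typereal)\mid r\leq q\}$, and these half-lines already generate $\borels{\valset_{\typereal}}$ and have $\valset_{\typereal}$ as their union.
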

  The proof of Lemma~\ref{corollary:event_bisimulation_parametric_lmps_1} is more involved that the one of Lemma~\ref{lemma:state_simulation_parametric}, and as a consequence we first prove the auxilliary Lemma~\ref{lemma:aux_variant_event_bis} below:  it tells us that to show that we obtain the same event bisimilarity on  the two LMPs $\lmponelambda(\mathcal P)$ and  $\lmponelambda(\mathcal F)$, it is enough to look at the states of the form $\makereal r$.

\begin{lemma}\label{lemma:aux_variant_event_bis}
Let $\mathcal P$, $\mathcal F$ be two families of measureable functions $\RR \to \RR$, and $\Lambda_{\mathcal P}$, $\Lambda_{\mathcal F}$ the event bisimilarities in $\lmponelambda( \mathcal P)$ and $\lmponelambda( \mathcal F)$ respectively. We suppose that $\{ A \mid A \in \Lambda_{\mathcal P} \cap \valset_{\typereal}\} =: (\Lambda_{\mathcal P})_{\mid \valset_{\typereal}} \supseteq (\Lambda_{\mathcal F})_{\mid \valset_{\typereal}} :=\{ A \mid A \in \Lambda_{\mathcal F} \cap \valset_{\typereal}\}$. Then it holds that $\Lambda_{\mathcal P} \subseteq \Lambda_{\mathcal F}$ .
\end{lemma}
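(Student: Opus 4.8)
The plan is to route the argument through the logical characterisation of event bisimilarity (Theorem~\ref{prop:logical_caracterisation}). For either LMP $\lmponelambda(\mathcal{P})$ we take $\Lambda_{\mathcal{P}}$ to be the smallest event bisimulation, i.e.\ the sub-$\sigma$-algebra of $\Sigma_{\markovone}$ generated by $\{\sem{\phi}^{\mathcal{P}} \mid \phi \in \Logic_{\mathcal{P}}\}$, where $\Logic_{\mathcal{P}}$ is the modal logic of Definition~\ref{def:probabilistic_moda_logic} over the action set of $\lmponelambda(\mathcal{P})$ and $\sem{-}^{\mathcal{P}}$ its semantics there. To obtain $\Lambda_{\mathcal{P}} \subseteq \Lambda_{\mathcal{F}}$ it then suffices to prove, by induction on $\phi \in \Logic_{\mathcal{P}}$, that $\sem{\phi}^{\mathcal{P}} \in \Lambda_{\mathcal{F}}$. (Equivalently and slightly more directly, one can check that $\Lambda_{\mathcal{F}}$ is itself an event bisimulation for $\lmponelambda(\mathcal{P})$ and conclude by minimality of $\Lambda_{\mathcal{P}}$; the case analysis is the same.)

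The cases $\phi = \top$ and $\phi = \phi_{1} \wedge \phi_{2}$ are immediate. For $\phi = \act{a}{q}{\psi}$ with $a$ an action \emph{shared} by the two LMPs --- a type, a value, $\evalact$, some $\mathrm{case}(\hat{\imath})$, or $\unboxact$ --- the kernel $h_{a}$ is literally the same in $\lmponelambda(\mathcal{P})$ and $\lmponelambda(\mathcal{F})$ and $a$ is an action of $\lmponelambda(\mathcal{F})$; by the induction hypothesis $\sem{\psi}^{\mathcal{P}} \in \Lambda_{\mathcal{F}}$, and because $\Lambda_{\mathcal{F}}$ is an event bisimulation for $\lmponelambda(\mathcal{F})$ the map $s \mapsto h_{a}(s, \sem{\psi}^{\mathcal{P}})$ is $\Lambda_{\mathcal{F}}$-measurable, so $\sem{\act{a}{q}{\psi}}^{\mathcal{P}} = \{ s \mid h_{a}(s, \sem{\psi}^{\mathcal{P}}) > q \} \in \Lambda_{\mathcal{F}}$.

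The only delicate case is $\phi = \act{p}{q}{\psi}$ with $p \in \mathcal{P}$ --- the single way in which $\lmponelambda(\mathcal{P})$'s actions differ from $\lmponelambda(\mathcal{F})$'s. Here $h_{p}(s,-)$ is the null measure unless $s$ is a real-type value $\makereal{r}$, in which case $h_{p}(\makereal{r}, A) = p(r) \cdot \dirac{\makereal{r}}(A)$, so $\sem{\act{p}{q}{\psi}}^{\mathcal{P}}$ is the subset $\{ \makereal{r} \mid p(r) \cdot \dirac{\makereal{r}}(\sem{\psi}^{\mathcal{P}}) > q \}$ of $\valset_{\typereal}$, which rewrites as $(\sem{\psi}^{\mathcal{P}} \cap \valset_{\typereal}) \cap p^{-1}(J_{q})$ with $J_{q} = (q,1]$ and $J_{0} = (0,1]$. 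By the induction hypothesis $\sem{\psi}^{\mathcal{P}} \cap \valset_{\typereal} \in (\Lambda_{\mathcal{P}})_{\mid\valset_{\typereal}}$, and $p^{-1}(J_{q}) = \sem{\act{p}{q}{\top}}^{\mathcal{P}} \cap \valset_{\typereal}$ also lies there (express $J_{0}$ as a countable union over positive rationals); since $\Lambda_{\mathcal{P}}$ is closed under finite intersections, so is $(\Lambda_{\mathcal{P}})_{\mid\valset_{\typereal}}$, hence $\sem{\act{p}{q}{\psi}}^{\mathcal{P}} \in (\Lambda_{\mathcal{P}})_{\mid\valset_{\typereal}}$; the hypothesis relating $(\Lambda_{\mathcal{P}})_{\mid\valset_{\typereal}}$ and $(\Lambda_{\mathcal{F}})_{\mid\valset_{\typereal}}$ then places this set in $\Lambda_{\mathcal{F}}$, and the induction is complete.

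The crux --- and where care is needed --- is exactly this last step: one must make fully precise that every event produced by a $\mathcal{P}$-action is already ``seen'' by the trace of $\Lambda_{\mathcal{P}}$ on $\valset_{\typereal}$, so that the comparison of traces in the hypothesis is the only extra input; and one must justify the structural facts used freely above, namely that ``the'' event-bisimilarity $\sigma$-algebra of an LMP may be taken to be the logic-generated one, so that comparing these $\sigma$-algebras --- rather than only the induced relations --- is legitimate. Both follow from the Danos--Desharnais--Laviolette--Panangaden analysis behind Theorem~\ref{prop:logical_caracterisation}, but must be spelled out.
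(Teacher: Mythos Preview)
There is a direction error in the delicate case. After correctly computing
$\sem{\act{p}{q}{\psi}}^{\mathcal P}=(\sem{\psi}^{\mathcal P}\cap\valset_{\typereal})\cap p^{-1}((q,1])$,
you place this set in $(\Lambda_{\mathcal P})_{\mid\valset_{\typereal}}$ (which is immediate, since it is a $\mathcal P$-logical event supported on $\valset_{\typereal}$) and then appeal to ``the hypothesis relating $(\Lambda_{\mathcal P})_{\mid\valset_{\typereal}}$ and $(\Lambda_{\mathcal F})_{\mid\valset_{\typereal}}$'' to push it into $\Lambda_{\mathcal F}$. But the hypothesis reads $(\Lambda_{\mathcal P})_{\mid\valset_{\typereal}}\supseteq(\Lambda_{\mathcal F})_{\mid\valset_{\typereal}}$: it transports membership \emph{from} $\Lambda_{\mathcal F}$ \emph{to} $\Lambda_{\mathcal P}$, not the other way. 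Your step therefore does not go through, and the conclusion $\Lambda_{\mathcal P}\subseteq\Lambda_{\mathcal F}$ is in fact false in general under that hypothesis. For a concrete instance take $\mathcal F=\emptyset$ and $\mathcal P=\{p\}$ with $p$ non-constant: in $\lmponelambda(\emptyset)$ no action distinguishes real-value states, so all of them are event bisimilar and $(\Lambda_{\mathcal F})_{\mid\valset_{\typereal}}=\{\emptyset,\valset_{\typereal}\}$; the hypothesis then holds trivially, yet $p^{-1}((q,1])=\sem{\act{p}{q}{\top}}^{\mathcal P}\in\Lambda_{\mathcal P}\setminus\Lambda_{\mathcal F}$ for suitable $q$.

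The paper's argument is the mirror image of your parenthetical alternative: it shows that $\Lambda_{\mathcal P}$ is an event bisimulation for $\lmponelambda(\mathcal F)$ (not that $\Lambda_{\mathcal F}$ is one for $\lmponelambda(\mathcal P)$). For a shared action this is immediate since $\Lambda_{\mathcal P}$ is already an event bisimulation for $\lmponelambda(\mathcal P)$. For $f\in\mathcal F$ and $A\in\Lambda_{\mathcal P}$ one must check that $s\mapsto h_f(s,A)$ is $\Lambda_{\mathcal P}$-measurable; this reduces to showing $\{\makereal r: f(r)\in I\}\in\Lambda_{\mathcal P}$ for Borel $I$, and that set lies in $(\Lambda_{\mathcal F})_{\mid\valset_{\typereal}}$ because $\Lambda_{\mathcal F}$ is an event bisimulation for $\lmponelambda(\mathcal F)$, whence the hypothesis---now used in its correct direction---places it in $\Lambda_{\mathcal P}$. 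Minimality of $\Lambda_{\mathcal F}$ then yields $\Lambda_{\mathcal F}\subseteq\Lambda_{\mathcal P}$, i.e.\ the reverse of the printed inclusion. (The inclusion in the statement thus appears to be written the wrong way round; note that in the paper the lemma is only ever applied with \emph{equal} traces on $\valset_{\typereal}$, where the two directions coincide.)
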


\begin{proof}
  We note $\lmponelambda^{\star}$ for the applicative LMP without action on reals, that is $\lmponelambda(\emptyset)$.
  We are going to show that $\Lambda_{\mathcal P}$ is an event bisimulation on $\lmponelambda( \mathcal F)$. To do that, we need to show that $(\markovone_{\lambdap}, \Lambda_{\mathcal P},\{h_a \mid a \in \actsone_{\mathcal F} \cup \actsone_{\lambdap^{\star}}\} )$ is a LMP, i.e. that for every $a \in \actsone_{\mathcal F} \cup \actsone_{\lambdap^{\star}} $, the restriction of $h_a$ to the map $h_a:\markovone_{\lambdap} \times \Lambda_{\mathcal P} \rightarrow [0,1]$ is a sub-probability kernel. First, observe that when $a \in \actsone_{\lambdap^{\star}}$, we obtain immediately the result using the fact that $\Lambda_{\mathcal P}$ is an event bisimulation on $\lmponelambda( \mathcal P)$, and moreover $h_a^{\lmponelambda(\mathcal F)}$ and $h_a^{\lmponelambda(\mathcal P)}$ coincide. Suppose now that  $a = f \in \mathcal F$: We are going to check that $h_f$ is indeed a kernel: 
\begin{itemize}
      \item For every $A \in \Lambda_{\mathcal P}$, the map $m_A: s \in (\markovone_{\lambdap},\Lambda_{\mathcal P}) \mapsto h_{f}(s,A) \in (\RR, \borels \RR)$ is measureable.  Let $I \in \borels \RR$: we see that $m_A^{-1}(I) = (m_A^{-1}(I) \cap \valset_{\typereal}) \cup
(m_A^{-1}(I) \cap  (\markovone_{\lambdap} \setminus \valset_{\typereal}))$. Observe first that $(m_A^{-1}(I) \cap  (\markovone_{\lambdap} \setminus \valset_{\typereal})) \in \{\emptyset, \markovone_{\lambdap} \setminus \valset_{\typereal}\} \subseteq \Lambda_{\mathcal P}$ (the last inclusion is given by the actions $a$ with $a \in \types$). Then, we observe that $m_A^{-1}(I)\cap \valset_{\typereal} = \{\makereal r \mid f(r) \in I \wedge \makereal r \in A\} = m_{\RR}^{-1}(I) \cap A$. From there, since  $(\Lambda_{\mathcal P})_{\mid \valset_{\typereal}} \supseteq (\Lambda_{\mathcal F})_{\mid \valset_{\typereal}}$, and $m_{\RR}^{-1}(I) \in  (\Lambda_{\mathcal F})_{\mid \valset_{\typereal}}$, it holds that $ m_A^{-1}(I)\cap \valset_{\typereal} \in \Lambda_{\mathcal P}$, and we can conclude.
\item For every $s \in \markovone_{\lambdap}$, the map $A \in \Lambda \mapsto h_{a}(s,A)$ is a sub-probability measure. We see that immediately from the fact that $\lmponelambda(\mathcal P)$ is a LMP, and $\Lambda$ is a sub sigma-algebra of $\Sigma_{\markovone_{\lambdap}}$.
        \end{itemize}
\end{proof}

\begin{proof}[Proof of Lemma~\ref{corollary:event_bisimulation_parametric_lmps_1}]
  Let $\Lambda_{\mathcal F}, \Lambda$ be the event bisimilarities in $\lmponelambda( \mathcal F)$ and $\lmponelambda$ respectively.
  Observe that $\valset_{\typereal}$ is a self-contained fragment of both $\lmponelambda( \mathcal F)$ and $\lmponelambda$: as a consequence $\Lambda_{\mathcal F} \cap \valset_{\typereal}$, and $\Lambda \cap \valset_{\typereal} $ can be caracterised as the event bisimulations on the LMPs  $(\valset_{\typereal}, \Sigma_{\mid \typereal}, \{h_f \mid f \in \mathcal F_{\leq} \cup \{\typereal\}\})$ and $(\valset_{\typereal}, \Sigma_{\mid \typereal}, \{h_f \mid f \in {\mathcal F} \cup \{\typereal\}\})$ respectively. Using the definition of event bisimulation, we obtain that for every $f \in \mathcal F$, the map
  $\makereal r \in (\valset_{\typereal}, \Lambda_{\mathcal F} \cap \valset_{\typereal} ) \to h_f(r,\RR) = f(r) \in (\RR,\borels \RR)$ is measureable. It is a known result from the literature--see for instance~\cite{}, caracterisation of a $\sigma$-algebra generated by a collection of mappings--that this condition implies that $\Lambda_{\mathcal F}\cap \valset_{\typereal} $ contains the $\sigma$-algebra generated by $\mathcal F$. By hypothesis, the $\sigma$-algebra generated by $\mathcal F$ coincides with $\borels \RR$, meaning that $\borels \RR = \Sigma\cap \valset_{\typereal} \supseteq (\Lambda_{\mathcal F} \cap \valset_{\typereal} \supseteq \borels \RR \supseteq (\Lambda \cap \valset_{\typereal})$, and from there we can conclude by applying Lemma~\ref{lemma:aux_variant_event_bis}. For the reverse inclusion, we can do the same reasonning in exchanging $\mathcal F$ and $\mathcal F_{\leq}$, using the fact that also $\mathcal F_{\leq}$ is Borel-generating.

  \end{proof}

\ugo{The following constraint should rather go in the Introduction. Otherwise 
it seems 
rather ad-hoc here. Example 2, instead, should I think go to a dedicated 
section at the end 
of the paper.}
}

\section{Continuity to the Rescue}\label{sect:feller}
As already mentioned, most of the literature on relational reasoning on LMPs 
focuses on cases where labels are countable. A notable exception 
is~\cite{fijalkow2017expressiveness}, in which it is shown that adding 
a \emph{continuity} condition on the LMP transition function allows, in the 
case of \emph{uncountably} many labels, to recover the coincidence between 
some notions of equivalence. We now briefly present the setting 
from~\cite{fijalkow2017expressiveness}, showing that, unfortunately, it cannot be applied to LMPs like $\lmponelambda$ or $\lmponelambdac$.
\begin{definition}\label{def:ctf}
  Let $\lmpone = (\markovone, \actsone, \{h_a \mid a \in \actsone\})$ be a LMP, such that $\markovone$ is a standard Borel space. We say it has a \emph{continuous transition function} if, whenever  $a \in \actsone$, $X \in \Sigma_{\markovone}$, the function $s \in \markovone \mapsto h_a(s,X) \in [0,1] $ is continuous.
  \end{definition}

Unfortunately, the continuity requirement on the underlying transition function 
is not suitable for probabilistic transition systems built out of the 
operational semantics of higher-order programming languages like $\lambdap$. 
Intuitively---and as illustrated in 
Remark~\ref{remark:continuous_transion_function_operational} below --- this comes 
from the fact that Definition \ref{def:ctf} forces any transition to 
either have a fuzzy 
probabilistic behaviour, or to go into a constant state \emph{not} depending on 
the reals in the state of departure. As a consequence, it excludes 
deterministic (labelled) transitions in which a value is passed as an 
argument to a function. Since such steps 
exist independently on any continuity requirements, the continuous 
transition function constraint does not hold in (the LMPs underlying) any 
meaningful fragment of 
$\lambdap$. 
\begin{remark}\label{remark:continuous_transion_function_operational}
  We illustrate on an example why neither $\lmponelambdac$ nor $\lmponelambda$ 
  have continuous transition functions. We start from the action $\evalact$, 
  and the measureable set $X = \{\makereal 0\}$. We consider the family of 
  terms $\termone_n = \op(\makereal {r_n})$, where $(r_n)_{n \in \NN}$ is a 
  sequence of non-zero numbers converging to $0$, and $f$ is 
  the identity, thus a continuous function on $\RR$. In the standard Borel 
  space $\markovone_{\Lambdacont}$ (or $\markovone_{\Lambda_{\probb}}$), we define a sequence $(u_n)_{n \in \NN}$ by $u_n:=(\termone_n,\typereal)$. Observe that $u_n 
  \xrightarrow[n\rightarrow \infty]{} (\op{(\makereal 0)},\typereal)$ in the $\markovone_{\Lambdacont}$ (or $\markovone_{\Lambda_{\probb}}$) topology. By contrast, we 
  see that  $h_{\evalact}(\termone_n,\typereal)(X)=0$ for every $n$, while 
  $h_{\evalact}(\op(\makereal 0),\typereal)(X)=1$; thus $h_{\evalact}$ is not a 
  continuous transition function.
\end{remark}
We thus need to go towards a different notion of continuity, which is what we 
are going to do in the next subsection.

\subsection{Feller-Continuous LMPs}
\begin{figure}
\begin{center}
	\scalebox{0.6}{
\begin{tikzpicture}[set/.style={fill=cyan,fill opacity=0.1}]

 \draw[fill=yellow,fill opacity=0.05,rotate =0, draw=none] (0,1.7) ellipse (6cm and 3.2cm);
 \draw[fill=cyan,fill opacity=0.1,rotate =0, draw=none] (0,1.7) ellipse (6cm and 3.2cm);
  \node at (0,-0.7){
  \begin{minipage}{0.4 \textwidth}
  \centering
  analytical states spaces, \\ uncountable labels
\end{minipage}
};
  \draw[fill=purple,fill opacity=0.15,rotate =25, draw= none] (2,1) ellipse (3.2cm and 1.6cm);
  \node at (-2.8,3){\begin{minipage}{0.2 \textwidth} \centering labelwise continuous transition function \end{minipage}};
   \node at (2.8,3){\begin{minipage}{0.2 \textwidth} \centering Feller-continuous \\ kernels  \end{minipage}};
  \draw[fill=orange,fill opacity=0.15,rotate =-25, draw=none] (-2,1) ellipse (3.2cm and 1.6cm);
  \node at (-1.5,1.8){\begin{minipage}{0.12 \textwidth} \centering
\begin{align*}
  \sim_{\statet} &= \sim_{\event}
\end{align*}
  \end{minipage}};
  \node at (0,4){$\bullet$};
  \node at (0,4.3){
  $\lmponelambda$
};
 \node (A) at (2.5,1.7){$\bullet$};
  \node at (2.5,1.35){
  $\lmponelambdac$
  };
  

  \end{tikzpicture}}
\end{center}
\caption{LMP-equivalences for Analytical States Spaces and Uncountable Labels}\label{fig:comparaison_equivalences_2}
\end{figure}
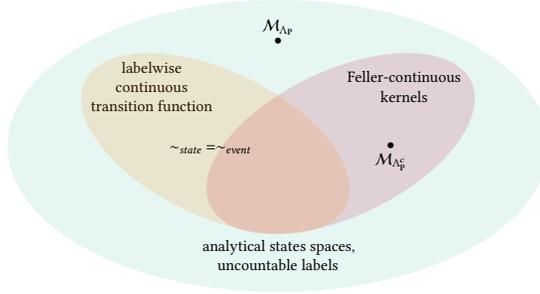

Our goal now is to introduce a different continuity requirement on LMPs, that 
both holds for $\lmponelambdac$ and enforces coincidence of event and state 
bisimilarities.
\longv{\subsubsection{Definition of Feller-Continuous Kernels}}
We first introduce a notion of convergence for measures on a standard Borel space, 
which is in fact standard~\cite{swart2013markov, parthasarathy2005probability}. 
Recall that for any program $M$, $\sem M$ is a measure on the standard Borel space 
$\valset$.
\begin{definition}
  Let $X$ be a standard Borel space, equipped with its Borel $\sigma$-algebra.
    Let $(\mu_n)_{n \in \NN}$ be a sequence of measures over $X$. We say that 
    $(\mu_n)_{n \in \NN}$ \emph{converges weakly} towards $\mu$ when
    for every bounded and continuous $f: X \rightarrow \RR $ it holds that
    $$\lim_{n \rightarrow \infty} \int_X f . d\mu_n = \int_X f . d\mu .$$
\end{definition}

The notion of weak convergence corresponds to a \emph{topology} (actually, the space of sub-probability measures of a standard Borel space equipped with the weak convergence forms a Polish space, and 
thus a standard Borel space).
Since we have equipped the space of measures with a topology, all this gives us 
a notion of continuity for kernels, called \emph{Feller continuity} in the 
literature (see, e.g.,~\cite{swart2013markov}).
\begin{definition}[Feller-Continuous Kernels]\label{def:feller_lmp}
    Let $X,Y$ be two standard Borel spaces.
We say that a sub-probability kernel $k: X \kernel Y $ is \emph{Feller 
continuous} when the function $f_k: X \rightarrow \distrs Y$ is continous when 
equipping $\distrs Y$ with the weak topology. 
\end{definition}

 \shortv{It can be shown that Feller continuity is preserved by composition 
 (see~\cite{LV})}. \longv{Being a Feller kernel is a well-behaved notion: in 
 particular, we can observe that whenever a LMP $\lmpone$ is actually a
  (non-probabilistic) LTS, its transition kernel is Feller continuous. It is a 
  consequence of the following Lemma~\ref{notation:deterministic_generation}.
\begin{lemma}\label{notation:deterministic_generation}
For every measurable function $f: \RR^k \rightarrow \RR^p$, we denote by $\overline f$ the associated deterministic kernel $f: \RR^k \kernelp \RR^p$ defined as $\vec r \mapsto \dirac{f(\vec r)}$. Similarly, for any distribution $\distrone \in \distrs {\RR^n}$, we denote by $\overline \distrone$ the associated kernel $\RR^0 \kernelp \RR^n$. Whenever $f$ is continuous, $\overline f$ is Feller continous. Moreover $\overline \distrone$ is always Feller continuous.
\end{lemma}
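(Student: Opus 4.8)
The plan is to peel off the definition of Feller continuity and reduce both claims to the continuity of the Dirac embedding. First I would observe that it suffices to check \emph{sequential} continuity: the Euclidean space $\RR^k$ is metrizable, and so is $\distrs{\RR^p}$ endowed with the weak topology (as recalled above, the space of sub-probability measures on a standard Borel space with weak convergence is Polish), so a map $\RR^k \to \distrs{\RR^p}$ is continuous precisely when it preserves limits of sequences.

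Now assume $f$ continuous and take a sequence $\vec{r}_n \to \vec{r}$ in $\RR^k$; I must show $\dirac{f(\vec{r}_n)} \to \dirac{f(\vec{r})}$ weakly, i.e. $\int_{\RR^p} g\, d\dirac{f(\vec{r}_n)} \to \int_{\RR^p} g\, d\dirac{f(\vec{r})}$ for every bounded continuous $g \colon \RR^p \to \RR$. Since integrating against a Dirac measure just evaluates the integrand, $\int_{\RR^p} g\, d\dirac{x} = g(x)$, so this reduces to the assertion $g(f(\vec{r}_n)) \to g(f(\vec{r}))$, which is immediate: continuity of $f$ gives $f(\vec{r}_n) \to f(\vec{r})$, and then continuity of $g$ gives $g(f(\vec{r}_n)) \to g(f(\vec{r}))$. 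More conceptually, the same computation shows that the unit $\unit \colon \RR^p \to \distrs{\RR^p}$, $x \mapsto \dirac{x}$, is continuous for the weak topology, and $\overline f$ is the composite $\unit \circ f$, hence continuous whenever $f$ is.

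For the second statement, the domain of $\overline{\distrone}$ is $\RR^0$, the one-point space, and any map out of a one-point space is (vacuously) continuous; thus $\overline{\distrone}$ is Feller continuous with no hypothesis needed. There is no genuine obstacle in this lemma; the only step worth spelling out is the passage to sequential continuity, which is legitimate precisely because the weak topology on sub-probability measures over a standard Borel space is metrizable.
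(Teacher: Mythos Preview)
Your proof is correct; the paper states this lemma without proof, treating it as a routine observation, and your argument is exactly the expected one (continuity of the Dirac unit composed with $f$, plus the triviality of the one-point domain for $\overline{\distrone}$).
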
}
We now use this notion of Feller kernel to enforce additional continuity 
requirements on LMPs.
\begin{definition}
  Let $\lmpone = (\markovone, \actsone, \{h_a \mid a \in \actsone\})$ be a 
  measureably labelled Markov process. We say that $\lmpone$ is \emph{Feller 
  continuous} whenever $\markovone$ and $\actsone$ are two standard Borel spaces, and:
  \begin{itemize}
  \item for every $a \in \actsone$, the kernel $h_a: \markovone \times \Sigma_{\markovone} \rightarrow [0,1]$ is Feller continuous;
    \item for every $s \in \markovone$ the kernel $(a \in \actsone) \times (A \in \Sigma_{\markovone}) \mapsto h_a(s,A)$ is Feller continuous.
    \end{itemize}
\end{definition}
Please notice that it is enough for a Feller LMP  to be continuous on 
$\actsone$ 
and $\markovone$ \emph{separately}, and that we do not ask it to be jointly 
continuous on $\actsone \times \markovone$, which is in general a stronger 
condition~\cite{piotrowski1985separate}.
Being a Feller continuous LMP is a well-behaved notion: in particular, we can 
observe that whenever a LMP $\lmpone$ is actually a (non-probabilistic) LTS, 
its transition kernel is Feller. The same holds when actions and states are 
both countable. 
Our aim is to show that this notion of Feller continuity can indeed be applied to the applicative LMP for $\lambdacont$, but not the one for $\lambdap$.

\begin{remark}\label{remark:lmplambda_not_feller_continuous}
  Let us now discuss why $\lmponelambda$ is \emph{not} Feller continuous: we 
  consider the action $\evalact$, and we show that the kernel $h_{\evalact}: 
  \markovone_{\Lambda_{\probb}} \times \Sigma_{\markovone_{\Lambda_{\probb}}} 
  \rightarrow [0,1]$ is not Feller continuous.  We consider the sequence of 
  terms defined as:
  $\termone_n = \op{(\makereal{r_n})} $, where $f: \RR \to \RR$ is the measureable function defined as $f(r) = 1$ if $r<0$, and $0$ otherwise, and $(r_n)_{n \in \NN}$ is a sequence of strictly negative real numbers tending towards $0$. We see that $h_{\evalact}(\termone_n,\typereal) = \dirac{\makereal 1}$ for every $n \in \NN$, but $h_{\evalact}( \op{(\makereal{0})},\typereal) = \dirac{\makereal 0}$. Since it is not the case that the stationary sequence $(\dirac{\makereal 1})_{n \in \NN}$ converges weakly towards $\dirac{\makereal 0}$--with respect to the Borel topology on $\markovone_{\Lambda_{\probb}}$--we can conclude that $\lmponelambda$ is not Feller continuous.
  \end{remark}
Remark~\ref{remark:lmplambda_not_feller_continuous} indeed shows that 
$\lmponelambda$ is not Feller continuous. Note that our
counterexample is based on the existence of \emph{non-continuous} 
primitive real functions in $\lambdap$. We will deal with the Feller continuity 
of $\lmponelambdac$ in Section~\ref{sect:completeness} below.

\longv{
\subsubsection{Technical Lemmas on Feller Continuous Kernels}

In this section, we show structural properties of Feller kernels, that we will need later in the proofs: stability of Feller constraints under countable coproducts, compositionnality of Feller kernels, and caracterisation of weak convergence of sub-probability measures as weak convergence for the associated proper probability measures. Before stating formally those lemmas, let us first give some elments about their relevance: we are interested in
stability under the infinite coproduct construction because all polish space that we use for programs--$\valset,\terms...$-- are built as infinite coproduct; we are interested in composing Feller kernels because it allow us to see Feller kernels as a sub-category of Markov kernels; and we are interested in reducing the weak convergence decision problem from sub-probability measure to probability measure because, while we really need sub-probability kernels model in our semantics the probability of non-termination, it allows us to use result on proper probability kernels from the literature.

We show stability by countable coproduct first for weak convergence of measures-- in Lemma~\ref{lemma:contable_union_weak_convergence}, and for the class of Feller kernels-- in Lemma~\ref{lemma:countable_sum_kernels}.
\begin{notation}
Let $(X_n)_{n \in \NN}$ be a countable family of (disjoint) Polish spaces, and $X:= \cup_{n \in \NN} X_n$ endowed with the disjoint union topology. For $\distrone \in \distrs X$, we define $\distrone^{X_n} \in \distrs X_n$ as the restriction of $\distrone$ to $X_n$.
  \end{notation}
\begin{lemma}\label{lemma:contable_union_weak_convergence}
  Let $(X_n)_{n \in \NN}$ be a countable family of (disjoint) polish spaces, and $X:= \cup_{n \in \NN} X_n$ endowed with the disjoint union topology. Suppose $(\distrone_m)_{m \in \NN}$ a sequence of (sub)-distributions over $X$, and $\distrtwo \in \distrs X$. Suppose that the sequence $(\distrone_m)_{m \in \NN}$  converges weakly towards  $\distrtwo$.  Then each of the sequence $(\distrone^{X_n}_m)_{m \in \NN}$ converges weakly towards $\distrtwo^{X_n}$. 
  \end{lemma}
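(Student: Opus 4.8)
The plan is to exploit the fact that, in the disjoint union topology, each summand $X_n$ is a \emph{clopen} subset of $X$, so that a bounded continuous function on $X_n$ extends by zero to a bounded continuous function on the whole of $X$; the statement then follows immediately from the very definition of weak convergence applied to such extended test functions.

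Concretely, first I would fix $n$ and a bounded continuous $f : X_n \to \RR$, and define $\bar{f} : X \to \RR$ by $\bar{f}(x) = f(x)$ for $x \in X_n$ and $\bar{f}(x) = 0$ for $x \in X \setminus X_n$. Since $X_n$ is open, $\bar{f}$ agrees on the open set $X_n$ with the continuous function $f$; since $X_n$ is also closed, $X \setminus X_n$ is open and $\bar{f}$ is constantly $0$ there. As continuity is a local property, $\bar{f}$ is continuous on $X$, and it is clearly bounded (by $\sup|f|$). This is the only place where the disjoint-union structure is genuinely used: for a merely open subset the zero extension would in general fail to be continuous on the boundary.

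Next, I would unwind the definition of the restriction $\distrone^{X_n}$: since $\bar{f}$ vanishes outside $X_n$ and $\distrone^{X_n}$ is $\distrone$ restricted to $\Sigma_{X_n}$, one has $\int_X \bar{f} \de{\distrone} = \int_{X_n} f \de{\distrone^{X_n}}$ for every (sub-probability) measure $\distrone$ on $X$, and likewise with $\distrtwo$. Applying the hypothesis that $(\distrone_m)_{m \in \NN}$ converges weakly to $\distrtwo$ on $X$ to the bounded continuous test function $\bar{f}$ then yields $\int_{X_n} f \de{\distrone_m^{X_n}} = \int_X \bar{f} \de{\distrone_m} \to \int_X \bar{f} \de{\distrtwo} = \int_{X_n} f \de{\distrtwo^{X_n}}$. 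Since $f$ was an arbitrary bounded continuous function on $X_n$, this is exactly the assertion that $(\distrone_m^{X_n})_{m \in \NN}$ converges weakly to $\distrtwo^{X_n}$, and this holds for each $n$.

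There is no serious obstacle here: the argument is a short topological observation followed by a direct appeal to the definition of weak convergence. The one point that deserves to be spelled out is the continuity of the zero extension $\bar{f}$, which is precisely what the clopenness of the summands in the disjoint union topology provides; the rest is bookkeeping with the restricted measures $\distrone^{X_n}$.
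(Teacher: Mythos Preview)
Your argument is correct and is essentially the same as the paper's: the paper also extends a bounded continuous $f:X_n\to\RR$ by zero to a bounded continuous function on $X$ (denoted $i_n(f)$ there), observes $\int_{X_n} f\,d\distrone^{X_n}=\int_X i_n(f)\,d\distrone$, and concludes directly from the definition of weak convergence. Your write-up is in fact slightly more explicit in justifying continuity of the zero extension via clopenness of the summands.
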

  \begin{proof}
Observe that for every $n$, for every bounded continuous function $f: X_n \rightarrow \RR$, for every measure $\mu$ on $X$, $\int_{X_n} f \cdot d \mu^{X_n} = \int_{X} i_n(f) \cdot d\mu$, where $i_n(f) : x \in X  \mapsto f(x) \text{ if } x \in X_n, \, 0 \text{ otherwise}$. Moreover, whenever $f$ is bounded continuous, also $i_n(f)$ is bounded continuous for every $n \in \NN$. From there, we can conclude. 
\end{proof}
\begin{lemma}\label{lemma:countable_sum_kernels}
  Let $(X_n)_{n \in \NN}$ be a countable family of (disjoint) Polish spaces, and $Z$ a Polish space. Suppose a family $f_n:X_n \kernel Z$. We define $\bigsqcup_{n \in \NN} f_n: \bigcup_{n \in \NN}X  \kernel Z$, where $\bigcup_{n \in \NN}X$ is the Polish space obtained as the disjoint union of the $X_n$--with the disjoint union topology:
$$\bigsqcup_{n \in \NN} f_n(s) = f_{n_0}(s) \text{ with }X_{n_0}\text{ such that }s \in X_{n_0}. $$
  Then, if all the $f_n$ are Feller, also $ \bigsqcup_{n \in \NN} f_n$ is Feller. 
\end{lemma}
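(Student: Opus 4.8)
The plan is to deduce Feller continuity of $\bigsqcup_{n} f_{n}$ purely from the local (equivalently, sequential) nature of continuity, using the fact that in the disjoint union topology each summand $X_{n}$ sits inside $X := \bigcup_{n \in \NN} X_{n}$ as a \emph{clopen} set, with its original topology as the induced subspace topology. Write $f := \bigsqcup_{n \in \NN} f_{n}$. By Definition~\ref{def:feller_lmp} it suffices to prove that the associated map $X \to \distrs Z$, $s \mapsto f(s)$, is continuous when $\distrs Z$ carries the weak topology; recall that this map sends $s$ to $f_{n_{0}}(s)$, where $n_{0}$ is the unique index with $s \in X_{n_{0}}$.

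First I would note that, since $X$ is a countable disjoint union of Polish spaces it is itself Polish, and since $Z$ is Polish the space $\distrs Z$ with the weak topology is Polish as well; hence both source and target are metrizable and it is enough to check sequential continuity. So I would take a sequence $(s_{m})_{m}$ in $X$ converging to $s \in X$, pick $n_{0}$ with $s \in X_{n_{0}}$, and observe that $X_{n_{0}}$ is an open neighbourhood of $s$, so $s_{m} \in X_{n_{0}}$ for all $m$ beyond some threshold $M$. For such $m$ we have $f(s_{m}) = f_{n_{0}}(s_{m})$ and $f(s) = f_{n_{0}}(s)$, and $s_{m} \to s$ within $X_{n_{0}}$ for its original topology. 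Feller continuity of $f_{n_{0}}$ then gives $f_{n_{0}}(s_{m}) \to f_{n_{0}}(s)$ weakly in $\distrs Z$, i.e. $f(s_{m}) \to f(s)$ weakly, which is what we want.

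The argument is essentially a matter of unwinding the definitions, so there is no real obstacle: the one point worth stating explicitly is that the $X_{n}$ are clopen in the disjoint union topology (every subset of the index set yields an open set of $X$), which is precisely what localises the continuity question to each summand. Alternatively, one may avoid sequences and invoke the universal property of the coproduct directly: a map out of $X = \bigsqcup_{n} X_{n}$ is continuous iff its restriction to each $X_{n}$ is continuous, and by construction the restriction of the map $X \to \distrs Z$ induced by $f$ to $X_{n}$ is exactly the map induced by $f_{n}$, continuous by hypothesis. In either formulation, that $f$ is already a kernel is given by the statement, so nothing further need be checked.
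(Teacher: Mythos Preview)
Your proof is correct. It differs from the paper's approach, which works at the level of \emph{distributions} rather than points: the paper takes a weakly convergent sequence $(\mu_m)_m$ in $\distrs X$, invokes Lemma~\ref{lemma:contable_union_weak_convergence} to obtain weak convergence of each restriction $(\mu_m^{X_n})_m$, applies Feller continuity of each $f_n$ to these restricted measures, and then sums. This effectively establishes continuity of the Kleisli extension $\distrs X \to \distrs Z$, which is stronger than what the lemma states; moreover the step that passes from Feller continuity of $f_n$ on points to continuity of $\mu \mapsto f_n(\mu)$ on sub-distributions is not entirely innocent (it tacitly relies on something like Lemma~\ref{lemma:feller_compositional}). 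Your argument stays at the point level throughout: a convergent sequence in the disjoint union is eventually trapped in a single clopen summand $X_{n_0}$, so the question reduces at once to continuity of the single map $f_{n_0}$. This is shorter, uses only the definition of Feller continuity, and your alternative phrasing via the universal property of the coproduct in $\mathbf{Top}$ is cleaner still.
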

\begin{proof}
  Let $(\distrone_m)_{m \in \NN}$ a sequence of (sub)-distributions over $\bigsqcup_{n \in \NN} X_n $ that converges weakly towards some distribution $\distrtwo$.
  Observe that for every $\distrone$, $(\bigsqcup_{n \in \NN} f_n) (\distrone) = \sum_{n \in \NN}f_n(\distrone^{X_n})$, with $\distrone^{X_n}$ represents the restriction of $\distrone$ to $X_n$. By Lemma~\ref{lemma:contable_union_weak_convergence}, we can see that for every $n \in \NN$, the sequence $(\distrone_m^{X_n})_{m \in \NN}$ converge weakly towards $\distrtwo^{X_n}$. Since all the $f_m$ are Feller, we know that for every $n$, the sequence $(f_n(\distrone_m^{X_n}))_{m \in \NN}$ converge weakly towards $f_n(\distrtwo^{X_n})$. As a consequence $(\sum_{n \in \NN}f_n(\distrone_m^{X_n}))_{m \in \NN})$ converge weakly towards   $(\sum_{n \in \NN}f_n(\distrtwo^{X_n}))_{m \in \NN})$, which concludes the proof.
  \end{proof}

We now show how we can reduce the question of weak convergence for sub-probability measures to weak convergence for (proper) probability measures. Recall the notation $\mu_{\bot}$ introduced in Notation~\ref{notation:distrib_bot} to talk about the proper probability measure associated to a sub-probability measure.
 \begin{lemma}\label{lemma:aux_weakly}
Let $(\mu_n)_{n \in \NN}$ be a sequence of sub-distributions that converges weakly towards $\mu$. Then the sequence $((\mu_n)_\bot)_{n \in \NN}$ converges weakly towards $\mu_{\bot}$.
    \end{lemma}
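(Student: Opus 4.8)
The plan is to reduce everything to the definition of weak convergence on $X$ together with the observation that, in the disjoint-union topology, $\bot$ is an \emph{isolated} point of $X_{\bot} = X \sqcup \{\bot\}$. First I would fix a bounded continuous function $g \colon X_{\bot} \to \RR$ and set $c = g(\bot)$ and $f = g|_X$. Since $X$ is a clopen subspace of $X_{\bot}$ whose subspace topology is its original topology, $f$ is again bounded and continuous; conversely $g$ is entirely determined by the pair $(f,c)$. This is the only place the disjoint-union topology intervenes.

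Next I would unfold the definition of $(\cdot)_{\bot}$: for any sub-probability measure $\nu$ on $X$ one has $\nu_{\bot}(A) = \nu(A \cap X) + (1 - \nu(X))\,\delta_{\bot}(A)$, hence
$$\int_{X_{\bot}} g \, d\nu_{\bot} = \int_X f \, d\nu + c\,(1 - \nu(X)).$$
Applying this to $\nu = \mu_n$ and to $\nu = \mu$, it suffices to show $\int_X f \, d\mu_n \to \int_X f \, d\mu$ and $\mu_n(X) \to \mu(X)$. The first is immediate from weak convergence of $(\mu_n)_{n \in \NN}$ on $X$, since $f$ is bounded and continuous. For the second I would test against the constant function $\mathbf{1} \colon X \to \RR$, which is bounded and continuous, so that $\mu_n(X) = \int_X \mathbf{1}\, d\mu_n \to \int_X \mathbf{1}\, d\mu = \mu(X)$. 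Combining the two limits gives $\int_{X_{\bot}} g\, d(\mu_n)_{\bot} \to \int_X f\, d\mu + c\,(1-\mu(X)) = \int_{X_{\bot}} g\, d\mu_{\bot}$, which is the claim.

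There is essentially no hard step; the only subtlety worth flagging is that the argument genuinely relies on the total masses converging, $\mu_n(X) \to \mu(X)$, which in turn relies on the constant function being an admissible test function in the notion of weak convergence used here. This is precisely the point where passing through sub-probability (rather than proper probability) measures needs care, and it is what the lemma is for: it licenses the subsequent use of results from the literature that are stated only for proper probability measures.
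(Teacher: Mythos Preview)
Your proof is correct and follows essentially the same approach as the paper: both decompose $\nu_{\bot}$ as $\nu + (1-\nu(X))\cdot\dirac{\bot}$, handle the first summand directly by weak convergence on $X$, and handle the second by observing that $\mu_n(X)\to\mu(X)$ via the constant test function. Your version is in fact a bit more explicit than the paper's in justifying that the restriction $g|_X$ is bounded continuous (via $X$ being clopen in $X_{\bot}$), a point the paper leaves implicit.
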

    \begin{proof}
Recall that by definition, for every $\distrtwo \in \distrs X$, $\distrtwo_\bot  = \distrtwo + (1 - \int_X d\distrtwo)\cdot \dirac{\bot}$. By hypothesis we know that $(\mu_n)$ converge weakly towards $\mu$, thus its is 
enough to show that for every $f: Z_{\bot} \rightarrow \RR$ bounded continuous, $\int_{Z_\bot} f \cdot d((1 - \int_X d\mu_n) \cdot \dirac \bot) \rightarrow \int_{Z_\bot} f \cdot d((1 - \int_X d\mu) \cdot \dirac \bot)$. Observe that:
    \begin{align*}
\int_{Z_\bot} f \cdot d((1 - \int_X d \mu_n) \cdot \dirac \bot) &= f(\bot) \cdot (1 - \int_X d\mu_n) \\
& \rightarrow  f(\bot) \cdot (1 - \int_X d\mu) \text{ again by hypothesis on } (\mu)_{n \in \NN} \\
& =\int_{Z_\bot} f \cdot d((1 - \int_X d\mu) \cdot \dirac \bot).
\end{align*}
\end{proof}
\begin{lemma}\label{lemma:feller_compositional}
Feller continuous probability kernels are compositional.
  \end{lemma}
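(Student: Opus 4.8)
The plan is to prove that the class of Feller continuous kernels is closed under Kleisli composition: given Feller continuous (sub)probability kernels $f : X \kernel Y$ and $g : Y \kernel Z$ between standard Borel spaces, I would show that $g \circ f : X \kernel Z$ is again Feller continuous (this being what ``compositional'' means). Since $X$ and the spaces $\distrs Y$, $\distrs Z$ equipped with the weak topology are all Polish, hence metrizable, it is enough to establish \emph{sequential} continuity of the map $x \mapsto (g \circ f)(x,-)$. So I would fix a convergent sequence $x_n \to x$ in $X$ and aim to show that $(g \circ f)(x_n,-)$ converges weakly to $(g \circ f)(x,-)$, i.e.\ that $\int_Z \phi \, d(g\circ f)(x_n,-) \to \int_Z \phi \, d(g\circ f)(x,-)$ for every bounded continuous $\phi : Z \to \RR$.

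The first real step is a Fubini-style rewriting of these integrals. Unfolding the definition $(g\circ f)(x,C) = \int_Y g(y,C)\, f(x,dy)$ and extending it from indicator functions to simple functions, then to nonnegative bounded measurable functions by monotone convergence, and finally to arbitrary bounded measurable $\phi$ by splitting into positive and negative parts, one obtains
\[
  \int_Z \phi \, d(g\circ f)(x,-) \;=\; \int_Y \psi(y)\, f(x,dy), \qquad \text{where} \quad \psi(y) \defeq \int_Z \phi(z)\, g(y,dz).
\]
The second step is to observe that $\psi$ is bounded and continuous. Boundedness is clear, since $|\psi(y)| \le \norm{\phi}_\infty \cdot g(y,Z) \le \norm{\phi}_\infty$. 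Continuity is exactly Feller continuity of $g$: the map $y \mapsto g(y,-) \in \distrs Z$ is weakly continuous, and by the very definition of the weak topology, post-composing it with $\mu \mapsto \int_Z \phi \, d\mu$ gives a continuous map $y \mapsto \psi(y)$. The third and final step then invokes Feller continuity of $f$: since $\psi$ is bounded continuous and $x_n \to x$, weak convergence of $f(x_n,-)$ to $f(x,-)$ yields $\int_Y \psi(y)\, f(x_n,dy) \to \int_Y \psi(y)\, f(x,dy)$, which by the displayed identity is precisely the desired convergence $\int_Z \phi \, d(g\circ f)(x_n,-) \to \int_Z \phi \, d(g\circ f)(x,-)$.

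I do not expect a genuine obstacle here: the argument is essentially a chase through the definitions, the only non-bookkeeping ingredient being the Fubini interchange above, which is entirely standard. One minor point to keep in mind is that we work with \emph{sub}-probability kernels, so the measures in play may have total mass strictly below $1$; this is harmless, as every estimate only uses that these masses are at most $1$, and the same proof covers both the probability-kernel statement of the lemma and the sub-probability variant needed elsewhere. If one prefers to avoid metrizability, the same three steps go through with nets in place of sequences, or by arguing directly from the universal property of the weak topology.
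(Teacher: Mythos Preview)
Your proof is correct, and it takes a cleaner and more direct route than the paper's own argument. The paper proceeds in two stages: first it proves the result for \emph{proper} probability kernels using the portmanteau characterisation of weak convergence (via open sets and $\liminf$), showing that for each open $O \subseteq Z$ the map $y \mapsto g(y,O)$ is lower semi-continuous and then invoking a lemma to the effect that $\liminf_n \int \alpha \, d\mu_n \geq \int \alpha \, d\mu$ for l.s.c.\ $\alpha$; in a second stage it reduces the sub-probability case to the proper case by adjoining a point $\bot$ and checking that the $(-)_\bot$ construction preserves and reflects Feller continuity. Your argument instead works directly with the defining test functions of the weak topology: you observe that $\psi(y) = \int_Z \phi\, dg(y,-)$ is itself bounded continuous (this is exactly Feller continuity of $g$), and then apply Feller continuity of $f$ once. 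This avoids both the portmanteau detour and the separate treatment of sub-probabilities, since nothing in your three steps uses that the measures have full mass. The paper's route buys nothing extra here; yours is simply more economical.
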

  \begin{proof}
A composition result for Feller kernel is shown in~\cite{swart2013markov}, as a corollary of Proposition 3.5 there, but with the additional constraints that $X,Y,Z$ should be \emph{compact}--which is not adapted for our purposes. We show it here for general metric spaces.
We divide the proof in two steps: in a first step, we show the result for proper probability kernels, and in a second step we extend it to sub-probability kernels using  Lemma~\ref{lemma:aux_weakly}.

Let $f:X \kernel Y$ and $g:Y \kernel Z$ two (proper) probability kernels. Let $(x_n)_{n \in \NN}$ be a sequence in $X$ that converge towards $x$. The well-known \emph{portmanteau theorem} for weak convergence--see for instance Theorem 3.1 in~\cite{Ethier-Kurz}--tells us that a sequence of probability measures $(\mu_n)_{n \in \NN}$ over an arbitrary metric space $S$ converges weakly towards $\mu$ when for every open set $O \subseteq S$, $\liminf_{n\rightarrow \infty} \mu_n(O) \geq \mu(O)$. It means that we can rewrite our goal as proving:
\begin{equation}\label{eq:goal_compositionality}
\forall O \text{ open in }Z, \quad \liminf_{n\rightarrow \infty} (g \circ f)(x_n,O) \geq (g \circ f) (x, O)
\end{equation}
Let $O$ be an open set in $Z$.
Recall that by definition of kernel composition, for every $a \in X$, $(g \circ f)(a,O) = \int_{y} g(y,O) \cdot f(a,dy)$. By hypothesis $f$ is Feller continuous, thus the sequence of measure $f(x_n,\cdot)$ converge weakly towards $f(x,\cdot)$. Now, we use Proposition 1.4.18 in~\cite{hernandez2012markov} that tells us that whenever a
sequence of probability measure $(\mu_n)_{n \in \NN}$ over an arbitrary metric space $S$ converges weakly towards $\mu$, then for every function $\alpha: S \rightarrow \RR_{\geq 0}$ \emph{lower semi-continuous (l.s.c.)}, then $\liminf_{n \rightarrow \infty}\int \alpha \cdot d\mu_n \geq \int \alpha \cdot d\mu$. 
Recall that l.s.c. means that for every $a \in S$, $\liminf_{a_n \rightarrow a} \alpha(a_n) \geq \alpha(a)$. We now consider the function $\alpha_{O}:y \in Y \mapsto g(y,O) \in \RR_{+}$: we can see using the hypothesis on Feller continuity for $g$ that $\alpha_O$ is l.s.c. Indeed, let $(y_n)_{n \in \NN}$ a sequence in $Y$ that converges towards $y$: $\liminf_{n \rightarrow \infty} \alpha_O(y_n,O) =  \liminf_{n \rightarrow \infty} g(y_n,O) \geq g(y,O) = \alpha_O(y,O)$, by the portmanteau theorem and since by hypothesis on $g$ the sequence of measures $g(y_n,\cdot)$ converge weakly towards $g(y,\cdot)$. As a consequence, we can apply  Proposition 1.4.18 in~\cite{hernandez2012markov} to $\alpha_O$, and we obtain:
$$\liminf_{n \rightarrow \infty}\int_y \alpha_O(y) \cdot f(x_n,dy) \geq \int_y \alpha_O(y) \cdot f(x,dy),$$
which is exactly~\eqref{eq:goal_compositionality}, thus we can conclude.

    Our goal is now to generalize the result to the case of sub-probability kernels, using Lemma~\ref{lemma:aux_weakly}. Let $k: X \kernel Y$ and $h: Y \kernel Z$ two sub-probability kernels. We can transform them into proper probability kernel, by considering $k_{\bot}: X_{\bot} \kernel Y_{\bot}$, defined as $k_\bot(\bot) = \dirac{\bot}$, and for $x \in X$, $k_{\bot}(x) = (k(x))_{\bot}$, and similarly for $h$. The first step is to check that $h_{\bot} \circ k_{\bot} = (h \circ k)_{\bot}$: it is immediate. The second step consists in proving that for every kernel $g: X \kernel Z$, $g$ is Feller if and only of $g_{\bot}$ is Feller:
    \begin{itemize}
    \item Suppose that $g$ is Feller. Let $(x_n \rightarrow_{n \rightarrow \infty} x)$ a convergence sequence in $X_{\bot}$. Observe that there are two possible cases:
    \begin{itemize}
    \item either $x = \bot$, and there exists some $N$, such that for every $n \geq N$, $x_n = \bot$: in that case, since the sequence is stationnary we obtain immediately the result;
    \item or $x \in X$, and then there exists some $N$, such that for every $n \geq N$, $x_n \in X$: in that case for every $n \geq N$,  $g_{\bot}(x_n) = g(x_n) + (1 - \int_X g(x_n)) \cdot \dirac \bot = (g(x_n))_{\bot}$. By hypothesis on $g$, we know that $g(x_n) \rightarrow g(x)$ for the weak topology. From there, we can apply Lemma~\ref{lemma:aux_weakly}, and we get that $(g(x_n))_{\bot} \rightarrow (g(x))_{\bot}$, thus we can conclude. 

    \end{itemize}
    \item Suppose now that $g_{\bot}$ is Feller continuous; our goal is to show that also $g$ is Feller continuous. Let $(x_n \rightarrow_{n \rightarrow \infty} x)$ a convergence sequence in $X$, and $f : X \rightarrow \RR$ a continuous bounded function. We can see that for every $y \in X$, $\int_{Z} f \cdot d(g(y)) = \int_{Z_\bot} f'\cdot d(g_{\bot}(y))$, where $f' : X_{\bot} \rightarrow \RR$ is defined by $f'(\bot) = 0$, and $f'(z) = f(z)$ for $z \in X$. We see immediately that $f'$ is bounded, and by definition of the coproduct of Polish space, it is also continuous. So we can apply the Feller hypothesis on $g_{\bot}$, and we obtain the result.   
    \end{itemize}
    By combining these two steps with composition of Feller-continuous probability kernels, we obtain that also Feller continuous sub-probability kernels are compositionnal.
    \end{proof}

}

\subsection{State and Event Bisimilarity Coincide for Feller-Continuous LMPs}

\begin{notation}
In this section, we fix a Feller continuous LMP $\lmpone$ with $\actsone$ as 
set of labels, and we fix $\actsone^{\QQ}$ a countably dense subset of 
$\actsone$ (such a dense subset always exists, since  Polish spaces are 
separable). We write $\lmpone^{\QQ}$ for the LMP extracted from $\lmpone$ by 
retaining only the actions in $\actsone^{\QQ}$. Observe that in the particular 
case of the LMP $ \lmponelambdac$ --- i.e. the applicative LMP for the language 
$\lambdacont$ where all primitives are continuous --- we can for instance 
define $ \lmponelambdac^{\QQ}$ as in Definition~\ref{def:lmplambdap}, but 
restricting to values built from rationals.
\end{notation}

\subsubsection{A Key Result from Optimal Transport}
 The proofs in this section will use in a crucial way a result coming from the literature on \emph{optimal transport} (see~\cite{villani2008optimal} for an introduction).
A coupling $\omega \in \Omega(\distrone,\distrtwo)$ is said to be \emph{optimal} with respect to some measurable cost function $c:X \times Y \rightarrow \RR$ whenever it maximizes the quantity $\int c\cdot d\omega$. The field of optimal transport consists broadly in the study of such optimal couplings. For a Borel relation $\relone$, we can express the relator $\relatortwo$ from Definition~\ref{def:relator_two}, by associating an appropriate cost function $c_\relone$ to the relation $\relone$: then it holds that $\distrone \relatortwo \relone \distrtwo$ if and only if the cost of the optimal transport from $\distrone$ to $\distrtwo$ with respect to this cost function is $0$. We will make the construction of $c_\relone$ more precise in the proof of Corollary~\ref{corollary:weak_conv_couplings}.
This caracterisation of $\relatortwo$ in terms of optimal transport gives us a way to import results from this field. In this section, the key result we use comes from~\cite{villani2008optimal}, and is stated in Theorem~\ref{theorem:villani} below. It can be read as the stability of the operator $\relatortwo$ from Definition~\ref{def:relator_two} under weak convergence of measures.

  \begin{theorem}[from~\cite{villani2008optimal} ]\label{theorem:villani}
Let $X$ and $Y$ be
standard Borel spaces, and let $c: X \times Y \rightarrow \RR$  be a real-valued bounded continuous cost
function.
Let $(\mu_k)_{k\in \NN}$, and $(\nu_k)_{k\in \NN}$
be sequences of probability measures on $X$ and $Y$ respectively. Assume
that $\mu_k$ converges to $\mu$ (resp. $\nu_k$ converges to $\nu$) weakly. For each $k$, let
$\pi_k\in \Omega(\mu_k,\nu_k)$ be an optimal coupling with respect to $c$.
Then, there exists a coupling $\pi \in \Omega(\mu,\nu)$ such that, up to extraction of a subsequence, the $\pi_k$ converges weakly to $\pi$, and 
 moreover $\pi$ is an optimal coupling.
    \end{theorem}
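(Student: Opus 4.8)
The plan is to prove the three assertions in turn: first extract a weakly convergent subsequence of $(\pi_k)$ by a compactness argument, then check that its limit $\pi$ is a coupling of $\mu$ and $\nu$, and finally verify that $\pi$ is optimal in the sense of the paper, i.e.\ that it \emph{maximizes} $\int c\,d\omega$ over $\omega \in \Omega(\mu,\nu)$.

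\textbf{Compactness.} Since $\mu_k \to \mu$ and $\nu_k \to \nu$ weakly on standard Borel (hence Polish) spaces, each of the families $\{\mu_k\}$ and $\{\nu_k\}$ is tight by Prokhorov's theorem. For $\varepsilon>0$ I pick compacts $K^X \subseteq X$, $K^Y\subseteq Y$ with $\mu_k(X\setminus K^X)<\varepsilon$ and $\nu_k(Y\setminus K^Y)<\varepsilon$ for all $k$; then, bounding a coupling by its marginals, $\pi_k\big((X\times Y)\setminus(K^X\times K^Y)\big)\le \mu_k(X\setminus K^X)+\nu_k(Y\setminus K^Y)<2\varepsilon$. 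Hence $(\pi_k)$ is tight on $X\times Y$, and Prokhorov's theorem yields a subsequence $(\pi_{k_j})$ converging weakly to some $\pi\in\distrs{X\times Y}$.

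\textbf{The limit is a coupling.} The projections $p_X\colon X\times Y\to X$ and $p_Y\colon X\times Y\to Y$ are continuous, so the pushforwards $\forward{(p_X)}$ and $\forward{(p_Y)}$ are continuous for the weak topology. Since the marginals of $\pi_{k_j}$ are $\mu_{k_j}\to\mu$ and $\nu_{k_j}\to\nu$, passing to the limit shows that the marginals of $\pi$ are exactly $\mu$ and $\nu$, i.e.\ $\pi\in\Omega(\mu,\nu)$. Because $c$ is bounded and continuous and $\pi_{k_j}\to\pi$ weakly, we also get $\int c\,d\pi_{k_j}\to\int c\,d\pi$.

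\textbf{Optimality.} Fix an arbitrary competitor $\omega\in\Omega(\mu,\nu)$; it suffices to show $\int c\,d\pi\ge\int c\,d\omega$. The key step is to build, along the fixed subsequence, couplings $\omega_{k_j}\in\Omega(\mu_{k_j},\nu_{k_j})$ with $\omega_{k_j}\to\omega$ weakly. Granting this, optimality of each $\pi_{k_j}$ (which, per the paper's convention, means it \emph{maximizes} the cost) gives $\int c\,d\pi_{k_j}\ge\int c\,d\omega_{k_j}$; letting $j\to\infty$ and using that $c$ is bounded continuous on both sides yields $\int c\,d\pi\ge\int c\,d\omega$. As $\omega\in\Omega(\mu,\nu)$ was arbitrary, $\pi$ maximizes the cost, hence is optimal.

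\textbf{Main obstacle.} The crux is the competitor construction $\omega_{k_j}\to\omega$. I would obtain it by a gluing argument: using Strassen's coupling characterization of the L\'evy--Prokhorov metric, weak convergence $\mu_{k_j}\to\mu$ yields couplings $\gamma_{k_j}\in\Omega(\mu,\mu_{k_j})$ converging weakly to the diagonal coupling of $\mu$, and likewise $\delta_{k_j}\in\Omega(\nu,\nu_{k_j})$ converging to the diagonal coupling of $\nu$. Disintegrating $\gamma_{k_j}$ and $\delta_{k_j}$ along their first coordinates as kernels $\gamma_{k_j}^x$ and $\delta_{k_j}^y$, and setting $\omega_{k_j}(dx',dy')=\int \gamma_{k_j}^x(dx')\,\delta_{k_j}^y(dy')\,\omega(dx,dy)$, one verifies directly that the marginals of $\omega_{k_j}$ are $\mu_{k_j}$ and $\nu_{k_j}$, and that $\omega_{k_j}\to\omega$ because the kernels concentrate near the diagonal (a dominated-convergence estimate, possibly after a further harmless subsequence extraction, which is innocuous since $\int c\,d\pi_{k_j}$ already converges). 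This is exactly the stability-of-optimal-plans argument of~\cite{villani2008optimal}; the only point requiring genuine care is that weak convergence of the marginals need not be metric --- Wasserstein distances need not vanish --- so one must exploit the weak-topology concentration of the $\gamma_{k_j}$ and $\delta_{k_j}$ rather than any quantitative transport bound.
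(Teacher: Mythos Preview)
The paper does not prove this theorem at all: it is quoted verbatim from Villani's book and used as a black box to derive Corollary~\ref{corollary:weak_conv_couplings}. So there is no ``paper's own proof'' to compare your attempt against.

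Your compactness and marginal arguments are correct and standard. For optimality, your approach is valid in outline but differs from Villani's. Villani (Theorem~5.20 in \emph{Optimal Transport: Old and New}) does \emph{not} use the competitor--gluing construction you describe; he argues via $c$-cyclical monotonicity of the supports of the $\pi_k$, showing that this property passes to weak limits and then invoking the characterization of optimal plans by cyclical monotonicity. Your route --- approximate an arbitrary $\omega\in\Omega(\mu,\nu)$ by $\omega_{k_j}\in\Omega(\mu_{k_j},\nu_{k_j})$ via disintegration and gluing --- is a legitimate alternative, and has the advantage of being entirely elementary (no duality, no cyclical monotonicity). The cost is that the step you flag as the ``main obstacle'' really is one: proving $\omega_{k_j}\to\omega$ weakly from the diagonal concentration of $\gamma_{k_j},\delta_{k_j}$ requires a genuine argument (uniform continuity of test functions on compacts plus tightness, or a Skorokhod-type representation), and ``dominated convergence'' alone does not quite capture it since the disintegration kernels $\gamma_{k_j}^x$ need not converge for individual $x$. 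The sketch is recoverable, but as written it is a sketch, not a proof.

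A minor point: you only need $\int c\,d\omega_{k_j}\to\int c\,d\omega$, not full weak convergence of $\omega_{k_j}$; this does not simplify matters much here since $c$ is just one bounded continuous function among many, but it is worth noting.
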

\begin{corollary}\label{corollary:weak_conv_couplings}
Let $R$ be a relation on standard Borel spaces $X,Y$ such that $R$ is a closed 
set. Let $(\mu_n)_{n \in \NN}$, and $(\nu_n)_{n \in \NN}$ be two sequences of 
sub-distributions such that $(\mu_n)_{n \in \NN}$ and $(\nu_n)_{n \in \NN}$ 
converge weakly towards $\mu$ and $\nu$ respectively, and moreover for every 
$n$, it holds that $\mu_n (\relatortwo R) \nu_n$. Then $\mu 
(\relatortwo R) \nu$.
    \end{corollary}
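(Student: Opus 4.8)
The plan is to reduce the statement to Theorem~\ref{theorem:villani}. The one real subtlety is that the cost function one would naively attach to a closed relation --- the $\{0,1\}$-valued indicator of its complement --- is only lower semicontinuous, whereas Theorem~\ref{theorem:villani} requires a \emph{bounded continuous} cost; this is fixed by replacing the indicator with a truncated distance function, and it is exactly here that the closedness of $R$ enters.

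First I would pass to proper probability measures via the $\bot$-lifting (Notation~\ref{notation:distrib_bot}): put $X_{\bot} = X \sqcup \{\bot\}$, $Y_{\bot} = Y \sqcup \{\bot\}$, replace each sub-distribution $\eta$ by the probability measure $\eta_{\bot}$, and recall that for sub-distributions the relation $\mu \mathrel{\relatortwo R} \nu$ is understood via $\mu_{\bot} \mathrel{\relatortwo R_{\bot}} \nu_{\bot}$ (cf. Lemma~\ref{lemma:aux_subdistrs}), where $R_{\bot} = R \cup \{(\bot,\bot)\}$. Since $\{\bot\}$ is clopen in $X_{\bot}$ and in $Y_{\bot}$, the set $R_{\bot}$ is closed (and nonempty) in $X_{\bot}\times Y_{\bot}$, and by Lemma~\ref{lemma:aux_weakly} the sequences $(\mu_n)_{\bot}$, $(\nu_n)_{\bot}$ converge weakly to $\mu_{\bot}$, $\nu_{\bot}$. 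So it suffices to prove: if $X,Y$ are standard Borel, $R \subseteq X\times Y$ is closed and nonempty, $\mu_n \to \mu$ and $\nu_n \to \nu$ weakly are probability measures, and $\mu_n \mathrel{\relatortwo R} \nu_n$ for all $n$, then $\mu \mathrel{\relatortwo R} \nu$; I rename the lifted data accordingly.

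Next, fix a bounded metric $d$ inducing the topology of $X\times Y$ and set $c(z) = -\min\{1,\, d(z,R)\}$. Then $c$ is continuous, takes values in $[-1,0]$, and $c(z) = 0$ if and only if $z \in R$ (this is where closedness of $R$ is used). For each $n$, the assumption $\mu_n \mathrel{\relatortwo R} \nu_n$ provides a coupling $\omega_n \in \Omega(\mu_n,\nu_n)$ and a Borel $S_n \subseteq R$ with $\omega_n(S_n) = 1$, hence $\omega_n(R) = 1$ and $\int c\, d\omega_n = 0$. Since $c \le 0$, every $\omega \in \Omega(\mu_n,\nu_n)$ satisfies $\int c\, d\omega \le 0$, so $\omega_n$ is an optimal coupling for the cost $c$. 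I would then invoke Theorem~\ref{theorem:villani} with cost $c$ and the weakly convergent sequences $(\mu_n)_n$, $(\nu_n)_n$: up to passing to a subsequence, $\omega_n$ converges weakly to some $\omega \in \Omega(\mu,\nu)$.

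Finally, since $c$ is bounded and continuous, weak convergence gives $\int c\, d\omega = \lim_n \int c\, d\omega_n = 0$; as $c \le 0$ and $\{c = 0\} = R$, this forces $\omega\big((X\times Y)\setminus R\big) = 0$, i.e. $\omega(R) = 1 = \omega(X\times Y)$. Taking $S = R$ --- a Borel (indeed closed) subset of $R$ with $\omega(S) = \omega(X\times Y)$ --- we conclude $\mu \mathrel{\relatortwo R} \nu$, and unwinding the $\bot$-lifting yields the statement for the original sub-distributions. The main obstacle, as flagged above, is the continuity requirement on the cost in Theorem~\ref{theorem:villani}: meeting it forces the use of the truncated distance $-\min\{1, d(\cdot,R)\}$ rather than an indicator, and this is precisely what makes the hypothesis that $R$ be a \emph{closed} set indispensable.
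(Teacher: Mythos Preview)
Your proof is correct and follows essentially the same route as the paper: reduce to proper probability measures via the $\bot$-lifting (Lemmas~\ref{lemma:aux_subdistrs} and~\ref{lemma:aux_weakly}), replace the indicator of $R$ by a truncated distance to obtain a bounded continuous cost, and apply Theorem~\ref{theorem:villani}. The only cosmetic difference is a sign: the paper takes $c_R(z)=\min\{1,d(z,R)\}\ge 0$, whereas you take its negative so that the zero-cost couplings are \emph{maximizers}, which in fact matches the paper's stated convention that ``optimal'' means maximizing $\int c\,d\omega$.
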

     \begin{proof}
    \longv{Using Lemma~\ref{lemma:aux_subdistrs} and 
    Lemma~\ref{lemma:aux_weakly}, we see that we}\shortv{We} can suppose 
    without loss of generality\shortv{---see~\cite{LV}---}\longv{ }that the $\mu_n, \nu_n \ldots$ are proper distributions.
    We define a bounded continuous function $c_R : X \times Y \rightarrow \RR_{\geq 0}$ such that $c_R$ is zero on $R$, and strictly positive everywhere else. We take
    $c_R(z) = \inf \{1\} \cup \{\norm{z-w}\mid w \in R \}_{X \times Y} .$
    Observe that this function is continuous \emph{because of} the closeness of 
    $R$. \shortv{From there---see again~\cite{LV}---we are able to apply 
    Theorem~\ref{theorem:villani} to $c_R$, and conclude using the fact that 
    the couplings of weight $0$ for $c_R$ are exactly the couplings 
    compatible with $\relone$.}
    \longv{
    Since $\mu_n \relatortwo R \nu_n$, we know that there exists a coupling $\pi_n$ of zero weight, i.e. such that $\pi_n(X \times Y \setminus R) = 0$ (observe that $R$, $(X \times Y \setminus R)$,\ldots are measureable, since $R$ is closed). From there, and since $c$ is positive everywhere, zero on $R$ and bounded by $1$, we can see that:
\begin{equation}\label{eq:pi_weak_conv}
  0 \leq \int_{X \times Y}c. d\pi_n \leq \pi_n(X \setminus R) = 0
\end{equation}
  From there, we can apply Theorem~\ref{theorem:villani}, and we see that there exists a coupling $\pi \in \Pi(\mu,\nu)$ such that, up to extraction of a subsequence, the $\pi_k$ converges weakly to $\pi$, and 
  moreover $\pi$ is an optimal coupling.

  In order to conclude, it is enough to show that $\pi(X \setminus R) = 0$. To do that, we combine Equation~\eqref{eq:pi_weak_conv} with the fact that the $(\pi_n)_{n \in \NN}$ converge weakly towards $\pi$, and we obtain:
  $$0=\int_{X \times Y}c. d\pi_n = \int_{X \times Y} c.d\pi =  \int_{X \times Y \setminus R} c.d\pi.$$
  Since $c$ is non-negative, by Corollary 2.5.4 from~\cite{bogachev2007measure1}, it holds that $c_{\mid {(X \times Y \setminus R)}}$ is $0$ $\pi$-almost everywhere.
Since $c$ is strictly positive on $(X\times Y) \setminus R$, we can conclude that $\pi((X \times Y) \setminus R) = 0$.
}
    \end{proof}

\subsubsection{Application to Feller-Continuous LMPs}

We start with studying the bisimilarity landscape for the LMP $\lmpone^{\QQ}$. First, since $\actsone^{\QQ}$ is countable, we see that event and state 
bisimilarity coincide on $\lmpone^{\QQ}$ (by Theorem~\ref{prop:logical_caracterisation_state_c}). From now on, we will denote this relation $ \sim^{\lmpone^{\QQ}}$, in order to emphasize that there is no difference here between state and event bisimilarities. We now look at how we can relate $\sim^{\lmpone^{\QQ}}$ to the state and event bisimilarities on $\lmpone$: our end goal is to use Feller continuity constraints to show that all these relations coincide.
The first step is to show, using Corollary~\eqref{corollary:weak_conv_couplings} and Feller continuity of $\lmpone$, that the topological closure of $\sim^{\lmpone^{\QQ}}$ is a state bisimulation for $\lmpone$, thus is contained into $\sim^{\lmpone}_{state}$.

\begin{proposition}\label{prop:closed_bisim}
 The topological closure of  $\sim^{\lmpone^{\QQ}}$ is a state bisimulation for $\lmpone$. 
    \end{proposition}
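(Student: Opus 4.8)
The plan is to verify, for $\bar{R} := \overline{\sim^{\lmpone^{\QQ}}}$ (topological closure in $\markovone \times \markovone$), the two defining clauses of a state bisimulation for $\lmpone$: that $\bar{R}$ is an equivalence relation, and — the substantive part — that it satisfies the transfer condition $(s,t) \in \bar{R} \implies \forall a \in \actsone.\ h_a(s) \mathrel{\relator \bar{R}} h_a(t)$. Throughout I write $R := \sim^{\lmpone^{\QQ}}$, which is well defined since, $\lmpone^{\QQ}$ having countably many labels and a standard Borel (hence analytic) state space, Theorem~\ref{prop:logical_caracterisation_state_c} identifies all the LMP equivalences on it.

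First I would set up the measurability bookkeeping. By Lemma~\ref{th:event_bisim_approx}(2), $R$ is a Borel relation (it equals $\sim^{\lmpone^{\QQ}}_{\logic}$, whose classes are cut out by a countable family of Borel formulas), and therefore $\bar{R}$, being closed, is Borel as well. By Theorem~\ref{thm:theta-equal-gamma} this gives $\relator R = \relatortwo R$ and $\relator \bar{R} = \relatortwo \bar{R}$, so I may pass freely between the universal extension $\relator$ (in terms of which state bisimulation is defined) and the coupling-based extension $\relatortwo$ (on which Corollary~\ref{corollary:weak_conv_couplings} operates). That $\bar{R}$ is reflexive and symmetric is immediate (the diagonal is closed, the swap map is a homeomorphism); transitivity — or, if that fails, passing to the equivalence relation generated by $\bar{R}$ — follows from the relator laws for $\relator$ ($\idrel \subseteq \relator\idrel$, $\relator\relone;\relator\reltwo \subseteq \relator(\relone;\reltwo)$), which imply that the collection of relations satisfying the transfer condition is closed under identity, composition and unions, so establishing the transfer condition for $\bar{R}$ is enough.

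The heart is the transfer condition, which I would prove by an iterated limit. Fix $(s,t) \in \bar{R}$ and $a \in \actsone$; choose $(s_k,t_k) \in R$ with $s_k \to s$ and $t_k \to t$, and, by density of $\actsone^{\QQ}$, $a_n \in \actsone^{\QQ}$ with $a_n \to a$. For every $n$ and $k$, the transfer condition of $R$ inside $\lmpone^{\QQ}$ (a legitimate action since $a_n \in \actsone^{\QQ}$) gives $h_{a_n}(s_k) \mathrel{\relator R} h_{a_n}(t_k)$, hence $h_{a_n}(s_k) \mathrel{\relatortwo \bar{R}} h_{a_n}(t_k)$ by $\relator R = \relatortwo R$ and monotonicity. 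Now freeze $n$ and let $k \to \infty$: Feller continuity of the kernel $h_{a_n}$ yields $h_{a_n}(s_k) \to h_{a_n}(s)$ and $h_{a_n}(t_k) \to h_{a_n}(t)$ weakly, so Corollary~\ref{corollary:weak_conv_couplings} — whose hypothesis that the relation be a closed set is met by $\bar{R}$ — gives $h_{a_n}(s) \mathrel{\relatortwo \bar{R}} h_{a_n}(t)$ for every $n$. Finally let $n \to \infty$: Feller continuity of $\lmpone$ in the label variable, evaluated at the fixed states $s$ and $t$, yields $h_{a_n}(s) \to h_a(s)$ and $h_{a_n}(t) \to h_a(t)$ weakly, so a second application of Corollary~\ref{corollary:weak_conv_couplings} gives $h_a(s) \mathrel{\relatortwo \bar{R}} h_a(t)$, i.e. $h_a(s) \mathrel{\relator \bar{R}} h_a(t)$, as required.

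The main obstacle is the double-sequence $h_{a_n}(s_k)$: there is no uniformity to exploit, and the only mechanism that pushes the relation $\relatortwo\bar{R}$ through weak limits is Corollary~\ref{corollary:weak_conv_couplings} (ultimately the optimal-transport stability Theorem~\ref{theorem:villani}), so the argument must be structured as two successive limit passages — first along $k$ for each fixed $n$, then along $n$ — rather than a single joint limit; correspondingly one uses the two Feller-continuity clauses of the LMP separately, exactly matching the fact that the definition only asks for separate (not joint) continuity in state and label. A secondary point that must be tracked carefully is that Corollary~\ref{corollary:weak_conv_couplings} requires a \emph{closed} relation and Theorem~\ref{thm:theta-equal-gamma} a \emph{Borel} one: this is precisely why one works with the topological closure $\bar{R}$ rather than with $\sim^{\lmpone^{\QQ}}$ itself, and why the closed/Borel status of $\bar{R}$ has to be verified before the transport machinery can be invoked.
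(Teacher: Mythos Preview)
Your proof is correct and follows essentially the same route as the paper: pick sequences $(s_k,t_k)\in R$ and $a_n\in\actsone^{\QQ}$, use that $R$ is a bisimulation on $\lmpone^{\QQ}$ to get $h_{a_n}(s_k)\mathrel{\relator\bar R}h_{a_n}(t_k)$, then pass to $\relatortwo\bar R$ via Theorem~\ref{thm:theta-equal-gamma} (closedness of $\bar R$) and apply Corollary~\ref{corollary:weak_conv_couplings} twice---first in $k$ using state-Feller-continuity, then in $n$ using label-Feller-continuity. You are in fact a bit more careful than the paper on one point: Definition~\ref{def:state_bisimulation} requires a state bisimulation to be an equivalence relation, and the topological closure of an equivalence relation need not be transitive; the paper does not address this, while you correctly note that the relator laws let you pass to the equivalence relation generated by $\bar R$ without losing the transfer condition (and this still yields the inclusion $\bar R\subseteq{\sim_{\statet}^{\lmpone}}$ needed downstream).
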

    \begin{proof}
We write $\overline{\sim^{\lmpone^{\QQ}}}$ for the topological closure of 
$\sim^{\lmpone^{\QQ}} $, i.e.,
$\overline {\sim^{\lmpone^{\QQ}}}$ is the set of all pairs $(x,y) \in 
\markovone \times \markovone$ such that there exists a 
$(\sim^{\lmpone^{\QQ}})$-sequences $ (x_n,y_n)_{n \in \NN}$ with $x= \lim_{n 
\rightarrow \infty} x_n$, and $y = \lim_{n \rightarrow \infty} y_n$. Let $s,t 
\in \markovone$ be such that $s \sim^{\lmpone^{\QQ}} t$, and $a \in \actsone$. 
Our goal from here is to show that  for every $a \in \actsone$, $h_a(s) \Gamma 
\relone h_a(t)$. Let us first unfold our hypothesis:
\begin{varenumerate}
\item Since $\actsone^{\QQ}$ is dense in $\actsone$, there exists a sequence $(a_m)_{m \in \NN}$ of $\actsone^{\QQ}$ elements such that $a_m \xrightarrow[n\rightarrow \infty]{} a$;
\item Since $\lmpone$ is a Feller continuous LMP, $x_n \xrightarrow[n 
\rightarrow \infty]{}{x}$, $y_n \xrightarrow[n \rightarrow \infty]{}{y}$, $a_m 
\xrightarrow[m \rightarrow \infty]{}{a}$ implies that: $\forall m, h_{a_m}(x_n) 
\xrightarrow[n \rightarrow \infty]{}{h_{a_m}(x)}$, $h_{a_m}(y_n) \xrightarrow[n 
\rightarrow \infty]{}{h_{a_m}(y)}$, and $\forall z \in \{x,y\}, h_{a_m}(z) 
\xrightarrow[m \rightarrow \infty]{}{h_a(z)}$, thanks to weak convergence;
\item For every $n \in \NN$, since $\relaop {x_n} {\sim^{\lmpone^{\QQ}}}{ 
y_n}$, it holds that for every $m \in \NN$, $\relaop{h_{a_m}(x_n)}{ \Gamma 
{(\sim^{\lmpone^{\QQ}})}} {h_{a_m}(y_n)}$, thus since $(\sim^{\lmpone^{\QQ}})$ 
is contained into its topological closure--and by monotonicity of the $\Gamma$ 
operator, we have that $\relaop{h_{a_m}(x_n)}{ \Gamma 
{(\overline{\sim^{\lmpone^{\QQ}}})}} {h_{a_m}(y_n)}$.
\end{varenumerate}
Recall that since $(\overline{\sim^{\lmpone^{\QQ}}})$ is 
closed, it is in particular Borel, thus we obtain by applying Theorem~\ref{thm:theta-equal-gamma} that ${ \Gamma 
{(\overline{\sim^{\lmpone^{\QQ}}})}} ={ \Theta
{(\overline{\sim^{\lmpone^{\QQ}}})}} $.
From there (and again since $(\overline{\sim^{\lmpone^{\QQ}}})$ is 
closed), we can apply Corrollary~\ref{corollary:weak_conv_couplings}: first we 
fix a $m$, and we make $n$ tends towards infinity, obtaining that 
$\relaop{h_{a_m}(x)}{ \Gamma {(\overline{\sim^{\lmpone^{\QQ}}})}} 
{h_{a_m}(y)}$. Now, since this is true for every $m$, we can apply once again 
Corrollary~\ref{corollary:weak_conv_couplings} by making $m$ tends toward 
infinity, and we obtain that $\relaop{h_{a}(x)}{ \Gamma 
{(\overline{\sim^{\lmpone^{\QQ}}})}} {h_{a}(y)}$, which ends the proof.
\end{proof}

We are now able to combine Proposition\eqref{prop:closed_bisim} with the logical caracterisation of state bisimilarity for LMPs with countable labels (Proposition~\eqref{prop:logical_caracterisation_state_c}) to show a logical caracterisation of state bisimilarity for Feller LMPs. Moreover, we obtain the aditional result that bisimilarity on Feller LMPs is a Borel relation, which is not necessarily the case for bisimilarities on arbitrary LMPs, as we will discuss in more details later in Section~\ref{sect:perspective}.


\begin{theorem}\label{prop:coincide_event_state_generic}
Let $\lmpone$ is a Feller continuous LMP.
Then it holds that 
  ${\stackrel{\lmpone}{\sim_{\statet}}} = {\stackrel{\lmpone}{\sim_{\logic}}} 
  = {\stackrel{\lmpone}{\sim_{\event}}} = {\stackrel{\lmpone}{\sim_{\test}}}$,
  and moreover the graph of these relations is a topological closed set, thus a Borel set.
\end{theorem}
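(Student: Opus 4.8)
The plan is to chain together the results already established, with Proposition~\ref{prop:closed_bisim} as the crucial input. By Theorem~\ref{prop:logical_caracterisation} we already have, for \emph{every} LMP, that $\stackrel{\lmpone}{\sim_{\event}} = \stackrel{\lmpone}{\sim_{\logic}} = \stackrel{\lmpone}{\sim_{\test}}$ and $\stackrel{\lmpone}{\sim_{\statet}} \subseteq \stackrel{\lmpone}{\sim_{\event}}$. Hence it suffices to establish the reverse inclusion $\stackrel{\lmpone}{\sim_{\event}} \subseteq \stackrel{\lmpone}{\sim_{\statet}}$; the Borelness claim will fall out of the same argument.

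First I would pass to the countable-label sub-LMP $\lmpone^{\QQ}$. Since $\actsone$ is standard Borel, hence separable, the countable dense subset $\actsone^{\QQ}$ is available, and $\lmpone^{\QQ} = (\markovone, \Sigma, \{h_a \mid a \in \actsone^{\QQ}\})$ has analytic state space (being standard Borel) and countably many labels; so Theorem~\ref{prop:logical_caracterisation_state_c} applies and yields $\stackrel{\lmpone^{\QQ}}{\sim_{\statet}} = \stackrel{\lmpone^{\QQ}}{\sim_{\event}}$, a relation I will write $\sim^{\lmpone^{\QQ}}$. On the other hand, Lemma~\ref{th:event_bisim_approx}(1), applied with the countable subset $\actsone^{\QQ} \subseteq \actsone$, tells us that any two $\stackrel{\lmpone}{\sim_{\event}}$-related states are event bisimilar in $\lmpone^{\QQ}$, i.e. $\stackrel{\lmpone}{\sim_{\event}} \subseteq \sim^{\lmpone^{\QQ}}$.

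Next I would invoke Proposition~\ref{prop:closed_bisim}: the topological closure $\overline{\sim^{\lmpone^{\QQ}}}$ is a state bisimulation for $\lmpone$, hence is contained in the greatest one, $\stackrel{\lmpone}{\sim_{\statet}}$. Assembling the inclusions gives
\[
\stackrel{\lmpone}{\sim_{\event}} \;\subseteq\; \sim^{\lmpone^{\QQ}} \;\subseteq\; \overline{\sim^{\lmpone^{\QQ}}} \;\subseteq\; \stackrel{\lmpone}{\sim_{\statet}} \;\subseteq\; \stackrel{\lmpone}{\sim_{\event}},
\]
so every relation in the chain coincides; together with Theorem~\ref{prop:logical_caracterisation} this gives $\stackrel{\lmpone}{\sim_{\statet}} = \stackrel{\lmpone}{\sim_{\logic}} = \stackrel{\lmpone}{\sim_{\event}} = \stackrel{\lmpone}{\sim_{\test}}$. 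Moreover the chain identifies this common relation with $\overline{\sim^{\lmpone^{\QQ}}}$, which is by definition a closed subset of $\markovone \times \markovone$; since $\markovone \times \markovone$ is standard Borel, hence metrizable, every closed set is Borel, so the graph is closed and thus Borel.

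The genuinely delicate step of the whole development is not this theorem but Proposition~\ref{prop:closed_bisim}, where Feller continuity is actually used (through weak convergence of the kernels $h_a$, density of $\actsone^{\QQ}$, the duality of Theorem~\ref{thm:theta-equal-gamma}, and the optimal-transport stability Corollary~\ref{corollary:weak_conv_couplings}). Granting that, the present proof is essentially bookkeeping, and I expect no real obstacle; the only points requiring a little care are verifying that $\lmpone^{\QQ}$ meets the hypotheses of Theorem~\ref{prop:logical_caracterisation_state_c} (countable labels, analytic states) and that Lemma~\ref{th:event_bisim_approx}(1) legitimately applies with $\actsone^{\QQ}$ as the chosen countable subset.
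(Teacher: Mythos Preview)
Your proof is correct and follows essentially the same route as the paper: reduce to showing $\stackrel{\lmpone}{\sim_{\event}} \subseteq \stackrel{\lmpone}{\sim_{\statet}}$ via Theorem~\ref{prop:logical_caracterisation}, then assemble the inclusion chain $\stackrel{\lmpone}{\sim_{\event}} \subseteq \sim^{\lmpone^{\QQ}} \subseteq \overline{\sim^{\lmpone^{\QQ}}} \subseteq \stackrel{\lmpone}{\sim_{\statet}}$ using Lemma~\ref{th:event_bisim_approx} and Proposition~\ref{prop:closed_bisim}, and read off closedness from the identification with $\overline{\sim^{\lmpone^{\QQ}}}$. Your observation that the real work lives in Proposition~\ref{prop:closed_bisim} is exactly right.
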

\begin{proof}
Recall that we know already from the literature (see Theorem~\ref{prop:logical_caracterisation}) that ${\stackrel{\lmpone}{\sim_{\statet}}} \subseteq 
  {\stackrel{\lmpone}{\sim_{\event}}}
  = {\stackrel{\lmpone}{\sim_{\logic}}} 
  = {\stackrel{\lmpone}{\sim_{\test}}}$, thus it is enough to show that ${\stackrel{\lmpone}{\sim_{\statet}}} =
  {\stackrel{\lmpone}{\sim_{\event}}}$.
The main tool here is Proposition~\ref{prop:closed_bisim}: it tells us that the topological closure of $\sim^{\lmpone^{\QQ}}$ is a state bisimulation on $\lmpone$. Recall that by definition, $\sim_{\statet}^{\lmpone}$ is the largest state bisimulation on $\lmpone$; it means that:
\begin{equation}\label{eq:collapse_chain_1}
\overline{\sim^{\lmpone^{\QQ}}} \subseteq \sim_{\statet}^{\lmpone} \subseteq 
  {\stackrel{\lmpone}{\sim_{\event}}}.
\end{equation}
We are now going to build an inclusion chain in the reverse direction.
Since $\lmpone^{\QQ}$ is an approximation of $\lmpone$, we know that $\sim_{\event}^{\lmpone}\, \subseteq \sim^{\lmpone^{\QQ}}$ (it can be deduced for instance from Lemma~\ref{th:event_bisim_approx}), and since $\sim^{\lmpone^{\QQ}}$ is contained in its topological closure, it holds that $ \sim^{\lmpone^{\QQ}} \subseteq \overline{\sim^{\lmpone^{\QQ}}}$. Summing up, we have:
\begin{equation}\label{eq:collapse_chain_2}
\sim_{event}^{\lmpone} \,\subseteq\, \sim^{\lmpone^{\QQ}} \, \subseteq \, \overline{\sim^{\lmpone^{\QQ}}}.
\end{equation}
Combining~\eqref{eq:collapse_chain_1} and~\eqref{eq:collapse_chain_2}, we obtain that 
$\sim_{\statet}^{\lmpone}\, = \,\sim_{event}^{\lmpone} \, = \, \sim^{\lmpone^{\QQ}} \, = \, \overline{\sim^{\lmpone^{\QQ}}}, $
which ends the proof.
\end{proof}

\section{Full Abstraction for $\lambdacont$}\label{sect:completeness}
 In this section, we show that the operational semantics of 
 $\lambdacont$ is Feller continuous and prove, as a main consequence of 
 that, a powerful full abstraction theorem stating that all the 
 equivalences on $\lambdacont$ considered so far coincide. To achieve such a goal, 
 we rely on the characterization of terms as pre-terms (Section~\ref{sect:lambda}) 
 and show that the 
 evaluation of any pair of programs $\termone[\vec{\makereal{r}}]$ and 
 $\termone[\vec{\makereal{s}}]$, 
 (where $\vec r$ and $\vec s$ are vectors of real numbers) is essentially the 
 same, and only the reals \emph{inside} the output value change. Moreover, the  
 reals produced in output by $\termone[\vec{\makereal{r}}]$ depend continuously 
 from $\vec r$. To formally express this idea, we define a \emph{refinement} of 
 the evaluation semantics of Section~\ref{sect:lambda} 
 that keeps track separately of the evolution of the program 
 structure --- which is discrete --- and of the evolution of the continuous data. As 
 a consequence, such a refinement needs to be a semantics on \emph{pre-programs}; 
 we call this semantics the \emph{modular semantics} for pre-programs. 
\begin{notation}
To define the modular semantics, we use the following notation:
\begin{varitemize}
    \item We define $\mathcal E_k = \{(P,f) \mid P \text{ a pre-value with }p\text{ holes, }f:\RR^k \kernelp {\RR^p} \text{ a Feller continuous kernel}\}$, 
    for $k \in \NN$.
     \item If $P$ is a pre-term with $l_P$ holes and a free variable $x$ that 
     occurs $l_x$ times in $P$, and $V$ is a pre-value with $l_V$ holes, we 
     write $\subst P x V$ for the pre-term with $l_{\subst P x V} = l_P + 
     l_x\cdot l_V$ holes, obtained by doing the substitution and re-indexing 
     the holes canonically to obtain an increasing enumeration while reading 
     the terms. We write $\mathit{subst}_{[\cdot]}$ for the associated function 
     $\mathit{subst}^{[\cdot]}_{P,V} : \RR^{l_P + l_V} \rightarrow 
     \RR^{l_{\subst P x V}}$;
    \item
    Finally, recall that we write $\lambda$ for 
    the uniform distribution on $[0,1]$, and that given kernels $f$ and $g$ 
    and a function $\phi: \mathbb{R}^n \to \mathbb{R}^m$,
    we write $g \circ f$ for the composition of $f$ and $g$ (in the category of kernels) 
    and  $\overline{\phi}: \mathbb{R}^n \kernelp \mathbb{R}^m $ for the deterministic kernel 
    associated to $\phi$.
    \end{varitemize}
    \end{notation}
\begin{definition}\label{def:modular_approx_semantics}
    We define \emph{modular approximation semantics} as a function mapping a pre-term $M$ with $k$ holes to a non-decreasing family $(\sem{M}_{\star}^{(n)})_{n \in \NN}$ 
    of discrete sub-distributions over $\mathcal E_k$. The definition, by induction on $n$, is in Figure~\ref{figure:modular-sem}.
    Since $(\sem{M}_{\star}^{(n)})_{n \in \NN}$ is a non-decreasing family on $\distrs {\mathcal E_k}$, it has a supremum, which is also a discrete distribution over $\mathcal E_k$, that we note $\sem{M}_\star$.
    \end{definition}
\begin{figure*}[t]
\begin{center}
        $$\sem{M}_{\star}^{(0)} = \emptyset; \qquad \sem{V}_{\star}^{(n+1)} = 
        \dirac{(V,\overline{id})} \text{ for } V \text{ pre-value}; \qquad \sem{\makereal{\op}([]^1, \ldots,[]^n)}_{\star}^{(n+1)} = \dirac{([],\overline{\op})};$$ 
      $$\sem{\sample}_{\star}^{(n+1)} = \dirac{([],\overline{\lambda})}; \qquad \sem{\unfold{\fold \valone}}_\star^{n+1} = \sem{\valone}_\star^{n}; $$
      $$\sem{ (\lambda x.P)V}_{\star}^{(n+1)} = \sum_{(W,f)} \sem{\subst P V x}^{(n)}_{\star}(W,f) \cdot \dirac{(W, f\circ \overline{\mathit{subst}^{\hole{\cdot}}_{P,V}})} ;$$
$$\sem{\casesum{\inject{\hat{\imath}}{\valone}}{\termone_i}}_\star^{n+1} = 
\sum_{(W,f)} \sem{\subst{\termone_{\hat{\imath}}}{\varone}{\valone}}_\star^{n} 
(W,f) \cdot \dirac{(W, f\circ 
\overline{\mathit{subst}^{\hole{\cdot}}_{\termone_{\hat{\imath}},\valone}})}; $$
       \begin{align*}
        \sem{\seq {N_1} {N_2}}_{\star}^{(n+1)} =& \\
        \sum_{(W_1,f_1)} \sem{N_1}_{\star}^{(n)}(W_1,f_1)&\cdot 
        \sum_{(W_2,f_2)} \sem{\subst {N_2}{x}{W_1}}_{\star}^{(n)} (W_2,f_2) 
        \cdot \dirac{(W_2,  f_2 \circ 
        \overline{\mathit{subst}^{\hole{}}_{N_2,W_1}}\circ (f_1 \times id))}.
        \end{align*}
\end{center}
\caption{Modular Semantics for $\Lambda_{\probb}^c$ }
	\label{figure:modular-sem}
\end{figure*}

Before proceeding any further, let us observe that we can characterize the semantics of programs using our modular semantics on pre-terms.

  \begin{lemma}\label{lemma:carac_modular_sem}
    For any pre-term $M$ with $p$ holes and $\vec r \in \RR^{p}$, we have:
\begin{align}
  & \forall n \in \NN, \sem{M[\vec{\makereal{r}}]}^{(n)} = \sum \sem{M}^{(n)}_\star(V,f) \cdot V[\makereal{f(\vec r)}] \label{eq:approx_modular} \\
  &\sem{M[\makereal{\vec r}]} = \sum \sem{M}_\star(V,f) \cdot V[\makereal{f(\vec r)}] 
  \label{eq:modular} 
  \end{align}
    \end{lemma}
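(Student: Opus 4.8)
The plan is to establish \eqref{eq:approx_modular} first, by induction on $n$, and then to obtain \eqref{eq:modular} from it by passing to the supremum, using the $\ocppo$-enrichment of the category of kernels recalled in Section~\ref{sect:lambda}.

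The base case $n=0$ is immediate, since both $\sem{M[\vec{\makereal{r}}]}^{(0)}$ and $\sem{M}^{(0)}_{\star}$ are the empty sub-distribution. For the inductive step I would fix $n$, assume \eqref{eq:approx_modular} holds at stage $n$ for \emph{every} pre-term, and do a case analysis on the outermost shape of $M$, pairing each clause of Figure~\ref{figure:modular-sem} with the matching clause of Definition~\ref{definition:monadic-semantics} applied to the concrete term $M[\vec{\makereal{r}}]$.
\begin{varitemize}
\item If $M$ is a pre-value $V$, both sides equal $\dirac{V[\vec{\makereal{r}}]}$, using that the deterministic kernel $\overline{\mathit{id}}$ maps $\vec r$ to $\dirac{\vec r}$.
\item If $M=\sample$ or $M=\makereal{\op}(\hole^1,\ldots,\hole^j)$, the relevant clause of Definition~\ref{definition:monadic-semantics} rewrites $M[\vec{\makereal{r}}]$ in one step to a measure which, by the change-of-variables formula for pushforwards, equals the one-term sum produced by the clauses $\dirac{(\hole,\overline{\lambda})}$, resp.\ $\dirac{(\hole,\overline{\op})}$.
\item If $M=\unfold{\fold{\valone}}$, then $M[\vec{\makereal{r}}]=\unfold{\fold{\valone[\vec{\makereal{r}}]}}$ reduces, at stage $n$, to $\valone[\vec{\makereal{r}}]$, while the modular clause sets $\sem{M}^{(n+1)}_{\star}=\sem{\valone}^{(n)}_{\star}$; the equality is then the induction hypothesis for $\valone$.
\item If $M=(\abs{\varone}{P})V$, split $\vec r=(\vec r_P,\vec r_V)$ along the canonical hole enumeration. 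Then $M[\vec{\makereal{r}}]$ $\beta$-reduces, at stage $n$, to $\subst{P[\vec{\makereal{r_P}}]}{\varone}{V[\vec{\makereal{r_V}}]}$, and the crucial combinatorial fact is that doing the real substitutions first and the value substitution second agrees with the hole re-indexing map:
\[
\subst{P[\vec{\makereal{r_P}}]}{\varone}{V[\vec{\makereal{r_V}}]}
\;=\;(\subst{P}{\varone}{V})\bigl[\vec{\makereal{s}}\bigr],
\qquad \vec s:=\mathit{subst}^{\hole}_{P,V}(\vec r_P,\vec r_V).
\]
Applying the induction hypothesis to $\subst{P}{\varone}{V}$ and unfolding $f\circ\overline{\mathit{subst}^{\hole}_{P,V}}$ through the definition of kernel composition yields the claim.
\item The case $M=\casesum{\inject{\hat\imath}{\valone}}{\termone_i}$ is handled exactly as the application case, with $\mathit{subst}^{\hole}_{\termone_{\hat\imath},\valone}$ in place of $\mathit{subst}^{\hole}_{P,V}$ (precomposed with the projection forgetting the holes of the discarded branches).
\item If $M=\seq{N_1}{N_2}$, split $\vec r=(\vec r_1,\vec r_2)$. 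Unfold the \textsf{let}-clause of Definition~\ref{definition:monadic-semantics}, apply the induction hypothesis to $N_1$, and use that for a value $W_1[\vec{\makereal{s}}]$ the term $\subst{N_2[\vec{\makereal{r_2}}]}{\varone}{W_1[\vec{\makereal{s}}]}$ is again a hole-filled copy of the pre-term $\subst{N_2}{\varone}{W_1}$, with reals obtained from $\vec s$ and $\vec r_2$ via $\mathit{subst}^{\hole}_{N_2,W_1}$; hence integrating $\sem{\subst{N_2[\vec{\makereal{r_2}}]}{\varone}{-}}^{(n)}$ against $W_1[\makereal{f_1(\vec r_1)}]$ equals integrating $\sem{(\subst{N_2}{\varone}{W_1})[-]}^{(n)}$ against the appropriate pushforward of $f_1(\vec r_1)$. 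A second use of the induction hypothesis, on each pre-term $\subst{N_2}{\varone}{W_1}$, together with the commutation of pushforward with the integral $\int(-)\,f_1(\vec r_1)(d\vec s)$ (monotone convergence), produces precisely the expression coming from the modular \textsf{let}-clause, once $f_2\circ\overline{\mathit{subst}^{\hole}_{N_2,W_1}}\circ(f_1\times\mathit{id})$ is recognized as ``run $f_1$ on the first block of reals, re-index holes, run $f_2$''.
\end{varitemize}
In every case the only ingredients beyond the induction hypothesis are linearity of pushforward, the observation that $V[\makereal{f(\vec r)}]$ is by definition the pushforward of $f(\vec r)$ along $\vec s\mapsto V[\vec{\makereal{s}}]$, the change-of-variables formula, and the combinatorial lemma relating substitution to hole re-indexing.

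I expect the main obstacle to be the \textsf{let}-case, together with the systematic bookkeeping of hole indices: one has to check that the canonical re-enumeration of holes after a substitution is compatible with the composition of the accompanying kernels, and that the two successive invocations of the monadic semantics (evaluating $N_1$, then the body $N_2$ on the produced value) reassemble into the single two-level composition $f_2\circ\overline{\mathit{subst}^{\hole}_{N_2,W_1}}\circ(f_1\times\mathit{id})$ prescribed by the modular semantics. None of this is conceptually hard, but it is where slips are most likely.

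Finally, \eqref{eq:modular} follows by taking suprema over $n$ in \eqref{eq:approx_modular}. By definition $\sem{M[\vec{\makereal{r}}]}=\sup_n\sem{M[\vec{\makereal{r}}]}^{(n)}$, and $(\sem{M}^{(n)}_{\star})_n$ is a non-decreasing chain of discrete sub-distributions over $\mathcal E_k$ with supremum $\sem{M}_{\star}$. The assignment $\mu\mapsto\sum\mu(V,f)\cdot V[\makereal{f(\vec r)}]$ is the Kleisli extension (evaluated at $\vec r$) of the kernel $(V,f)\mapsto V[\makereal{f(\cdot)}]$, hence is $\omega$-continuous by $\ocppo$-enrichment; concretely, on a discrete distribution it is a countable non-negative combination of the measures $V[\makereal{f(\vec r)}]$, so it commutes with suprema of $\omega$-chains by monotone convergence. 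Therefore
\[
\sem{M[\vec{\makereal{r}}]}=\sup_n\sum\sem{M}^{(n)}_{\star}(V,f)\cdot V[\makereal{f(\vec r)}]=\sum\Bigl(\sup_n\sem{M}^{(n)}_{\star}\Bigr)(V,f)\cdot V[\makereal{f(\vec r)}]=\sum\sem{M}_{\star}(V,f)\cdot V[\makereal{f(\vec r)}],
\]
which is \eqref{eq:modular}.
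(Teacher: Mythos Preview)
Your proposal is correct and follows essentially the same approach as the paper: induction on $n$ with a case analysis on the shape of $M$, matching each clause of the modular semantics against the corresponding clause of the monadic semantics via the hole-reindexing identity for substitution, and then passing to the supremum using monotone convergence to obtain \eqref{eq:modular}. The paper's \textsf{let}-case is argued slightly differently (it invokes a linearity lemma for kernel composition rather than speaking of pushforwards and monotone convergence), but the underlying computation is the same.
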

    \shortv{
\begin{proof}
The proof, by induction on $n$, can be found in the long version \cite{LV}.
\end{proof}
}
\longv{
  \begin{proof}
    We first show~\eqref{eq:approx_modular} by induction on $n$. 
    \begin{itemize}
    \item if $n=0$, the result holds;
    \item we suppose that~\eqref{eq:approx_modular} is true for $n \in \NN$. We then proceed by case analysis on the syntax of $M$. We suppose that $M$ has $p$ holes.
      \begin{itemize}
      \item if $M = V$ with $V$ a pre-value, then  $\sem{M}_{\star}^{(n+1)} = \dirac{(V,\overline{id})}$ and for every $\vec r \in \RR^p$, $\sem {M[\vec r]}^{n+1}= \dirac{V[\vec r]}$. Since $V[\overline{id}(\vec r)] = V[\dirac {\vec r}] = \dirac{V[\vec r]}$, the result holds;
      \item if $M = \underline{F}([]^1, \ldots,[]^n)$ then  $\sem{M}_{\star}^{(n+1)}) = \dirac{(W,\overline{F})}$ with $W=[]$. Let $\vec r \in \RR^p$: it holds that $\sem {M[\vec r]}^{n+1} = F(\vec r)$. Since $W[\overline{F}(\vec r)] = \dirac{F(\vec r)}$, the result holds;
      \item if $M = \text{sample}$, then $p=0$, and $\sem{M}_{\star}^{(n+1)}) = \dirac{(W,f)}$ with $W = []$ and $f = (\star \mapsto{\lambda})$. We see that $\sem M = \lambda$. Moreover, $W[f]=\lambda$, thus the result holds;
      \item if $M = (\lambda x.P)W$, then $p = p_1+p_2$, with $P$ a pre-term with $p_1$ holes, and $V$ a pre-value with $p_2$ holes. Let $l$ be the number of holes for $\subst P x W$: the substitution implies uniquely a function $\phi:\RR^p \rightarrow \RR^l$.
        Let $\vec r = (\vec r_1,\vec r_2) \in \RR^{p_1} \times \RR^{p_2}$. Observe that:
        \begin{align*}
          \sem {M[\vec r]}^{n+1} &= \sem{\subst {P[\vec {r_1}]}{x}{W[\vec{r_2}]}}^{n}  \\
          &=\sem{(\subst P x W)[\phi(\vec r)]} \text{ by definition of } \phi \\
          &= \sum_{(V,f)} \sem{\subst P x W}^{(n)}_\star(V,f) \cdot V[f(\phi(\vec r))] \text{ by induction hypothesis} \\
          &= \sum_{(V,f)} \sem{\subst P x W}^{(n)}_\star(V,f) \cdot V[(f \circ \phi)(\vec r))] \\
          &= \sum_{(V,f)} \sem{M}^{(n+1)}_\star(V,f\circ \phi) \cdot V[(f \circ \phi)(\vec r))] \text{ by Definition~\ref{def:modular_approx_semantics}}\\
           &= \sum_{(V,g)} \sem{M}^{(n+1)}_\star(V,g) \cdot V[g(\vec r))] \text{ since all }g \in \support{\sem M^{n+1}} \text{ are of the form }f\circ \phi;
          \end{align*}
        so the result holds.
      \item if $M= (\letin x {N_1} {N_2})$, then let $p = p_1 + p_2$ with $N_1$ a pre-term with $p_1$ holes, and $N_2$ a pre-term with $p_2$ holes. Let $\vec r = (\vec r_1,\vec r_2) \in \RR^{p_1} \times \RR^{p_2}$. Observe that:
        \begin{align*}
          &\sem {M[\vec r]}^{n+1} = (L \in \valset \mapsto \sem {\subst {N_2[\vec r_2]}x {L}})^{n}\circ \sem{N_1[\vec r_1]}^{n} \\
          &= (L \in \valset \mapsto \sem {\subst {N_2[\vec r_2]}x {L}})^{n}\circ ( \sum_{(V,f)} \sem{N_1}_{\star}^{n}(V,f) \cdot V[f(\vec r_1)] )\text{ by induction hypothesis};\\
          &= \sum_{(V,f)} \sem{N_1}_{\star}^{n}(V,f) \cdot (L \in \valset \mapsto \sem {\subst {N_2[\vec r_2]}x {L}})^{n}\circ (  V[f(\vec r_1)] )\text{ by Lemma~\ref{lemma:addition_kernels}};\\
          &= \sum_{(V,f)} \sem{N_1}_{\star}^{n}(V,f) \cdot (\vec z  \mapsto \sem {\subst {(N_2[\vec r_2])}x {V[\vec z]}})^{n}\circ (  f(\vec r_1) ) \text{ by change of variable }\\
          &= \sum_{(V,f)} \sem{N_1}_{\star}^{n}(V,f) \cdot (\vec z  \mapsto \sem {(\subst {N_2}{x}{V})[\phi_{V}(z,\vec r_2)]}^{n}\circ (  f(\vec r_1) ) \text{ by definition of }\phi_V \text{in Def~\ref{def:modular_approx_semantics}}\\
          &= \sum_{(V,f)} \sem{N_1}_{\star}^{n}(V,f) \cdot (\vec z  \mapsto \left(
            \sum_{W,g} \sem{ \subst {N_2}{x}{V} }^{(n)}_\star(W,g) \cdot W[g(\phi_{V}(z,\vec r_2))]
            \right))\circ (  f(\vec r_1) ) \text{ by IH}\\
            &= \sum_{(V,f)} \sem{N_1}_{\star}^{n}(V,f) \cdot \sum_{(W,g)} \sem{ \subst {N_2}{x}{V} }^{(n)}_\star(W,g) \cdot (\vec z  \mapsto W[g(\phi_{V}(z,\vec r_2))])\circ (  f(\vec r_1) ) \text{ by Lemma~\ref{lemma:addition_kernels}}\\
            &= \sum_{(V,f)} \sem{N_1}_{\star}^{n}(V,f) \cdot \sum_{(W,g)} \sem{ \subst {N_2}{x}{V} }^{(n)}_\star(W,g) \cdot \left(((\vec z  \mapsto W[z])\circ h_{V,f,g}) \right(\vec r) \\
            & \qquad \qquad \qquad \text{ with } h_{V,f,g}:\RR^{p_1+p_2} \kernel \RR^{l_{W}}, h_{V,f,g}=g \circ \phi_V \circ (f,\overline{id}_{\RR^{p_2}}) \\
            &= \sum_{(V,f),(W,g)} \sem{M}_{\star}^{(n+1)}(W,h_{V,f,g})\cdot W[h_{V,f,g}(\vec r)] \text{ by Definition~\ref{def:modular_approx_semantics}}\\
             &= \sum_{(V',f')} \sem{M}_{\star}^{(n+1)}(V',f')\cdot V'[f(\vec r)] \text{ since all }(V',f') \in \support{\sem{M}_{\star}^{(n+1)}} \text{ are of the form }(W,h_{V,f,g}),
        \end{align*}
        and that concludes the proof.
        \end{itemize}
    \end{itemize}
    We now show~\eqref{eq:modular} using~\eqref{eq:approx_modular}:
    \begin{align*}
      \sem{M[\vec r]} &= \sup_{n \in \NN} \sem{M[\vec r]}^{(n)} \quad \text{ by definition of the operational semantics}\\
      &= \sup_{n \in \NN} \sum_{(V,f)} \sem{M}^{(n)}_\star(V,f) \cdot V[f(\vec r)] \text{ by }~\eqref{eq:approx_modular} \\
      & = \sum_{(V,f)} (\sup_{n \in \NN} \sem{M}^{(n)}_{\star}(V,f)) \cdot V[f(\vec r)] \text{ because } \forall (V,f) \text{ the sequence } (\sem{M}^{(n)}_{\star}(V,f))_{n \in \NN}\text{ is non-decreasing} \\
      &= \sum_{(V,f)} ( \sem{M}_{\star}(V,f)) \cdot V[f(\vec r)] \text{ by definition of } \sem{M}_\star.
      \end{align*}
    \end{proof}}
The notion of modular approximation semantics from Definition~\ref{def:modular_approx_semantics} has been designed as a mean to express that the semantics of a program $\termone[\vec{\makereal{r}}] \in \lambdacont$ depends continuously on $\vec r$. We are now able to show this formally.
 \begin{proposition}\label{theorem:weakly_conv_sem}
Let $(M_n)_{n \in \NN}$ be a sequence of terms that converges towards $M$. Then the sequence of distributions $(\sem{M_n})_{n \in \NN}$ converges weakly towards $\sem M$.
  \end{proposition}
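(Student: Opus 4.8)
The plan is to derive the statement from the modular semantics (Definition~\ref{def:modular_approx_semantics}) and its characterisation (Lemma~\ref{lemma:carac_modular_sem}). First I would reduce to a single component of $\terms$: since $\terms$ carries the countable coproduct topology, each set $\terms_P$ of terms of the form $P[\vec{\makereal r}]$ (for a fixed pre-term $P$ with $p$ holes) is clopen and homeomorphic to $\RR^p$. Writing $M = P[\vec{\makereal r}]$, we have $M_n \in \terms_P$ for all sufficiently large $n$, and since only the tail of $(\sem{M_n})_{n}$ matters for weak convergence we may assume $M_n = P[\vec{\makereal{r}}_n]$ with $\vec r_n \to \vec r$ in $\RR^p$. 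By~\eqref{eq:modular},
\[
  \sem{M_n} = \sum_{(V,f)} \sem{P}_\star(V,f)\cdot V[\makereal{f(\vec r_n)}],
  \qquad
  \sem{M} = \sum_{(V,f)} \sem{P}_\star(V,f)\cdot V[\makereal{f(\vec r)}],
\]
the sums ranging over the countable support of the discrete sub-distribution $\sem{P}_\star$. The key points are that $\sem{P}_\star$ does not depend on $\vec r$, and that every kernel $f\colon \RR^p \kernelp \RR^{l_V}$ occurring in it is a probability kernel which is moreover Feller continuous — the latter because the kernels built in Figure~\ref{figure:modular-sem} are obtained from $\overline{id}$, $\overline{\lebesgue}$, $\overline{\op}$ (with $\op$ continuous, as all primitives of $\lambdacont$ are) and the deterministic kernels of the affine re-indexing maps $\mathit{subst}$, all of which are Feller and closed under composition, products and countable coproducts.

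Next I would fix a bounded continuous $g\colon \valset \to \RR$, say $|g|\le B$. For each pre-value $V$ with $l_V$ holes the embedding $e_V\colon \vec s \mapsto V[\makereal{\vec s}]$ is continuous (indeed a homeomorphism onto the clopen component $\valset_V$), so $g\circ e_V$ is bounded continuous on $\RR^{l_V}$, and $V[\makereal{f(\vec r_n)}]$ is by definition the pushforward of $f(\vec r_n)$ along $e_V$. Since $f$ is Feller continuous and $\vec r_n \to \vec r$, the measures $f(\vec r_n)$ converge weakly to $f(\vec r)$ in $\distrs{\RR^{l_V}}$, whence $\int_{\valset} g\, d\bigl(V[\makereal{f(\vec r_n)}]\bigr) = \int_{\RR^{l_V}} (g\circ e_V)\, d\bigl(f(\vec r_n)\bigr) \to \int_{\RR^{l_V}} (g\circ e_V)\, d\bigl(f(\vec r)\bigr) = \int_{\valset} g\, d\bigl(V[\makereal{f(\vec r)}]\bigr)$ for every $(V,f)$ in the support.

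Finally I would interchange $\lim_n$ with the countable sum over $(V,f)$. Each summand $\sem P_\star(V,f)\cdot\int_\valset g\, d(V[\makereal{f(\vec r_n)}])$ is bounded in absolute value by $B\cdot \sem P_\star(V,f)$, which is summable since $\sum_{(V,f)} \sem P_\star(V,f)\le 1$; so the dominated convergence theorem for series yields $\int_\valset g\, d\sem{M_n} \to \int_\valset g\, d\sem{M}$, and since $g$ was arbitrary this is exactly weak convergence $\sem{M_n}\to\sem M$. The only ingredient that goes beyond routine bookkeeping is establishing that each kernel $f$ in the support of $\sem P_\star$ is Feller continuous (hence the reliance on stability of Feller kernels under composition, products and countable coproducts, recalled earlier); granting that, the rest is just weak continuity of pushforward along a continuous map together with dominated convergence for a countable mixture.
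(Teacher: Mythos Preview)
Your proof is correct and follows essentially the same route as the paper: reduce to a single pre-term component via the coproduct topology, invoke Lemma~\ref{lemma:carac_modular_sem}, use Feller continuity of each kernel $f$ in the support of $\sem{P}_\star$ together with continuity of the embedding $e_V$ to get termwise convergence, and then pass to the countable mixture. The only cosmetic difference is that the paper's long version handles the countable sum by an explicit $\epsilon/2$ truncation to a finite subset of the support (using that $\sem{P}_\star$ is a discrete sub-distribution), whereas you invoke dominated convergence for series; these are equivalent.
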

   \begin{proof}
    Observe that the fact that  $(M_n)_{n \in \NN}$ is a sequence of terms that converges towards $M$ actually means that there exists a pre-term $N$ (with $k$ holes), a real-valued vector $\vec r \in \RR^k$, and a sequence $\vec {r_n} \in \RR^k$ such that $\vec{r_n} \rightarrow \vec r$ and
    $M = N[\vec{\makereal{r}}]$, $M_n = N[\vec{\makereal{r}}_n].$
    Consequently, to prove that $(\sem{M_n})_{n \in \NN}$ converges weakly towards $\sem M$, 
    it is enough to prove the following:
      let $g:\valset \rightarrow \RR$ be a bounded and continuous function
      and $\epsilon >0$. Then, we need to show: $\exists N \in \NN, \forall n \geq N,  \lvert \int_\valset g . d{\sem{N[\vec{\makereal{r}}_n]}} - \int_\valset g . d{\sem {N[\vec{\makereal{r}}]}} \rvert \leq \epsilon.$
    Using the equivalence between semantics on programs and modular semantics on pre-programs given by Lemma~\ref{lemma:carac_modular_sem}, we see that:
    \begin{align*}  
    &\left\lvert \int_\valset g . d{\sem{N[\vec{\makereal{r}}_n]}} - 
    \int_\valset g . d{\sem N[\vec{\makereal{r}}]} \right\rvert 
    \\
     \longv{
     &\qquad=  \left\lvert \int_\valset g . d{(\sum \sem{N}_\star(V,f) \cdot 
      V[\makereal{f(\vec {r_n})}] )} - \int_\valset g . d{(\sum 
      \sem{N}_\star(V,f) \cdot V[\makereal{f(\vec r)}]  )} \right\rvert 
      \\
      }
 &\qquad=  \left\lvert \sum_{(V,f) \in 
 \support{\sem{N}_\star}}(\sem{N}_\star(V,f) \cdot (\int_\valset g . d 
 V[\makereal{f(\vec {r_n})}] - \int_\valset g. d V[\makereal{f(\vec r)}]) 
 \right\rvert.
   \end{align*}
   \shortv{From there, we are able to conclude using the fact that all the $f:\RR^k \kernelp {\RR^p}$ are Feller continuous kernels. More details can be found in the long version 
   \cite{LV}.}
   \longv{
    Since $\sem{N}_\star$ is a \emph{discrete} sub-probability distribution, there exists a finite set $\mathcal E'_k \subseteq \mathcal E_k$ such that $\sem{N}_\star(\mathcal E_k \setminus \mathcal E'_k) \leq \epsilon/(2 M)$. From there, since $g$ is bounded by $M$, and both  $V[f(\vec {r_n})],  V[f(\vec {r})]$ are probability measure, we see that:
    \begin{align*}
A & \leq \lvert \sum_{(V,f) \in \mathcal E'_k}(\sem{N}_\star(V,f) \cdot (\int_\valset g . d V[f(\vec {r_n})] - \int_\valset g. d V[f(\vec r)]) \rvert + \frac \epsilon 2
    \end{align*}
    
   At that point, in order to conclude, it is enough to show that for every $(V,f)$ in the finite set $\mathcal E'_k$, it holds that  $\lvert \int_\valset g.d V[f(\vec {r_n})] - \int_\valset g. d V[f(\vec r)]) \rvert \rightarrow_{n \rightarrow \infty} 0$. Since by hypothesis $g$ is bounded continuous, it means that it is enough to show that the probability kernel $K_{V,f}:=\vec y \in \RR^k \mapsto V[f(\vec y)] \in \distrsp \valset$ is Feller continuous. First, observe that $K_{(V,f)}$ can be written as the composition of two probability kernels: $K_{(V,f)} = I_{V}\circ f$, with $I_V: \vec z \in \RR^p \mapsto \dirac{V[\vec z]}$ (with $p$ the number of holes in $V$). By hypothesis, the kernel $f$ is Feller continuous. Moreover, $I_V$ also is Feller continuous, since the function $z\in \RR^{p} \mapsto  V[\vec z] \in \valset$ is continuous. From there, we can conclude using Lemma~\ref{lemma:feller_compositional} that $K_{(V,f)}$, as the composition of two Feller continuous kernel, is also Feller continuous.}
    \end{proof}
Our goal now is to use Proposition~\ref{theorem:weakly_conv_sem} to show that $\lmponelambdac$ is a Feller LMP, in the sense of Definition~\ref{def:feller_lmp}. 
That amounts to prove Feller continuity of two families of kernels: $h_a: \markovone_{\lambdacont} \kernel \markovone_{\lambdacont}$, 
for $a \in \actsone_{\lambdacont}$; and $\widehat h^s = (a \in \actsone_{\lambdacont}) \times (A \in \Sigma_{\markovone_{\lambdacont}}) \mapsto h_a(s,A)$, for $s \in \markovone_{\lambdacont}$.

\hide{   Looking at the definition of state and event bisimilarity, we can show already the following lemma--its proof can be found in the long version--that allows us to connect the state and event bisimilarity on $\lmponelambdac$ and $\lmponelambdac'$ respectively.
   \begin{lemma}\label{lemma:aux_lmp_equivalent}
$\sim_{state}^{\lmponelambdac} = \sim_{state}^{\lmponelambdac'}$, and $\sim_{event}^{\lmponelambdac'} \subseteq \sim_{event}^{\lmponelambdac}$.  
     \end{lemma}
   \begin{proof}
     We split in three the proof of Lemma~\ref{lemma:aux_lmp_equivalent}:
  \begin{itemize}
  \item First, we show that any state bisimulation $\relone$ for $\lmponelambdac$ is also a state bisimulation for $\lmponelambdac'$. Let be $s,t \in \markovone_{\lambdacont}$ such that $s \relone t$, and $a \in \actsone_{\lambdacont}$.  We know by hypothesis that $h_a(s) \Gamma \relone h_a(t)$, and we want to show that $h_a'(s) \Gamma \relone h_a'(t)$. Observe that if $a$ is not of the form $(\stackrel{r}{=})$, $h_a'(u) = h_a(u)$ for $u \in \{s,t\}$, thus the result is immediate. If $a=(\stackrel{r}{=})$, observe that the only relevant case is when both $s$ and $t$ are values of type $\typereal$, i.e. $s= \makereal{{r_{1}}}$ and $t = \makereal{{r_2}}$. We can deduce from $s \relone t$ that $r_1 = r_2$, thus $s = t$, and we can conclude. 
  \item Secondly,  we show that any state bisimulation $\relone$ for $\lmponelambdac'$ is also a state bisimulation for $\lmponelambdac$. The reasonning is exactly as before, and uses the fact that whenever two $\typereal$ values  $\makereal{{r_{1}}}$ and $\makereal{r_2}$ are bisimilar in $\lmponelambdac'$, then $r_1 = r_2$.
    \item Finally, we show that any event bisimulation for $\lmponelambdac'$ is also an event bisimulation for $\lmponelambdac$. Let $\Lambda$ be an event bisimulation for $\lmponelambdac$, that is a sub-$\sigma$-algebra $\Lambda$ of $\Sigma$,
      such that $(\markovone_{\lambdacont}, \Lambda, \{h'_a \mid a \in \actsone_{\lambdacont}\})$ is a LMP. We want to show that also $(\markovone_{\lambdacont}, \Lambda, \{h_a \mid a \in \actsone_{\lambdacont}\})$, i.e. that for every $a \in \actsone$, the restriction of $h_a$ to the map $h_a:\markovone_{\lambdacont} \times \Lambda \rightarrow [0,1]$ is a sub-probability kernel. First, observe that if $a$ is not of the form $(\stackrel{r}{=})$, $h_a' = h_a$, thus the result is immediate. Then, let be $r \in \RR$, and let us show that the map  $h_{(\stackrel{r}{=})}:\markovone_{\lambdacont} \times \Lambda \rightarrow [0,1]$ is indeed a kernel. To do that, we need to show the two conditions below, that comes from Definition~\ref{def:probability_kernel}:
      \begin{itemize}
      \item For every $A \in \Lambda$, the map $s \in \markovone_{\lambdacont} \mapsto h_{(\stackrel{r}{=})}(s,A)$ is measureable: what we need to show is that for every $I$ in $\borels{[0,1]}$:
        $$\{ s \in \markovone_{\lambdacont} \mid h_{(\stackrel{r}{=})}(s,A) \in I \} \in \Lambda.$$
        Observe that, in practice, $h_{(\stackrel{r}{=})}(s,A) \in \{0,1\}$, thus it is sufficient (since $\Lambda$ is closed by finite union and complementation) to show the result for $I = \{1\}$.   We wan see that: if $r \in A$: $\{ s \in \markovone_{\lambdacont} \mid h_{(\stackrel{r}{=})}(s,A) =1 \} = \{r\} = \{ s \in \markovone_{\lambdacont} \mid h'_{(\stackrel{r}{=})}(s,A) =1 \}$, and from there we can conclude, using the fact that $\Lambda$ is an event bisimulation on $\lmponelambdac'$. If $r \not \in A$, $\{ s \in \markovone_{\lambdacont} \mid h_{(\stackrel{r}{=})}(s,A) =1 \} = \emptyset \in \Lambda$ since $\Lambda$ is a sigma-algebra. 
      \item For every $s \in \markovone_{\lambdacont}$, the map $A \in \Lambda \mapsto h_{(\stackrel{r}{=})}(s,A)$ is a sub-probability measure. We see that immediately from the fact that $\lmponelambdac$ is a LMP, and $\Lambda$ is a sub sigma-algebra of $\Sigma_{\markovone_{\lambdacont}}$.
        \end{itemize}
    \end{itemize}
  \end{proof}
}
\shortv{
Before doing these proofs, however, we need a technical lemma on how to combine Feller kernels on a measurable space build as a countable coproduct of standard Borel spaces. In fact, 
recall from Definition~\ref{definition:terms_as_polish_spaces} that we 
endowed the set of terms and values with a (countable) coproduct standard Borel space structure.
\begin{lemma}\label{lemma:countable_sum_kernels}
  Let $(X_n)_{n \in \NN}$ be a countable family of (disjoint) standard Borel spaces, and $Z$ 
  be a standard Borel space. Let $\{f_n:X_n \kernel Z\}_{n \in \NN}$ be a family of 
  kernels, and let us write  $\bigcup_{n \in \NN}X$ for the standard Borel space obtained by 
  endowing the disjoint union of the $X_n$s with the coproduct topology. 
  We define $\bigsqcup_{n \in \NN} f_n: \bigcup_{n \in \NN}X  \kernel Z$ as:
$$\bigsqcup_{n \in \NN} f_n(s) = f_{n_0}(s) \text{ with }{n_0 \in \NN}\text{ unique such that }s \in X_{n_0}. $$
  Then, if all the $f_n$ are Feller continuous, also $ \bigsqcup_{n \in \NN} f_n$ is Feller continuous. 
\end{lemma}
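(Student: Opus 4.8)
The plan is to reduce Feller continuity of $\bigsqcup_{n} f_n$ to Feller continuity of each individual $f_n$ by exploiting that, in the coproduct topology on $X := \bigcup_{n \in \NN} X_n$, every summand $X_n$ is clopen. First I would set up the right framework: since $\Pol$ has countable coproducts, $X$ is itself a standard Borel space, and in particular a Polish (hence metrizable) space, so continuity of a map out of $X$ may be checked on sequences; likewise $\distrs{Z}$ equipped with the weak topology is Polish, so the relevant notion of convergence in the target is weak convergence. Thus it is enough to prove: for every sequence $(s_m)_{m \in \NN}$ in $X$ with $s_m \to s$, the measures $(\bigsqcup_n f_n)(s_m, \cdot)$ converge weakly to $(\bigsqcup_n f_n)(s, \cdot)$. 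I would also note in passing that $\bigsqcup_n f_n$ is indeed a kernel: it restricts to the kernel $f_n$ on the measurable set $X_n$, and the $X_n$ partition $X$.

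Next comes the key topological observation. In the coproduct topology, each $X_n$ is open (it is a summand) and its complement $\bigcup_{n' \neq n} X_{n'}$ is open as well, so $X_n$ is clopen; moreover the subspace topology that $X_n$ inherits from $X$ coincides with its original Polish topology. Hence, if $n_0$ denotes the unique index with $s \in X_{n_0}$, then $X_{n_0}$ is an open neighbourhood of $s$, so there is $M \in \NN$ with $s_m \in X_{n_0}$ for all $m \geq M$, and $s_m \to s$ \emph{within} $X_{n_0}$. Consequently $(\bigsqcup_n f_n)(s_m, \cdot) = f_{n_0}(s_m, \cdot)$ for all $m \geq M$, and $(\bigsqcup_n f_n)(s, \cdot) = f_{n_0}(s, \cdot)$.

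Finally, Feller continuity of $f_{n_0}$ together with $s_m \to s$ in $X_{n_0}$ gives $f_{n_0}(s_m, \cdot) \to f_{n_0}(s, \cdot)$ weakly, and since weak convergence of a sequence of measures is insensitive to modifying finitely many of its terms, it follows that $(\bigsqcup_n f_n)(s_m, \cdot) \to (\bigsqcup_n f_n)(s, \cdot)$ weakly, as required. I do not expect a genuine obstacle here: the only step deserving explicit justification is that the coproduct topology makes each $X_n$ clopen, so that any convergent sequence is eventually trapped inside a single summand; everything else is a routine tail argument on weak limits, plus the standard fact that continuity out of a metrizable space can be tested on sequences.
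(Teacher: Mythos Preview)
Your argument is correct. The key observation---that in the coproduct topology each summand $X_{n}$ is clopen, so any convergent sequence in $X$ is eventually trapped in a single $X_{n_0}$ and converges there in the original topology---cleanly reduces the claim to Feller continuity of the individual $f_{n_0}$, and the tail argument for weak convergence is unproblematic.

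The paper takes a different route: rather than working pointwise with sequences in $X$, it works at the level of \emph{measures} on $X$. It decomposes a distribution $\mu$ on $\bigcup_n X_n$ into its restrictions $\mu^{X_n}$, invokes a preceding lemma that weak convergence $\mu_m \to \nu$ on the coproduct implies weak convergence $\mu_m^{X_n} \to \nu^{X_n}$ on each summand, applies Feller continuity of each $f_n$ to these restrictions, and then recombines via $(\bigsqcup_n f_n)(\mu) = \sum_n f_n(\mu^{X_n})$. This is really checking continuity of the Kleisli extension $\Delta(X) \to \Delta(Z)$, which is a stronger statement than what the definition of Feller continuity literally asks for; it also incurs the cost of the auxiliary restriction lemma and of justifying the interchange of weak limit and countable sum. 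Your sequential argument via clopenness is more elementary and more directly matches the paper's own definition of Feller continuity as continuity of the map $X \to \Delta(Z)$.
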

}
We are now ready to show that all relevant kernels for the LMP $\lmponelambdac$ are Feller continuous.  We do this below in Lemma~\ref{lemma:acts_kernels_feller_cont} and Lemma~\ref{lemma:states_kernels_feller_cont}.
\begin{lemma}\label{lemma:acts_kernels_feller_cont}
 For every $a \in \actsone$, the kernel $h_a: \markovone_{\lambdacont} \kernel \markovone_{\lambdacont}$ is Feller continuous.  
\end{lemma}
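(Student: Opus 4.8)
The plan is to peel off the coproduct structure of $\markovone_{\lambdacont}$ and then dispatch a short finite list of elementary shapes. Recall from Definition~\ref{definition:terms_as_polish_spaces} that $\terms$ and $\valset$ are countable coproducts, indexed by pre-terms and pre-values, of Euclidean spaces $\RR^{k}$; tagging by types, $\markovone_{\lambdacont}$ is itself a countable coproduct $\bigsqcup_{C} C$ of standard Borel spaces, where each component $C$ is (homeomorphic to) some $\RR^{k_C}$ and has the form $\terms_P$ or $\valset_V$. Since $h_a$ restricts to each such $C$ and $h_a = \bigsqcup_{C}(h_a|_{C})$, by Lemma~\ref{lemma:countable_sum_kernels} it suffices to show that $h_a|_{C} : C \kernel \markovone_{\lambdacont}$ is Feller continuous for every component $C$.

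Next I would read off Definition~\ref{def:lmplambdap} and observe that on any component $C \cong \RR^{k_C}$ the map $\vec r \mapsto h_a(\vec r)$ always has one of the following four shapes:
\begin{varenumerate}
\item the constant map $\vec r \mapsto \emptyset$, accounting for every component not matching the guard of $a$ (wrong type, wrong head constructor, value components carrying no $a$-transition, and so on);
\item a Dirac map $\vec r \mapsto \dirac{\phi(\vec r)}$, where $\phi : \RR^{k_C} \to \markovone_{\lambdacont}$ lands in a single component and performs a fixed syntactic operation --- tagging by a type (for $h_\typeone$), forming $V[\vec r]\,\valone$ for a fixed argument $\valone$ (for $h_\valone$), discarding an injection tag (for $h_{\mathrm{case}(\hat\imath)}$), or discarding a $\fold$ (for $h_{\unboxact}$) --- each of which is a continuous map on the coproduct spaces of terms and values;
\item for $a = \overset{\text{?}}{\leq} q$ and $C$ the one-hole pre-value component of type $\typereal$, the map $r \mapsto op_{\leq}(r,q)\cdot\dirac{(\makereal r,\typereal)}$;
\item for $a = \evalact$ and $C = \terms_P$ a term component of type $\typeone$, the map $\vec r \mapsto \forward{\iota}\!\big(\sem{P[\vec r]}\big)$, where $\iota : \valset_\typeone \hookrightarrow \markovone_{\lambdacont}$ is the inclusion into the value part.
\end{varenumerate}
Shape~(1) is trivial, and shape~(2) follows from the standard fact that $\dirac{(-)}$ turns a continuous function into a Feller-continuous kernel (Lemma~\ref{notation:deterministic_generation}), applied componentwise and post-composed with the continuous coproduct inclusion.

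The two genuinely substantive cases are shapes~(3) and~(4), and they are exactly where $\lambdacont$ differs from $\lambdap$. For~(3): if $r_n \to r$, then for any bounded continuous $g : \markovone_{\lambdacont} \to \RR$ one has $\int g\, d\big(op_{\leq}(r_n,q)\dirac{(\makereal{r_n},\typereal)}\big) = op_{\leq}(r_n,q)\,g(\makereal{r_n},\typereal) \to op_{\leq}(r,q)\,g(\makereal r,\typereal)$, using that $op_{\leq}(\cdot,q)$ is continuous --- the step that fails for $\lambdap$, compare Remark~\ref{remark:lmplambda_not_feller_continuous} --- together with continuity of $r \mapsto (\makereal r,\typereal)$ and of $g$; since the weak topology on $\distrs{\markovone_{\lambdacont}}$ is metrizable, sequential continuity suffices. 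For~(4): $\vec r_n \to \vec r$ in $\RR^{k_C}$ gives $P[\vec r_n] \to P[\vec r]$ in $\terms$, so by Proposition~\ref{theorem:weakly_conv_sem} the measures $\sem{P[\vec r_n]}$ converge weakly to $\sem{P[\vec r]}$, and post-composing with $\forward{\iota}$ (which preserves weak convergence, $\iota$ being continuous) yields weak continuity of $h_{\evalact}|_{C}$. I expect shape~(4) to be the main obstacle: everything else is bookkeeping over the coproduct decomposition, but~(4) rests essentially on Proposition~\ref{theorem:weakly_conv_sem}, i.e. on the modular-semantics analysis that evaluation of $\lambdacont$-programs depends continuously on their real parameters, with shape~(3) the other point where continuity of the primitives is indispensable.
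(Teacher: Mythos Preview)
Your proposal is correct and follows essentially the same approach as the paper: decompose via the coproduct structure using Lemma~\ref{lemma:countable_sum_kernels}, then dispatch each action by observing that it is either trivial, a Dirac on a continuous syntactic map, the $\overset{\text{?}}{\leq} q$ case handled by continuity of $op_{\leq}$, or the $\evalact$ case handled by Proposition~\ref{theorem:weakly_conv_sem}. The only organisational difference is that the paper case-splits on the action $a$ first and then invokes the coproduct lemma inside each case, whereas you peel off the coproduct first and then classify the resulting shapes; the ingredients and the two genuinely nontrivial steps (shapes~(3) and~(4)) are identical.
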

\begin{proof}
We present only the most interesting cases, the others can be found in the long version of this paper, and are obtained by small variations on the proofs below.
  \begin{varitemize}
    \item If $a = \evalact$, then the thesis directly follows from Proposition~\ref{theorem:weakly_conv_sem}.
    \longv{\item for $a = \typeone$ with $\typeone \in \types$:
Observe that $h_\typeone = \text{id}^{\valset_\typeone} \sqcup \text{id}^{\terms_\typeone} \sqcup 0$, using the notation of Lemma~\ref{lemma:countable_sum_kernels}. From Lemma~\ref{lemma:countable_sum_kernels}, we see that it is enough to show that both $\text{id}^{\valset_\typeone}: \valset_{\typeone} \kernel \markovone_{\lambdacont} $ and $\text{id}^{\terms_\typeone} \terms_{\typeone} \kernel \markovone_{\lambdacont} $ are Feller continuous. Since  for every $\distrone \in \distrs{\valset_\typeone}$, $\text{id}^{\terms_\typeone}(\distrone) = \distrone$, this is obviously the case.}
\item If $a = \valone \in \valset_{\typeone}$, then we observe that 
$h_{\valone} = \bigsqcup_{\typetwo} h^{\typeone}_\valone \sqcup 0$, where $h_\valone^{\typeone}: \valset_{\typeone \to \typetwo} \kernel \markovone_{\lambdacont}$. From Lemma~\ref{lemma:countable_sum_kernels}, we see that it is enough to show that for every $\typetwo$, and every $\valone \in \valset_\typeone$, the kernel $h_\valone^{\typeone}: \valset_{\typeone \to \typetwo} \kernel \markovone_{\lambdacont}$
  is Feller. Moreover, we see that $h_\valone^{\typeone}$ is deterministically generated  by the function $\widehat{h}_\valone^{\typeone \to \typetwo} = 
  \valtwo \in \valset_{\typeone \to \typetwo} \mapsto \valtwo \valone \in \markovone_{\lambdacont}$. Looking at the topologies on $\valset_{\typeone}$ and $\markovone_{\lambdacont}$, we see immediately that $\widehat{h}_\valone^{\typeone}$ is a continuous function.
\longv{\item If $a = case(\hat{\imath})$, with $i \in I$. As above we can write $h_a$ as a coproduct kernel, i.e. we split $\markovone_{\lambdacont}$  as $\markovone_{\lambdacont} =\bigsqcup_{{(\typeone_j)}_{j \in I}} \{\hat{\imath}\}\times \valset_{\typeone_i} \sqcup Z$. Observe that this splitting is indeed consistent with Definition~\ref{def:lmplambdap}, in the following sense: if we start from the component $\valset_{\sumtype{j \in I}{\typeone_j}}$ from $\markovone_{\lambdacont}$, then it is itself defined--see Definition~\ref{definition:terms_as_polish_spaces} as the disjoint union of components of the form $\{\hat {\jmath}\} \times \valset_{\typeone_j}$.
  From there, we can write:  $h_a = \bigsqcup_{{\mathcal F = (\typeone_j)}_{j \in I}} h_{\hat{\imath}}^{\mathcal F} \sqcup 0$, where $h_{\hat{\imath}}^{\mathcal F}: \{\hat {\imath}\} \times \valset_{\typeone_i} \kernel \markovone_{\lambdacont}$ is defined as  $h_{\hat{\imath}}^{\mathcal F}(\hat{\imath},\valone) = \dirac \valone$. From Lemma~\ref{lemma:countable_sum_kernels}, we see that it is enough to show that $h_{\hat{\imath}}^{\mathcal F}$ is Feller continuous. Then, we show easily that  $h_{\hat{\imath}}^{\mathcal F}$ is deterministically generated by a continuous function, and we can conclude.
  \item if $a = \unboxact$}
  \item If $a = \overset{\text{?}}{\leq} q$, then we first recall that we have a 
   comparison operator $op_{\leq} \in  \mathcal{C}_2$ such that 
   $op_{\leq}: \RR \times \RR \to [0,1]$, and  $op_{\leq}(x,y) = 1$ if and only if $x \leq y$. Since all primitive functions are continuous, $op_{\leq}$ is continuous too.  Let us split $\markovone_{\lambdacont}$  as $\markovone_{\lambdacont} = \valset_{\typereal} \sqcup Z$, and observe that $h_a = h_a^{\valset_\typereal} \sqcup 0$. As before, we apply Lemma~\ref{lemma:countable_sum_kernels}, and we see that it is enough to show that $h_a^{\valset_\typereal}$ is Feller continuous. Recall that $h_a^{\valset_\typereal}(\makereal r) = p_{\leq}(r,q) \cdot \dirac r$
  Let $(\mu_n)_{n \in \NN}$ be a sequence of measures over $\RR$ that converges weakly towards $\mu$ and let $f:\RR \rightarrow \RR$ be a bounded continuous function. 
  Notice that for every distribution $\nu$ over reals, we have:
  \begin{equation}\label{eq:aux_fc}
  \int_{\RR} f d(h_{a}^{\valset_\typereal}(\nu)) = \int_{\RR} f \times op_{\leq}(\cdot,q) d\nu.
\end{equation}
Since $op_{\leq}$ is bounded and continuous, we deduce from weak convergence of $(\mu_n)_{n \in \NN}$ that $\int_{\RR} f \times op_{\leq}(\cdot,q) d\mu_n \rightarrow_{n \rightarrow \infty} \int_{\RR} f \times op_{\leq}(\cdot,q) d\mu$. By combining this with~\eqref{eq:aux_fc}, we conclude that $\int_{\RR} f d(h_{a}^{\valset_\typereal}(\mu_n)) \rightarrow_{n \rightarrow \infty}  \int_{\RR} f d(h_{a}^{\valset_\typereal}(\mu))$. Since it is true for every bounded continuous function $f$, it means that $h_a^{\valset_\typereal}(\mu_n)$ converges weakly towards $h_a^{\valset_\typereal}(\mu)$, and we can conclude from there.
    \end{varitemize}
\end{proof}

\begin{lemma}\label{lemma:states_kernels_feller_cont}
 For every $s \in \markovone_{\lambdacont}$, the kernel $\widehat{h}_s: \actsone_{\lambdacont} \kernel \markovone_{\lambdacont}$ is Feller continuous.
\end{lemma}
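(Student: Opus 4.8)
Fix $s \in \markovone_{\lambdacont}$. We must show that the map $a \mapsto h_a(s)$ is continuous from $\actsone_{\lambdacont}$ to $\distrs{\markovone_{\lambdacont}}$ equipped with the weak topology. The plan is to mirror the proof of Lemma~\ref{lemma:acts_kernels_feller_cont}, exploiting that $\actsone_{\lambdacont}$ is itself a \emph{countable coproduct} of standard Borel spaces: by Definition~\ref{def:lmplambdap} it decomposes, with the disjoint union topology, into the singletons $\{\typeone\}$ ($\typeone \in \types$), the spaces $\values_{\typeone}$ ($\typeone \in \types$), the singleton $\{\evalact\}$, the singletons $\{\overset{\text{?}}{\leq}q\}$ ($q \in \QQ$), the singletons $\{\text{case}(\hat{\imath})\}$, and the singleton $\{\unboxact\}$ --- countably many summands in all, each standard Borel, and all of them \emph{discrete} except the $\values_{\typeone}$. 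The kernel $\widehat{h}_s$ is then the coproduct of its restrictions to these summands, so by Lemma~\ref{lemma:countable_sum_kernels} it suffices to prove that each restriction is Feller continuous.

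For a discrete summand the statement is immediate: a convergent sequence in a discrete space is eventually constant, hence for every $s$ the sequence of measures $(h_{a_n}(s))_n$ is eventually constant, hence weakly convergent. This settles every summand other than the $\values_{\typeone}$'s.

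For a fixed $\typeone$ and our fixed $s$, inspection of the transition functions shows the restriction of $\widehat{h}_s$ to $\values_{\typeone}$ is \emph{uniform} on that summand: if $s$ is not of the form $(\valtwo, \typeone \to \typetwo)$ for some $\valtwo, \typetwo$, then $h_{\valone}(s) = \emptyset$ for all $\valone \in \values_{\typeone}$ and the restriction is constant, trivially Feller continuous; otherwise $s = (\valtwo, \typeone \to \typetwo)$ for a $\typetwo$ determined by $s$, and the restriction is the deterministic kernel associated to the map $\valone \in \values_{\typeone} \mapsto (\valtwo\valone, \typetwo) \in \markovone_{\lambdacont}$. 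To conclude I would invoke two facts. First, the deterministic kernel $\overline{\phi}$ of a continuous function $\phi$ is Feller continuous: if $x_n \to x$, then for any bounded continuous $g$ one has $\int g\, d\,\overline{\phi}(x_n) = g(\phi(x_n)) \to g(\phi(x)) = \int g\, d\,\overline{\phi}(x)$. Second --- exactly as already used for the action $a = \valone$ in the proof of Lemma~\ref{lemma:acts_kernels_feller_cont} --- elementary syntactic operations are continuous for the coproduct topology of Definition~\ref{definition:terms_as_polish_spaces}: writing $\valtwo = W[\vec{\makereal{s}}]$ and a convergent sequence $\valone_n = V[\vec{\makereal{r}}_n] \to V[\vec{\makereal{r}}]$ inside the component indexed by the pre-value $V$, the applications $\valtwo\valone_n$ all lie in the single component indexed by the pre-term built from $W$ and $V$, and their real parameters $(\vec s, \vec r_n)$ converge to $(\vec s, \vec r)$; composing with the continuous coproduct injection $\terms_{\typetwo} \hookrightarrow \markovone_{\lambdacont}$ gives continuity of $\valone \mapsto (\valtwo\valone, \typetwo)$. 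The first fact then finishes this summand.

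Combining the three points through Lemma~\ref{lemma:countable_sum_kernels} yields Feller continuity of $\widehat{h}_s$ for every $s$. I do not expect a genuine obstacle here: in contrast with Lemma~\ref{lemma:acts_kernels_feller_cont}, whose $\evalact$ case relied on the substantial Proposition~\ref{theorem:weakly_conv_sem}, the present statement needs only the triviality of Feller continuity over discrete spaces and the continuity of the syntactic application operation. The one point requiring a little care is the verification that $\widehat{h}_s$ restricted to each summand $\values_{\typeone}$ is indeed uniform --- either the constant $\emptyset$-kernel or a single deterministic continuous kernel --- which is forced by the type discipline on application.
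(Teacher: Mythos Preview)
Your proposal is correct and follows essentially the same route as the paper: decompose $\actsone_{\lambdacont}$ as a countable coproduct, invoke Lemma~\ref{lemma:countable_sum_kernels}, observe that all summands except the $\values_{\typeone}$ are discrete (hence trivially Feller), and handle the $\values_{\typeone}$ summand via the deterministic kernel of the continuous syntactic application $\valone \mapsto \valtwo\valone$. The paper organizes the same argument by a case analysis on $s$ first (then splitting $\actsone_{\lambdacont}$ accordingly), but the substance is identical.
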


\begin{proof}
  We do the proof by case analysis on $s$. We present  the most interesting 
  cases only\shortv{ (the others can be found in the long version of this paper 
  \cite{LV})}.
    \begin{varitemize}
    \item If $s = \termone$, with $\termone \in \valset_{\typeone}$, then 
    we split $\actsone_{\lambdacont}$ as $\actsone_{\lambdacont} = \{\evalact\} \sqcup \{ \typeone \} \sqcup Z $ (observe that this splitting is indeed consistent with Definition~\ref{def:lmplambdap}), and we write: $\widehat{h}_s = \widehat{h}_s^{\{\evalact\}} \sqcup \widehat{h}_s^{\{\typeone\}}\sqcup Z$, with $\widehat{h}_s^{\{\evalact\}}:\{\evalact\} \kernel  \markovone_{\lambdacont}$ and $\widehat{h}_s^{\{\typeone\}}:\{\typeone\} \kernel  \markovone_{\lambdacont}$. It is immediate that those two kernels are Feller, since they are defined on one-point spaces, so we can conclude the thesis by Lemma~\ref{lemma:countable_sum_kernels}.
    \longv{\item if $s= \makereal {r'}$, we split $\actsone_{\lambdacont}$ as $\actsone_{\lambdacont} = \sqcup_{q \in \QQ}\{(\op_{\leq q})\}\sqcup \{\typereal\} \sqcup Z$.  Using Lemma~\ref{lemma:countable_sum_kernels}, we can conclude.}
      \item If $s = \valtwo$, with $\valtwo$ of type $\typeone \to \typetwo$, then 
      we see that using Lemma~\ref{lemma:countable_sum_kernels} as above, it is enough to show that the kernel $\widehat{h}_s^{app} : \valone \in \valset_{\typeone} \mapsto  \dirac{\valtwo \valone}$ is Feller. Observe that this kernel can be written as the deterministic kernel $\overline f$, with $f(\valone) = \valtwo \valone$. Since $f$ is continuous, we can conclude the thesis because kernels deterministically generated by a continuous function are Feller continuous.      
      \end{varitemize}
\end{proof}

From Lemma~\ref{lemma:acts_kernels_feller_cont} and Lemma~\ref{lemma:states_kernels_feller_cont}, we immediately obtain Feller continuity of 
the applicative LMP for our continuous $\lambda$-calculus $\lmponelambdac$:
\begin{proposition}\label{prop:applicative_feller_continuity}
$\lmponelambdac$ is Feller continuous.
\end{proposition}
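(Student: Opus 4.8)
The plan is to derive Proposition~\ref{prop:applicative_feller_continuity} as an immediate consequence of the two lemmas just established, essentially unfolding the definition of a Feller continuous LMP. Recall that $\lmponelambdac$ is the triple $(\markovone_{\lambdacont}, \actsone_{\lambdacont}, \{h_a \mid a \in \actsone_{\lambdacont}\})$, obtained from Definition~\ref{def:lmplambdap} by restricting to terms and values of $\lambdacont$. To invoke the definition of a Feller continuous LMP, I first need to check the structural prerequisite that both $\markovone_{\lambdacont}$ and $\actsone_{\lambdacont}$ are standard Borel spaces. For the state space this is exactly the discussion after Definition~\ref{definition:terms_as_polish_spaces}: $\terms$ and $\valset$ (restricted to $\lambdacont$) are countable coproducts of the spaces $\terms_P$, $\valset_V$, each in bijection with some $\RR^n$, hence standard Borel, and $\Pol$ is closed under countable coproducts. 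The action space $\actsone_{\lambdacont}$ is a countable disjoint union of $\types$, of $\{\evalact\}$, of $\{\overset{\text{?}}{\leq} q \mid q \in \QQ\}$, of $\{\mathrm{case}(\hat{\imath})\}$, of $\{\unboxact\}$ --- all of which are countable discrete, hence standard Borel --- together with $\bigcup_\typeone \valset_\typeone$, which is again a countable coproduct of standard Borel spaces; so $\actsone_{\lambdacont}$ is standard Borel as well.

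Having dispatched the structural condition, the two bullets in the definition of Feller continuity are precisely the content of the two preceding lemmas. The first bullet --- that for every $a \in \actsone_{\lambdacont}$ the kernel $h_a : \markovone_{\lambdacont} \kernel \markovone_{\lambdacont}$ is Feller continuous --- is Lemma~\ref{lemma:acts_kernels_feller_cont}. The second bullet --- that for every state $s \in \markovone_{\lambdacont}$ the kernel $\widehat{h}_s : \actsone_{\lambdacont} \kernel \markovone_{\lambdacont}$, sending $a \mapsto h_a(s, -)$, is Feller continuous --- is Lemma~\ref{lemma:states_kernels_feller_cont}. So the proof is just: verify the two spaces are standard Borel, then cite the two lemmas. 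There is essentially no mathematical obstacle left at this level; all the work has been pushed into the lemmas (and, transitively, into Proposition~\ref{theorem:weakly_conv_sem} and the compositionality and countable-coproduct facts about Feller kernels).

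If anything could be called the ``main obstacle'' it is purely a matter of bookkeeping: making sure the coproduct decompositions used in the two lemmas to reduce each kernel to a finite/one-point-indexed family of Feller-continuous pieces are genuinely compatible with the standard Borel structure declared in Definition~\ref{def:lmplambdap} (the disjoint-union topology on $\markovone_{\lambdacont}$ and $\actsone_{\lambdacont}$), so that Lemma~\ref{lemma:countable_sum_kernels} applies. But that compatibility is exactly what was checked inside each lemma, so at the level of the proposition I only need to assemble the pieces. I would therefore write a short proof along the following lines.

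\begin{proof}
By Definition~\ref{definition:terms_as_polish_spaces} and the ensuing remarks, $\markovone_{\lambdacont}$ is a countable coproduct (in $\Pol$) of spaces each in bijection with some $\RR^n$, hence a standard Borel space; likewise $\actsone_{\lambdacont}$ is a countable disjoint union of the countable discrete sets $\types$, $\{\evalact\}$, $\{\overset{\text{?}}{\leq} q \mid q \in \QQ\}$, $\{\mathrm{case}(\hat{\imath}) \mid i \in I\}$, $\{\unboxact\}$ together with the standard Borel space $\bigcup_{\typeone}\valset_{\typeone}$, and is therefore standard Borel. It remains to verify the two continuity conditions of a Feller continuous LMP. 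The first, that $h_a : \markovone_{\lambdacont} \kernel \markovone_{\lambdacont}$ is Feller continuous for every $a \in \actsone_{\lambdacont}$, is Lemma~\ref{lemma:acts_kernels_feller_cont}. The second, that $\widehat{h}_s : \actsone_{\lambdacont} \kernel \markovone_{\lambdacont}$ is Feller continuous for every $s \in \markovone_{\lambdacont}$, is Lemma~\ref{lemma:states_kernels_feller_cont}. Hence $\lmponelambdac$ is Feller continuous.
\end{proof}
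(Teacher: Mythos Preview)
Your proposal is correct and follows exactly the paper's approach: the paper states the proposition as an immediate consequence of Lemma~\ref{lemma:acts_kernels_feller_cont} and Lemma~\ref{lemma:states_kernels_feller_cont}, without even opening a proof environment. Your additional verification that $\markovone_{\lambdacont}$ and $\actsone_{\lambdacont}$ are standard Borel is a reasonable piece of bookkeeping that the paper leaves implicit.
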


Finally, combining Proposition~\ref{prop:applicative_feller_continuity} and Theorem~\ref{prop:coincide_event_state_generic}, we obtain the main result of this section: 
the \emph{full abstraction theorem}. The latter states that all notions of bisimilarity we consider coincide on $\lmponelambdac$. Consequently, we obtain a characterization of bisimilarity by means of tests, and we can then use such a characterization to prove full abstraction of bisimilarity with respect to context equivalence, this way 
showing that all the notions of equivalence on $\lambdacont$ considered so far coincide. 

\begin{theorem}[Full Abstraction Theorem]\label{theorem:full_abstraction_lambdac}
 $${\stackrel{\lmponelambdac}{\sim_{\statet}}} = {\stackrel{\lmponelambdac}{\sim_{\logic}}} 
  = {\stackrel{\lmponelambdac}{\sim_{\event}}} = {\stackrel{\lmponelambdac}{\sim_{\test}}} = {\stackrel{\lmponelambdac}{\equiv_{\mathit{ctx}}}}.
  $$
  \end{theorem}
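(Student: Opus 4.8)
The plan is to assemble the theorem from three ingredients already in place. First, by Proposition~\ref{prop:applicative_feller_continuity}, the LMP $\lmponelambdac$ is Feller continuous. Applying Theorem~\ref{prop:coincide_event_state_generic} to $\lmponelambdac$ immediately yields
$${\stackrel{\lmponelambdac}{\sim_{\statet}}} = {\stackrel{\lmponelambdac}{\sim_{\logic}}} = {\stackrel{\lmponelambdac}{\sim_{\event}}} = {\stackrel{\lmponelambdac}{\sim_{\test}}},$$
so the only remaining task is to identify this common relation with contextual equivalence $\stackrel{\lmponelambdac}{\equiv_{\mathit{ctx}}}$ on $\lambdacont$. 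I would prove this by the usual two inclusions: soundness of bisimilarity (bisimilar programs are contextually equivalent) and completeness (contextually equivalent programs are bisimilar).

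For the \emph{soundness} direction, I would take $\sim_{\statet}^{\lmponelambdac}$, which by the previous step coincides with applicative bisimilarity on $\lambdacont$, and show it is a congruence contained in $\equiv_{\mathit{ctx}}$. The standard route is Howe's method: one defines the Howe lifting $\sim^{H}$ of applicative bisimilarity, shows it is compatible (closed under all term constructors) by induction on typing derivations, shows it is substitutive, and then proves the key ``pseudo-simulation'' lemma stating that $\sim^{H}$ is preserved by evaluation — i.e.\ if $\termone \sim^{H} \termtwo$ then $\sem{\termone} \mathrel{\relator{(\sim^{H})}} \sem{\termtwo}$ — using the monotonicity and composition properties of the relator $\relator$ together with the $\omega$-continuity of $\sem{-}$ established after Definition~\ref{definition:monadic-semantics}. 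Transitivity of the resulting relation then forces $\sim^{H}$ to coincide with $\sim_{\statet}^{\lmponelambdac}$ on closed terms, giving congruence; composing with observable contexts of type $\typereal$ and observing that $\relator{\idrel}$ is equality on distributions yields $\sim_{\statet}^{\lmponelambdac}\,\subseteq\,\equiv_{\mathit{ctx}}$. This argument is essentially the one of~\cite{LagoG19}, adapted to the continuous fragment; the continuity restriction does not obstruct it, since Howe's method is insensitive to which primitive functions populate $\mathcal{C}$.

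For the \emph{completeness} direction, I would use the testing characterization. The point is that every test $\testone \in \tests$ over $\actsone_{\lambdacont}$ can be simulated by an observable context: a test built from the actions $\evalact$, value-application, $\text{case}(\hat{\imath})$, $\unboxact$, and the comparison actions $\overset{\text{?}}{\leq} q$ is exactly a sequence of evaluation, argument passing, projection/unfolding, and comparison steps, which one encodes as a context $\ctxone_{\testone} \in \contexts{\typeone}$ whose output distribution at type $\typereal$ has total mass equal to $\probsucc{\testone}{\cdot}$; here the presence of $op_{\leq} \in \mathcal{C}_2$ and of multiplication in $\mathcal{C}$ (used, as in Proposition~\ref{prop:caract_equiv_ctx}, to realize conjunction of tests as a product) is what makes the encoding possible. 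Hence $\termone \equiv_{\mathit{ctx}} \termtwo$ implies $\probsucc{\testone}{\termone} = \probsucc{\testone}{\termtwo}$ for all tests, i.e.\ $\termone \sim_{\test}^{\lmponelambdac} \termtwo$. Combining both inclusions with the coincidences from Theorem~\ref{prop:coincide_event_state_generic} closes the cycle and gives the displayed chain of equalities.

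The main obstacle I anticipate is the soundness half: carrying out Howe's method requires the pseudo-simulation lemma, whose proof is a delicate induction interleaving the $\omega$-approximants $\sem{-}^{(n)}$ with the relational lifting $\relator$, and one must check carefully that $\relator$ behaves well under the Kleisli extensions and suprema used in Definition~\ref{definition:monadic-semantics} — this is where the properties of $\relator$ listed after Definition~\ref{def:relation-lifting} (monotonicity, lax functoriality, compatibility with $\unit$ and $\kleisli{(-)}$) and the $\ocppo$-enrichment of kernels are all genuinely needed. The context-encoding step for completeness is comparatively routine but requires a careful case analysis matching each label in $\actsone_{\lambdacont}$ to a syntactic elimination form.
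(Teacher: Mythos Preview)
Your proposal is correct and follows essentially the same route as the paper: apply Feller continuity (Proposition~\ref{prop:applicative_feller_continuity}) and Theorem~\ref{prop:coincide_event_state_generic} to collapse the four bisimilarity-type relations, invoke the Howe-style soundness proof of~\cite{LagoG19} for ${\sim_{\statet}} \subseteq {\equiv_{\mathit{ctx}}}$, and encode tests as contexts for ${\equiv_{\mathit{ctx}}} \subseteq {\sim_{\test}}$. One small discrepancy: the paper's invariant for the context $\ctxone_{\testone}$ is $\probsucc{\testone}{(\valone,\typeone)} = \sem{\ctxone_{\testone}[\valone]}(\{1\})$ with $\support{\sem{\ctxone_{\testone}[\valone]}} \subseteq [0,1]$, rather than having the \emph{total mass} equal to the success probability; this is what makes the multiplication trick for $\wedge$ work cleanly, and you should phrase it that way.
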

\begin{proof}
By Proposition~\ref{prop:applicative_feller_continuity} and Theorem~\ref{prop:coincide_event_state_generic}, on $\lmponelambdac$ we have: 
${\sim_{\statet}} = \sim_{\logic}= \sim_{\event} = \sim_{\test}$. The inclusion ${\stackrel{\lmponelambdac}{\sim_{\statet}}} \subseteq {\stackrel{\lambdacont}{\equiv_{\mathit{ctx}}}}$ 
is obtained exactly as in the proof of soundness of applicative 
bisimilarity for $\lambdap$~\cite{LagoG19} (recall that state bisimilarity on 
$\lmponelambdac$ corresponds to applicative bisimilarity for $\lambdacont$ 
as defined by \citet{LagoG19}).
To conclude the proof, we need to show that ${\stackrel{\lambdacont}{\equiv_{\mathit{ctx}}}} 
\subseteq {\stackrel{\lmponelambdac}{\sim_{test}}}$. The proof is based on the following construction: for every test 
		$\testone 
		\in \tests$, we build by induction on $\testone$ a pair of contexts 
		$(\ctxone^{\valset, \typeone}_{\testone})$ and 
		$(\ctxone^{\terms,\typeone}_{\testone})$ such that the following 
		invariant holds, for all values $\valone \in \valset_{\typeone}$ and 
		terms $\termone \in \terms_{\typeone}$.
		\begin{align*}
			\probsucc \testone {(\valone,\typeone)} = \sem{ 
				{\ctxone^{\valset,\typeone}_{\testone}} \holefill\valone} (1)
			&, \quad
			\support{\sem{ {\ctxone^{\valset,\typeone}_{\testone}} 
					\holefill\valone}} \subseteq [0,1] 
			\\
			\probsucc \testone {(\termone,\typeone)} = \sem{ 
				{\ctxone^{\terms,\typeone}_{\testone}} \holefill\termone} (1)
			&, \quad
			\support{\sem{ {\ctxone^{\terms,\typeone}_{\testone}} 
					\holefill\valone}} \subseteq [0,1] .
		\end{align*}       
		The construction is similar to the one used in the proof of 
		full abstraction of bisimilarity for discrete probabilistic 
		$\lambda$-calculi~\cite{CDL14, DBLP:conf/birthday/CrubilleLSV15}, 
		the most interesting case now being the 
		one of $\testone = \overset{\text{?}}{\leq} q \cdot \testone'$, 
		with $q \in \QQ$. We take $\ctxone^{\mathcal U,\typeone}_{\testone} = 
		\makereal 0$, 
		if $\typeone \neq \typereal$ or $\mathcal U \neq \valset$; and 
		$\ctxone^{\mathcal \valset,\typereal}_{\testone} = 
		\makereal{\mathit{op}_{\leq}}(\sample,\makereal{\mathit{op}_{\leq}}(q,[])) \makereal{*} 
		\ctxone^{\valset,\typereal}_{\testone'}[]$, otherwise.
		We then see that when two programs $\termone,\termtwo$ of type 
		$\typeone$ are context equivalent, then they have the same probability 
		of success for any $\testone \in 
		\tests$, and from there we can deduce 
		that they are test bisimilar.
  \end{proof}

\section{Perspectives on the General Case}\label{sect:perspective}
After having proved that continuity of the first-order functions implies 
coincidence between the four notions of equivalence we have introduced in 
Section~\ref{sect:equivalences_on_lmps}, we go back to the general case, namely 
the one of $\lambdap$. The only difference with respect to $\lambdacont$ is 
that first-order functions are not necessarily continuous. This, as we are 
going to see, turns out to have very strong consequences on the nature of 
coinductive relational reasoning. Before proceeding further, 
let us consider an example, which very clearly shows where the crux is.

\begin{example}\label{ex:counter_example_state}
	Consider the following two programs:
	\begin{align*}
		& \termone: \quad \seq{\sample}{(\abs y {\ifth {x=y}{1}{0}})} \\
		& \termtwo: \quad \seq{\sample}{\abs y 0}.
	\end{align*}
	The programs $\termone$ and $\termtwo$ are contextually equivalent. The 
	easiest 
	way to prove that is by going through a denotational semantics of 
	$\lambdap$.
	\citet{DBLP:journals/pacmpl/VakarKS19} have recently introduced quasi-Borel 
	domains 
	and have proved their adequacy as a denotational semantics for a Bayesian 
	$\lambda$-calculus with recursive types subsuming $\lambdap$. 
	Quasi-Borel domains thus provides an adequate semantics for $\lambdap$ too, 
	and 
	one can show the denotational interpretations of $\termone$ and $\termtwo$ 
	to coincide.\footnote{ 
		The proof of denotational equality of $\termone$ and $\termtwo$ (and 
		thus of their contextual 
		equivalence) has been communicated to the authors by Dario Stein based 
		on previous 
		work by \citet{DBLP:journals/pacmpl/SabokSSW21}. \shortv{Details can be 
		found in the long version \cite{LV}}.
	} 
	\longv{The proof is done in the denotational model based on Quasi-Borel 
		spaces, for a monadic call-by-value probabilistic language. We can show 
		that their proof also holds for $\lambdap$ by formalizing an embedding 
		of 
		$\lambdap$ into that language.} As highlighted 
	by \citet{DBLP:journals/pacmpl/SabokSSW21} relying on a connection with the 
	privacy 
	rule in $\nu$-calculus, the equivalence between $\termone$ and $\termtwo$ 
	can be understood 
	from
	a security-oriented viewpoint: we see the sampling of $x$ as the generation 
	of 
	a 
	\emph{private name} --- or in cryptographic terms, a private key. Indeed, no 
	adversary --- i.e.~external context---can \emph{a priori} guess the private 
	key $x$ with non-negligible probability. On the other hand, are these 
	programs also state bisimilar? Actually, state bisimilarity of $\termone$ 
	and $\termtwo$ seems very hard to prove. How, then, about event 
	bisimilarity? 
\end{example}

\subsection{Event Bisimulation is \emph{Unsound}}

\shortv{
	\citet{LagoG19} have proved soundness of state bisimilarity 
	with respect to context equivalence in $\lmponelambda$:
	\begin{proposition}[\cite{LagoG19}]\label{prop:LagoG19}
		${\sim_{\statet}^{\lmponelambda}} \subseteq { \equiv^{\mathit{ctx}}}$.
	\end{proposition}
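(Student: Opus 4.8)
The plan is to follow the standard Howe's method argument, specialised to $\lmponelambda$ exactly as in \citet{LagoG19}; indeed the inclusion to be established is the $\Lambda_{\probb}$-analogue of the one used inside the proof of Theorem~\ref{theorem:full_abstraction_lambdac}. Recall that state bisimilarity on $\lmponelambda$ coincides with \emph{applicative bisimilarity} on $\Lambda_{\probb}$; so I would first read $\sim_{\statet}^{\lmponelambda}$ back as a syntactic relation, relating computations $\termone, \termtwo$ of type $\typeone$ when $\sem{\termone} \mathrel{\relator (\sim_{\statet})} \sem{\termtwo}$, and values by the corresponding observations: matching coproduct tags and then relating payloads; unfolding recursive values; relating $\abs{\varone}{\termone}$ and $\abs{\varone}{\termtwo}$ when $\subst{\termone}{\varone}{\valone} \sim_{\statet} \subst{\termtwo}{\varone}{\valtwo}$ for all related $\valone, \valtwo$; and relating real values $\makereal{r}, \makereal{s}$ iff $\mathit{op}_{\leq}(r,q)=\mathit{op}_{\leq}(s,q)$ for all $q \in \QQ$, i.e.\ iff $r=s$. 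Then I would record the routine facts that this relation is an equivalence (using $\idrel \subseteq \relator\idrel$, $\relator\relone;\relator\reltwo \subseteq \relator(\relone;\reltwo)$, and the self-dual shape of the clauses) and that it is closed under the observation operators, hence a genuine bisimulation.

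The substance of the argument is showing that $\sim_{\statet}$ is \emph{compatible}, i.e.\ a congruence for all the term- and value-formers of $\Lambda_{\probb}$; equivalently, that the open extension of $\sim_{\statet}$ equals its compatible closure. For this I would introduce the Howe extension $\howe{(\sim_{\statet})}$ of the open version of $\sim_{\statet}$: the least relation that is compatible and satisfies $\howe{(\sim_{\statet})} ; (\sim_{\statet}) \subseteq \howe{(\sim_{\statet})}$. By construction it is compatible (hence reflexive) and contains $\sim_{\statet}$. Two structural lemmas are then needed: $\howe{(\sim_{\statet})}$ is \emph{substitutive} --- closed under substituting $\howe{(\sim_{\statet})}$-related values for a common variable, proved by induction on the derivation of the term being substituted into --- and, since $\sim_{\statet}$ is transitive, the transitive closure of $\howe{(\sim_{\statet})}$ is symmetric while remaining compatible (the usual device for restoring the symmetry lost in the one-sided Howe construction).

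The crux --- and the step I expect to be the main obstacle --- is the \emph{Key Lemma}: the value restriction of $\howe{(\sim_{\statet})}$ is an applicative bisimulation, which reduces to the ``fundamental property'' that $\termone \mathrel{\howe{(\sim_{\statet})}} \termtwo$ implies $\sem{\termone} \mathrel{\relator \howe{(\sim_{\statet})}} \sem{\termtwo}$. I would prove this by induction on the approximants $\semn{-}{n}$ of Definition~\ref{definition:monadic-semantics}, with an inner case analysis on the last rule of the Howe derivation of $\termone \mathrel{\howe{(\sim_{\statet})}} \termtwo$. The delicate cases are the monadic ones: for $\seq{\termone_1}{\termtwo_1}$ one chains the two induction hypotheses through the Kleisli extension $\kleisli{(-)}$, using the property $\relone;g \subseteq f;\relator\reltwo \implies \relator\relone;\kleisli{g} \subseteq \kleisli{f};\relator\reltwo$ recorded in the excerpt together with substitutivity of $\howe{(\sim_{\statet})}$; for $\sample$ one uses reflexivity of $\relator(\sim_{\statet})$ on $\sem{\sample}$; for an application $\valone\valtwo$ one peels off the $\beta$-reduction feeding the semantics and invokes the defining clause of applicative bisimilarity at arrow type; and for $\op(\valone_1, \hh, \valone_n)$ one observes that $\howe{(\sim_{\statet})}$-related real values are literally equal, so the outputs coincide. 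One also has to verify measurability of the sets and well-definedness of the integrals at each step, but --- crucially --- no continuity hypothesis on $\mathcal{C}$ is used anywhere, which is exactly why the same argument yields Theorem~\ref{theorem:full_abstraction_lambdac} for $\lambdacont$. From the Key Lemma we get $\howe{(\sim_{\statet})} \subseteq \sim_{\statet}$ on values, and through the clause for sequencing this propagates to computations, so $\howe{(\sim_{\statet})} = \sim_{\statet}$; being equal to a compatible relation, $\sim_{\statet}$ is a congruence.

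Finally I would close the argument. The relation $\sim_{\statet}$ is a congruence and is \emph{adequate}: at type $\typereal$ its definition forces $\sem{\termone} = \sem{\termtwo}$, since real values are separated by the $\mathit{op}_{\leq}$-observations, so $\relator(\sim_{\statet})$ at real type is equality of output distributions. Hence, given $\termone \sim_{\statet} \termtwo$ and any observable context $\ctxone \in \contexts{\typeone}$, compatibility gives $\ctxone\holefill{\termone} \sim_{\statet} \ctxone\holefill{\termtwo}$ at type $\typereal$, and adequacy then gives $\sem{\ctxone\holefill{\termone}} = \sem{\ctxone\holefill{\termtwo}}$; therefore $\termone \equiv^{\mathit{ctx}} \termtwo$, and symmetrically for values. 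This proves ${\sim_{\statet}^{\lmponelambda}} \subseteq {\equiv^{\mathit{ctx}}}$, exactly as in \citet{LagoG19}.
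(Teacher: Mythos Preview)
Your proposal is correct and follows precisely the approach the paper relies on: the paper does not give its own proof of this proposition but cites \citet{LagoG19}, whose argument is exactly the Howe's-method development you sketch (Howe extension of applicative bisimilarity, substitutivity, the Key Lemma via induction on the approximants $\semn{-}{n}$ using the relator laws for $\relator$, hence congruence, hence inclusion in $\equiv^{\mathit{ctx}}$ by adequacy at $\typereal$). One small slip worth flagging: in the clause for arrow values you wrote ``for all related $\valone,\valtwo$'', but applicative bisimilarity on $\lmponelambda$ tests with the \emph{same} value $\valone$ on both sides (the LMP action $h_\valone$ passes a fixed argument), so the correct clause is $\subst{\termone}{\varone}{\valone} \sim_{\statet} \subst{\termtwo}{\varone}{\valone}$ for all closed $\valone$; this does not affect the rest of your outline.
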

At that point, the reader may wonder whether it is possible to use state 
bisimilarity and Proposition~\ref{prop:LagoG19} to prove 
Example~\ref{ex:counter_example_state}. Unfortunately that would not be so 
easy, because state bisimilarity when applied to the programs from 
Example~\ref{ex:counter_example_state} makes seemingly intractable 
measurability problems to appear, as we illustrate further in 
Example~\ref{example:state_sim_fragment}. In turn, 
Theorem~\ref{prop:logical_caracterisation} tells us 
that state bisimilarity is included in event bisimilarity. Event 
bisimilarity, then, would seem to be a natural candidate to go towards 
contextual equivalence. 
\begin{example}
If we look again to the programs $\termone$ and $\termtwo$ from 
Example~\ref{ex:counter_example_state} --- that are context equivalent --- we 
can show explicitly that they are also event bisimilar \shortv{(for a formal 
proof, 
see~\cite{LV})}. Indeed, recall from 
the testing characterization of event bisimilarity given by 
Theorem~\ref{prop:logical_caracterisation}, that an \enquote{adversary} for 
event bisimilarity can only do three things: evaluating
a program (that we see as an agent in some cryptographic game); passing 
messages to this agent; doing two separate \emph{copies} of any current state 
of the agent. Moreover, the adversary has to decide \emph{before the game  
starts} which sequence of actions to play. Intuitively, we can see that none of 
its admissible strategies allows the adversary to distinguish between 
$\termone$ and $\termtwo$ with non-zero probability.
\end{example}
}
\longv{ We look now at programs $\termone$ and $\termtwo$ from 
	Example~\ref{ex:counter_example_state} from the point of view of event 
	bisimulation. We are going to show that they are event bisimilar; to do 
	this we 
	first need to present additional tools from~\cite{danos2005almost}.
	\begin{definition}
		A $\pi$-system on $\markovone$ is a subset of $\markovone$ closed
		under (finite) intersections and containing $\markovone$.
	\end{definition}
	\begin{lemma}[From~\cite{danos2005almost}]\label{lemma:pi_system_event_bisim}
		Let $\pi$ be a $\pi$-system. Then $\sigma(\pi)$ is an event 
		bisimulation if $\pi$ is \emph{stabl.e}
	\end{lemma}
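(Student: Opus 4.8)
The plan is to reduce the claim to Dynkin's $\pi$--$\lambda$ theorem. By the definition of event bisimulation, to show that $\sigma(\pi)$ is one I must check that $(\markovone, \sigma(\pi), \{h_a \mid a \in \actsone\})$ is a LMP. Since $\sigma(\pi) \subseteq \Sigma$ and each $h_a$ is already a sub-probability kernel on $\Sigma$, for every fixed $s$ the map $A \mapsto h_a(s,A)$ is automatically a sub-probability measure on $\sigma(\pi)$; hence the only nontrivial requirement is that for every $a \in \actsone$ and every $A \in \sigma(\pi)$, the map $s \mapsto h_a(s,A)$ be $\sigma(\pi)$-measurable. Recalling that $\act{a}{q}{A} = \{s \mid h_a(s,A) > q\}$, this measurability is equivalent to asking $\act{a}{q}{A} \in \sigma(\pi)$ for every rational $q$. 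Here I read \emph{stability} of $\pi$ as exactly the statement that $\act{a}{q}{A} \in \sigma(\pi)$ whenever $A \in \pi$, $a \in \actsone$, and $q \in \QQ$.

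First I would fix an action $a$ and introduce the collection
\[
\mathcal{D}_a = \{A \in \Sigma \mid (s \mapsto h_a(s,A)) \text{ is } \sigma(\pi)\text{-measurable}\}.
\]
Stability gives immediately $\pi \subseteq \mathcal{D}_a$. The heart of the argument is then to verify that $\mathcal{D}_a$ is a $\lambda$-system (Dynkin system). That $\markovone \in \mathcal{D}_a$ follows from $\markovone \in \pi \subseteq \mathcal{D}_a$, using that a $\pi$-system contains the whole space. For closure under proper differences, if $A \subseteq B$ are both in $\mathcal{D}_a$ then $h_a(s, B \setminus A) = h_a(s,B) - h_a(s,A)$ is a difference of two $\sigma(\pi)$-measurable functions, hence $\sigma(\pi)$-measurable; in particular, taking $B = \markovone$, complements stay in $\mathcal{D}_a$. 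For closure under countable disjoint unions, countable additivity of each measure $h_a(s,\cdot)$ gives $h_a(s, \bigcup_n A_n) = \sum_n h_a(s,A_n)$ for pairwise disjoint $A_n$, and a pointwise countable sum of $\sigma(\pi)$-measurable functions is again $\sigma(\pi)$-measurable.

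With $\mathcal{D}_a$ a $\lambda$-system containing the $\pi$-system $\pi$, Dynkin's theorem yields $\sigma(\pi) \subseteq \mathcal{D}_a$. Since $a$ was arbitrary, for every $a \in \actsone$ and every $A \in \sigma(\pi)$ the map $s \mapsto h_a(s,A)$ is $\sigma(\pi)$-measurable, so $(\markovone, \sigma(\pi), \{h_a \mid a \in \actsone\})$ is a LMP and thus $\sigma(\pi)$ is an event bisimulation. The only real subtlety I anticipate is the bookkeeping forced by working with \emph{sub}-probability rather than probability kernels: because the total mass $h_a(s,\markovone)$ need not be constant in $s$, the complement step genuinely relies on having already placed $\markovone$ in $\mathcal{D}_a$, which is why it is convenient to exploit that $\markovone \in \pi$ from the $\pi$-system axioms; similarly, one should be careful that the difference $h_a(s,\markovone) - h_a(s,A)$ is well defined and nonnegative, which holds precisely because $h_a(s,\cdot)$ is a measure and $A \subseteq \markovone$.
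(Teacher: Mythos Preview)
Your argument is correct and is precisely the standard proof of this lemma: the paper does not give its own proof but simply cites \cite{danos2005almost}, where the argument is exactly the Dynkin $\pi$--$\lambda$ route you outline (define the good-set collection $\mathcal{D}_a$, check it is a $\lambda$-system, invoke the $\pi$--$\lambda$ theorem). Your reading of ``stable'' matches the usage in the paper---in the proof of Lemma~\ref{stable_pi_system} the authors verify $\act{a}{q}{A} \in \pi$ for $A \in \pi$, which is even stronger than the $\act{a}{q}{A} \in \sigma(\pi)$ you need---and your handling of the sub-probability subtlety via $\markovone \in \pi$ is exactly the right bookkeeping.
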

	\begin{definition}
		We note $\termthree_r = \abs \varone {\ifth{(\varone==r)}{1}{0}}$. 
		We consider the following $\pi$-system:
		\begin{align*}
			\pi & := \left\{\markovone, \emptyset, \{\termone, \termtwo\}, 
			\{1\},\{0\} \right\}  \cup \{A_r \mid r \in \RR\}  \cup \{B_I \mid 
			I \subseteq \RR \text{ finite set}\} \\
			A_r & := \{ \termthree_r \} \\
			B_I & := \{\abs \varone 0\} \cup \{\termthree_r \mid r \not \in I \}
		\end{align*}
		We take $\Lambda$ as the $\sigma$-algebra generated by $\pi$. 
	\end{definition}
	\begin{lemma}\label{stable_pi_system}
		$\pi$ is stable.
	\end{lemma}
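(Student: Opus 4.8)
The plan is to verify stability of $\pi$ by a direct, finite case analysis. Recall that a $\pi$-system on the states of an LMP is \emph{stable} exactly when, for every generator $A$, every label $a$, and every rational $q\in[0,1)$, the set $\act{a}{q}{A}=\{s\mid h_a(s,A)>q\}$ already lies in $\Lambda=\sigma(\pi)$; this is precisely the hypothesis under which Lemma~\ref{lemma:pi_system_event_bisim} upgrades $\Lambda$ to an event bisimulation (which is in turn what is needed to conclude $\termone\sim_{\event}\termtwo$). So I would run over the generators of $\pi$ --- namely $\markovone$, $\emptyset$, $\{\termone,\termtwo\}$, $\{1\}$, $\{0\}$, $A_r$ ($r\in\RR$), and $B_I$ ($I\subseteq\RR$ finite, including $I=\emptyset$) --- and for each compute $\act{a}{q}{A}$ for every label $a$ and every threshold $q$.

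Two reductions make this manageable. First, the labels can be pruned. Restricting to the sub-LMP generated by $\termone$ and $\termtwo$, the only states that occur are the two programs, the values $\termthree_r$ (for $r\in[0,1]$, coming from $\sem{\termone}$), the value $\abs y 0$, the states obtained by feeding a real to $\termthree_r$ or $\abs y 0$, and the two ground outcomes $1$ and $0$; consequently only the type labels, the evaluation label $\evalact$, the argument-passing labels $\makereal s$ ($s\in\RR$), and the comparison labels $\overset{\text{?}}{\leq}q$ ever act non-trivially, and each of these has a fully explicit effect that I would tabulate. Second, one observes that $\Lambda$ already sorts states by type: since $\pi$ contains $\{\termone,\termtwo\}$, the ground atoms $\{0\},\{1\}$, and --- crucially --- $B_\emptyset=\{\abs y 0\}\cup\{\termthree_r\mid r\in\RR\}$, the $\sigma$-algebra $\Lambda$ separates the arrow-type values (exactly $B_\emptyset$ on this fragment), the real values ($\{0\}\cup\{1\}$), and everything else. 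Hence all preimages produced by a type label, as well as the preimages $\act{a}{q}{\markovone}$, land in $\Lambda$ for free.

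What is left is the content of the lemma: the preimages of $A_r$, of $B_I$, and of $\{0\},\{1\}$ under $\evalact$ and under the argument-passing actions. Here the design of $\pi$ is exactly what makes things close up: $A_r$ is the set of functions answering $1$ to the argument $r$, so the argument-passing preimage of $\{1\}$ is a singleton $A_s$, and $B_I$ is the set of functions answering $0$ to every argument in $I$ --- with $\abs y 0$ glued in because it answers $0$ everywhere --- so the argument-passing preimage of $\{0\}$ is some $B_{\{s\}}$. On the $\evalact$ side one uses two facts about $\sem{\termone}$: it is a \emph{continuous} measure, hence gives mass $0$ to each individual value $\termthree_r$, which forces $\act{\evalact}{q}{A_r}=\emptyset$; and it is concentrated on $\{\termthree_r\mid r\in[0,1]\}$, so $\sem{\termone}(B_I)=1$ for every finite $I$, which makes $\act{\evalact}{q}{B_I}$ collapse to $\{\termone,\termtwo\}$ rather than requiring one to isolate $\{\abs y 0\}$ --- a set that is \emph{not} in $\Lambda$, being only an uncountable intersection of the $B_I$'s. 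I expect the genuine obstacle to be exactly this last round of bookkeeping: the $\sigma$-algebra $\Lambda$ is deliberately trivial everywhere except on the block of $\termthree_r$'s, so one must check, transition by transition, that no preimage $\act{a}{q}{A}$ ever picks out a non-trivial subset of the part of the state space where $\Lambda$ is coarse --- in particular that the intermediate states arising from argument passing are handled correctly by whatever single-step transitions $\lmponelambda$ uses there.
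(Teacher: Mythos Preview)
Your proposal is correct and follows exactly the same direct case-analysis approach as the paper: the paper's proof simply lists the four key computations $\act{\evalact}{q}{A_r}=\emptyset$, $\act{\evalact}{q}{B_I}=\{\termone,\termtwo\}$, $\act{r}{q}{\{0\}}=B_{\{r\}}$, and $\act{r}{q}{\{1\}}=A_r$ --- precisely the ones you single out --- and leaves the remaining cases implicit. Your explanation of \emph{why} these hold (nullity of $\sem{\termone}$ on singletons, full mass on each $B_I$) and your caveat about intermediate application states are in fact more careful than the paper's own sketch.
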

	\begin{proof}
		We need to check that for every $A \in \Lambda$, $a \in \actsone$,
		and $q \in [0,1[$, it holds that $ \act a q A \in \pi$. Here, we
		give some examples of $\act a q A$, And we can check that they
		are indeed in $\pi$:
		\begin{align*}
			& \act{eval}{q}{A_r} = \emptyset; \\
			& \act{eval}{q}{B_I} = \{\termone, \termtwo\};  \\
			& \act{r}{q}{\{0\}} = B_{\{r\}}; \\
			& \act{r}{q}{\{1\}}= A_r  \ldots
		\end{align*}
		\TODO{Is this a proper enumeration of all the sets in the
			form $\act a q A$?}
	\end{proof}
	\begin{proposition}
		The states $\termone$ and $\termtwo$ are \emph{event bisimilar}.
	\end{proposition}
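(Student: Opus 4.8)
The plan is to read the proposition off the apparatus just assembled. By Lemma~\ref{stable_pi_system}, $\pi$ is a stable $\pi$-system on $\markovone$: it contains $\markovone$, and an inspection of the (essentially pairwise disjoint) list of its members shows it is closed under finite intersections, the only non-immediate case being $B_I \cap B_J = B_{I \cup J}$. Each generator in $\pi$ is moreover a Borel subset of $\markovone$ --- the terms $\termthree_r$ form a Borel copy of $\RR$ inside $\markovone$, so $B_I$, being co-finite in that copy together with the extra point $\abs\varone 0$, is Borel --- so $\sigma(\pi)$ is a sub-$\sigma$-algebra of $\Sigma$. Lemma~\ref{lemma:pi_system_event_bisim} then yields that $\Lambda = \sigma(\pi)$ is an event bisimulation on $\lmponelambda$.

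It remains to check that $\termone$ and $\termtwo$ are not separated by $\Lambda$, i.e.\ that $(\termone,\termtwo) \in \rela\Lambda$. I would use the \enquote{good-sets} principle: the family $\mathcal{G} = \{A \in \Sigma \mid \termone \in A \Leftrightarrow \termtwo \in A\}$ contains $\markovone$ and is closed under complements and countable unions, hence is a $\sigma$-algebra, so it suffices to verify $\pi \subseteq \mathcal{G}$, whence $\Lambda = \sigma(\pi) \subseteq \mathcal{G}$. This verification is immediate from the definition of $\pi$: both $\termone$ and $\termtwo$ lie in $\markovone$ and in $\{\termone,\termtwo\}$, and neither lies in $\emptyset$, $\{0\}$, $\{1\}$, any $A_r = \{\termthree_r\}$, or any $B_I = \{\abs\varone 0\} \cup \{\termthree_r \mid r \notin I\}$, since $\termone$ and $\termtwo$ are the two let-terms of arrow type from Example~\ref{ex:counter_example_state}, syntactically distinct from every numeral and from every value of the forms $\termthree_r$ and $\abs\varone 0$. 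Thus $(\termone,\termtwo) \in \rela\Lambda$, and since $\Lambda$ is an event bisimulation, $\termone \sim_{\event} \termtwo$.

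The conceptually substantial part of the whole development is not this corollary but Lemma~\ref{stable_pi_system}: one must exhibit, for each $A \in \pi$, each action $a \in \actsone$, and each threshold $q \in [0,1)$, the set $\act a q A = \{s \mid h_a(s,A) > q\}$ and confirm it again lies in $\pi$. The sets $A_r$ and $B_I$ are put into $\pi$ precisely to absorb these preimages --- e.g.\ $\act{a}{q}{\{1\}} = A_r$ and $\act{a}{q}{\{0\}} = B_{\{r\}}$ for the argument-passing action $a = r$ --- which is the formal counterpart of the intuition that an event-bisimulation observer interacting with a returned function $\termthree_r$ can only probe it by supplying some $r'$ and seeing whether $r' = r$, and that no such finite protocol, fixed in advance, detects the sampled key with positive probability. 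The one point still needing care in that lemma is that the enumeration of the sets $\act a q A$ is exhaustive over all $a$ and $q$.
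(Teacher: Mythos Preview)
Your proof is correct and follows the same approach as the paper, which simply states that the result is obtained by combining Lemma~\ref{stable_pi_system} and Lemma~\ref{lemma:pi_system_event_bisim}. You have usefully filled in the details the paper leaves implicit: that $\pi$ is genuinely a $\pi$-system of Borel sets, and that $(\termone,\termtwo)\in\rela{\sigma(\pi)}$ via a good-sets argument. One minor refinement: the members of $\pi$ are not quite pairwise disjoint, since $A_r \cap B_I = A_r$ when $r\notin I$; but this case also lands in $\pi$, so your conclusion stands.
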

	\begin{proof}
		It can be directly obtained by combining Lemma~\ref{stable_pi_system}
		and Lemma~\ref{lemma:pi_system_event_bisim}. 
	\end{proof}
}
Using the testing characterization from 
Theorem~\ref{prop:logical_caracterisation}, we can in fact show that 
context equivalence is contained in event applicative bisimilarity. 
\begin{proposition}
	${\equiv^{\mathit{ctx}}} \subseteq {\sim_{\event}^{\lmponelambda}}$.
\end{proposition}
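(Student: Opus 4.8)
The plan is to exploit the testing characterisation of event bisimilarity. By Theorem~\ref{prop:logical_caracterisation} we have ${\sim_{\event}^{\lmponelambda}} = {\sim_{\test}^{\lmponelambda}}$, so it suffices to prove ${\equiv^{\mathit{ctx}}} \subseteq {\sim_{\test}^{\lmponelambda}}$; that is, I would show that two contextually equivalent programs (or values) of the same type pass every test $\testone \in \tests$ with the same probability. The natural way to do this is to encode each test as an observable context, so that the success probability of the test can be read off from the output distribution of that context.

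Concretely, I would proceed by induction on the test $\testone$, building for every type $\typeone$ a pair of observable contexts $\ctxone^{\valset,\typeone}_{\testone},\ \ctxone^{\terms,\typeone}_{\testone} \in \contexts{\typeone}$ satisfying the invariant
\begin{align*}
\probsucc{\testone}{(\valone,\typeone)} &= \sem{\ctxone^{\valset,\typeone}_{\testone}\holefill{\valone}}(\{1\}), &
\support{\sem{\ctxone^{\valset,\typeone}_{\testone}\holefill{\valone}}} &\subseteq [0,1], \\
\probsucc{\testone}{(\termone,\typeone)} &= \sem{\ctxone^{\terms,\typeone}_{\testone}\holefill{\termone}}(\{1\}), &
\support{\sem{\ctxone^{\terms,\typeone}_{\testone}\holefill{\termone}}} &\subseteq [0,1],
\end{align*}
for all $\valone \in \valset_{\typeone}$ and $\termone \in \terms_{\typeone}$. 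This is exactly the construction already carried out for the inclusion ${\equiv_{\mathit{ctx}}} \subseteq {\sim_{\test}}$ inside the proof of Theorem~\ref{theorem:full_abstraction_lambdac}, itself adapted from the discrete case~\cite{CDL14,DBLP:conf/birthday/CrubilleLSV15}: $\omega$ yields a context that discards the hole and returns $\makereal 1$; $\testone_1 \wedge \testone_2$ yields a context that evaluates the state, duplicates the resulting value, runs $\ctxone_{\testone_1}$ and $\ctxone_{\testone_2}$ on the two copies, and multiplies the outcomes; $\evalact \cdot \testone'$ uses $\seqlet{\hole}{\ctxone^{\valset,\typeone}_{\testone'}\holefill{\varone}}$ on terms and $\ctxone^{\valset,\typeone}_{\testone'}$ on values; $\valtwo \cdot \testone'$ applies $\valtwo$ to the hole when $\typeone$ is an arrow type (and is trivial otherwise); the modality cases $\text{case}(\hat{\imath}) \cdot \testone'$ and $\unboxact \cdot \testone'$ pattern-match on the hole; and $\overset{\text{?}}{\leq} q \cdot \testone'$ is realised at type $\typereal$ through the comparison operator $\mathit{op}_{\leq}$ and $\sample$ exactly as in that proof, and by $\makereal 0$ at all other types. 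The only thing to check when transplanting this construction from $\lambdacont$ to $\lambdap$ is that each primitive it uses --- the comparison operator $\mathit{op}_{\leq} \in \mathcal C_2$ and real multiplication --- is a primitive function of $\lambdap$, and that $\lambdap$ values may be used verbatim as subterms; crucially, the argument never uses continuity of the primitive functions.

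Given the invariant, the conclusion is routine: if $\termone \equiv^{\mathit{ctx}} \termtwo$ in $\terms_{\typeone}$, then for every test $\testone$ we have $\sem{\ctxone^{\terms,\typeone}_{\testone}\holefill{\termone}} = \sem{\ctxone^{\terms,\typeone}_{\testone}\holefill{\termtwo}}$ since $\ctxone^{\terms,\typeone}_{\testone} \in \contexts{\typeone}$, so these sub-probability measures agree on the Borel set $\{1\}$, whence $\probsucc{\testone}{(\termone,\typeone)} = \probsucc{\testone}{(\termtwo,\typeone)}$; thus $(\termone,\typeone) \sim_{\test}^{\lmponelambda} (\termtwo,\typeone)$, and the same argument with value contexts handles contextually equivalent values. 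Combining this with ${\sim_{\test}^{\lmponelambda}} = {\sim_{\event}^{\lmponelambda}}$ finishes the proof. I do not expect a genuine obstacle here beyond bookkeeping; the only delicate point --- and the reason the statement holds for $\lambdap$ and not merely for $\lambdacont$ --- is verifying that the whole context construction is indifferent to (dis)continuity of primitives, and that the contexts can always isolate the distinguished outcome $\{1\}$ (i.e.\ they really return values in $\{0,1\}$, modulo the mass lost to divergence).
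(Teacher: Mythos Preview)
Your proposal is correct and follows essentially the same approach as the paper: invoke the testing characterisation ${\sim_{\event}^{\lmponelambda}} = {\sim_{\test}^{\lmponelambda}}$ from Theorem~\ref{prop:logical_caracterisation}, then reuse the context-per-test construction from the proof of Theorem~\ref{theorem:full_abstraction_lambdac} verbatim, observing that nothing in that construction requires continuity of the primitives. One small inaccuracy: the invariant only guarantees $\support{\sem{\ctxone_{\testone}\holefill{-}}}\subseteq[0,1]$, not $\{0,1\}$ (because $\mathit{op}_{\leq}$ need not be an indicator), but this is harmless since all you need is the mass at $\{1\}$.
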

\begin{proof}
Recall that  event bisimilarity is caracterised by tests, i.e. ${\sim_{\event}^{\lmponelambda}} = {\sim_{\test}^{\lmponelambda}}$. From there, the proof is done exactly as in the proof of Theorem~\eqref{theorem:full_abstraction_lambdac}, by building for every test a context that emulates it.
\end{proof}
The problem with event bisimilarity, at least when spelled out for $\Lambda_\probb$, is that you 
simply become too coarse, thus unsound for context equivalence. The problem 
here lies in the fact that programs can do 
strictly more than what an \enquote{adversary} for event bisimilarity does: indeed they 
can somehow perform \emph{man-in-the-middle attacks}; by contrast once the adversary from event 
bisimilarity has created two copies of a process, it cannot make them \emph{interact} with each other. This is exploited in the 
following example.

\begin{example}\label{ex:counter_example_soundness}
	Let us consider the following two programs:
	\begin{align*}
		\termone &: \quad \seq{\sample}{\abs y{((\ifth {x=y}{1}{0})\oplus x )}} \\
		\termtwo &: \quad \seq{\sample}{\abs y({0}\oplus x )},
	\end{align*}
	where $\oplus$ is as in Example~\ref{example:distributions}.
	The two programs first sample a real number $x$, to be thought of as a 
	private key. Then, with probability $\frac 1 2$, they behave as the 
	corresponding process of Example~\ref{ex:counter_example_state}, but with 
	probability $\frac 1 2$ they just \emph{reveal} their secret key. It means 
	that a context, seen as an adversary, can perform the following attack: 
	first, it \emph{initiates} a session, which means that the process needs to 
	choose its secret key. Then the adversary creates two copies of the underlying process: 
	from there, it obtains with probability $\frac 1 4$ the original process of 
	Example~\ref{ex:counter_example_state}, and the secret key, and it can 
	conclude its attack by passing the key to the process. Indeed, 
	it turns out that $\termone$ and $\termtwo$ are not context 
	equivalent, and we can prove it by building a context that distinguishes them. 
	\shortv{We take $C = (\abs z {z (z 1)})[\,]$ and observe 
	(a formal proof for this can be found in~\cite{LV}) that: 
	$\sem{C[\termone]} = 
			\frac 1 2 \lambda + \frac 1 4 \dirac 1  + \frac 1 4 \dirac 0 
	$ and $\sem{C[\termtwo]} 
			= \frac 1 2 \lambda +  \frac 1 2 \dirac 0$,
                where $\lambda$ denotes the uniform distribution on 
                $[0,1]$.}\longv{
		We take $C = (\abs z {z (z 1)})[\,].$
		To compute $\sem{C[\termone]}$ and $\sem{C[\termtwo]}$, 
		we the intermediate functions $f_\termone, f_\termtwo : \RR \rightarrow 
		\valset$ defined thus:
		\begin{align*}
			f_\termone (r) &:= \abs y ((\ifth {r==y}{1}{0})\oplus r ) 
			\\
			f_\termtwo(r) &:=  \abs y (0 \oplus r).
		\end{align*}
		This way, we have 
		$\sem{\termone}(A) = \lambda\{r \mid f_\termone(r) \in A \}$ and 
		$\sem{\termtwo}(A) = \lambda\{r \mid f_{\termtwo}(r) \in A \}$,
		where we recall that $\lambda$ stands for the Lebesgue measure on 
		$[0,1]$. 
		For $\termthree \in \{\termone, \termtwo\}$, we thus have
		$\sem{C[\termthree]}(A) = \int \sem{f_\termthree (r) (f_\termthree (r) 
			1)} (A) \lambda(dr)$.
		We need now to compute $\sem{f_\termthree (r) (f_\termthree (r) 1)}$ 
		for 
		$\termthree \in \{\termone, \termtwo\}$: observe that the probabilistic 
		behavior here is purely \emph{discrete}:
		\begin{align*}
			\sem{f_\termone (r) 1} &= \frac 1 2 \dirac{\delta_1(r)} + \frac 1 2 
			\dirac r,  
			&\text{ }
			\sem{f_\termtwo (r) 1} &= \frac 1 2 \dirac{0} + \frac 1 2 \dirac r,
		\end{align*}
		with $\delta_1(r) = 1$ if $r=1$, and $\delta_1(r) = 0$ otherwise.
		From there, we can show that
		\begin{align*}
			\sem{f_\termone (r) (f_\termone (r) 1)} &= \frac 1 2 \dirac r + 
			\frac 1 4 \dirac 1 + \frac 1 4 \dirac{\delta_{0,1}(r)}\\
			\sem{f_\termtwo (r) (f_\termtwo (r) 1)} &= \frac 1 2 \dirac 0 + 
			\frac 1 2 \dirac r,
		\end{align*}
		so that we can compute
		\begin{align*}
			\sem{C[\termone]} &= 
			\frac 1 2 \lambda + \frac 1 4 \dirac 1  + \frac 1 4 \dirac 0 
			& \quad
			\sem{C[\termtwo]} 
			&= \frac 1 2 \lambda +  \frac 1 2 \dirac 0.
		\end{align*}
	} From there, it is enough to compose $C$ with a context $D$ that tests the 
	equality to $1$.
\end{example}

The programs $\termone, \termtwo$ of Example~\ref{ex:counter_example_soundness} 
are designed as a counter-example to the soundness of event bisimilarity. We 
have shown in Example~\ref{ex:counter_example_soundness} that they are not context \emph{equivalent}: now, we are going to show that they are event \emph{bisimilar}, 
using our approximation Lemma~\ref{th:event_bisim_approx}.
\begin{example}\label{ex:counter_example_soundness2}
	Let $\valsettwo$ be any countable set of values.  
	First, we define a restriction of the LMP $\lmponelambda$ from 
	Definition~\ref{def:lmplambdap}: we keep the same state space, but we 
	restrict the transitions to the ones labelled by a countable restriction of 
	$\actsone_{\Lambda_{\probb}}$, that depends on $\valsettwo$. More formally, 
	we take: 
    $$
    \lmpone^{\text{app}}_{\valsettwo} := (\markovone_{\Lambda_{\probb}}, 
    \actsone^{\text{app}}_{\valsettwo}, \{h_a \mid a \in   
    \actsone^{\text{app}}_{\valsettwo}\} ),
    $$
    where $\actsone^{\text{app}}_{\valsettwo}$ is the standard Borel space defined 
    by taking the restriction of 
    $\actsone_{\Lambda_{\probb}}$ to the set
    $\types \cup \valsettwo \cup 
    \{\evalact \}\cup \{ \overset{\text{?}}{\leq} q \mid q \in \QQ \}
  	\cup \{\text{case}(\hat{\imath}) \mid i \in I\} \cup\{ \unboxact\}$.
	We are going to show that 
	the programs $\termone$ and $\termtwo$ are bisimilar in each of the 
	$\lmpone^{\text{app}}_{\valsettwo}$ --- that have countable labels and 
	analytical state spaces, thus state bisimilarity and event bisimilarity 
	coincide there (by Theorem~\ref{prop:logical_caracterisation_state_c}). 
	From this point, we will be able to conclude by 
	Lemma~\ref{th:event_bisim_approx} that $\termone$ and $\termtwo$ are also 
	event bisimilar in $\lmponelambda$.
        
	We write $\relone_\valsettwo$ for the bisimilarity  on $\lmpone^{\text{app}}_{\valsettwo}$.
	We define: 
	$S= \{(\termone, \termtwo)\} \cup \{(\hat{f_\termone(r)}, 
	\hat{f_\termtwo(r)}) \mid r \in \RR, r\not \in \valsettwo \} \cup \{(r,r) 
	\mid r \in \mathbb \RR\},
	$
	where 
	$f_\termone, f_\termtwo : \RR \rightarrow \valset$ are defined thus:
	\begin{align*}
		f_\termone (r) &:= \abs y ((\ifth {r=y}{1}{0})\oplus r );
		&
		f_\termtwo(r) &:=  \abs y (0 \oplus r);
	\end{align*}
	and we show that the reflexive closure of $S$ --- that we note $\relone$ ---  
	is a bisimulation. That is: for each $(a,b) \in S$ and action $\alpha$, $a 
	\rightarrow _\alpha \mu_a$, $b \rightarrow_\alpha \mu_b$ implies that 
	$\mu_a \Gamma \relone \mu_b$.
	\begin{varitemize}
		\item Let $r \in \RR$ such that $r \not \in \valsettwo$, and $\alpha$ be a 
		\emph{relevant} action for $(f_\termone(r),f_{\termtwo}(r))$. It means 
		that $\alpha$ is an applicative action with a value of real type, so is 
		of the form $r' \in \RR$ with $r' \in \valsettwo$. Observe that in 
		particular, it means that $r' \neq r$. So we need to show that  
		$\sem{f_{\termone}(r)r'} \Gamma \relone \sem{f_{\termtwo}(r)r'}$. Since 
		we know that $r \neq r'$, we can see
		$\sem{f_{\termone}(r)r'} = \sem{f_{\termtwo}(r)r'} = \frac 1 2 \dirac 
		0 + \frac 1 2 \dirac r,$
		which allows us---since $\relone$ is reflexive---to immediately conclude 
		that
		$\sem{f_{\termone}(r)r'} \Gamma \relone \sem{f_{\termtwo}(r)r'}$.
		\item We look now at the pair $(\termone,\termtwo) \in S$. The only 
		relevant action there is the $\text{eval}$ action, so we can 
		reformulate our goal as:
		$\sem \termone \Gamma \relone \sem \termtwo$.
		To do that, we consider any $A \in \Sigma^{\text{app}}$ which is 
		$\relone$-closed. We need to show that $\sem{\termone}(A) = 
		\sem{\termtwo}(A)$. First, looking at the operational semantics for 
		$\termone$ and $\termtwo$ that we have computed in 
		Example~\ref{ex:counter_example_soundness}, we see that (notice that 
		$\lambda(\valsettwo) = 0$):
		\shortv{
			\begin{align*}
				\sem \termone (A) 
				&= \lambda\{ r \in \RR\setminus{\valsettwo} \mid f_\termone(r) 
				\in A \};
				&
				\sem \termtwo (A) &= 
				\lambda\{ r \in \RR\setminus{\valsettwo} \mid f_\termtwo(r) \in A 
				\}.
			\end{align*}
		}
		\longv{
			\begin{align*}
				\sem \termone (A) 
				&= \lambda\{ r \in \RR \mid f_\termone(r) \in A \} 
				\\ 
				&= \lambda\{ r \in \RR\setminus{\valsettwo} \mid f_\termone(r) 
				\in A \} + \lambda\{ r \in \RR \cap {\valsettwo} \mid 
				f_\termone(r) \in A \}  
				\\ 
				&=  \lambda\{ r \in \RR\setminus{\valsettwo} \mid f_\termone(r) 
				\in A \} \quad \text{ since }\lambda(\valsettwo) = 0 \\ \, \text{ 
					and similarly:} 
				&\\       \sem \termtwo (A) &= \lambda\{ r \in \RR \mid 
				f_\termtwo(r) \in A \} = 
				\lambda\{ r \in \RR\setminus{\valsettwo} \mid f_\termtwo(r) \in A 
				\}
			\end{align*}
		}
		Recall that $A$ is $R$-closed: it means that for any $r \in \valsettwo$, 
		$f_\termone(r) \in A$ if and only if $f_\termtwo(r) \in A$, thus $\{ r 
		\in \RR\setminus{\valsettwo} \mid f_\termone(r) \in A \} = \{ r \in 
		\RR\setminus{\valsettwo} \mid f_\termtwo(r) \in A \}$, which ends the 
		proof.
	\end{varitemize}
\end{example}

\begin{theorem}
	${\sim_{\event}^{\lmponelambda}} \not \subseteq {\equiv^{\mathit{ctx}}}$. 
\end{theorem}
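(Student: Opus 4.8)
The plan is to combine the two explicit results already established for the programs $\termone,\termtwo$ of Example~\ref{ex:counter_example_soundness} and Example~\ref{ex:counter_example_soundness2}. In Example~\ref{ex:counter_example_soundness} we exhibited a context $C = (\abs z {z (z 1)})[\,]$ for which $\sem{C[\termone]} = \frac 1 2 \lambda + \frac 1 4 \dirac 1 + \frac 1 4 \dirac 0$ while $\sem{C[\termtwo]} = \frac 1 2 \lambda + \frac 1 2 \dirac 0$; post-composing with a context $D$ that tests equality to $1$ yields $\sem{D[C[\termone]]} \neq \sem{D[C[\termtwo]]}$, so $\termone \not\equiv^{\mathit{ctx}} \termtwo$. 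On the other hand, Example~\ref{ex:counter_example_soundness2} shows that $\termone$ and $\termtwo$ are event bisimilar in $\lmponelambda$: they are state bisimilar (hence, by Theorem~\ref{prop:logical_caracterisation_state_c}, event bisimilar) in every countably-labelled restriction $\lmpone^{\text{app}}_{\valsettwo}$, and Lemma~\ref{th:event_bisim_approx}(1) lifts this to event bisimilarity in $\lmponelambda$ itself.

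Concretely, I would structure the proof as follows. First I would recall the witnesses $\termone,\termtwo$ from Example~\ref{ex:counter_example_soundness}. Second, I would invoke Example~\ref{ex:counter_example_soundness} to conclude $\termone \not\equiv^{\mathit{ctx}} \termtwo$: the distinguishing observable context is $D[C[\cdot]]$ where $C$ and the equality test $D$ are as described there, and the computation of $\sem{C[\termone]}$, $\sem{C[\termtwo]}$ given in that example (tracking the purely discrete probabilistic behaviour after the initial sampling) makes the difference manifest. Third, I would invoke Example~\ref{ex:counter_example_soundness2} to conclude $(\termone,\typeone) \sim_{\event}^{\lmponelambda} (\termtwo,\typeone)$, citing the relation $S$ defined there, the verification that its reflexive closure is a bisimulation on each $\lmpone^{\text{app}}_{\valsettwo}$ (the key point being that $\lambda(\valsettwo) = 0$ for countable $\valsettwo$, so the "leaked key" branch carries no mass over the countably many testable values), and the approximation Lemma~\ref{th:event_bisim_approx}. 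Putting these together gives a pair witnessing ${\sim_{\event}^{\lmponelambda}} \not\subseteq {\equiv^{\mathit{ctx}}}$.

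The genuine mathematical content has already been dispatched in the two examples, so the theorem's proof is essentially an assembly step; the only subtlety worth flagging is making sure the two examples really concern the same pair of programs and the same LMP, and that the type annotations line up (both $\termone$ and $\termtwo$ have the same closed type, and the distinguishing context is an \emph{observable} context in the sense of Definition~\ref{def:context_equivalence}, i.e. lands in $\typereal$ after the equality test). The part that a careful reader will want spelled out is the verification in Example~\ref{ex:counter_example_soundness2} that $S$'s reflexive closure is a state bisimulation uniformly in $\valsettwo$ — in particular that for a real $r \notin \valsettwo$ the only relevant applicative actions feed values $r' \in \valsettwo$ with $r' \neq r$, so that $\sem{f_{\termone}(r)\,r'} = \sem{f_{\termtwo}(r)\,r'} = \frac 1 2 \dirac 0 + \frac 1 2 \dirac r$, and that for the $\mathrm{eval}$ action on $(\termone,\termtwo)$ every $\relone$-closed measurable set $A$ satisfies $\sem\termone(A) = \sem\termtwo(A)$ because $\lambda(\valsettwo)=0$ and $A$ being $\relone$-closed forces $f_\termone(r) \in A \iff f_\termtwo(r) \in A$ for $r \in \valsettwo$. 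I expect this bookkeeping, rather than any deep new argument, to be the main thing to get right.
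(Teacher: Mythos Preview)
Your proposal is correct and matches the paper's own proof, which simply cites Example~\ref{ex:counter_example_soundness}, Example~\ref{ex:counter_example_soundness2}, and Lemma~\ref{th:event_bisim_approx}. One small slip in your final sentence: the $\relone$-closedness of $A$ gives $f_\termone(r)\in A \iff f_\termtwo(r)\in A$ for $r \notin \valsettwo$ (those are the pairs in $S$), not for $r \in \valsettwo$; this is precisely what makes the two sets $\{r\in\RR\setminus\valsettwo\mid f_\termone(r)\in A\}$ and $\{r\in\RR\setminus\valsettwo\mid f_\termtwo(r)\in A\}$ coincide.
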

\begin{proof}
	Directly from Example~\ref{ex:counter_example_soundness}, 
	Example~\ref{ex:counter_example_soundness2}, and our approximation 
	Lemma~\ref{th:event_bisim_approx}.
\end{proof}

\subsection{On Full Abstraction for State Bisimilarity}
\label{subsect:on-full-abstraction-for-state-bisimilarity}


Unfortunately, $\lmponelambda$ is \emph{not} one among the LMPs to which 
the correspondence results from Section~\ref{sect:equivalences_on_lmps} can be 
applied. Indeed, as we have just proved, logical equivalence, testing 
equivalence, and event bisimilarity are in fact \emph{not sound}. 
Completeness of applicative bisimilarity was left as an open question 
in the work by \citet{LagoG19}. We still cannot give a definite answer to that 
question, but some further light on the subject can indeed be given.

\begin{example}\label{example:state_sim_fragment}
Consider, once again, the programs $\termone$ and $\termtwo$ from 
Example~\ref{ex:counter_example_state}. The relevant fragment of the underlying 
LMPs is in Figure~\ref{fig:counter_example}, call it $\lmpone_{\mid}$.   
\begin{figure}
	\centering
	\scalebox{0.6}{
		\begin{tikzpicture}
			\node (M)[draw, rounded corners=5pt] at (0,1){$\termone$};
			\node (N)[draw, rounded corners=5pt] at (7,1){$\termtwo$};
			\node (N2)[draw, rounded corners = 5pt] at (7,-2){$\abs \varone 0$};
			\node (M2a)[draw, rounded corners = 5pt, text width=2.7cm] at 
			(-3,-2)
			{$\abs y {} {\textbf{if } {(y=r)}}$ \\ $\textbf{then } 1 \textbf{ 
			else } 0$};
			\node (M2c)[ minimum height=1cm] at (- 0.5,-2){\ldots};
			\node (M2b)[draw, rounded corners = 5pt, text width=2.7cm] at (2,-2)
			{$\abs y {} {\textbf{if } {(y=r')}}$ \\ $\textbf{then } 1 \textbf{ 
			else } 0$};
			\node at (4.1,-2){\ldots};
			\node (aux) at (0,-0.5){};
			\draw (M)--(aux) node[near end, left]{eval};
			\draw[->] (aux)--(M2a);
			\draw[->] (aux)-- (M2b);
			\draw[->] (aux)-- (M2c);
			\draw[->] (N)--(N2) node[midway, right] {eval};
			\node (zero)[draw, rounded corners = 5pt] at (3.5,-5){$0$};
			\node (un)[draw, rounded corners = 5pt] at (-1,-5){$1$};
			\draw[->] (N2)--(zero) node[midway, right] {$r$};
			\draw[->, bend left] (N2) to  node[midway, right] {$r'$} (zero);
			\draw[->, bend right] (N2) to node[midway, right] {$r''$}(zero);
			\draw[->, color=red] (M2a)--(un) node[midway, left] {$r$};
			\draw[->, color=red] (M2a) to  node[midway, right] {$r'$} (zero);
			\draw[->, bend right, color=red] (M2a) to node[midway, right] 
			{$r''$}(zero);
			
			\draw[->, color=blue] (M2b)--(un) node[midway, left] {$r'$};
			\draw[->, color=blue] (M2b) to  node[midway, right] {$r$} (zero);
			\draw[->, bend right, color=blue] (M2b) to node[midway, right] 
			{$r''$}(zero);
			\path(zero) edge [loop below] node {$=0$} (zero);
			\path(un) edge [loop below] node {$=1$} (un);
	\end{tikzpicture}}
	\caption{The programs $\termone, \termtwo$: it holds that $\termone \not 
		\sim_{\statet} 
		\termtwo$, $\termone \equiv^{\mathit{ctx}} \termtwo$, $\termone 
		\sim_{\event} 
		\termtwo$.}\label{fig:counter_example}
\end{figure}
Observe that $\lmpone_{\mid}$ is a \emph{stable fragment} of $\lmponelambda$, in the sense that from every state $s$ in this fragment, the probability to end up again in this fragment after doing any action $a$ is $1$.
Clearly, this is only a fragment of the whole applicative LMP, but \emph{in 
this fragment}, it is easy to show that $\termone$ and $\termtwo$ are not 
bisimilar. Let $\relone$ be a bisimulation. We can first observe easily that 
$(1,0) \not \in\relone$. From there, we can see that for each state of the form:
$$
\termthree_r: \quad \abs y {} {\mathbf{if}\  {(y = r)}}\ \mathbf{then}\ 1\
\mathbf{else}\ 0,
$$
it holds that $(\termthree_r, \abs \varone 0)\not \in \relone$: that is 
because the action $a= r$ goes to $1$ with 	probability $1$ when starting from 
the state $\termthree_r$, and to 	$0$ with probability $0$ when starting 
from $\abs \varone 	0$. Looking at Figure~\ref{fig:counter_example}, we see 
that there 	are actually no other states in $\lmpone_{\mid}$ such that 
$(v,\abs 	\varone 0) \in \relone$. As a corollary, the set $\{\abs 
\varone 	0\}$ is $\relone$-closed. Since moreover it is measureable, it 
means 	that $\{ \abs \varone 0 \} \in \Sigma(\relone)$.  From there, we 
end 	the proof by contradiction. Suppose that $\termone \relone 	\termtwo$. 
We should have $h_{\evalact}(\termone, \{\abs \varone 0\}) = 	
h_{\mathit{eval}}(\termtwo, \{\abs \varone 0\})$, but it is not the case 	
since $h_{\evalact}(\termone, \{\abs \varone 0\}) = 0$, while 	
$h_{\evalact}(\termtwo, \{\abs \varone 0\})=1$.

It should be noted that to show 
that $\{\abs \varone 0\}$ is $\relone$-closed, we implicitely use the fact that 
there are no other states in the LMP, which is not the case if we consider the
whole applicative LMP instead of just a fragment. In order to adapt the proof for the whole LMP, we would need to show the existence of a $\relone$-closed measurable set $X$ that contains $\lambda x.0$, and not more that a negligible number of the $\termthree_r$.
 We were however not able to show or disprove the existence of such a set $X$. 
 A natural choice for $X$ could seem to be the $\relone$-equivalence class of 
 $\abs \varone 0$ instead of simply $\{\abs \varone 0\}$. However, while this 
 set is indeed $\relone$-closed, it is however potentially not measurable. We 
 can indeed show the existence of primitive functions which makes the 
 equivalence class of $\{\abs \varone 0\}$ non measurable \shortv{(the proof, 
 that can be found in the long version \cite{LV}, use the same primitive 
 function as in 
 Proposition~\ref{prop:contextual-equivalence-is-not-measurable})}. 
\end{example}

It would be intuitively appealing to be able to say that whenever two states are not bisimilar in a stable fragment, as in Example~\ref{example:state_sim_fragment}, then they are also not bisimilar in the whole LMP. We call this property a \emph{locality property}: it formalizes the philosophical requirement that whenever two processes in a transition system are not equivalent, this should stay true when we add to the global system some other processes that do not interact at all with the original processes. From a more practical point of view, locality is appealing too: a non-local relation on a LMP is indeed much less usable than a local one, since no matter how simple are the processes we are really interested in, we are forced to consider the whole LMP with an uncountable state space.

We can see, using its logical characterization, that this locality property holds for 
$\sim_{\event}$. Indeed, whether a formula $\phi$ holds for some state $s$ does not depend on the behaviour of states independent from $s$. By contrast, we are not able to show locality in general for $\sim_{\statet}$ when both the states and the action space are standard Borel; crucially we are not able to show it in the special case of the LMP $\lmponelambda$. We have not been able either to find a concrete proof that locality does not hold for $\lmponelambda$. The crux of the matter, as illustrated in Example~\ref{example:state_sim_fragment}, is the definition of the relator $\relator$ from Definition~\ref{def:relation-lifting}. As mentioned in Definition~\ref{def:relator_two}, there exists an alternative notion of relator $\relatortwo$ based on \emph{couplings}, that coincide with $\relator$ in the restricted case of Borel relations 
(cf. Theorem~\ref{thm:theta-equal-gamma}). While the relation we would obtain when replacing $\relator$ by $\relatortwo$ would be local, it is not know whether it is transitive, and thus an equivalence. Moreover this new state bisimilarity would not be complete, since it can be shown that it distinguishes the two programs of Example~\ref{example:state_sim_fragment}.

\subsection{Bounding Non-Measurability in Descriptive Set-Theory}

For context equivalence, we can go beyond our negative non-measurability result in Proposition~\ref{prop:contextual-equivalence-is-not-measurable}, by using tools from descriptive set theory~\cite{kechris} to bound \emph{how much} non-measurable $\ctxeq$ is. 
We can make the notion
of non-measurability more precise relying on the so-called projective 
hierarchy~\cite{kechris} --- 
the latter starting from Borel sets, and defining bigger and bigger classes 
using projections and complementation.  More precisely, we can show that 
$\ctxeq$ is in the class 
of \emph{co-analytic} sets, i.e. those that are defined as the complementary of an \emph{analytic} set.
All of that means that (the graph of) context equivalence is in the first layer of the projective hierarchy.
\shortv{(The proof of Lemma~\ref{lemma:graph_equiv_ctx_analytic}
strictly follows the one of Proposition~\ref{prop:eq_context_lambdacont_Borel} and it can be 
found the long version of this paper for details.)}

\begin{lemma}\label{lemma:graph_equiv_ctx_analytic}
 The graph of ${\ctxeq} \subseteq \programs \times \programs$ is a co-analytic set, and thus for every program $M$, the equivalence class of $M$ by $\ctxeq$ is a co-analytic set.
\end{lemma}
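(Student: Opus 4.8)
The plan is to mirror closely the proof of Proposition~\ref{prop:eq_context_lambdacont_Borel}, but without the continuity hypothesis on the primitive functions, and see exactly where Borelness degrades to co-analyticity. First I would reduce, as in that proof, the graph of $\ctxeq$ on all programs to a countable intersection of the type-indexed pieces $\ctxeq_\typeone$, using that a countable intersection (over types, and then over pre-contexts $C$ and rationals $q \in \QQ \cup \{+\infty\}$) of co-analytic sets is co-analytic. So it suffices to show that for each pre-context $C$ with $k$ holes and each $q$, the set
$$
S_{C,q} := \{(M,N) \mid \forall \vec r \in \RR^k,\ \sem{C[\vec r][M]}(]-\infty,q]) = \sem{C[\vec r][N]}(]-\infty,q])\}
$$
is co-analytic. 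As in Proposition~\ref{prop:eq_context_lambdacont_Borel}, I would write $S_{C,q}$ as the complement of
$$
\overline{S_{C,q}} = \{(M,N) \mid \exists \vec r \in \RR^k,\ (\vec r, M, N) \in A_{C,q}\},
$$
where $A_{C,q} = (h_{C,q})^{-1}(\{(a,b) \mid a \neq b\})$ with $h_{C,q} = f_C ; (g_q \times g_q)$, $f_C(\vec r, M, N) = (C[\vec r][M], C[\vec r][N])$ and $g_q(P) = \sem{P}(]-\infty,q])$.

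The key difference from the continuous case is that $g_q$ is now only measurable, not continuous, so $A_{C,q}$ is merely Borel rather than open. Hence $\overline{S_{C,q}}$, being the projection of a Borel set along $\RR^k$, is an analytic set (this is the classical fact that continuous --- in particular projection --- images of Borel, or even analytic, sets in Polish spaces are analytic; see \cite{kechris}). Here I must record that $f_C$ is still continuous (elementary syntactic substitution operations on the standard Borel space $\terms$ are continuous, as already noted in the excerpt) and that $g_q$ is Borel measurable (this is part of the standard fact that $P \mapsto \sem P$ is a measurable map, hence $P \mapsto \sem P(B)$ is measurable for each Borel $B$); so $A_{C,q}$ is a Borel subset of the Polish space $\RR^k \times \terms_\typeone \times \terms_\typeone$. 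Therefore $\overline{S_{C,q}}$ is analytic, $S_{C,q}$ is co-analytic, and $\ctxeq_\typeone = \bigcap_{C,q} S_{C,q}$ is co-analytic as a countable intersection of co-analytic sets; finally $\ctxeq = \bigcap_\typeone \ctxeq_\typeone$ is co-analytic. The statement about equivalence classes then follows immediately: for fixed $M$, the class $\{N \mid N \ctxeq M\}$ is the section of a co-analytic set at $M$, hence co-analytic (co-analytic sets are closed under taking sections, since analytic sets are).

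The main obstacle I anticipate is bookkeeping rather than a deep idea: I need to be careful that the countable-intersection reduction in the first step really does characterise $\ctxeq$, i.e.\ that quantifying over pre-contexts-with-holes and over rational thresholds $q$ genuinely captures "$\sem{C[\termone]} = \sem{C[\termtwo]}$ for all observable contexts $C$". The excerpt already invokes Example~\ref{ex:enough_primitives} and Lemma~\ref{lemma:expectation_as_observable_1} for exactly this equality in the proof of Proposition~\ref{prop:eq_context_lambdacont_Borel}, and the same argument applies verbatim here since it only uses that $\mathcal C$, $\mathcal B$ contain the comparison operator --- which is still available in $\lambdap$. The only other subtlety is making sure the measurability of $g_q$, $q \in \QQ \cup\{+\infty\}$, is invoked correctly (the half-lines $]-\infty,q]$ generate the Borel $\sigma$-algebra on $\RR$, so these suffice), and that the projection theorem is applied over a genuine Polish space, which is fine since $\RR^k \times \terms_\typeone \times \terms_\typeone$ is standard Borel (a countable coproduct of Polish spaces times $\RR^k$).
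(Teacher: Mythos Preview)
Your proposal is correct and follows essentially the same route as the paper's own proof: reduce to the type-indexed pieces, rewrite each piece as a countable intersection over pre-contexts and rational thresholds of sets $S_{C,q}$, express each $S_{C,q}$ as the complement of a projection of a Borel set $A_{C,q}$, and conclude co-analyticity. One small slip: the graph of $\ctxeq$ is a countable \emph{union} (not intersection) of the type-restricted graphs, since programs of distinct types are never related; this is harmless here because co-analytic sets are closed under countable unions as well.
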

\longv{
\begin{proof}
First, since the class of co-analytic sets is stable by countable union 
(see~\cite{kechris}, Chapter 4), we see that it is sufficient to show that for 
every type $\typeone$, $\equiv^{ctx}$ restricted to programs of type $\typeone$ 
is co-analytic. Moreover, it holds that:
\begin{align*}
\equiv^{ctx}_\typeone &= \bigcap_{\substack{C \text{ a pre-context} \\ \text{ with }  k\text{ holes }}} \bigcap_{q \in \QQ \cup{+\infty}} \quad \{(M,N) \mid \forall \vec r \in \RR^{k}, \, \sem {C[\vec r][M]}(]-\infty,q]) = \sem{C[\vec r][N]}(]-\infty,q)\}
 \end{align*}
We show the equation above in exactly the same way as for $\lambdacont$ in the 
proof of Proposition~\ref{prop:eq_context_lambdacont_Borel}.
The class of co-analytic sets is also closed by countable intersections, thus it is enough to show that for every pre-context $C$, and every $q \in \QQ \cup \{+ \infty\}$, the set $S_{C,q} :=\{(M,N) \mid \forall \vec r \in \RR^{k}, \, \sem {C[\vec r][M]}(]-\infty,q]) = \sem{C[\vec r][N]}(]-\infty,q)\}$ is co-analytic. As a first step, observe that we can rewrite $S_{C,q}$ as the complementary of:
\begin{equation}\label{eq:projection_from_borel}
\overline{S_{C,q}} := \{(M,N) \mid \exists r \in \RR^{k},(\vec r,M,N) \in A_{C,q} \},
\end{equation}
where $A_{C,q} = \{(r,M,N) \mid \sem {C[\vec r][M]}(]-\infty,q]) \neq \sem{C[\vec r][N]}(]-\infty,q)\} \subseteq \RR^{k} \times \terms_{\typeone} \times \terms_{\typeone}$. We can reformulate Equation~\ref{eq:projection_from_borel} as saying that $\overline{S_{C,q}}$ is the image of $A_{C,q}$ by the projection function from $\RR^{k} \times \terms_{\typeone} \times \terms_{\typeone}$ to $ \terms_{\typeone} \times \terms_{\typeone}$. In order to end the proof, it is enough from there to show that $A_{C,q}$ is a Borel set: indeed it is known that the projections of Borel sets are analytical (see~\cite{kechris}, Exercise 14.3), and as soon as we know that $\overline{S_{C,q}}$ is analytical, we can deduce that its complementary $S_{C,q}$ is co-analytical.
We now show that  $A_{C,q}$ is a Borel set: to do that we write it as $A_{C,q} = (h_C,q)^{-1}{(a,b) \in \RR^{2} \mid a \neq b}$, where $h_C,q : \RR^{k} \times \terms_{\typeone} \times \terms_{\typeone} \rightarrow \RR^2$ is the sequential composition $h_{C,q}: = f_C ; (g \times g)$, where  we define the functions $f_{C}$, $g_q$ as follows:
\begin{align*}
& f_{C}: (\vec r,M,N) \in  \RR^{k} \times \terms_{\typeone} \times \terms_{\typeone} \mapsto (C[\vec r][M],C[\vec r][N]) \in \terms_{\typereal} \times \terms_{\typereal}\\
& g_q:P \in \terms_{\typereal}  \mapsto \sem{P} (]-\infty,q]) \in \RR.
\end{align*}
We defined the $\sigma$-algebra on $\terms$ in such a way that the syntaxical operations used in the definition of the terms grammar are all measurable; as a consequence the function $f_C$, that can be decomposed using such elementary operations, is measurable. Moreover, since we have shown that the operational semantics $\sem{\,}$ is a Markov kernel, we obtain as an immediate consequence that $g_q$ is measureable. As a consequence, $h_{C,q}$ is also measurable, as composition and cartesian products of measurable functions. From there, and since $\{(a,b) \in \RR^{2} \mid a \neq b\}$ is a Borel set, we obtain that $A_{C,q}$ is a Borel set, so we can conclude.
\end{proof}
}
We can now ask what is the position of
the other notions of equivalence considered 
in this paper in the projective hierarchy: are their graphs measurable? Co-analytical? Are 
they in some levels of the projective hierarchy, at least? 
For event bisimilarity, we can show using its logical characterization that the situation is the same as for context equivalence: there exists a choice of primitive functions making (the graph of)
$\sim_{\event}$ not measurable. Nonetheless, $\sim_{\event}$ is co-analytic. 
For state bisimilarity, instead, things are more complicated: 
as for event bisimilarity and contextual equivalence, we can show that there exists a choice of primitive functions making $\sim_{\statet}$ not measurable; but in contrast 
to co-analiticity of $\sim_{\event}$, we are 
unable to show that $\sim_{\statet}$ stays in some level of the projective hierarchy.

\longv{
	\begin{remark}
		We could ask ourselves what is the mesurability status of the 
		equivalence class for $\sim_{\logic}$, when we consider the logical 
		equivalence on the LMPs we use for $\Lambda_P$ programs. On the one 
		hand, we can see that it is \emph{at least as complex} as contextual 
		equivalence, meaning that in the same way that we were able to build a 
		mesurable primitive function that make the graph of $\equiv^{ctx}$ 
		non-mesurable, the same primitive function makes the graph of 
		$\sim_{\logic}$ non mesurable. It is because,
		whenever two values $V,W$ are of type $\typereal \rightarrow 
		\typereal$, we can show that $V \equiv^{ctx} W$ if and only if $V \sim_{\logic} W$.
		Indeed, we obtain immediately the $\Leftarrow$ implication from 
		our completeness result Proposition~\ref{prop:logical_caracterisation}. In the other direction, we 
		see that for the type $\typereal \rightarrow \typereal$, the formulas 
		$\{\act{r}q{\act{=r'} 1{\top}} \mid r, r' \in \RR, q \in \QQ\}$ are 
		enough to completely caracterise context equivalence on values. As a 
		consequence, and since we have  shown that the graph of context 
		equivalence at this type $\typereal \rightarrow \typereal$ is not 
		Borel, the graph of $\sim_{\logic}$ is not a Borel set.
	\end{remark}
	\begin{lemma}
		Let  $(\markovone, \actsone, \{h_a \mid a \in \actsone\})$ be a 
		measurably labelled LMP, with moreover $\markovone$ and $\actsone$ 
		polish spaces equiped with their Borel $\sigma$-algebras.
		Then the graph of $\sim_{\logic}$ is a co-analytical subset of 
		$\programs \times \programs$.
	\end{lemma}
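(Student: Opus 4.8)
The plan is to mimic the structure of the proof of Lemma~\ref{lemma:graph_equiv_ctx_analytic}, exhibiting $\sim_{\logic}$ as the complement of a projection of a Borel set. By Theorem~\ref{prop:logical_caracterisation} we have $\sim_{\logic} = \sim_{\event} = \sim_{\test}$, so it is enough to bound $\sim_{\logic}$. First I would equip the set $\Logic$ of Definition~\ref{def:probabilistic_moda_logic} with a standard Borel structure: call the \emph{shape} of a formula the data of its syntax tree together with the rational $q\in\QQ\cap[0,1]$ attached to each modal node; there are countably many shapes, and for a fixed shape $\sigma$ with $k_\sigma$ modal nodes the set of formulas of that shape is canonically identified with $\actsone^{k_\sigma}$, a Polish space. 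Hence $\Logic$, as the countable coproduct $\bigsqcup_\sigma \actsone^{k_\sigma}$, is a standard Borel (in fact Polish) space. I would also recall the descriptive-set-theoretic fact already used in Lemma~\ref{lemma:graph_equiv_ctx_analytic} (see~\cite{kechris}, Ex.~14.3): the projection onto a factor of a Borel subset of a product of standard Borel spaces is analytic, so its complement is co-analytic. (One could also route through Lemma~\ref{th:event_bisim_approx}, writing $\sim_{\logic}$ as the intersection over countable $\actstwo\subseteq\actsone$ of the Borel relations $\sim_{\event}$ for the sublanguages of labels and pulling the universal quantifier out over the Polish space $\actsone^{\NN}$ of enumerations; that route meets the same obstruction discussed below.)

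The heart of the argument is the claim that the \emph{satisfaction set} $\mathsf{Sat} := \{(s,\phi)\in\markovone\times\Logic \mid s\in\sem\phi\}$ is Borel; equivalently, that for each shape $\sigma$ the set $\mathsf{Sat}_\sigma := \{(s,\vec a)\in\markovone\times\actsone^{k_\sigma} \mid s\in\sem{\sigma[\vec a]}\}$ is Borel (writing $\sigma[\vec a]$ for the formula of shape $\sigma$ with its modal labels filled in by $\vec a$). This I would prove by induction on $\sigma$, following the three clauses of Definition~\ref{def:probabilistic_moda_logic}: the case $\top$ gives $\markovone\times\actsone^{0}$; the case $\phi_1\wedge\phi_2$ gives the intersection of $\mathsf{Sat}_{\sigma_1}$ and $\mathsf{Sat}_{\sigma_2}$ pulled back along the (continuous) coordinate projections, hence Borel by induction. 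The only substantial case is $\sigma = \act{\bullet}{q}{\sigma'}$: here $(s,a,\vec a')\in\mathsf{Sat}_\sigma$ iff $h_a(s,\sem{\sigma'[\vec a']})>q$, and $\sem{\sigma'[\vec a']}$ is exactly the section $(\mathsf{Sat}_{\sigma'})_{\vec a'}$ of the Borel set $\mathsf{Sat}_{\sigma'}$. Thus it suffices to check that $(s,a,\vec a')\mapsto h_a\big(s,(\mathsf{Sat}_{\sigma'})_{\vec a'}\big)$ is Borel. This follows by composing two facts: (i) for any standard Borel $Z$ and Borel $B\subseteq\markovone\times Z$, the map $(z,\mu)\in Z\times\distrs{\markovone}\mapsto\mu(B_z)$ is Borel (the class of such $B$ contains measurable rectangles, since $\mu\mapsto\mu(A)$ is Borel and $\mathbf{1}_C$ is Borel, and is closed under complementation and countable disjoint unions, so by a Dynkin argument it is all of $\Sigma_{\markovone\times Z}$); and (ii) joint Borel measurability of the transition function, namely that $(a,s)\mapsto h_a(s,\cdot)\in\distrs{\markovone}$ is Borel.

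Point (ii) goes beyond the separate measurability in $a$ and in $s$ that the definition of a measurably labelled LMP provides, so it is where one must use the concrete structure of the LMPs of interest. For $\lmponelambda$ (and $\lmponelambdac$) I would argue as follows: the label space decomposes as a disjoint union in which every summand except the one of value-labels is countable or a singleton --- and on a countable label space joint measurability with $\markovone$ reduces to separate measurability --- while on the value-label summand the transition is deterministic, $h_\valone(s)=\dirac{(\valtwo \valone,\typetwo)}$ for $s=(\valtwo,\typeone\to\typetwo)$, and the syntactic application operation $(\valone,\valtwo)\mapsto \valtwo\valone$ is continuous on the relevant components of $\values\times\values$ (by the same continuity of elementary syntactic operations exploited in Section~\ref{sect:lambda} and Section~\ref{sect:completeness}); hence $(\valone,\valtwo)\mapsto\dirac{(\valtwo \valone,\typetwo)}$ is continuous into $\distrs{\markovone}$, and summing over the countably many components yields (ii).

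Granting (ii), $\mathsf{Sat}$ is Borel, and the conclusion is immediate: $(s,t)\notin\sim_{\logic}$ precisely when some $\phi$ separates $s$ and $t$, i.e. the complement of $\sim_{\logic}$ is the projection onto $\markovone\times\markovone$ of the set of triples $(s,t,\phi)$ with exactly one of $(s,\phi),(t,\phi)$ in $\mathsf{Sat}$ --- a Borel subset of $\markovone\times\markovone\times\Logic$, since $\mathsf{Sat}$ is. By the fact recalled in the first paragraph this projection is analytic, so $\sim_{\logic}$ is co-analytic; restricting to the standard Borel space $\programs\times\programs$ of pairs of programs (identified with the corresponding states of the LMP for $\lambdap$) gives the statement, and each $\sim_{\logic}$-equivalence class is then co-analytic as a section of a co-analytic set. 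The main obstacle is precisely the modal case of the Borelness claim, and within it the joint-measurability property (ii): for a genuinely arbitrary measurably labelled LMP it need not hold, so one either adds it to the hypotheses or, as here, extracts it from the continuity of the syntactic operations defining $\lmponelambda$.
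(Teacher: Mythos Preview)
Your proof is correct and follows essentially the same route as the paper: equip $\Logic$ with a standard Borel structure via ``shapes'' (the paper calls them \emph{pre-formulas}), prove by induction on shapes that the satisfaction relation is Borel, and then write the complement of $\sim_{\logic}$ as a projection of a Borel set. The decomposition of the modal case into your facts (i) and (ii) matches the paper's use of Kechris~17.25 for (i) and its appeal to the map $v:(s,a)\mapsto h_a(s)$ for (ii).

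The one point worth noting is that you are more careful than the paper about (ii). The paper simply asserts that $v$ is measurable ``since the LMP we consider is measurably labelled,'' whereas you correctly observe that the definition of a measurably labelled LMP only gives \emph{separate} measurability in $s$ and in $a$, and that separate measurability does not imply joint measurability in general. You then fill this gap for $\lmponelambda$ by exploiting the concrete structure of the action space (countable summands plus a value-label summand on which the transition is deterministic and continuous). This is a genuine strengthening of the argument: either the paper is tacitly reading joint measurability into its definition of ``measurably labelled,'' or it is silently relying on the same structural facts about $\lmponelambda$ that you make explicit. Your version makes the hypothesis actually used transparent.
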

	\begin{proof}
		First, observe that since $\actsone$ is a polish space, we can write 
		the set of all formulas as a countable coproduct of polish space. Since 
		polish spaces are stable by countable coproduct, we can equipp the set 
		of all formulas $\Logic$ with the structure of a polish space. More 
		precisely, we can define pre-logical formula, that are to logical 
		formula the same as terms are to pre-terms, i.e. logical formulas with 
		holes for actions. More precisely, pre-logical formulas are generated 
		by the grammar $\psi \bnf \top \mid \psi_1 \wedge \psi_2 \mid 
		\act{[]^n} q \psi$ with $n \in \NN$, and we ask for aditional 
		constraints that the holes  $[]^1, \ldots []^n$ appear in this order.
		
		In order to do the proof, we first look at the set $\{(\phi, s) \mid s 
		\in \sem \phi\} \subseteq \Logic \times \markovone$.
		First, we can show that for any pre-formula $\psi$, the set 
		$A_{\psi}:=\{ (\vec a, s) \mid s \in \sem {\psi[\vec a]}\}$ is a Borel 
		set: the proof is by induction of the size of $\psi$.
		\begin{itemize}
			\item if $\psi = \top$, $A_\psi = \{\star\} \times \markovone$, 
			where $\star$ represents the unique element of $\actsone^0$, and 
			the result is immediate;
			\item if $\psi = \psi_1 \wedge \psi_2$: $A_\psi = \{((\vec a_1,\vec 
			a_2),s)\mid (\vec a_1,s) \in A_{\psi_1} \wedge (\vec a_2,s) \in 
			A_{\psi_2} \}$, and from there we can conclude using the fact that 
			Borel sets are stable by finite intersections;
			\item  if $\psi ={\act {[]^1} q {\psi'}}$: recall that $\sem 
			{\psi[a_1, \vec {a_2}]} =  \{s \mid h_{a_1}(s,\sem{\psi'[\vec 
			a_2]}) > q \}$. Thus:
			$$A_{\psi} = \{(a_1,\vec a_2,s) \mid h_{a_1}(s,\sem{\psi'[\vec 
			a_2]}) > q  \}.$$
			By induction hypothesis, we know that $\{\vec a_2,t \mid t \in \sem 
			{\psi'[\vec a_2]}\}$ is a Borel set, i.e. a measurable set. From 
			there, and using~\cite{kechris} 17.25, we can deduce that the map 
			$u:(\vec a_2, \mu) \in \actsone^n \times \distrs{\markovone} 
			\mapsto \mu(\psi'[\vec {a_2}]) \in \RR$ is mesurable.
			We can now use the map $u$ by pre-composing it with the map 
			$v:(s,a) \in \markovone \times \actsone \mapsto h_a(s) \in 
			\distrs{\markovone}$: since the LMP we consider is measurable 
			labelled, this map is measurable.
			Summing up, we obtain:
			$$A_\psi = ((v \times id_{\actsone^n}); u)^{-1}(]-\infty,q[), $$
			and since we have established measurability of $v$ and $u$, we 
			obtain that $A_\psi$ is Borel.
			
			We now want to use this result to show that the graph of 
			$\sim_{\logic}$ is co-analytical. To do that, we observe that 
			$\{(s,t) \mid s \sim_{\logic} t \} = \cap_{\psi \text{ a 
			pre-formula}}\{(s,t) \mid \forall \vec a, (\vec a, (s,t)) \in 
			B_\psi\}$, where $B_\psi = \{\phi,s,t \mid (s,\phi) \in A_\psi, 
			(t,\phi) \in A_\psi)\} \cup  \{\phi,s,t \mid (s,\phi) \not \in 
			A_\psi, (t,\phi) \not \in A_\psi)\}$. By stability of measurable 
			substes by finite unions and complementation, we obtain that 
			$B_\psi$ is a measurable subset of $\actsone^n \times 
			\markovone^2$. From there, we can use (as in our proof of the same 
			statement for context equivalence), that a universal quantification 
			over  a Borel subset leads to a co-analytical space, and we can 
			conclude.
		\end{itemize}
	\end{proof}
}




\section{Related Work}
Programming languages with continuous probabilistic choices and higher-order 
functions have been the subject of numerous studies 
in the last ten years 
(e.g. \cite{borgstrom2016lambda,DBLP:conf/lics/StatonYWHK16,DBLP:journals/pacmpl/EhrhardPT18, 
DBLP:journals/pacmpl/VakarKS19}). This, in particular, thanks to the 
potential applications to 
Bayesian programming, which however require the presence of conditioning 
operators, which we have not considered in this work. 
Program equivalence for higher-order 
languages with continuous probabilities has been studied mostly by means 
of denotational models, the latter being based on nontrivial mathematical 
structures --- such as quasi-Borel spaces and domains~\cite{DBLP:conf/lics/StatonYWHK16,DBLP:conf/esop/Staton17,DBLP:journals/pacmpl/VakarKS19},
and measurable cones~\cite{DBLP:journals/pacmpl/EhrhardPT18} (in fact,
measurable spaces do not form a closed category~\cite{aumann}, and therefore they cannot 
provide an adequate semantics for higher-order languages). 
In this line of research, the recent work by \citet{DBLP:journals/pacmpl/SabokSSW21} 
connecting probabilistic behaviours 
and name generation greatly inspired us in finding 
the examples from Section~\ref{sect:perspective}.
Although all the aforementioned (denotational) semantics have been proved to be adequate 
for languages similar to (and sometimes even more expressive than~\cite{DBLP:journals/pacmpl/VakarKS19}) $\lambda_{\probb}$, to the best of the authors' knowledge 
full abstraction for such semantics is currently an open problem.

In contrast to the many results on denotationally-based program equivalence, 
only few works have focused on operationally-based notions of program equivalence. 
Among those, we mention the work by \citet{DBLP:journals/pacmpl/WandCGC18} 
(see also the work by \citet{DBLP:conf/esop/CulpepperC17}), where full abstraction 
of suitable 
bi-orthogonal logical relations 
on a stochastic $\lambda$-calculus with scoring is proved, and the work by 
\citet{LagoG19}, where soundness of a notion of applicative bisimilarity 
essentially coinciding with our notion of state bisimilarity on $\lmponelambda$ is
proved. Even if fully abstract, bi-orthogonal logical relations have a high logical 
complexity: to logically relate two programs one needs to execute 
such programs inside arbitrary evaluation contexts. 
This makes bi-orthogonal logical relations oftentimes not readily usable to prove equivalence 
between programs. In contrast, the applicative bisimilarity by \citet{LagoG19} is a
lightweight technique, although it is currently unknown whether it is fully abstract.

Applicative bisimilarity~\cite{Abramsky1990} is, together with open 
bisimilarity~\cite{Sangiorgi/TheLazyLambdaCalculusInAConcurrencyScenarion/1994,Lassen/BismulationUntypedLambdaCalculusBohmTrees,Lassen/EagerNormalFormBisimulation/2005}, among the most 
effective and best studied notions of coinductive equivalence for programming 
languages with higher-order functions. Applicative bisimilarity has been studied for 
different languages, including nondeterministic~\cite{Lassen1998,Ong/LICS/1993} and 
probabilistic~\cite{DalLagoSangiorgiAlberti/POPL/2014,CDL14,LagoG19} languages, 
and, more generally, 
languages with algebraic 
effects~\cite{DLGL17,DBLP:journals/tcs/LagoGT20,Gavazzo/ICTCS/2017,DBLP:phd/basesearch/Gavazzo19}. 
For all those languages, applicative bisimilarity has been proved to be a 
sound technique for contextual equivalence. Full-abstraction results, instead, 
are more delicate. Even if applicative bisimilarity is known to be fully abstract 
in presence of certain kinds of effects, there are effects 
for which full abstraction fails, the prime example
being nondeterminism~\cite{Lassen1998}. Applicative bisimilarity is also known  
to be unsound for specific (non-algebraic) effects, such as local 
names~\cite{KLS11}. The case of discrete probabilistic effects is among those 
for which full-abstraction results are possible, 
at least when programs are evaluated following the call-by-value 
discipline~\cite{CDL14}. This is somehow surprising, since in the case 
of $\lambda$-calculi with nondeterministic choice, this correspondence no 
longer holds~\cite{Lassen1998}. Probabilistic effects, in other words, appear 
to somehow 
\emph{lie in between} determinism and nondeterminism. This phenomenon can be 
read through the lenses provided by characterizations of notions of 
bisimulation through \emph{testing}~\cite{van2005domain,LarsenS91}. 
When one moves from discrete to continuous distributions the situation changes. 
As already mentioned, applicative bisimilarity is sound in presence of continuous 
probabilities \cite{LagoG19}, but full abstraction is currently an open problem.

There is a large body of work on generalization of 
the classic notion of probabilistic bisimilarity --- and its 
logical and testing characterizations --- as defined 
by \citet{LarsenS91} to the setting of continuous distributions and 
LMPs~\cite{Prakash2009}. In particular, starting with the seminal 
work by \citet{DBLP:conf/lics/BluteDEP97,DBLP:journals/iandc/DesharnaisEP02,DBLP:conf/lics/DesharnaisEP98}, the notions of state and event bisimilarity \cite{danos2005almost,danos2006bisimulation} (as well as their metric counterparts \cite{DBLP:conf/concur/DesharnaisGJP99,DBLP:journals/tcs/DesharnaisGJP04}) 
have been proposed as notions of equivalence for LMPs. 
In that respect, there is a 
vast literature that investigates the characterization of the aforementioned 
notions of bisimilarity through 
logic-based and testing-based 
notions~\cite{van2005domain,danos2005almost,danos2006bisimulation,fijalkow2017expressiveness,clerc2019expressiveness}.
Even if dealing with LMPs (and thus with uncountable state spaces), 
all the aforementioned works consider LMPs with \emph{countable} actions, 
and thus cannot be applied to the study of expressive higher-order languages, 
such as the ones studied in this paper. 
As an exception to that, we mention the work on bisimulation-based 
equivalences (and their characterizations) on LMPs with \emph{continuous time}
~\cite{DBLP:journals/jlp/DesharnaisP03,fijalkow2017expressiveness}. There, 
the set of actions of the LMP is replaced with a suitable notion of time, which 
usually have the 
cardinality of the continuum, and thus gives something closer to the LMPs we use 
in this paper. It should remarked, however, that even if one can in 
principle see time as a continuous set of actions, the role played by these two 
notions is different, as different are the goals and challenges one 
has when working with LMPs with continuous time and with continuous actions.




\section{Conclusion}
This paper studies the nature of coinductive relational reasoning in the 
context of $\lambda$-calculi with continuous probabilistic choices. 
Surprisingly, 
the well-known correspondence between logical, testing, and bisimilarity 
equivalences is lost, with many equivalences turning out to be unsound for 
contextual equivalence. This, 
despite the fact that the theory of bisimilarity for 
probabilistic systems allows for a very smooth transition from the discrete to 
the continuous case. The deep reason for this  hiatus comes from the fact that 
\emph{uncountably} many actions are around, and this falsifies the 
characterization of bisimilarity by testing, namely the key step towards 
full-abstraction. 

More on the positive side, this paper introduces a new class of 
probabilistic systems with uncountable actions in which the transition function 
is itself continuous, in the sense of Feller. For such systems, surprisingly, 
it is once again true that testing precisely characterizes bisimilarity, and 
this holds for both state and event bisimilarity. The aforementioned result is 
put at work in the context of a calculus in which real numbers can only be 
manipulated through continuous functions. This way the result of full 
abstraction is restored. 

In this paper, we are \emph{not} concerned with evaluating the expressive power 
of the thus obtained calculus, namely of $\Lambdacont$. It is clear that 
not allowing a flow of information from reals to other types of 
data to happen --- therefore preventing the value of any real number argument 
from altering the control flow --- is a very strong restriction. However, it is 
not certain that this restriction is unacceptable, think for example at 
Bayesian programming. Along the same lines, we may wonder whether the 
$\Lambdacont$ \emph{contexts} are as powerful as the ones from 
$\Lambda_{\probb}$, i.e., whether two $\Lambdacont$-programs that are context 
equivalent in $\Lambdacont$ are still equivalent when we consider \emph{all} 
the contexts in $\Lambda_{\probb}$.  Such a full abstraction property does not 
seem to hold, since the higher-order nature of the considered languages could 
make it possible for a $\Lambda_{\probb}$ context $C$ to inject non-continuity 
inside a program $M$ of $\Lambdacont$. In any case, a thorough investigation 
along these lines is outside the scope of this paper, and left for future work.

\bibliography{biblio}
\bibliographystyle{plainnat}

\end{document}